\newif\ifextended\extendedtrue
\newif\ifcomments\commentsfalse
\newif\ifaec\aecfalse

\RequirePackage[usenames,dvipsnames]{xcolor}

\documentclass[acmsmall]{acmart}
\citestyle{acmauthoryear}

%
%
\makeatletter
\@ifundefined{lhs2tex.lhs2tex.sty.read}%
  {\@namedef{lhs2tex.lhs2tex.sty.read}{}%
   \newcommand\SkipToFmtEnd{}%
   \newcommand\EndFmtInput{}%
   \long\def\SkipToFmtEnd#1\EndFmtInput{}%
  }\SkipToFmtEnd

\newcommand\ReadOnlyOnce[1]{\@ifundefined{#1}{\@namedef{#1}{}}\SkipToFmtEnd}
\usepackage{amstext}
\usepackage{amssymb}
\usepackage{stmaryrd}
\DeclareFontFamily{OT1}{cmtex}{}
\DeclareFontShape{OT1}{cmtex}{m}{n}
  {<5><6><7><8>cmtex8
   <9>cmtex9
   <10><10.95><12><14.4><17.28><20.74><24.88>cmtex10}{}
\DeclareFontShape{OT1}{cmtex}{m}{it}
  {<-> ssub * cmtt/m/it}{}

\DeclareFontShape{OT1}{cmtt}{bx}{n}
  {<5><6><7><8>cmtt8
   <9>cmbtt9
   <10><10.95><12><14.4><17.28><20.74><24.88>cmbtt10}{}
\DeclareFontShape{OT1}{cmtex}{bx}{n}
  {<-> ssub * cmtt/bx/n}{}

\newcommand{\anonymous}{\kern0.06em \vbox{\hrule\@width.5em}}


\usepackage{polytable}

\@ifundefined{mathindent}%
  {\newdimen\mathindent\mathindent\leftmargini}%
  {}%

\def\resethooks{%
  \global\let\SaveRestoreHook\empty
  \global\let\ColumnHook\empty}
\newcommand*{\savecolumns}[1][default]%
  {\g@addto@macro\SaveRestoreHook{\savecolumns[#1]}}
\newcommand*{\restorecolumns}[1][default]%
  {\g@addto@macro\SaveRestoreHook{\restorecolumns[#1]}}
\newcommand*{\aligncolumn}[2]%
  {\g@addto@macro\ColumnHook{\column{#1}{#2}}}

\resethooks

\newcommand{\onelinecommentchars}{\quad-{}- }
\newcommand{\commentbeginchars}{\enskip\{-}
\newcommand{\commentendchars}{-\}\enskip}

\newcommand{\visiblecomments}{%
  \let\onelinecomment=\onelinecommentchars
  \let\commentbegin=\commentbeginchars
  \let\commentend=\commentendchars}

\newcommand{\invisiblecomments}{%
  \let\onelinecomment=\empty
  \let\commentbegin=\empty
  \let\commentend=\empty}

\visiblecomments

\newlength{\blanklineskip}
\setlength{\blanklineskip}{0.66084ex}

\newcommand{\hsindent}[1]{\quad}
\let\hspre\empty
\let\hspost\empty

\EndFmtInput
\makeatother
%
%
%
%
%
%
%
%
%
\ReadOnlyOnce{polycode.fmt}%
\makeatletter

\newcommand{\hsnewpar}[1]%
  {{\parskip=0pt\parindent=0pt\par\vskip #1\noindent}}

\newcommand{\hscodestyle}{}


\newcommand{\sethscode}[1]%
  {\expandafter\let\expandafter\hscode\csname #1\endcsname
   \expandafter\let\expandafter\endhscode\csname end#1\endcsname}


%
  {\par\noindent
   \advance\leftskip\mathindent
   \hscodestyle
   \let\\=\@normalcr
   \let\hspre\(\let\hspost\)%
   \pboxed}%
  {\endpboxed\)%
   \par\noindent
   \ignorespacesafterend}


%
  {\hsnewpar\abovedisplayskip
   \advance\leftskip\mathindent
   \hscodestyle
   \let\hspre\(\let\hspost\)%
   \pboxed}%
  {\endpboxed%
   \hsnewpar\belowdisplayskip
   \ignorespacesafterend}

  {\hsnewpar\abovedisplayskip
   \advance\leftskip\mathindent
   \hscodestyle
   \let\\=\@normalcr
   \(\pboxed}%
  {\endpboxed\)%
   \hsnewpar\belowdisplayskip
   \ignorespacesafterend}


\newcommand{\plainhs}{\sethscode{plainhscode}}

\plainhs


%
  {\hsnewpar\abovedisplayskip
   \advance\leftskip\mathindent
   \hscodestyle
   \let\\=\@normalcr
   \(\parray}%
  {\endparray\)%
   \hsnewpar\belowdisplayskip
   \ignorespacesafterend}


%
  {\parray}{\endparray}


%
  {\(\parray}{\endparray\)}


\def\codeframewidth{\arrayrulewidth}
\RequirePackage{calc}

  {\parskip=\abovedisplayskip\par\noindent
   \hscodestyle
   \arrayrulewidth=\codeframewidth
   \tabular{@{}|p{\linewidth-2\arraycolsep-2\arrayrulewidth-2pt}|@{}}%
   \hline\framedhslinecorrect\\{-1.5ex}%
   \let\endoflinesave=\\
   \let\\=\@normalcr
   \(\pboxed}%
  {\endpboxed\)%
   \framedhslinecorrect\endoflinesave{.5ex}\hline
   \endtabular
   \parskip=\belowdisplayskip\par\noindent
   \ignorespacesafterend}

\newcommand{\framedhslinecorrect}[2]%
  {#1[#2]}


%
  {\(\def\column##1##2{}%
   \let\>\undefined\let\<\undefined\let\\\undefined
   \newcommand\>[1][]{}\newcommand\<[1][]{}\newcommand\\[1][]{}%
   \def\fromto##1##2##3{##3}%
   }{\) }%


%
  {\let\orighscode=\hscode
   \let\origendhscode=\endhscode
   \def\endhscode{\def\hscode{\endgroup\def\@currenvir{hscode}\\}\begingroup}
   \orighscode\def\hscode{\endgroup\def\@currenvir{hscode}}}%
  {\origendhscode
   \global\let\hscode=\orighscode
   \global\let\endhscode=\origendhscode}%

\makeatother
\EndFmtInput
%
%
%
%
%
%
\ReadOnlyOnce{forall.fmt}%
\makeatletter


\let\HaskellResetHook\empty
\newcommand*{\AtHaskellReset}[1]{%
  \g@addto@macro\HaskellResetHook{#1}}
\newcommand*{\HaskellReset}{\HaskellResetHook}

\newcommand\hsforall{\global\let\hsdot=\hsperiodonce}
\newcommand*\hsperiodonce[2]{#2\global\let\hsdot=\hscompose}
\newcommand*\hscompose[2]{#1}

\AtHaskellReset{\global\let\hsdot=\hscompose}

\HaskellReset

\makeatother
\EndFmtInput




\ifextended
\newcommand{\auxiliarymatter}{the appendix}
\else
\newcommand{\auxiliarymatter}{our evaluated proof artifact}
\fi

\ifcomments
\newcommand{\JGM}[1]{\commentbox{OliveGreen}{JGM}{#1}}
\newcommand{\RAE}[1]{\commentbox{RawSienna}{RAE}{#1}}
\else
\newcommand{\JGM}[1]{}
\newcommand{\RAE}[1]{}
\fi

\newcommand{\ottdrule}[4][]{{\displaystyle\frac{\begin{array}{l}#2\end{array}}{#3}\quad\ottdrulename{#4}}}
\newcommand{\ottusedrule}[1]{\[#1\]}
\newcommand{\ottpremise}[1]{ #1 \\}
\newenvironment{ottdefnblock}[3][]{ \framebox{\mbox{#2}} \quad #3 \\[0pt]}{}

\newcommand{\ottnt}[1]{\mathit{#1}}
\newcommand{\ottmv}[1]{\mathit{#1}}
\newcommand{\ottkw}[1]{\mathbf{#1}}
\newcommand{\ottsym}[1]{#1}
\newcommand{\ottcom}[1]{\text{#1}}
\newcommand{\ottdrulename}[1]{\textsc{#1}}
\newcommand{\ottcomplu}[5]{\overline{#1}^{\,#2\in #3 #4 #5}}
\newcommand{\ottcompu}[3]{\overline{#1}^{\,#2<#3}}
\newcommand{\ottcomp}[2]{\overline{#1}^{\,#2}}





\newcommand{\ottdruleSTXXTyCon}[1]{\ottdrule[#1]{%
\ottpremise{\ottcompu{ P  \mid  \Gamma   \vdash   \tau_{\ottmv{i}}  \ \mathsf{type} }{\ottmv{i}}{\ottmv{n}}}%
\ottpremise{ (  \ottnt{H}  :  \ottmv{n}  ) \in  \Sigma  \quad \hspace{-.6em} \quad  \vdash   P  \mid  \Gamma  \ \mathsf{ctx} }%
}{
 P  \mid  \Gamma   \vdash   \ottnt{H} \, \overline{\tau}  \ \mathsf{type} }{%
{\ottdrulename{ST\_TyCon}}{}%
}}

\newcommand{\ottdruleSTXXVar}[1]{\ottdrule[#1]{%
\ottpremise{\alpha  \in  \Gamma \quad \hspace{-.6em} \quad  \vdash   P  \mid  \Gamma  \ \mathsf{ctx} }%
}{
 P  \mid  \Gamma   \vdash   \alpha  \ \mathsf{type} }{%
{\ottdrulename{ST\_Var}}{}%
}}

\newcommand{\ottdruleSTXXForall}[1]{\ottdrule[#1]{%
\ottpremise{ P  \mid  \Gamma  \ottsym{,}  \alpha   \vdash   \tau  \ \mathsf{type} }%
}{
 P  \mid  \Gamma   \vdash   \forall \, \alpha  \ottsym{.}  \tau  \ \mathsf{type} }{%
{\ottdrulename{ST\_Forall}}{}%
}}

\newcommand{\ottdruleSTXXQual}[1]{\ottdrule[#1]{%
\ottpremise{ P  \ottsym{,}  \pi  \mid  \Gamma   \vdash   \tau  \ \mathsf{type} }%
}{
 P  \mid  \Gamma   \vdash   \pi  \Rightarrow  \tau  \ \mathsf{type} }{%
{\ottdrulename{ST\_Qual}}{}%
}}

\newcommand{\ottdruleSTXXFamily}[1]{\ottdrule[#1]{%
\ottpremise{ ( C   \Rightarrow   \ottmv{F}  :  \ottmv{n} ) \in  \Sigma  \quad \hspace{-.6em} \quad  \vdash   P  \mid  \Gamma  \ \mathsf{ctx} }%
\ottpremise{\ottcompu{ P  \mid  \Gamma   \vdash   \tau_{\ottmv{i}}  \ \mathsf{type} }{\ottmv{i}}{\ottmv{n}} \quad \hspace{-.6em} \quad  P  \Vdash  C \, \overline{\tau} }%
}{
 P  \mid  \Gamma   \vdash   \ottmv{F} \, \overline{\tau}  \ \mathsf{type} }{%
{\ottdrulename{ST\_Family}}{}%
}}

\newcommand{\ottdruleSTXXTFamily}[1]{\ottdrule[#1]{%
\ottpremise{ (  \top    \Rightarrow   \ottmv{F}  :  \ottmv{n} ) \in  \Sigma  \quad \hspace{-.6em} \quad  \vdash   P  \mid  \Gamma  \ \mathsf{ctx}  \quad \hspace{-.6em} \quad \ottcompu{ P  \mid  \Gamma   \vdash   \tau_{\ottmv{i}}  \ \mathsf{type} }{\ottmv{i}}{\ottmv{n}}}%
}{
 P  \mid  \Gamma   \vdash   \ottmv{F} \, \overline{\tau}  \ \mathsf{type} }{%
{\ottdrulename{ST\_TFamily}}{}%
}}


\newcommand{\ottdruleTXXTyCon}[1]{\ottdrule[#1]{%
\ottpremise{\ottnt{H}  \ottsym{:}  \ottmv{n} \quad \hspace{-.6em} \quad  \vdash   \Gamma  \ \mathsf{ctx} }%
\ottpremise{\ottcompu{ \Gamma   \vdash   \tau_{\ottmv{i}}  \ \mathsf{type} }{\ottmv{i}}{\ottmv{n}}}%
}{
 \Gamma   \vdash   \ottnt{H} \, \overline{\tau}  \ \mathsf{type} }{%
{\ottdrulename{T\_TyCon}}{}%
}}

\newcommand{\ottdruleTXXArrow}[1]{\ottdrule[#1]{%
\ottpremise{ \Gamma   \vdash   \tau_{{\mathrm{1}}}  \ \mathsf{type} }%
\ottpremise{ \Gamma   \vdash   \tau_{{\mathrm{2}}}  \ \mathsf{type} }%
}{
 \Gamma   \vdash   \tau_{{\mathrm{1}}}  \to  \tau_{{\mathrm{2}}}  \ \mathsf{type} }{%
{\ottdrulename{T\_Arrow}}{}%
}}

\newcommand{\ottdruleTXXVar}[1]{\ottdrule[#1]{%
\ottpremise{\alpha  \in  \Gamma}%
\ottpremise{ \vdash   \Gamma  \ \mathsf{ctx} }%
}{
 \Gamma   \vdash   \alpha  \ \mathsf{type} }{%
{\ottdrulename{T\_Var}}{}%
}}

\newcommand{\ottdruleTXXForall}[1]{\ottdrule[#1]{%
\ottpremise{ \Gamma  \ottsym{,}  \alpha   \vdash   \tau  \ \mathsf{type} }%
}{
 \Gamma   \vdash   \forall \, \alpha  \ottsym{.}  \tau  \ \mathsf{type} }{%
{\ottdrulename{T\_Forall}}{}%
}}

\newcommand{\ottdruleTXXQual}[1]{\ottdrule[#1]{%
\ottpremise{ \Gamma   \vdash   \phi  \ \mathsf{prop}  \quad \hspace{-.6em} \quad  \Gamma   \vdash   \tau  \ \mathsf{type} }%
}{
 \Gamma   \vdash   \phi  \Rightarrow  \tau  \ \mathsf{type} }{%
{\ottdrulename{T\_Qual}}{}%
}}

\newcommand{\ottdrulePXXTypes}[1]{\ottdrule[#1]{%
\ottpremise{ \Gamma   \vdash   \tau_{{\mathrm{1}}}  \ \mathsf{type}  \quad \hspace{-.6em} \quad  \Gamma   \vdash   \tau_{{\mathrm{2}}}  \ \mathsf{type} }%
}{
 \Gamma   \vdash   \tau_{{\mathrm{1}}}  \sim  \tau_{{\mathrm{2}}}  \ \mathsf{prop} }{%
{\ottdrulename{P\_Types}}{}%
}}

\newcommand{\ottdrulePXXFamily}[1]{\ottdrule[#1]{%
\ottpremise{\ottmv{F} \, : \, \ottmv{n}  \in  \Sigma \quad \hspace{-.6em} \quad \ottcompu{ \Gamma   \vdash   \tau_{\ottmv{i}}  \ \mathsf{type} }{\ottmv{i}}{\ottmv{n}} \quad \hspace{-.6em} \quad  \Gamma   \vdash   \sigma  \ \mathsf{type} }%
}{
 \Gamma   \vdash   \ottmv{F} \, \overline{\tau}  \sim  \sigma  \ \mathsf{prop} }{%
{\ottdrulename{P\_Family}}{}%
}}

\newcommand{\ottdruleGXXNil}[1]{\ottdrule[#1]{%
}{
 \vdash   \emptyset  \ \mathsf{ctx} }{%
{\ottdrulename{G\_Nil}}{}%
}}

\newcommand{\ottdruleGXXTyVar}[1]{\ottdrule[#1]{%
\ottpremise{ \vdash   \Gamma  \ \mathsf{ctx} }%
\ottpremise{\alpha  \mathop{\#}  \Gamma}%
}{
 \vdash   \Gamma  \ottsym{,}  \alpha  \ \mathsf{ctx} }{%
{\ottdrulename{G\_TyVar}}{}%
}}

\newcommand{\ottdruleGXXCoVar}[1]{\ottdrule[#1]{%
\ottpremise{ \Gamma   \vdash   \phi  \ \mathsf{prop} }%
\ottpremise{\ottmv{c}  \mathop{\#}  \Gamma}%
}{
 \vdash   \Gamma  \ottsym{,}   \ottmv{c} {:} \phi   \ \mathsf{ctx} }{%
{\ottdrulename{G\_CoVar}}{}%
}}

\newcommand{\ottdruleGXXVar}[1]{\ottdrule[#1]{%
\ottpremise{ \Gamma   \vdash   \tau  \ \mathsf{type} }%
\ottpremise{\ottmv{x}  \mathop{\#}  \Gamma}%
}{
 \vdash   \Gamma  \ottsym{,}   \ottmv{x} {:} \tau   \ \mathsf{ctx} }{%
{\ottdrulename{G\_Var}}{}%
}}

\newcommand{\ottdruleDXXNil}[1]{\ottdrule[#1]{%
}{
 \vdash   \emptyset  \ \mathsf{ok} }{%
{\ottdrulename{D\_Nil}}{}%
}}

\newcommand{\ottdruleDXXPartial}[1]{\ottdrule[#1]{%
\ottpremise{ \vdash   \Sigma  \ \mathsf{ok} }%
}{
 \vdash   \Sigma  \ottsym{,}  \ottmv{F} \,  \mathop{:_{\not\top} }  \, \ottmv{n}  \ \mathsf{ok} }{%
{\ottdrulename{D\_Partial}}{}%
}}

\newcommand{\ottdruleDXXTotal}[1]{\ottdrule[#1]{%
\ottpremise{ \vdash   \Sigma  \ \mathsf{ok} }%
}{
 \vdash   \Sigma  \ottsym{,}  \ottmv{F} \,  \mathop{ :_{\top} }  \, \ottmv{n}  \ \mathsf{ok} }{%
{\ottdrulename{D\_Total}}{}%
}}

\newcommand{\ottdruleDXXAxiom}[1]{\ottdrule[#1]{%
\ottpremise{\ottmv{F} \, : \, \ottmv{n}  \in  \Sigma \quad \hspace{-.6em} \quad  \vdash   \Sigma  \ \mathsf{ok} }%
\ottpremise{\forall \, \ottmv{i}  \ottsym{:}}%
\ottpremise{\qquad  \ottnt{E_{\ottmv{i}}} \, \ottsym{=} \, \forall \, \overline{\alpha} \, \overline{\chi}  \ottsym{.}  \ottmv{F} \, \overline{\tau}  \sim  \tau_{{\mathrm{0}}}}%
\ottpremise{\qquad  \ottcomplu{ \overline{\alpha}   \vdash   \tau_{\ottmv{j}}  \ \mathsf{type} }{\ottmv{j}}{{\mathrm{1}}}{..}{\ottmv{n}}}%
\ottpremise{\qquad   \overline{\alpha}  \ottsym{,}  \mathit{tvs}\! \, \ottsym{(}  \overline{\chi}  \ottsym{)}   \vdash   \tau_{{\mathrm{0}}}  \ \mathsf{type} }%
\ottpremise{\qquad   \overline{\alpha}   \vdash   \overline{\chi} \ \mathsf{assumps} }%
}{
 \vdash   \Sigma  \ottsym{,}  \xi  \ottsym{:}  \overline{\ottnt{E} }  \ \mathsf{ok} }{%
{\ottdrulename{D\_Axiom}}{}%
}}

\newcommand{\ottdruleXXXNil}[1]{\ottdrule[#1]{%
}{
 \Gamma   \vdash   \emptyset \ \mathsf{assumps} }{%
{\ottdrulename{X\_Nil}}{}%
}}

\newcommand{\ottdruleXXXCons}[1]{\ottdrule[#1]{%
\ottpremise{\ottmv{F} \, : \, \ottmv{n}  \in  \Sigma}%
\ottpremise{\ottcomplu{ \Gamma   \vdash   \tau_{\ottmv{i}}  \ \mathsf{type} }{\ottmv{i}}{{\mathrm{1}}}{..}{\ottmv{n}}}%
\ottpremise{ \Gamma  \ottsym{,}  \alpha   \vdash   \overline{\chi} \ \mathsf{assumps} }%
}{
 \Gamma   \vdash    (  \alpha  \pipe  \ottmv{c}  :  \ottmv{F} \, \overline{\tau}   \sim   \alpha  )   \ottsym{,}  \overline{\chi} \ \mathsf{assumps} }{%
{\ottdrulename{X\_Cons}}{}%
}}

\newcommand{\ottdruleCXXRefl}[1]{\ottdrule[#1]{%
\ottpremise{ \Gamma   \vdash   \tau  \ \mathsf{type} }%
}{
\Gamma  \vdash   \langle  \tau  \rangle   \ottsym{:}  \tau  \sim  \tau}{%
{\ottdrulename{C\_Refl}}{}%
}}

\newcommand{\ottdruleCXXSym}[1]{\ottdrule[#1]{%
\ottpremise{\Gamma  \vdash  \gamma  \ottsym{:}  \tau_{{\mathrm{1}}}  \sim  \tau_{{\mathrm{2}}}}%
}{
\Gamma  \vdash  \ottkw{sym} \, \gamma  \ottsym{:}  \tau_{{\mathrm{2}}}  \sim  \tau_{{\mathrm{1}}}}{%
{\ottdrulename{C\_Sym}}{}%
}}

\newcommand{\ottdruleCXXTrans}[1]{\ottdrule[#1]{%
\ottpremise{\Gamma  \vdash  \gamma_{{\mathrm{1}}}  \ottsym{:}  \tau_{{\mathrm{1}}}  \sim  \tau_{{\mathrm{2}}}}%
\ottpremise{\Gamma  \vdash  \gamma_{{\mathrm{2}}}  \ottsym{:}  \tau_{{\mathrm{2}}}  \sim  \tau_{{\mathrm{3}}}}%
}{
\Gamma  \vdash  \gamma_{{\mathrm{1}}}  \fatsemi  \gamma_{{\mathrm{2}}}  \ottsym{:}  \tau_{{\mathrm{1}}}  \sim  \tau_{{\mathrm{3}}}}{%
{\ottdrulename{C\_Trans}}{}%
}}

\newcommand{\ottdruleCXXApp}[1]{\ottdrule[#1]{%
\ottpremise{\ottnt{H}  \ottsym{:}  \ottmv{n} \quad \hspace{-.6em} \quad  \vdash   \Gamma  \ \mathsf{ctx}  \quad \hspace{-.6em} \quad \ottcompu{\Gamma  \vdash  \gamma_{\ottmv{i}}  \ottsym{:}  \tau_{\ottmv{i}}  \sim  \sigma_{\ottmv{i}}}{\ottmv{i}}{\ottmv{n}}}%
}{
\Gamma  \vdash  \ottnt{H} \, \overline{\gamma}  \ottsym{:}  \ottnt{H} \, \overline{\tau}  \sim  \ottnt{H} \, \overline{\sigma}}{%
{\ottdrulename{C\_App}}{}%
}}

\newcommand{\ottdruleCXXFun}[1]{\ottdrule[#1]{%
\ottpremise{\Gamma  \vdash  \gamma_{{\mathrm{1}}}  \ottsym{:}  \tau_{{\mathrm{1}}}  \sim  \sigma_{{\mathrm{1}}} \quad \hspace{-.6em} \quad \Gamma  \vdash  \gamma_{{\mathrm{2}}}  \ottsym{:}  \tau_{{\mathrm{2}}}  \sim  \sigma_{{\mathrm{2}}}}%
}{
\Gamma  \vdash  \gamma_{{\mathrm{1}}}  \to  \gamma_{{\mathrm{2}}}  \ottsym{:}  \ottsym{(}  \tau_{{\mathrm{1}}}  \to  \tau_{{\mathrm{2}}}  \ottsym{)}  \sim  \ottsym{(}  \sigma_{{\mathrm{1}}}  \to  \sigma_{{\mathrm{2}}}  \ottsym{)}}{%
{\ottdrulename{C\_Fun}}{}%
}}

\newcommand{\ottdruleCXXForall}[1]{\ottdrule[#1]{%
\ottpremise{\Gamma  \ottsym{,}  \alpha  \vdash  \gamma  \ottsym{:}  \tau_{{\mathrm{1}}}  \sim  \tau_{{\mathrm{2}}}}%
}{
\Gamma  \vdash  \forall \, \alpha  \ottsym{.}  \gamma  \ottsym{:}  \ottsym{(}  \forall \, \alpha  \ottsym{.}  \tau_{{\mathrm{1}}}  \ottsym{)}  \sim  \ottsym{(}  \forall \, \alpha  \ottsym{.}  \tau_{{\mathrm{2}}}  \ottsym{)}}{%
{\ottdrulename{C\_Forall}}{}%
}}

\newcommand{\ottdruleCXXQual}[1]{\ottdrule[#1]{%
\ottpremise{\Gamma  \vdash  \gamma_{{\mathrm{1}}}  \ottsym{:}  \tau_{{\mathrm{1}}}  \sim  \sigma_{{\mathrm{1}}} \quad \hspace{-.6em} \quad \Gamma  \vdash  \gamma_{{\mathrm{2}}}  \ottsym{:}  \tau_{{\mathrm{2}}}  \sim  \sigma_{{\mathrm{2}}} \quad \hspace{-.6em} \quad \Gamma  \vdash  \gamma_{{\mathrm{3}}}  \ottsym{:}  \tau_{{\mathrm{3}}}  \sim  \sigma_{{\mathrm{3}}}}%
}{
\Gamma  \vdash  \gamma_{{\mathrm{1}}}  \sim  \gamma_{{\mathrm{2}}}  \Rightarrow  \gamma_{{\mathrm{3}}}  \ottsym{:}  \ottsym{(}  \tau_{{\mathrm{1}}}  \sim  \tau_{{\mathrm{2}}}  \Rightarrow  \tau_{{\mathrm{3}}}  \ottsym{)}  \sim  \ottsym{(}  \sigma_{{\mathrm{1}}}  \sim  \sigma_{{\mathrm{2}}}  \Rightarrow  \sigma_{{\mathrm{3}}}  \ottsym{)}}{%
{\ottdrulename{C\_Qual}}{}%
}}

\newcommand{\ottdruleCXXFam}[1]{\ottdrule[#1]{%
\ottpremise{\ottmv{F} \, : \, \ottmv{n}  \in  \Sigma \quad \hspace{-.6em} \quad  \vdash   \Gamma  \ \mathsf{ctx}  \quad \hspace{-.6em} \quad \ottcompu{\Gamma  \vdash  \gamma_{\ottmv{i}}  \ottsym{:}  \tau_{\ottmv{i}}  \sim  \sigma_{\ottmv{i}}}{\ottmv{i}}{\ottmv{n}}}%
}{
\Gamma  \vdash  \ottmv{F} \, \overline{\gamma}  \ottsym{:}  \ottmv{F} \, \overline{\tau}  \sim  \ottmv{F} \, \overline{\sigma}}{%
{\ottdrulename{C\_Fam}}{}%
}}

\newcommand{\ottdruleCXXNth}[1]{\ottdrule[#1]{%
\ottpremise{\Gamma  \vdash  \gamma  \ottsym{:}  \ottnt{H} \, \overline{\tau}  \sim  \ottnt{H} \, \overline{\sigma}}%
}{
\Gamma  \vdash   \ottkw{nth} _{ \ottmv{i} }\  \gamma   \ottsym{:}  \tau_{\ottmv{i}}  \sim  \sigma_{\ottmv{i}}}{%
{\ottdrulename{C\_Nth}}{}%
}}

\newcommand{\ottdruleCXXNthArrow}[1]{\ottdrule[#1]{%
\ottpremise{\Gamma  \vdash  \gamma  \ottsym{:}  \ottsym{(}  \tau_{{\mathrm{0}}}  \to  \tau_{{\mathrm{1}}}  \ottsym{)}  \sim  \ottsym{(}  \sigma_{{\mathrm{0}}}  \to  \sigma_{{\mathrm{1}}}  \ottsym{)}}%
}{
\Gamma  \vdash   \ottkw{nth} _{ \ottmv{i} }\  \gamma   \ottsym{:}  \tau_{\ottmv{i}}  \sim  \sigma_{\ottmv{i}}}{%
{\ottdrulename{C\_NthArrow}}{}%
}}

\newcommand{\ottdruleCXXNthQual}[1]{\ottdrule[#1]{%
\ottpremise{\Gamma  \vdash  \gamma  \ottsym{:}  \ottsym{(}  \tau_{{\mathrm{0}}}  \sim  \tau_{{\mathrm{1}}}  \Rightarrow  \tau_{{\mathrm{2}}}  \ottsym{)}  \sim  \ottsym{(}  \sigma_{{\mathrm{0}}}  \sim  \sigma_{{\mathrm{1}}}  \Rightarrow  \sigma_{{\mathrm{2}}}  \ottsym{)}}%
}{
\Gamma  \vdash   \ottkw{nth} _{ \ottmv{i} }\  \gamma   \ottsym{:}  \tau_{\ottmv{i}}  \sim  \sigma_{\ottmv{i}}}{%
{\ottdrulename{C\_NthQual}}{}%
}}

\newcommand{\ottdruleCXXInst}[1]{\ottdrule[#1]{%
\ottpremise{\Gamma  \vdash  \gamma  \ottsym{:}  \ottsym{(}  \forall \, \alpha  \ottsym{.}  \sigma_{{\mathrm{1}}}  \ottsym{)}  \sim  \ottsym{(}  \forall \, \alpha  \ottsym{.}  \sigma_{{\mathrm{2}}}  \ottsym{)}}%
\ottpremise{ \Gamma   \vdash   \tau  \ \mathsf{type} }%
}{
\Gamma  \vdash  \gamma  \at  \tau  \ottsym{:}  \sigma_{{\mathrm{1}}}  \ottsym{[}  \tau  \ottsym{/}  \alpha  \ottsym{]}  \sim  \sigma_{{\mathrm{2}}}  \ottsym{[}  \tau  \ottsym{/}  \alpha  \ottsym{]}}{%
{\ottdrulename{C\_Inst}}{}%
}}

\newcommand{\ottdruleCXXAxiom}[1]{\ottdrule[#1]{%
\ottpremise{\xi  \ottsym{:}  \overline{\ottnt{E} }  \in  \Sigma \quad \hspace{-.6em} \quad \ottnt{E_{\ottmv{i}}} \, \ottsym{=} \, \forall \, \overline{\alpha} \, \overline{\chi}  \ottsym{.}  \ottmv{F} \, \overline{\tau}  \sim  \tau_{{\mathrm{0}}} \quad \hspace{-.6em} \quad  \vdash   \Gamma  \ \mathsf{ctx} }%
\ottpremise{\ottcomp{ \Gamma   \vdash   \sigma_{\ottmv{j}}  \ \mathsf{type} }{\ottmv{j}} \quad \hspace{-.6em} \quad \Gamma  \vdash  \overline{\ottnt{q} }  \ottsym{:}  \overline{\chi}  \ottsym{[}  \overline{\sigma}  \ottsym{/}  \overline{\alpha}  \ottsym{]} \quad \hspace{-.6em} \quad \forall \, \ottmv{n}  \ottsym{<}  \ottmv{i}  \ottsym{,}   \mathsf{no\_conflict} ( \overline{\ottnt{E} } ,  \ottmv{i} ,  \overline{\sigma} ,  \ottmv{n} ) }%
}{
\Gamma  \vdash   \xi _{ \ottnt{i} }\  \overline{\sigma} \  \overline{\ottnt{q} }   \ottsym{:}  \ottmv{F} \, \overline{\tau}  \ottsym{[}  \overline{\sigma}  \ottsym{/}  \overline{\alpha}  \ottsym{]}  \sim  \tau_{{\mathrm{0}}}  \ottsym{[}  \overline{\sigma}  \ottsym{/}  \overline{\alpha}  \ottsym{,}  \overline{\ottnt{q} }  \ottsym{/}  \overline{\chi}  \ottsym{]}}{%
{\ottdrulename{C\_Axiom}}{}%
}}

\newcommand{\ottdruleCXXVar}[1]{\ottdrule[#1]{%
\ottpremise{ \ottmv{c} {:} \phi   \in  \Gamma}%
\ottpremise{ \vdash   \Gamma  \ \mathsf{ctx} }%
}{
\Gamma  \vdash  \ottmv{c}  \ottsym{:}  \phi}{%
{\ottdrulename{C\_Var}}{}%
}}

\newcommand{\ottdruleAXXNil}[1]{\ottdrule[#1]{%
\ottpremise{ \vdash   \Gamma  \ \mathsf{ctx} }%
}{
\Gamma  \vdash  \emptyset  \ottsym{:}  \emptyset}{%
{\ottdrulename{A\_Nil}}{}%
}}

\newcommand{\ottdruleAXXCons}[1]{\ottdrule[#1]{%
\ottpremise{ \Gamma   \vdash   \sigma  \ \mathsf{type}  \quad \hspace{-.6em} \quad \Gamma  \vdash  \gamma  \ottsym{:}  \ottmv{F} \, \overline{\tau}  \sim  \sigma \quad \hspace{-.6em} \quad \Gamma  \vdash  \overline{\ottnt{q} }  \ottsym{:}  \overline{\chi}  \ottsym{[}  \sigma  \ottsym{/}  \alpha  \ottsym{]}}%
}{
\Gamma  \vdash   ( \sigma  \pipe  \gamma )   \ottsym{,}  \overline{\ottnt{q} }  \ottsym{:}   (  \alpha  \pipe  \ottmv{c}  :  \ottmv{F} \, \overline{\tau}   \sim   \alpha  )   \ottsym{,}  \overline{\chi}}{%
{\ottdrulename{A\_Cons}}{}%
}}

\newcommand{\ottdruleNCXXApart}[1]{\ottdrule[#1]{%
\ottpremise{\ottnt{E_{\ottmv{i}}} \, \ottsym{=} \, \forall \, \overline{\alpha}_{{\mathrm{1}}} \, \overline{\chi}_{{\mathrm{1}}}  \ottsym{.}  \ottmv{F} \, \overline{\tau}_{{\mathrm{1}}}  \sim  \tau_{{\mathrm{01}}}}%
\ottpremise{\ottnt{E_{\ottmv{j}}} \, \ottsym{=} \, \forall \, \overline{\alpha}_{{\mathrm{2}}} \, \overline{\chi}_{{\mathrm{2}}}  \ottsym{.}  \ottmv{F} \, \overline{\tau}_{{\mathrm{2}}}  \sim  \tau_{{\mathrm{02}}}}%
\ottpremise{ \mathsf{unify} ( \overline{\tau}_{{\mathrm{2}}} ;\, \overline{\tau}_{{\mathrm{1}}}  \ottsym{[}  \overline{\sigma}  \ottsym{/}  \overline{\alpha}_{{\mathrm{1}}}  \ottsym{]} )  \, \ottsym{=} \, \mathsf{Nothing}}%
}{
 \mathsf{no\_conflict} ( \overline{\ottnt{E} } ,  \ottmv{i} ,  \overline{\sigma} ,  \ottmv{j} ) }{%
{\ottdrulename{NC\_Apart}}{}%
}}

\newcommand{\ottdruleNCXXCompatible}[1]{\ottdrule[#1]{%
\ottpremise{ \mathsf{compat} ( \ottnt{E_{\ottmv{i}}} ,  \ottnt{E_{\ottmv{j}}} ) }%
}{
 \mathsf{no\_conflict} ( \overline{\ottnt{E} } ,  \ottmv{i} ,  \overline{\sigma} ,  \ottmv{j} ) }{%
{\ottdrulename{NC\_Compatible}}{}%
}}

\newcommand{\ottdruleCoXXCoinc}[1]{\ottdrule[#1]{%
\ottpremise{\ottnt{E_{{\mathrm{1}}}} \, \ottsym{=} \, \forall \, \overline{\alpha}_{{\mathrm{1}}} \, \overline{\chi}_{{\mathrm{1}}}  \ottsym{.}  \ottmv{F} \, \overline{\tau}_{{\mathrm{1}}}  \sim  \tau_{{\mathrm{01}}}}%
\ottpremise{\ottnt{E_{{\mathrm{2}}}} \, \ottsym{=} \, \forall \, \overline{\alpha}_{{\mathrm{2}}} \, \overline{\chi}_{{\mathrm{2}}}  \ottsym{.}  \ottmv{F} \, \overline{\tau}_{{\mathrm{2}}}  \sim  \tau_{{\mathrm{02}}}}%
\ottpremise{ \mathsf{unify} ( \overline{\tau}_{{\mathrm{1}}} ;\, \overline{\tau}_{{\mathrm{2}}} )  \, \ottsym{=} \, \mathsf{Just} \, \theta}%
\ottpremise{\tau_{{\mathrm{01}}}  \ottsym{[}  \theta  \circ  \mathit{subst}\! \, \ottsym{(}  \overline{\chi}_{{\mathrm{1}}}  \ottsym{)}  \ottsym{]} \, \ottsym{=} \, \tau_{{\mathrm{02}}}  \ottsym{[}  \theta  \circ  \mathit{subst}\! \, \ottsym{(}  \overline{\chi}_{{\mathrm{2}}}  \ottsym{)}  \ottsym{]}}%
}{
 \mathsf{compat} ( \ottnt{E_{{\mathrm{1}}}} ,  \ottnt{E_{{\mathrm{2}}}} ) }{%
{\ottdrulename{Co\_Coinc}}{}%
}}

\newcommand{\ottdruleCoXXDistinct}[1]{\ottdrule[#1]{%
\ottpremise{\ottnt{E_{{\mathrm{1}}}} \, \ottsym{=} \, \forall \, \overline{\alpha}_{{\mathrm{1}}} \, \overline{\chi}_{{\mathrm{1}}}  \ottsym{.}  \ottmv{F} \, \overline{\tau}_{{\mathrm{1}}}  \sim  \tau_{{\mathrm{01}}}}%
\ottpremise{\ottnt{E_{{\mathrm{2}}}} \, \ottsym{=} \, \forall \, \overline{\alpha}_{{\mathrm{2}}} \, \overline{\chi}_{{\mathrm{2}}}  \ottsym{.}  \ottmv{F} \, \overline{\tau}_{{\mathrm{2}}}  \sim  \tau_{{\mathrm{02}}}}%
\ottpremise{ \mathsf{unify} ( \overline{\tau}_{{\mathrm{1}}} ;\, \overline{\tau}_{{\mathrm{2}}} )  \, \ottsym{=} \, \mathsf{Nothing}}%
}{
 \mathsf{compat} ( \ottnt{E_{{\mathrm{1}}}} ,  \ottnt{E_{{\mathrm{2}}}} ) }{%
{\ottdrulename{Co\_Distinct}}{}%
}}


\newcommand{\ottdruleEXXVar}[1]{\ottdrule[#1]{%
\ottpremise{ \ottmv{x} {:} \tau   \in  \Gamma \quad \hspace{-.6em} \quad  \vdash   \Gamma  \ \mathsf{ctx} }%
}{
\Gamma  \vdash  \ottmv{x}  \ottsym{:}  \tau}{%
{\ottdrulename{E\_Var}}{}%
}}

\newcommand{\ottdruleEXXConst}[1]{\ottdrule[#1]{%
\ottpremise{\ottmv{K}  \ottsym{:}  \ottnt{H} \, \overline{\tau} \quad \hspace{-.6em} \quad  \vdash   \Gamma  \ \mathsf{ctx} }%
}{
\Gamma  \vdash  \ottmv{K}  \ottsym{:}  \ottnt{H}}{%
{\ottdrulename{E\_Const}}{}%
}}

\newcommand{\ottdruleEXXLam}[1]{\ottdrule[#1]{%
\ottpremise{\Gamma  \ottsym{,}   \ottmv{x} {:} \tau_{{\mathrm{1}}}   \vdash  \ottnt{e}  \ottsym{:}  \tau_{{\mathrm{2}}}}%
}{
\Gamma  \vdash  \lambda  \ottmv{x}  \ottsym{:}  \tau_{{\mathrm{1}}}  \ottsym{.}  \ottnt{e}  \ottsym{:}  \tau_{{\mathrm{1}}}  \to  \tau_{{\mathrm{2}}}}{%
{\ottdrulename{E\_Lam}}{}%
}}

\newcommand{\ottdruleEXXApp}[1]{\ottdrule[#1]{%
\ottpremise{\Gamma  \vdash  \ottnt{e_{{\mathrm{1}}}}  \ottsym{:}  \tau_{{\mathrm{1}}}  \to  \tau_{{\mathrm{2}}}}%
\ottpremise{\Gamma  \vdash  \ottnt{e_{{\mathrm{2}}}}  \ottsym{:}  \tau_{{\mathrm{1}}}}%
}{
\Gamma  \vdash  \ottnt{e_{{\mathrm{1}}}} \, \ottnt{e_{{\mathrm{2}}}}  \ottsym{:}  \tau_{{\mathrm{2}}}}{%
{\ottdrulename{E\_App}}{}%
}}

\newcommand{\ottdruleEXXTLam}[1]{\ottdrule[#1]{%
\ottpremise{\Gamma  \ottsym{,}  \alpha  \vdash  \ottnt{e}  \ottsym{:}  \tau}%
}{
\Gamma  \vdash  \Lambda  \alpha  \ottsym{.}  \ottnt{e}  \ottsym{:}  \forall \, \alpha  \ottsym{.}  \tau}{%
{\ottdrulename{E\_TLam}}{}%
}}

\newcommand{\ottdruleEXXTApp}[1]{\ottdrule[#1]{%
\ottpremise{\Gamma  \vdash  \ottnt{e}  \ottsym{:}  \forall \, \alpha  \ottsym{.}  \tau}%
\ottpremise{ \Gamma   \vdash   \sigma  \ \mathsf{type} }%
}{
\Gamma  \vdash  \ottnt{e} \, \sigma  \ottsym{:}  \tau  \ottsym{[}  \sigma  \ottsym{/}  \alpha  \ottsym{]}}{%
{\ottdrulename{E\_TApp}}{}%
}}

\newcommand{\ottdruleEXXCLam}[1]{\ottdrule[#1]{%
\ottpremise{\Gamma  \ottsym{,}   \ottmv{c} {:} \phi   \vdash  \ottnt{e}  \ottsym{:}  \tau}%
}{
\Gamma  \vdash  \lambda  \ottmv{c}  \ottsym{:}  \phi  \ottsym{.}  \ottnt{e}  \ottsym{:}  \phi  \Rightarrow  \tau}{%
{\ottdrulename{E\_CLam}}{}%
}}

\newcommand{\ottdruleEXXCApp}[1]{\ottdrule[#1]{%
\ottpremise{\Gamma  \vdash  \ottnt{e}  \ottsym{:}  \phi  \Rightarrow  \tau}%
\ottpremise{\Gamma  \vdash  \gamma  \ottsym{:}  \phi}%
}{
\Gamma  \vdash  \ottnt{e} \, \gamma  \ottsym{:}  \tau}{%
{\ottdrulename{E\_CApp}}{}%
}}

\newcommand{\ottdruleEXXCast}[1]{\ottdrule[#1]{%
\ottpremise{\Gamma  \vdash  \ottnt{e}  \ottsym{:}  \tau_{{\mathrm{1}}} \quad \hspace{-.6em} \quad \Gamma  \vdash  \gamma  \ottsym{:}  \tau_{{\mathrm{1}}}  \sim  \tau_{{\mathrm{2}}}}%
\ottpremise{ \Gamma   \vdash   \tau_{{\mathrm{2}}}  \ \mathsf{type} }%
}{
\Gamma  \vdash  \ottnt{e}  \triangleright  \gamma  \ottsym{:}  \tau_{{\mathrm{2}}}}{%
{\ottdrulename{E\_Cast}}{}%
}}

\newcommand{\ottdruleEXXAssume}[1]{\ottdrule[#1]{%
\ottpremise{\ottmv{F} \,  \mathop{ :_{\top} }  \, \ottmv{n}  \in  \Sigma \quad \hspace{-.6em} \quad \ottcompu{ \Gamma   \vdash   \tau_{\ottmv{i}}  \ \mathsf{type} }{\ottmv{i}}{\ottmv{n}}}%
\ottpremise{\Gamma  \ottsym{,}  \alpha  \ottsym{,}   \ottmv{c} {:} \ottmv{F} \, \overline{\tau}  \sim  \alpha   \vdash  \ottnt{e}  \ottsym{:}  \sigma \quad \hspace{-.6em} \quad \alpha  \not\in  \mathit{fv}\! \, \ottsym{(}  \sigma  \ottsym{)}}%
}{
\Gamma  \vdash  \ottkw{assume} \,  (  \alpha  \pipe  \ottmv{c}  :  \ottmv{F} \, \overline{\tau}   \sim   \alpha  )  \, \ottkw{in} \, \ottnt{e}  \ottsym{:}  \sigma}{%
{\ottdrulename{E\_Assume}}{}%
}}

\newcommand{\ottdruleSXXApp}[1]{\ottdrule[#1]{%
\ottpremise{\ottnt{e_{{\mathrm{1}}}}  \longrightarrow  \ottnt{e'_{{\mathrm{1}}}}}%
}{
\ottnt{e_{{\mathrm{1}}}} \, \ottnt{e_{{\mathrm{2}}}}  \longrightarrow  \ottnt{e'_{{\mathrm{1}}}} \, \ottnt{e_{{\mathrm{2}}}}}{%
{\ottdrulename{S\_App}}{}%
}}

\newcommand{\ottdruleSXXBeta}[1]{\ottdrule[#1]{%
}{
\ottsym{(}  \lambda  \ottmv{x}  \ottsym{:}  \tau  \ottsym{.}  \ottnt{e_{{\mathrm{1}}}}  \ottsym{)} \, \ottnt{e_{{\mathrm{2}}}}  \longrightarrow  \ottnt{e_{{\mathrm{1}}}}  \ottsym{[}  \ottnt{e_{{\mathrm{2}}}}  \ottsym{/}  \ottmv{x}  \ottsym{]}}{%
{\ottdrulename{S\_Beta}}{}%
}}

\newcommand{\ottdruleSXXPush}[1]{\ottdrule[#1]{%
\ottpremise{\ottnt{v} \, \ottsym{=} \, \lambda  \ottmv{x}  \ottsym{:}  \tau  \ottsym{.}  \ottnt{e_{{\mathrm{0}}}}}%
\ottpremise{\gamma_{{\mathrm{1}}} \, \ottsym{=} \, \ottkw{sym} \, \ottsym{(}   \ottkw{nth} _{ \ottsym{0} }\  \gamma   \ottsym{)} \quad \hspace{-.6em} \quad \gamma_{{\mathrm{2}}} \, \ottsym{=} \,  \ottkw{nth} _{ \ottsym{1} }\  \gamma }%
}{
\ottsym{(}  \ottnt{v}  \triangleright  \gamma  \ottsym{)} \, \ottnt{e}  \longrightarrow  \ottnt{v} \, \ottsym{(}  \ottnt{e}  \triangleright  \gamma_{{\mathrm{1}}}  \ottsym{)}  \triangleright  \gamma_{{\mathrm{2}}}}{%
{\ottdrulename{S\_Push}}{}%
}}

\newcommand{\ottdruleSXXTApp}[1]{\ottdrule[#1]{%
\ottpremise{\ottnt{e}  \longrightarrow  \ottnt{e'}}%
}{
\ottnt{e} \, \tau  \longrightarrow  \ottnt{e'} \, \tau}{%
{\ottdrulename{S\_TApp}}{}%
}}

\newcommand{\ottdruleSXXTBeta}[1]{\ottdrule[#1]{%
}{
\ottsym{(}  \Lambda  \alpha  \ottsym{.}  \ottnt{e}  \ottsym{)} \, \tau  \longrightarrow  \ottnt{e}  \ottsym{[}  \tau  \ottsym{/}  \alpha  \ottsym{]}}{%
{\ottdrulename{S\_TBeta}}{}%
}}

\newcommand{\ottdruleSXXTPush}[1]{\ottdrule[#1]{%
\ottpremise{\ottnt{v} \, \ottsym{=} \, \Lambda  \alpha  \ottsym{.}  \ottnt{e}}%
\ottpremise{\gamma' \, \ottsym{=} \, \gamma  \at  \tau}%
}{
\ottsym{(}  \ottnt{v}  \triangleright  \gamma  \ottsym{)} \, \tau  \longrightarrow  \ottnt{v} \, \tau  \triangleright  \gamma'}{%
{\ottdrulename{S\_TPush}}{}%
}}

\newcommand{\ottdruleSXXCApp}[1]{\ottdrule[#1]{%
\ottpremise{\ottnt{e}  \longrightarrow  \ottnt{e'}}%
}{
\ottnt{e} \, \gamma  \longrightarrow  \ottnt{e'} \, \gamma}{%
{\ottdrulename{S\_CApp}}{}%
}}

\newcommand{\ottdruleSXXCBeta}[1]{\ottdrule[#1]{%
}{
\ottsym{(}  \lambda  \ottmv{c}  \ottsym{:}  \phi  \ottsym{.}  \ottnt{e}  \ottsym{)} \, \gamma  \longrightarrow  \ottnt{e}  \ottsym{[}  \gamma  \ottsym{/}  \ottmv{c}  \ottsym{]}}{%
{\ottdrulename{S\_CBeta}}{}%
}}

\newcommand{\ottdruleSXXCPush}[1]{\ottdrule[#1]{%
\ottpremise{\ottnt{v} \, \ottsym{=} \, \lambda  \ottmv{c}  \ottsym{:}  \phi  \ottsym{.}  \ottnt{e_{{\mathrm{0}}}} \quad \hspace{-.6em} \quad \eta_{{\mathrm{0}}} \, \ottsym{=} \,  \ottkw{nth} _{ \ottsym{0} }\  \eta }%
\ottpremise{\eta_{{\mathrm{1}}} \, \ottsym{=} \, \ottkw{sym} \, \ottsym{(}   \ottkw{nth} _{ \ottsym{1} }\  \eta   \ottsym{)} \quad \hspace{-.6em} \quad \eta_{{\mathrm{2}}} \, \ottsym{=} \,  \ottkw{nth} _{ \ottsym{2} }\  \eta }%
}{
\ottsym{(}  \ottnt{v}  \triangleright  \eta  \ottsym{)} \, \gamma  \longrightarrow  \ottnt{v} \, \ottsym{(}  \eta_{{\mathrm{0}}}  \fatsemi  \gamma  \fatsemi  \eta_{{\mathrm{1}}}  \ottsym{)}  \triangleright  \eta_{{\mathrm{2}}}}{%
{\ottdrulename{S\_CPush}}{}%
}}

\newcommand{\ottdruleSXXCast}[1]{\ottdrule[#1]{%
\ottpremise{\ottnt{e}  \longrightarrow  \ottnt{e'}}%
}{
\ottnt{e}  \triangleright  \gamma  \longrightarrow  \ottnt{e'}  \triangleright  \gamma}{%
{\ottdrulename{S\_Cast}}{}%
}}

\newcommand{\ottdruleSXXTrans}[1]{\ottdrule[#1]{%
}{
\ottsym{(}  \ottnt{v}  \triangleright  \gamma_{{\mathrm{1}}}  \ottsym{)}  \triangleright  \gamma_{{\mathrm{2}}}  \longrightarrow  \ottnt{v}  \triangleright  \ottsym{(}  \gamma_{{\mathrm{1}}}  \fatsemi  \gamma_{{\mathrm{2}}}  \ottsym{)}}{%
{\ottdrulename{S\_Trans}}{}%
}}

\newcommand{\ottdruleSXXResolve}[1]{\ottdrule[#1]{%
\ottpremise{\chi \, \ottsym{=} \,  (  \alpha  \pipe  \ottmv{c}  :  \ottmv{F} \, \overline{\tau}   \sim   \alpha  )  \quad \hspace{-.6em} \quad  \ottmv{F} \  \overline{\tau}  \Downarrow  \ottnt{q} }%
}{
\ottkw{assume} \, \chi \, \ottkw{in} \, \ottnt{e}  \longrightarrow  \ottnt{e}  \ottsym{[}  \ottnt{q}  \ottsym{/}  \chi  \ottsym{]}}{%
{\ottdrulename{S\_Resolve}}{}%
}}

\newcommand{\ottdruleRTop}[1]{\ottdrule[#1]{%
\ottpremise{\xi  \ottsym{:}  \overline{\ottnt{E} }  \in  \Sigma \quad \hspace{-.6em} \quad \ottnt{E_{\ottmv{i}}} \, \ottsym{=} \, \forall \, \overline{\alpha} \, \overline{\chi}  \ottsym{.}  \ottmv{F} \, \overline{\sigma}  \sim  \sigma_{{\mathrm{0}}}}%
\ottpremise{\overline{\tau} \, \ottsym{=} \, \overline{\sigma}  \ottsym{[}  \overline{\rho}  \ottsym{/}  \overline{\alpha}  \ottsym{]} \quad \hspace{-.6em} \quad \tau' \, \ottsym{=} \, \sigma_{{\mathrm{0}}}  \ottsym{[}  \overline{\rho}  \ottsym{/}  \overline{\alpha}  \ottsym{,}  \overline{\rho}'  \ottsym{/}  \mathit{tvs}\! \, \ottsym{(}  \overline{\chi}  \ottsym{)}  \ottsym{]}}%
\ottpremise{\ottcomp{ \emptyset   \vdash   \rho_{\ottmv{k}}  \ \mathsf{type} }{\ottmv{k}} \quad \hspace{-.6em} \quad \forall \, \ottmv{j}  \ottsym{<}  \ottmv{i}  \ottsym{,}   \mathsf{no\_conflict} ( \overline{\ottnt{E} } ,  \ottmv{i} ,  \overline{\rho} ,  \ottmv{j} ) }%
\ottpremise{\forall \, \ottmv{n}  \ottsym{:}}%
\ottpremise{\qquad  \chi_{\ottmv{n}} \, \ottsym{=} \,  (  \alpha'  \pipe  \ottmv{c'}  :  \ottmv{F'} \, \overline{\sigma}'   \sim   \alpha'  ) }%
\ottpremise{\qquad   \emptyset   \vdash   \rho'_{\ottmv{n}}  \ \mathsf{type} }%
\ottpremise{\qquad  \theta_{\ottmv{n}} \, \ottsym{=} \, \overline{\rho}  \ottsym{/}  \overline{\alpha}  \ottsym{,} \, \ottcomplu{\rho'_{\ottmv{m}}  \ottsym{/}  \mathit{tv}\! \, \ottsym{(}  \chi_{\ottmv{m}}  \ottsym{)}}{\ottmv{m}}{{\mathrm{1}}}{..}{{\ottmv{n}-1}}}%
\ottpremise{\qquad  \ottmv{F'} \, \overline{\sigma}'  \ottsym{[}  \theta_{\ottmv{n}}  \ottsym{]}  \rightsquigarrow_{\top}  \rho'_{\ottmv{n}}}%
}{
\ottmv{F} \, \overline{\tau}  \rightsquigarrow_{\top}  \tau'}{%
{\ottdrulename{RTop}}{}%
}}

\newcommand{\ottdefnRedT}[1]{\begin{ottdefnblock}[#1]{$\tau_{{\mathrm{1}}}  \rightsquigarrow_{\top}  \tau_{{\mathrm{2}}}$}{}
\ottusedrule{\ottdruleRTop{}}
\end{ottdefnblock}}

\newcommand{\ottdruleRed}[1]{\ottdrule[#1]{%
\ottpremise{\ottmv{F} \, \overline{\tau}  \rightsquigarrow_{\top}  \tau'}%
}{
\mathcal{C}  \ottsym{[}  \ottmv{F} \, \overline{\tau}  \ottsym{]}  \rightsquigarrow  \mathcal{C}  \ottsym{[}  \tau'  \ottsym{]}}{%
{\ottdrulename{Red}}{}%
}}

\newcommand{\ottdefnRed}[1]{\begin{ottdefnblock}[#1]{$\tau_{{\mathrm{1}}}  \rightsquigarrow  \tau_{{\mathrm{2}}}$}{\ottcom{Type reduction}}
\ottusedrule{\ottdruleRed{}}
\end{ottdefnblock}}

\renewcommand{\ottdrule}[4][]{{\displaystyle\frac{\begin{array}{l}#2\end{array}}{#3}\;\ottdrulename{#4}}}

\newenvironment{judg}[2]{ \framebox{\mbox{#1}} \quad #2 \csname gather*\endcsname}{\csname endgather*\endcsname}

\newcommand{\keyword}[1]{\textsf{\textbf{#1}}}
\newcommand{\id}[1]{\textsf{\textsl{#1}}}

\newcommand{\rul}[1]{\ottdrulename{#1}}

\usepackage{partiality}

\acmJournal{PACMPL}
\acmConference[ICFP'17]{ACM SIGPLAN International Conference on Functional Programming}{September 04--06, 2017}{Oxford, UK}
\acmYear{2017}
\acmISBN{978-x-xxxx-xxxx-x/YY/MM}
\acmDOI{10.1145/nnnnnnn.nnnnnnn}
\startPage{1}


\setcopyright{acmlicensed}
\acmJournal{PACMPL}
\acmYear{2017} \acmVolume{1} \acmNumber{1} \acmArticle{42} \acmMonth{1} \acmDOI{http://dx.doi.org/10.1145/3110286}

\title{Constrained Type Families\ifextended{} (extended version)\fi}

\author{J. Garrett Morris}
\orcid{0000-0002-3992-1080}
\affiliation{
  \institution{The University of Edinburgh}
  \department{Laboratory for Foundations of Computer Science}
  \streetaddress{10 Crichton Street}
  \city{Edinburgh}
  \country{UK}}
\affiliation{
  \institution{The University of Kansas}
  \department{Information and Telecommunication Technology Center}
  \city{Lawrence}
  \state{KS}
  \country{USA}}
\email{garrett@ittc.ku.edu}

\author{Richard A. Eisenberg}
\affiliation{
  \institution{Bryn Mawr College}
  \department{Department of Computer Science}
  \city{Bryn Mawr}
  \state{PA}
  \country{USA}}
\email{rae@cs.brynmawr.edu}

\acmBadgeR[http://icfp17.sigplan.org/track/icfp-2017-Artifacts]{}
\acmBadgeL[http://icfp17.sigplan.org/track/icfp-2017-Artifacts]{}

\begin{document}

\begin{abstract}
  We present an approach to support partiality in type-level computation without compromising
  expressiveness or type safety. Existing frameworks for type-level computation either require
  totality or implicitly assume it. For example, type families in Haskell provide a powerful,
  modular means of defining type-level computation. However, their current design implicitly assumes
  that type families are total, introducing nonsensical types and significantly complicating the
  metatheory of type families and their extensions. We propose an alternative design,
  using qualified types to pair type-level computations with predicates that capture their
  domains. Our approach naturally captures the intuitive partiality of type families, simplifying
  their metatheory. As evidence, we present the first complete proof of consistency for a language
  with closed type families.
\end{abstract}


\begin{CCSXML}\begin{hscode}\SaveRestoreHook
\column{B}{@{}>{\hspre}l<{\hspost}@{}}%
\column{E}{@{}>{\hspre}l<{\hspost}@{}}%
\>[B]{}\id{ccs2012}\mathbin{>}{}\<[E]%
\\
\>[B]{}\id{concept}\mathbin{>}{}\<[E]%
\\
\>[B]{}\id{concept\char95 id}\mathbin{>}\mathrm{10003752.10010124}\hsdot{\circ }{.\,}\mathrm{10010125.10010130}\mathbin{</}\id{concept\char95 id}\mathbin{>}{}\<[E]%
\\
\>[B]{}\id{concept\char95 desc}\mathbin{>}\id{Theory}\;\keyword{of}\;\id{computation}\,\sim\,\keyword{Type}\;\id{structures}\mathbin{</}\id{concept\char95 desc}\mathbin{>}{}\<[E]%
\\
\>[B]{}\id{concept\char95 significance}\mathbin{>}\mathrm{500}\mathbin{</}\id{concept\char95 significance}\mathbin{>}{}\<[E]%
\\
\>[B]{}\mathbin{/}\id{concept}\mathbin{>}{}\<[E]%
\\
\>[B]{}\id{concept}\mathbin{>}{}\<[E]%
\\
\>[B]{}\id{concept\char95 id}\mathbin{>}\mathrm{10011007.10011006}\hsdot{\circ }{.\,}\mathrm{10011008.10011009}\hsdot{\circ }{.\,}\mathrm{10011012}\mathbin{</}\id{concept\char95 id}\mathbin{>}{}\<[E]%
\\
\>[B]{}\id{concept\char95 desc}\mathbin{>}\id{Software}\;\id{and}\;\id{its}\;\id{engineering}\,\sim\,\id{Functional}\;\id{languages}\mathbin{</}\id{concept\char95 desc}\mathbin{>}{}\<[E]%
\\
\>[B]{}\id{concept\char95 significance}\mathbin{>}\mathrm{500}\mathbin{</}\id{concept\char95 significance}\mathbin{>}{}\<[E]%
\\
\>[B]{}\mathbin{/}\id{concept}\mathbin{>}{}\<[E]%
\\
\>[B]{}\mathbin{/}\id{ccs2012}\mathbin{>}{}\<[E]%
\ColumnHook
\end{hscode}\resethooks
\end{CCSXML}

\ccsdesc[500]{Theory of computation~Type structures}
\ccsdesc[500]{Software and its engineering~Functional languages}



\keywords{Type families, Type-level computation, Type classes, Haskell}

\maketitle

\section{Introduction}
\label{sec:intro}

\emph{Indexed type families}~\cite{ChakravartyKJ05,SchrijversPJCS08} extend the Haskell type system with modular
type-level computation.  They allow programmers to define and use open mappings from types to types.
These have given rise to further extensions of the language, such as closed type
families~\cite{EisenbergVPJW14} and injective type families~\cite{StolarekJE15}, and they have many
applications, including encoding units of measure in scientific calculation~\cite{MuranushiE14} and
extensible variants~\cite{Bahr14,Morris15}.

Nevertheless, some aspects of type families 
remain counterintuitive. 
Consider a unary type family \ensuremath{\id{F}} with no defining equations.  A type expression such as \ensuremath{\id{F}\;\id{Int}} should be meaningless---quite literally, as there are no equations for \ensuremath{\id{F}} to give it meaning.
Nevertheless, not only is \ensuremath{\id{F}\;\id{Int}} a type, but there are simple programs (such as divergence) that
demonstrate its inhabitation.  This apparent paradox has both practical and theoretical
consequences.
For example, we define a closed type family \ensuremath{\id{Equ}} such that \ensuremath{\id{Equ}\;\tau\;\sigma} should be \ensuremath{\id{True}} iff \ensuremath{\tau}
and \ensuremath{\sigma} are the same type:\footnote{We use here the promoted \ensuremath{\id{Bool}} kind,
  as introduced by \citet{YorgeyWCJVM12}.}
\begin{hscode}\SaveRestoreHook
\column{B}{@{}>{\hspre}l<{\hspost}@{}}%
\column{3}{@{}>{\hspre}l<{\hspost}@{}}%
\column{12}{@{}>{\hspre}c<{\hspost}@{}}%
\column{12E}{@{}l@{}}%
\column{15}{@{}>{\hspre}l<{\hspost}@{}}%
\column{E}{@{}>{\hspre}l<{\hspost}@{}}%
\>[B]{}\keyword{type}\;\keyword{family}\;\id{Equ}\;\id{a}\;\id{b}\mathbin{::}\id{Bool}\;\keyword{where}{}\<[E]%
\\
\>[B]{}\hsindent{3}{}\<[3]%
\>[3]{}\id{Equ}\;\id{a}\;\id{a}{}\<[12]%
\>[12]{}\mathrel{=}{}\<[12E]%
\>[15]{}\id{True}{}\<[E]%
\\
\>[B]{}\hsindent{3}{}\<[3]%
\>[3]{}\id{Equ}\;\id{a}\;\id{b}{}\<[12]%
\>[12]{}\mathrel{=}{}\<[12E]%
\>[15]{}\id{False}{}\<[E]%
\ColumnHook
\end{hscode}\resethooks
The type family application \ensuremath{\id{Equ}\;\id{a}\;[\mskip1.5mu \id{a}\mskip1.5mu]} does not
reduce.  This surprising fact stems from the use of \emph{infinitary}
unification during closed type family reduction~\cite[\S6.2]{EisenbergVPJW14}.
This explanation
raises more questions, however: Haskell does not have infinite types, so why use
 infinitary unification?  Again, type families play a role.  Consider the following:
\begin{hscode}\SaveRestoreHook
\column{B}{@{}>{\hspre}l<{\hspost}@{}}%
\column{E}{@{}>{\hspre}l<{\hspost}@{}}%
\>[B]{}\keyword{type}\;\keyword{family}\;\id{Loop}\mathbin{::}\star{}\<[E]%
\\
\>[B]{}\keyword{type}\;\keyword{instance}\;\id{Loop}\mathrel{=}[\mskip1.5mu \id{Loop}\mskip1.5mu]{}\<[E]%
\ColumnHook
\end{hscode}\resethooks
The type family application \ensuremath{\id{Loop}} will never rewrite to a ground type.  But, \ensuremath{\id{Equ}\;\id{Loop}\;[\mskip1.5mu \id{Loop}\mskip1.5mu]} is equal to \ensuremath{\id{Equ}\;[\mskip1.5mu \id{Loop}\mskip1.5mu]\;[\mskip1.5mu \id{Loop}\mskip1.5mu]}, and thus to \ensuremath{\id{True}}, justifying
not rewriting \ensuremath{\id{Equ}\;\id{a}\;[\mskip1.5mu \id{a}\mskip1.5mu]} to \ensuremath{\id{False}}.  

The complexity in this example arises from
an underlying inconsistency in the notion of type families.  Type families are used
identically to other type constructors; that is, uses of type families come with an unstated assumption
of totality, regardless of the equations 
in the program.
For example, \ensuremath{\id{Loop}} will never reduce to any ground (i.e., type
family-free) Haskell type, but still must be treated as a type for the purposes of reducing \ensuremath{\id{Equ}\;\id{a}\;[\mskip1.5mu \id{a}\mskip1.5mu]}.  In essence, \ensuremath{\id{Loop}} is treated as a total 0-ary function on types, even though its definition
makes it partial.
Similar problems arise in injective type families and in interpreting
definitions using open type families~\secref{partiality}.

We propose a refinement of indexed type families, \emph{constrained type families}, which explicitly
captures partiality in the definition and use of type families.
In our design, partial type families be defined associated with type classes.
  Thus, the domain of a type family is
naturally characterized by its corresponding type class predicate.
We further
insist that uses of type family be qualified by their defining class predicates, guaranteeing that
they be well-defined.  Non-terminating, or otherwise undefinable, type family applications will be
guarded by unsatisfiable class predicates, assuring that they cannot be used to violate type safety.



The introduction of constraints simplifies the metatheory of type families, separating concerns
about partiality from the machinery of rewriting.  We demonstrate this by formally specifying a
calculus with constrained closed type families and showing it is sound, relying on neither infinitary
unification nor an assumption of termination, in contrast to previous work on type
families~\cite{EisenbergVPJW14}.  In terms of our earlier example, this means that we could safely
rewrite \ensuremath{\id{Equ}\;\id{a}\;[\mskip1.5mu \id{a}\mskip1.5mu]} to \ensuremath{\id{False}} without risking type safety.  

In summary, this paper contributes:
\begin{itemize}
\item An analysis of difficulties in the evolution of type families, including the need for
  infinitary unification in the semantics of closed type families and the inexpressiveness of
  injective type families. These warts on the type system belie a hidden assumption of
  totality~\secref{partiality}.
\item The design of constrained type families~\secref{predicates}, which relaxes the assumption of
  totality by using type class predicates to characterize the domains of definition of partial type
  families.
\item The design of closed type classes~\secref{closed}, a simplification of instance
  chains~\cite{MorrisJ10}. Closed type classes enable partial closed type families and increase the
  expressiveness of type classes, subsuming many uses of overlapping
  instances~\cite{PeytonJonesJM97}.
\item A core calculus with constrained type families~\secref{formal}, together with a proof of its
  soundness that requires neither an assumption of termination nor infinitary unification. This is a
  novel result for a calculus supporting type families with non-linear patterns. Even with
  infinitary unification, prior work~\cite{EisenbergVPJW14} was
  unable to fully prove consistency.
\item A design that allows existing Haskell code to remain well typed, so long as it does not
  depend on the behavior of undefined type family applications~\secref{compat}.
\end{itemize}
Although this paper is primarily concerned with Haskell,
our work is applicable to any partial language that
supports type-level computation. 
We hope that our work, among others', will encourage other languages to join in the type-level fun.

\section{Type Families in Haskell}
\label{sec:background}

Associated type synonyms~\cite{ChakravartyKJ05} are a feature of the Haskell type system that allows
the definition and use of extensible maps from types to types.  They address many of the problems
that arise in the use of multi-parameter type classes.  One example is a class of collection types,
\ensuremath{\id{Collects}}.  In defining the \ensuremath{\id{Collects}} methods for a type \ensuremath{\id{c}}, we naturally need access to the
types of its elements.  To capture the types of collection elements, we could define the \ensuremath{\id{Collects}}
class to have an associated type \ensuremath{\id{Elem}}:
\begin{hscode}\SaveRestoreHook
\column{B}{@{}>{\hspre}l<{\hspost}@{}}%
\column{3}{@{}>{\hspre}l<{\hspost}@{}}%
\column{11}{@{}>{\hspre}c<{\hspost}@{}}%
\column{11E}{@{}l@{}}%
\column{15}{@{}>{\hspre}l<{\hspost}@{}}%
\column{E}{@{}>{\hspre}l<{\hspost}@{}}%
\>[B]{}\keyword{class}\;\id{Collects}\;\id{c}\;\keyword{where}{}\<[E]%
\\
\>[B]{}\hsindent{3}{}\<[3]%
\>[3]{}\keyword{type}\;\id{Elem}\;\id{c}\mathbin{::}\star{}\<[E]%
\\
\>[B]{}\hsindent{3}{}\<[3]%
\>[3]{}\id{empty}{}\<[11]%
\>[11]{}\mathbin{::}{}\<[11E]%
\>[15]{}\id{c}{}\<[E]%
\\
\>[B]{}\hsindent{3}{}\<[3]%
\>[3]{}\id{insert}{}\<[11]%
\>[11]{}\mathbin{::}{}\<[11E]%
\>[15]{}\id{Elem}\;\id{c}\to \id{c}\to \id{c}{}\<[E]%
\ColumnHook
\end{hscode}\resethooks
This declares both the \ensuremath{\id{Collects}} class and the type family \ensuremath{\id{Elem}}.  Instances of the \ensuremath{\id{Collects}}
class, must also specify instances of the \ensuremath{\id{Elem}} type family:
\begin{hscode}\SaveRestoreHook
\column{B}{@{}>{\hspre}l<{\hspost}@{}}%
\column{3}{@{}>{\hspre}l<{\hspost}@{}}%
\column{11}{@{}>{\hspre}c<{\hspost}@{}}%
\column{11E}{@{}l@{}}%
\column{14}{@{}>{\hspre}l<{\hspost}@{}}%
\column{18}{@{}>{\hspre}c<{\hspost}@{}}%
\column{18E}{@{}l@{}}%
\column{21}{@{}>{\hspre}l<{\hspost}@{}}%
\column{E}{@{}>{\hspre}l<{\hspost}@{}}%
\>[B]{}\keyword{instance}\;\id{Collects}\;[\mskip1.5mu \id{a}\mskip1.5mu]\;\keyword{where}{}\<[E]%
\\
\>[B]{}\hsindent{3}{}\<[3]%
\>[3]{}\keyword{type}\;\id{Elem}\;[\mskip1.5mu \id{a}\mskip1.5mu]{}\<[18]%
\>[18]{}\mathrel{=}{}\<[18E]%
\>[21]{}\id{a}{}\<[E]%
\\
\>[B]{}\hsindent{3}{}\<[3]%
\>[3]{}\id{empty}{}\<[11]%
\>[11]{}\mathrel{=}{}\<[11E]%
\>[14]{}[\mskip1.5mu \mskip1.5mu]{}\<[E]%
\\
\>[B]{}\hsindent{3}{}\<[3]%
\>[3]{}\id{insert}{}\<[11]%
\>[11]{}\mathrel{=}{}\<[11E]%
\>[14]{}(\mathbin{:}){}\<[E]%
\ColumnHook
\end{hscode}\resethooks

While associated types provide a natural syntactic combination of class and type family definitions,
the class and type family components can actually be specified and formalized entirely
separately~\cite{SchrijversPJCS08}.  Instead of using an associated type synonym, we could
have defined \ensuremath{\id{Elem}} as a distinct top-level entity.
\begin{hscode}\SaveRestoreHook
\column{B}{@{}>{\hspre}l<{\hspost}@{}}%
\column{3}{@{}>{\hspre}l<{\hspost}@{}}%
\column{11}{@{}>{\hspre}c<{\hspost}@{}}%
\column{11E}{@{}l@{}}%
\column{15}{@{}>{\hspre}l<{\hspost}@{}}%
\column{E}{@{}>{\hspre}l<{\hspost}@{}}%
\>[B]{}\keyword{type}\;\keyword{family}\;\id{Elem}\;\id{c}\mathbin{::}\star{}\<[E]%
\\
\>[B]{}\keyword{class}\;\id{Collects}\;\id{c}\;\keyword{where}{}\<[E]%
\\
\>[B]{}\hsindent{3}{}\<[3]%
\>[3]{}\id{empty}{}\<[11]%
\>[11]{}\mathbin{::}{}\<[11E]%
\>[15]{}\id{c}{}\<[E]%
\\
\>[B]{}\hsindent{3}{}\<[3]%
\>[3]{}\id{insert}{}\<[11]%
\>[11]{}\mathbin{::}{}\<[11E]%
\>[15]{}\id{Elem}\;\id{c}\to \id{c}\to \id{c}{}\<[E]%
\ColumnHook
\end{hscode}\resethooks
While there would then be no syntactic requirement to combine instances of the class and type
family, it is easy to see that class instances would be undefinable without corresponding type
family instances, while type family instances would be unusable without corresponding type class
instances.
\begin{hscode}\SaveRestoreHook
\column{B}{@{}>{\hspre}l<{\hspost}@{}}%
\column{3}{@{}>{\hspre}l<{\hspost}@{}}%
\column{11}{@{}>{\hspre}c<{\hspost}@{}}%
\column{11E}{@{}l@{}}%
\column{14}{@{}>{\hspre}l<{\hspost}@{}}%
\column{E}{@{}>{\hspre}l<{\hspost}@{}}%
\>[B]{}\keyword{type}\;\keyword{instance}\;\id{Elem}\;[\mskip1.5mu \id{a}\mskip1.5mu]\mathrel{=}\id{a}{}\<[E]%
\\
\>[B]{}\keyword{instance}\;\id{Collects}\;[\mskip1.5mu \id{a}\mskip1.5mu]\;\keyword{where}{}\<[E]%
\\
\>[B]{}\hsindent{3}{}\<[3]%
\>[3]{}\id{empty}{}\<[11]%
\>[11]{}\mathrel{=}{}\<[11E]%
\>[14]{}[\mskip1.5mu \mskip1.5mu]{}\<[E]%
\\
\>[B]{}\hsindent{3}{}\<[3]%
\>[3]{}\id{insert}{}\<[11]%
\>[11]{}\mathrel{=}{}\<[11E]%
\>[14]{}(\mathbin{:}){}\<[E]%
\ColumnHook
\end{hscode}\resethooks
These definitions are entirely equivalent to the original definitions; while it may be impractical
to use a type family instance \ensuremath{\id{Elem}\;\tau} without a corresponding instance \ensuremath{\id{Collects}\;\tau}, it is not
an error in either approach to do so.

Type families can express many type-level computations.  However, some useful type-level functions
cannot be expressed using open type families.  One is the type family \ensuremath{\id{Equ}\;\id{a}\;\id{b}}, as appeared in the
introduction.  We might hope to characterize \ensuremath{\id{Equ}} using the following equations:
\begin{hscode}\SaveRestoreHook
\column{B}{@{}>{\hspre}l<{\hspost}@{}}%
\column{24}{@{}>{\hspre}c<{\hspost}@{}}%
\column{24E}{@{}l@{}}%
\column{28}{@{}>{\hspre}l<{\hspost}@{}}%
\column{E}{@{}>{\hspre}l<{\hspost}@{}}%
\>[B]{}\keyword{type}\;\keyword{family}\;\id{Equ}\;\id{a}\;\id{b}\mathbin{::}\id{Bool}{}\<[E]%
\\
\>[B]{}\keyword{type}\;\keyword{instance}\;\id{Equ}\;\id{a}\;\id{a}{}\<[24]%
\>[24]{}\mathrel{=}{}\<[24E]%
\>[28]{}\id{True}{}\<[E]%
\\
\>[B]{}\keyword{type}\;\keyword{instance}\;\id{Equ}\;\id{a}\;\id{b}{}\<[24]%
\>[24]{}\mathrel{=}{}\<[24E]%
\>[28]{}\id{False}{}\<[E]%
\ColumnHook
\end{hscode}\resethooks
However, type family instances are interpreted without any ordering, arising either from their
source locations or from their relative generality.  In this case, both equations apply to a type
family application such as \ensuremath{\id{Equ}\;\id{Int}\;\id{Int}} but give different results, and so they are rejected as
inconsistent.  Closed type families~\cite{EisenbergVPJW14} address this problem by allowing ordered
overlap among the instances in a type family definition, so long as the family cannot be further
extended.  We could write the equality function using a closed type family, as we did in the
introduction.  Closed type families cannot be extended later, even if their definitions do not cover
all possible applications.  For example, consider the following definition:
\begin{hscode}\SaveRestoreHook
\column{B}{@{}>{\hspre}l<{\hspost}@{}}%
\column{3}{@{}>{\hspre}l<{\hspost}@{}}%
\column{E}{@{}>{\hspre}l<{\hspost}@{}}%
\>[B]{}\keyword{type}\;\keyword{family}\;\id{OnlyInt}\;\id{a}\mathbin{::}\id{Bool}\;\keyword{where}{}\<[E]%
\\
\>[B]{}\hsindent{3}{}\<[3]%
\>[3]{}\id{OnlyInt}\;\id{Int}\mathrel{=}\id{True}{}\<[E]%
\ColumnHook
\end{hscode}\resethooks
The type family application \ensuremath{\id{OnlyInt}\;\id{Bool}} does not rewrite to any ground type (i.e., type without
type family applications), but the programmer is still prevented from adding further equations to
\ensuremath{\id{OnlyInt}}.

In general, type families need not be injective.  However, there are cases in which it would be
useful to capture the natural injectivity of type-level definitions. For example: session types,
which provide static typing for communication protocols, depend on a naturally injective notion of
duality.  We would expect that if the duals of two session types are equal, then the session types
themselves are equal as well.  Injective type families~\cite{StolarekJE15} can express such cases;
duality could be characterized by the following type family:
\begin{hscode}\SaveRestoreHook
\column{B}{@{}>{\hspre}l<{\hspost}@{}}%
\column{E}{@{}>{\hspre}l<{\hspost}@{}}%
\>[B]{}\keyword{type}\;\keyword{family}\;\id{Dual}\;\id{s}\mathrel{=}\id{r}\mid \id{r}\to \id{s}{}\<[E]%
\ColumnHook
\end{hscode}\resethooks
where the \ensuremath{\id{s}\to \id{r}} annotation denotes the injectivity of duality. The compiler
validates that the injectivity condition is upheld by the type family's defining equations.

The most recent version of
the Glasgow Haskell Compiler, GHC~8.0, accepts all the varieties of type families described above.

\section{The Totality Trap}
\label{sec:partiality}

Recent developments in the theory and implementation of type families~\cite{EisenbergVPJW14,StolarekJE15}
have relied on increasingly technical and confusing
constraints, impeding their use in practice.  In this
section, we argue that these problems arise from a single source: an implicit assumption of totality
for type families.

\subsection{The Assumption of Totality}
\label{sec:assume-totality}

Type families are open and extensible, and so suggesting that they are assumed total seems
counterintuitive.  However, we can rephrase the question as follows.  Suppose that we have a type
family \ensuremath{\id{F}} with no equations.  Is \ensuremath{\id{F}\;\id{Bool}} a type?  It seems absurd that it should be---after all,
the meaning of a type family is given by its equations, and \ensuremath{\id{F}} has no equations. Yet we can
observe that it is:
\begin{hscode}\SaveRestoreHook
\column{B}{@{}>{\hspre}l<{\hspost}@{}}%
\column{E}{@{}>{\hspre}l<{\hspost}@{}}%
\>[B]{}\keyword{type}\;\keyword{family}\;\id{F}\;\id{a}\mathbin{::}\star{}\<[E]%
\\
\>[B]{}\id{f}\;\id{x}\mathrel{=}\id{fst}\;(\id{x},\id{undefined}\mathbin{::}\id{F}\;\id{Bool}){}\<[E]%
\ColumnHook
\end{hscode}\resethooks
This is a well-typed definition: \ensuremath{\id{f}} has type \ensuremath{\id{a}\to \id{a}} and behaves like the identity function.  So
\ensuremath{\id{F}\;\id{Bool}} must be a type, even if all we can observe about it are properties true of all Haskell
types (such as pointedness, or definition of \ensuremath{\id{seq}}).  While it is possible that \ensuremath{\id{F}\;\id{Bool}} will be
defined later in the program, this program would be equally valid were \ensuremath{\id{F}} replaced by a closed type
family, such as \ensuremath{\id{OnlyInt}} (above), in which case we could say with confidence that \ensuremath{\id{OnlyInt}\;\id{Bool}}
would never become defined.

This illustrates that our intuitive understanding of type families is flawed.  Rather than thinking
of type families as partial functions on types, where individual instances extend the definition of
the type family, a type family declaration should be thought of as initially introducing an infinite
family of distinct types, one for each possible application of the type family, and individual
instances as equating previously distinct types.  But does this distinction cause actual problems?
Consider the following definition:
\begin{hscode}\SaveRestoreHook
\column{B}{@{}>{\hspre}l<{\hspost}@{}}%
\column{E}{@{}>{\hspre}l<{\hspost}@{}}%
\>[B]{}\id{g}\;\id{x}\mathrel{=}\id{x}\mathbin{:}\id{x}{}\<[E]%
\ColumnHook
\end{hscode}\resethooks
We might expect this definition to be ill-typed: \ensuremath{\id{x}} must have both type \ensuremath{\tau} and type \ensuremath{[\mskip1.5mu \tau\mskip1.5mu]}, a
seeming contradiction.  But recall type family \ensuremath{\id{Loop}}~\pref{sec:intro}.  If we must assume that
\ensuremath{\id{Loop}} is a type, then it is clearly a satisfying instantiation of the constraint
\ensuremath{\tau\,\sim\,[\mskip1.5mu \tau\mskip1.5mu]}, and so we can assign \ensuremath{\id{g}} the type \ensuremath{\id{Loop}\to \id{Loop}}.  But worse, we expect Haskell
terms to have principal types, and since \ensuremath{\id{g}} makes no reference to \ensuremath{\id{Loop}}, \ensuremath{\id{Loop}\to \id{Loop}} cannot be
its principal type.  Instead, we conclude that \ensuremath{\id{g}} has principal type \ensuremath{\id{a}\,\sim\,[\mskip1.5mu \id{a}\mskip1.5mu]\Rightarrow \id{a}\to \id{a}}.

Now the true consequences of the totality assumption are revealed.  It is not only a gap between the
intuitive and actual meanings of type families, nor just an incompleteness in specifications of type
checking and type inference with type families.  Rather, we are left with a type system which must
accept some (but not all) apparently erroneous definitions: we can reject \ensuremath{\id{Int}\,\sim\,\id{Bool}}, even if we must
accept \ensuremath{\id{a}\,\sim\,[\mskip1.5mu \id{a}\mskip1.5mu]}.  The specification of principal types for this system remains an open question.

It might seem that the problems illustrated here are not to do with the totality assumption itself,
but rather in its interaction with the accepted equations for \ensuremath{\id{Loop}}, and therefore should be fixed
simply by rejecting \ensuremath{\id{Loop}} (and other non-terminating type family definitions).  However, this would
burden the programmer with satisfying some termination checking algorithm, and does not reflect the
realities of either type family practice or research (where significant effort has been devoted to
accounting for non-terminating type families).  Instead, we propose~\secref{predicates} an approach that restores
the intuitive interpretation of type families, preserves their current uses, and avoids introducing
new constraints, such as termination checking.

\subsection{Closed Type Families and the Infinity Problem}
\label{sec:closed-infinite}
\label{sec:old-apartness}

We have seen that assuming totality of type families introduces a variety of theoretical problems.
With the development of closed type families, the totality assumption began causing practical
problems as well, as demonstrated above in the unpleasant interaction between \ensuremath{\id{Equ}} and \ensuremath{\id{Loop}}.

Closed type families rely on a notion of \emph{apartness} to determine when an equation cannot apply
to a particular type family application.  Intuitively, two types are apart if they have no common
instantiations; for example, \ensuremath{\id{Equ}\;\id{Int}\;\id{Bool}} is apart from \ensuremath{\id{Equ}\;\id{a}\;\id{a}}, while \ensuremath{\id{Equ}\;\id{a}\;\id{b}} is not apart
from \ensuremath{\id{Equ}\;\id{a}\;\id{a}}.  This intuition can be formalized in terms of unification: two types are apart if
they have no most general unifier. The problems with \ensuremath{\id{Loop}} arise from the apartness of \ensuremath{\id{Equ}\;\id{a}\;[\mskip1.5mu \id{a}\mskip1.5mu]}
and \ensuremath{\id{Equ}\;\id{a}\;\id{a}}: while these instances do not have any most general unifier in the typical sense,
considering them apart leads to the unsoundness above.  This problem is addressed in closed type
families by defining apartness in terms of \emph{infinitary} unification.  As there is an infinite
(i.e., non-idempotent) unifier of \ensuremath{\id{Equ}\;\id{a}\;[\mskip1.5mu \id{a}\mskip1.5mu]} and \ensuremath{\id{Equ}\;\id{a}\;\id{a}} (namely $\{ [a] / a \}$) \ensuremath{\id{Equ}\;\id{a}\;[\mskip1.5mu \id{a}\mskip1.5mu]}
does not rewrite to \ensuremath{\id{False}} until \ensuremath{\id{a}} is instantiated to some concrete
type. 

While the interaction between closed and infinite type families may seem like a theoretical concern,
the solution causes confusion in practice.  Programmers
discovering that \ensuremath{\id{Equ}\;\id{a}\;[\mskip1.5mu \id{a}\mskip1.5mu]} does not rewrite to \ensuremath{\id{False}} consider this a bug in the implementation
rather than an expected behavior of the type system.\footnote{See, among others:
\begin{itemize}
\item \url{https://ghc.haskell.org/trac/ghc/ticket/9082}: Unexpected behavior involving closed type
  families and repeat occurrences of variables
\item \url{https://ghc.haskell.org/trac/ghc/ticket/9918}: {GHC} chooses an instance between two
  overlapping, but cannot resolve a clause within the similar closed type family
\end{itemize}} It can also result in programs that use closed type families to require more complex type
signatures than similar programs expressed using older techniques, like overlapping
instances~\cite{PeytonJonesJM97} or instance chains~\cite{MorrisJ10}.

\subsection{Explosive Injectivity}
\label{sec:injective}

We have seen that the totality assumption causes both theoretical and practical problems in the
ongoing development of type families.  These problems have depended on the interaction of other type
system features with non-terminating type families.  This might seem like a corner case, and one
that programmers would not expect to encounter in practice.  Recent work on injective type families
bring the problems caused by the totality assumption into starker relief, without relying on
non-terminating type families.

Some families of types are naturally injective; examples include duality relationships
\citep{PucellaT08,LindleyM16b} or the pairing between mutable and immutable vectors types in the
\texttt{vector} library.\footnote{See \url{https://github.com/haskell/vector/issues/34}: Add
  immutable type family.}  However, because type families are not injective in general, expressing
such examples required either the introduction of additional constraints to assure involutiveness or
the use of either proxy arguments or type applications~\cite{EisenbergWA16}
to fix type parameters.  Injective type families~\cite{StolarekJE15}
introduce annotations on type family declarations that characterize their injectivity.  For example,
the duality function for session types could be declared by
\begin{hscode}\SaveRestoreHook
\column{B}{@{}>{\hspre}l<{\hspost}@{}}%
\column{E}{@{}>{\hspre}l<{\hspost}@{}}%
\>[B]{}\keyword{type}\;\keyword{family}\;\id{Dual}\;\id{s}\mathrel{=}\id{r}\mid \id{r}\to \id{s}{}\<[E]%
\ColumnHook
\end{hscode}\resethooks
This declaration differs from traditional type family declarations in two ways: first, the result is
named (\ensuremath{\id{r}}), and second, the annotation \ensuremath{\id{r}\to \id{s}} specifies \ensuremath{\id{Dual}}'s injectivity: its result
determines its argument.

Unfortunately, injective type families require seemingly arcane restrictions to preserve type
safety.  For example, consider the following apparently innocuous definitions:
\begin{hscode}\SaveRestoreHook
\column{B}{@{}>{\hspre}l<{\hspost}@{}}%
\column{E}{@{}>{\hspre}l<{\hspost}@{}}%
\>[B]{}\keyword{type}\;\keyword{family}\;\id{ListElems}\;\id{a}\mathrel{=}\id{b}\mid \id{b}\to \id{a}{}\<[E]%
\\
\>[B]{}\keyword{type}\;\keyword{instance}\;\id{ListElems}\;[\mskip1.5mu \id{a}\mskip1.5mu]\mathrel{=}\id{a}{}\<[E]%
\ColumnHook
\end{hscode}\resethooks
\ensuremath{\id{ListElems}} is clearly injective: if $\ensuremath{\id{a}} \sim \ensuremath{\id{b}}$ then $\ensuremath{[\mskip1.5mu \id{a}\mskip1.5mu]} \sim \ensuremath{[\mskip1.5mu \id{b}\mskip1.5mu]}$.  Nevertheless, this
example is sufficient to derive a violation of type safety: by the definition of \ensuremath{\id{ListElems}}, we have that
$\ensuremath{\id{ListElems}\;[\mskip1.5mu \id{ListElems}\;\id{Int}\mskip1.5mu]}$ 
$\sim \ensuremath{\id{ListElems}\;\id{Int}}$, and then by injectivity, we have that
$\ensuremath{[\mskip1.5mu \id{ListElems}\;\id{Int}\mskip1.5mu]} \sim \ensuremath{\id{Int}}$, an impossibility.  In the previous sections, difficulties stemmed
from the type family application \ensuremath{\id{Loop}}, which does not correspond to any ground type.  In this
case, problems arise from the type family application \ensuremath{\id{ListElems}\;\id{Int}}, which similarly cannot
correspond to any ground type.  Suppose that we could prove that $\ensuremath{\id{ListElems}\;\id{Int}} \sim \tau$ for some
type $\tau$; as, from the definition of \ensuremath{\id{ListElems}} we also have that $\ensuremath{\id{ListElems}\;[\mskip1.5mu \tau\mskip1.5mu]} \sim \tau$,
the injectivity of \ensuremath{\id{ListElems}} has been violated.

Definitions like that of \ensuremath{\id{ListElems}} are ruled out by strict restrictions on the right-hand sides of
injective type family equations; for example, the RHS of an injective type family equation cannot
(in most cases) be a type variable or another type family application.  These restrictions are
necessary to assure the safety of injective type families, but have not yet been shown to be
sufficient.  They are also a significant limitation in expressiveness, especially in comparison with
older approaches, such as functional dependencies~\cite{Jones00}.

\section{Constraining Type Families}
\label{sec:predicates}

In the previous section, we have seen that indexed type families are implicitly assumed to be
defined at all their applications---that is, they represent total functions on types.  We have
seen how this totality assumption introduces practical and theoretical obstacles, both in preserving
totality (such as in injective type families) or in accounting for its violations (such as in the
interaction between non-terminating and closed type families).

We propose a new approach, \emph{constrained type families}, which treats type families as partial
maps between types.  Our key observation is that Haskell already supplies a mechanism to limit the
domain of polymorphism: qualified types with type classes. So we can capture partiality by
associating each type family with a type class that characterizes its domain of definition.  We will
show that this approach naturally resolves the practical and theoretical issues with type families
and restores their intuitive meaning, while adding little new complexity for programmer or
implementer.

This section describes constrained open type families; we discuss the extension of our approach to
closed type families in the following section.

\subsection{Constrained Type Families}

Our goal is a system of partial type families that sacrifices neither the expressiveness nor the
ease of use of present type families.  This introduces two challenges. First, we must retain the
applicative syntax of type families, while taking their domains of definition into account.  That
is, a type family application such as \ensuremath{\id{F}\;\tau} should be constrained by the domain of \ensuremath{\id{F}}.  In
particular, whether a type family application that contains type variables, such as \ensuremath{\id{F}\;\id{a}}, is
well-defined depends on the instantiation of the type variable \ensuremath{\id{a}}.  Second, we must keep
type families easy to define, while simultaneously characterizing their domains of definition.

We address each of these problems using features already present in modern Haskell.  Haskell already
has a mechanism suited to capturing this kind of constrained polymorphism: qualified types and type
classes~\cite{WadlerBlott89}.  Qualified types are currently used to track when type class methods are defined; for
example, the equality operator is defined at all types \ensuremath{\id{a}\to \id{a}\to \id{Bool}} such that the class
predicate \ensuremath{\id{Eq}\;\id{a}} is satisfiable.  Our intention is to reuse the qualified types mechanism to account
for partiality in type families as well.  Haskell also supports a mechanism that combines type
classes and type families: associated type synonyms.  Our intention is to rely on associated types
to simultaneously define type families and characterize their domains.

We propose combining these mechanisms to give an account of partial type families that matches both
the intuitions and usage of type families in Haskell today.  In doing so, we make two changes to the
surface language.  First, we require that type families be defined by associated types, disallowing
free-standing type family declarations.  This means that the well-definedness of type family
instances follows from the satisfiability of the corresponding class predicates.  In our previous
example~\secref{background}, the type family application \ensuremath{\id{Elem}\;\tau} is defined precisely when the
predicate \ensuremath{\id{Collects}\;\tau} is satisfiable.  Second, we require that all uses of type families be
well-defined, as enforced by their corresponding class predicates.  That is, uses of the type family
\ensuremath{\id{Elem}\;\tau} must occur in a context that is sufficient to prove \ensuremath{\id{Collects}\;\tau} (either because
\ensuremath{\id{Collects}\;\tau} is assumed or provable from the instances).

Our approach captures the natural interpretation and use of open type families.  Open type families
are already primarily useful in combination with type class constraints---we have no way to use a
value of type \ensuremath{\id{Elem}\;\tau} unless we have some additional information about that type, captured by the
class constraint \ensuremath{\id{Collects}\;\tau}.  Thus, our requirements do not reduce the expressiveness of the
language.  The remainder of the section demonstrates informally that our approach addresses the
difficulties and confusion with type families.

We begin with the behavior of undefined, or ``stuck'', type family
instances~\secref{assume-totality}.  As before, We define a type family, \ensuremath{\id{F2}}, now associated with a
class \ensuremath{\id{C2}}:
\begin{hscode}\SaveRestoreHook
\column{B}{@{}>{\hspre}l<{\hspost}@{}}%
\column{3}{@{}>{\hspre}l<{\hspost}@{}}%
\column{E}{@{}>{\hspre}l<{\hspost}@{}}%
\>[B]{}\keyword{class}\;\id{C2}\;\id{t}\;\keyword{where}{}\<[E]%
\\
\>[B]{}\hsindent{3}{}\<[3]%
\>[3]{}\keyword{type}\;\id{F2}\;\id{t}\mathbin{::}\star{}\<[E]%
\ColumnHook
\end{hscode}\resethooks
Instances of the \ensuremath{\id{F2}} type family can be added only by adding instances to the \ensuremath{\id{C2}} class:
\begin{hscode}\SaveRestoreHook
\column{B}{@{}>{\hspre}l<{\hspost}@{}}%
\column{3}{@{}>{\hspre}l<{\hspost}@{}}%
\column{E}{@{}>{\hspre}l<{\hspost}@{}}%
\>[B]{}\keyword{instance}\;\id{C2}\;\id{Int}\;\keyword{where}{}\<[E]%
\\
\>[B]{}\hsindent{3}{}\<[3]%
\>[3]{}\keyword{type}\;\id{F2}\;\id{Int}\mathrel{=}\id{Bool}{}\<[E]%
\ColumnHook
\end{hscode}\resethooks
Now, recall our function definition:
\begin{hscode}\SaveRestoreHook
\column{B}{@{}>{\hspre}l<{\hspost}@{}}%
\column{E}{@{}>{\hspre}l<{\hspost}@{}}%
\>[B]{}\id{f}\;\id{x}\mathrel{=}\id{fst}\;(\id{x},\id{undefined}\mathbin{::}\id{F2}\;\id{Bool}){}\<[E]%
\ColumnHook
\end{hscode}\resethooks
Is this definition still well-typed?  The use of \ensuremath{\id{F2}\;\id{Bool}} requires that \ensuremath{\id{C2}\;\id{Bool}} be satisfiable to
assure that it is well defined.  However, without any instances of \ensuremath{\id{C2}\;\id{Bool}} in scope, the
constraint would be unsatisfiable, so the definition would be rejected.  This account extends
naturally to polymorphism.  Suppose that we had some function \ensuremath{\id{g}} that used \ensuremath{\id{F2}}, with the following
type:
\begin{hscode}\SaveRestoreHook
\column{B}{@{}>{\hspre}l<{\hspost}@{}}%
\column{E}{@{}>{\hspre}l<{\hspost}@{}}%
\>[B]{}\id{g}\mathbin{::}\id{C2}\;\id{a}\Rightarrow \id{a}\to \id{F2}\;\id{a}{}\<[E]%
\ColumnHook
\end{hscode}\resethooks
(Note the requisite \ensuremath{\id{C2}\;\id{a}} constraint.)  Now, we could define an alternative version of \ensuremath{\id{f}} as
follows:
\begin{hscode}\SaveRestoreHook
\column{B}{@{}>{\hspre}l<{\hspost}@{}}%
\column{E}{@{}>{\hspre}l<{\hspost}@{}}%
\>[B]{}\id{f'}\;\id{x}\mathrel{=}\id{fst}\;(\id{x},\id{g}\;\id{x}){}\<[E]%
\ColumnHook
\end{hscode}\resethooks
The definition of \ensuremath{\id{f'}} is not ill-typed, but its type, \ensuremath{\id{C2}\;\id{a}\Rightarrow \id{a}\to \id{a}}, includes the \ensuremath{\id{C2}\;\id{a}}
constraint to assures that the type of \ensuremath{\id{g}\;\id{x}} is well-defined.

The complications with closed type families arose from their interaction with non-terminating type
families.  We can already see how non-terminating type family definitions would play out in our
system.  As before, we define a type family \ensuremath{\id{Loop}}, but now as an associated type to a type class
\ensuremath{\id{Loopy}}:
\begin{hscode}\SaveRestoreHook
\column{B}{@{}>{\hspre}l<{\hspost}@{}}%
\column{3}{@{}>{\hspre}l<{\hspost}@{}}%
\column{E}{@{}>{\hspre}l<{\hspost}@{}}%
\>[B]{}\keyword{class}\;\id{Loopy}\;\keyword{where}{}\<[E]%
\\
\>[B]{}\hsindent{3}{}\<[3]%
\>[3]{}\keyword{type}\;\id{Loop}\mathbin{::}\star{}\<[E]%
\ColumnHook
\end{hscode}\resethooks
As \ensuremath{\id{Loop}} is a 0-ary type family, \ensuremath{\id{Loopy}} is a 0-ary type class.  This is not problematic; in
particular, there are two canonical 0-ary type classes, one whose predicates are trivially true and
another whose predicates are unsatisfiable.  Now, suppose we want to add the equation
$\ensuremath{\id{Loop}} \sim \ensuremath{[\mskip1.5mu \id{Loop}\mskip1.5mu]}$.  We would need to do so via an instance of \ensuremath{\id{Loopy}}.  However, we cannot add
the instance
\begin{hscode}\SaveRestoreHook
\column{B}{@{}>{\hspre}l<{\hspost}@{}}%
\column{3}{@{}>{\hspre}l<{\hspost}@{}}%
\column{E}{@{}>{\hspre}l<{\hspost}@{}}%
\>[B]{}\keyword{instance}\;\id{Loopy}\;\keyword{where}{}\<[E]%
\\
\>[B]{}\hsindent{3}{}\<[3]%
\>[3]{}\keyword{type}\;\id{Loop}\mathrel{=}[\mskip1.5mu \id{Loop}\mskip1.5mu]{}\<[E]%
\ColumnHook
\end{hscode}\resethooks
as the use of \ensuremath{\id{Loop}} on the right-hand side of the type definition does not have a corresponding
constraint.  We can add the instance
\begin{hscode}\SaveRestoreHook
\column{B}{@{}>{\hspre}l<{\hspost}@{}}%
\column{4}{@{}>{\hspre}l<{\hspost}@{}}%
\column{E}{@{}>{\hspre}l<{\hspost}@{}}%
\>[B]{}\keyword{instance}\;\id{Loopy}\Rightarrow \id{Loopy}\;\keyword{where}{}\<[E]%
\\
\>[B]{}\hsindent{4}{}\<[4]%
\>[4]{}\keyword{type}\;\id{Loop}\mathrel{=}[\mskip1.5mu \id{Loop}\mskip1.5mu]{}\<[E]%
\ColumnHook
\end{hscode}\resethooks
but it is clear that the \ensuremath{\id{Loopy}} constraint cannot be satisfied.  Thus, any attempt to use this
\ensuremath{\id{Loop}} equation must be guarded by an unsatisfiable \ensuremath{\id{Loopy}} constraint, and so cannot compromise
type safety.

%

Finally, we can give an informal description of constrained injective type families.  We return to
the \ensuremath{\id{ListElems}} example, now defining it by an associated type synonym:
\begin{hscode}\SaveRestoreHook
\column{B}{@{}>{\hspre}l<{\hspost}@{}}%
\column{4}{@{}>{\hspre}l<{\hspost}@{}}%
\column{E}{@{}>{\hspre}l<{\hspost}@{}}%
\>[B]{}\keyword{class}\;\id{Listy}\;\id{t}\;\keyword{where}{}\<[E]%
\\
\>[B]{}\hsindent{4}{}\<[4]%
\>[4]{}\keyword{type}\;\id{ListElems}\;\id{t}\mathrel{=}\id{u}\mid \id{u}\to \id{t}{}\<[E]%
\\[\blanklineskip]%
\>[B]{}\keyword{instance}\;\id{Listy}\;[\mskip1.5mu \id{t}\mskip1.5mu]\;\keyword{where}{}\<[E]%
\\
\>[B]{}\hsindent{4}{}\<[4]%
\>[4]{}\keyword{type}\;\id{ListElems}\;[\mskip1.5mu \id{t}\mskip1.5mu]\mathrel{=}\id{t}{}\<[E]%
\ColumnHook
\end{hscode}\resethooks
Notice that we could not add an instance \ensuremath{\id{Listy}\;\id{Int}}, as that would require adding a corresponding
instance to the type family and any such instance would be rejected for violating the injectivity
constraint of \ensuremath{\id{ListElems}}.  Consequently, inconsistencies arising from uses of the type family
application \ensuremath{\id{ListElems}\;\id{Int}} must be guarded by the unsatisfiable class constraint \ensuremath{\id{Listy}\;\id{Int}}.

Constrained type families are not, in their simplest form, backward compatible.  We will return to
the question of compatibility with existing Haskell programs, and show how we can infer the
requisite constraints to transition from current usage to the explicit use of constrained type
families~\pref{sec:compat}.

\subsection{Validating Constrained Type Families}
\label{sec:valid-ctf}


In the previous section, we introduced an intuitive characterization of constrained type families.
Later~\secref{formal}, we will formalize a core calculus with constrained type families.  However,
our formalization will differ from Haskell-like surface languages in several significant ways. This
section bridges the intuition of constrained type families and our core language, in the context of
a simple, Haskell-like type system.

\begin{figure}
\begin{gather*}
\ottdruleSTXXVar{}
\quad
\ottdruleSTXXForall{}
\quad
\ottdruleSTXXQual{}
\\
\ottdruleSTXXTyCon{}
\qquad
\ottdruleSTXXFamily{}
\end{gather*}
\caption{Well-formedness rules for types}
\label{fig:wf-types}
\end{figure}

\figref{wf-types} gives the syntax and formation rules for our surface type system.  We omit kinds
from our account, as they are an orthogonal concern from the use of type classes and type families.
Our well-formedness judgment takes the form $ P  \mid  \Gamma   \vdash   \sigma  \ \mathsf{type} $, in which $\sigma$ is a
surface-language type, $\Gamma$ is a type variable environment, and $P$ is a predicate
context.  As we have omitted kinds, the environment $\Gamma$ is simply a list of type variables.  The
form of the judgment and use of context $P$ should be familiar from other formulations of
qualified types~\cite{Jones94}.

Our types include type variables ($\alpha$), quantified types ($\forall \, \alpha  \ottsym{.}  \tau$), qualified types
($\pi  \Rightarrow  \tau$), and applications of type constructors ($\ottnt{H} \, \overline{\tau}$) and type families ($\ottmv{F} \, \overline{\tau}$).
The rules for variables, quantifiers, and qualifiers should all be unsurprising.  Leaf nodes depend
on an auxiliary well-formedness judgment $ \vdash   P  \mid  \Gamma  \ \mathsf{ctx} $ on contexts, which is entirely
unsurprising.  Our treatments of type constructors and type families depend on an ambient signature
$\Sigma$, representing the top-level declarations.  Arity $n$ type constructors are captured by
entries $ (  \ottnt{H}  :  \ottmv{n}  ) \in  \Sigma $; the typing rule for constructors assures that they have the correct
number of arguments.  The interesting case is for type families.  Constrained type families are
represented by assertions $ ( C   \Rightarrow   \ottmv{F}  :  \ottmv{n} ) \in  \Sigma $; this denotes that type family $\ottmv{F}$ has arity
$n$, and is associated with class $C$.  Uses of the type family application \ensuremath{\id{F}\;\overline\tau}, then, should
occur in a context strong enough to prove \ensuremath{\id{C}\;\overline\tau}.  This is captured by \rul{ST\_Family}, in which
we insist that the context $P$ is strong enough to prove $C \, \overline{\tau}$; we omit the details of
the standard type class entailment relation $\cdot \Vdash \cdot$.  For a simple example, suppose
that \ensuremath{\id{F}} is a unary type family declared in class \ensuremath{\id{C}}, and class \ensuremath{\id{D}} is a subclass of \ensuremath{\id{C}}.  Then
we could prove any of the following judgments:
\[
  \ensuremath{\id{C}\;\tau} \mid \emptyset \vdash \ensuremath{\id{F}\;\tau} \ \mathsf{type} \qquad
  \ensuremath{\id{D}\;\tau} \mid \emptyset \vdash \ensuremath{\id{F}\;\tau} \ \mathsf{type} \qquad
  \emptyset \mid \emptyset \vdash \ensuremath{\id{C}\;\tau\Rightarrow \id{F}\;\tau} \ \mathsf{type}
\]
but, absent other instances of \ensuremath{\id{C}}, we could not prove $\emptyset \mid \emptyset \vdash \ensuremath{\id{F}\;\tau} \ \mathsf{type}$.




\section{Achieving Closure}
\label{sec:closed}

Closed type families are one of the most fruitful extensions of indexed type families.  They allow
type families to be specified by ordered sequences of overlapping equations, capturing
many patterns of type-level computation that were previously inexpressible or required intricate
indirect encodings.  In this section, we discuss the extension of constrained type families to
include closed type families.  This introduces two challenges.  First, there is no
existing feature of type classes that mirrors closed type families.  We introduce closed type
classes, a simplification of instance chains~\cite{MorrisJ10}, and show how they can be used
to constrain closed type families.  Second, closed type families may be total, and so
could be used without constraints.  We discuss approaches to recognizing and supporting total closed
type families.  Finally, we illustrate the simplification our approach provides over previous formulations of closed type families.

\subsection{Closed Type Classes}
\label{sec:closed-classes}

\emph{Closed type classes} are our novel approach to introducing and resolving overlap among
class instances.  They closely follow the design of closed type families: just as
closed type families allow type families to be defined by ordered sequences of overlapping equations,
closed type classes allow type classes to be defined by ordered sequences of overlapping instances.  Instance
resolution begins with the first instance in the sequence, and proceeds to subsequent instances only
if the first instance cannot match the goal predicate.  In the next section, we will show that
closed type classes can characterize the domain of definition of closed type families.  We begin,
however, by considering closed type classes as a feature on their own.

As an example, we consider heterogeneous lists, following the approach of
\citet{KiselyovLS04}.  We begin by introducing data types to represent heterogeneous lists:
\begin{hscode}\SaveRestoreHook
\column{B}{@{}>{\hspre}l<{\hspost}@{}}%
\column{E}{@{}>{\hspre}l<{\hspost}@{}}%
\>[B]{}\keyword{data}\;\id{HNil}\mathrel{=}\id{MkHNil}{}\<[E]%
\\
\>[B]{}\keyword{data}\;\id{HCons}\;\id{e}\;\id{l}\mathrel{=}\id{MkHCons}\;\id{e}\;\id{l}{}\<[E]%
\ColumnHook
\end{hscode}\resethooks
For example, the declaration
\begin{hscode}\SaveRestoreHook
\column{B}{@{}>{\hspre}l<{\hspost}@{}}%
\column{E}{@{}>{\hspre}l<{\hspost}@{}}%
\>[B]{}\id{hlst}\mathrel{=}\id{MkHCons}\;\id{True}\;(\id{MkHCons}\;\text{\tt 'c'}\;\id{MkHNil}){}\<[E]%
\ColumnHook
\end{hscode}\resethooks
defines a heterogeneous list \ensuremath{\id{hlst}} with type \ensuremath{\id{HCons}\;\id{Bool}\;(\id{HCons}\;\id{Char}\;\id{HNil})}.  Kiselyov et
al. describe a number of operations on heterogeneous lists, and show how they can be used to build
more complex data structures, such as extensible records.  We will limit ourselves to some of the
simpler operations.  One such operation is \ensuremath{\id{hOccurs}}, which projects all elements of a given type
from a heterogeneous list.  We can define \ensuremath{\id{hOccurs}} using a closed type class as follows:
\begin{hscode}\SaveRestoreHook
\column{B}{@{}>{\hspre}l<{\hspost}@{}}%
\column{3}{@{}>{\hspre}l<{\hspost}@{}}%
\column{5}{@{}>{\hspre}l<{\hspost}@{}}%
\column{E}{@{}>{\hspre}l<{\hspost}@{}}%
\>[B]{}\keyword{class}\;\id{HOccurs}\;\id{e}\;\id{l}\;\keyword{where}{}\<[E]%
\\
\>[B]{}\hsindent{3}{}\<[3]%
\>[3]{}\id{hOccurs}\mathbin{::}\id{l}\to [\mskip1.5mu \id{e}\mskip1.5mu]{}\<[E]%
\\[\blanklineskip]%
\>[B]{}\hsindent{3}{}\<[3]%
\>[3]{}\keyword{instance}\;\id{HOccurs}\;\id{e}\;\id{HNil}\;\keyword{where}{}\<[E]%
\\
\>[3]{}\hsindent{2}{}\<[5]%
\>[5]{}\id{hOccurs}\;\id{MkHNil}\mathrel{=}[\mskip1.5mu \mskip1.5mu]{}\<[E]%
\\[\blanklineskip]%
\>[B]{}\hsindent{3}{}\<[3]%
\>[3]{}\keyword{instance}\;\id{HOccurs}\;\id{e}\;\id{l}\Rightarrow \id{HOccurs}\;\id{e}\;(\id{HCons}\;\id{e}\;\id{l})\;\keyword{where}{}\<[E]%
\\
\>[3]{}\hsindent{2}{}\<[5]%
\>[5]{}\id{hOccurs}\;(\id{MkHCons}\;\id{e}\;\id{l})\mathrel{=}\id{e}\mathbin{:}\id{hOccurs}\;\id{l}{}\<[E]%
\\[\blanklineskip]%
\>[B]{}\hsindent{3}{}\<[3]%
\>[3]{}\keyword{instance}\;\id{HOccurs}\;\id{e}\;\id{l}\Rightarrow \id{HOccurs}\;\id{e}\;(\id{HCons}\;\id{e'}\;\id{l})\;\keyword{where}{}\<[E]%
\\
\>[3]{}\hsindent{2}{}\<[5]%
\>[5]{}\id{hOccurs}\;(\id{MkHCons}\;\anonymous \;\id{l})\mathrel{=}\id{hOccurs}\;\id{l}{}\<[E]%
\ColumnHook
\end{hscode}\resethooks
\ensuremath{\id{HOccurs}} is a closed type class, as indicated by the sequence of instances inside the class
declaration.  The second two instances are overlapping---for example, both apply to the predicate
\ensuremath{\id{HOccurs}\;\id{Char}\;(\id{HCons}\;\id{Char}\;\id{HNil})}---but the ordering indicates that the first instance should apply
in the common cases.  Depending on its expected return type, \ensuremath{\id{hOccurs}\;\id{hlst}} could evaluate to
\ensuremath{[\mskip1.5mu \id{True}\mskip1.5mu]}, \ensuremath{[\mskip1.5mu \text{\tt 'c'}\mskip1.5mu]}, or \ensuremath{[\mskip1.5mu \mskip1.5mu]}.

Closed type classes bear a close resemblance to overlapping instances~\cite{PeytonJonesJM97}, a
well-established extension of the Haskell class system.  However, whereas the order of instances in
closed type families is explicit in their declaration, overlapping instances have an implicit
ordering, fixed by the compiler.  This means that overlapping instances can lead to unintended
ambiguity.  For example, in \citeauthor{Swierstra08}'s~\citeyear{Swierstra08} encoding of extensible variants, he
relies on a data type of functor coproducts:
\begin{hscode}\SaveRestoreHook
\column{B}{@{}>{\hspre}l<{\hspost}@{}}%
\column{E}{@{}>{\hspre}l<{\hspost}@{}}%
\>[B]{}\keyword{data}\;(\id{f}\mathop{\oplus}\id{g})\;\id{e}\mathrel{=}\id{Inl}\;(\id{f}\;\id{e})\mid \id{Inr}\;(\id{g}\;\id{e}){}\<[E]%
\ColumnHook
\end{hscode}\resethooks
He defines a class of polymorphic injectors, as follows:
\begin{hscode}\SaveRestoreHook
\column{B}{@{}>{\hspre}l<{\hspost}@{}}%
\column{3}{@{}>{\hspre}l<{\hspost}@{}}%
\column{E}{@{}>{\hspre}l<{\hspost}@{}}%
\>[B]{}\keyword{class}\;\id{f}\mathop{\preceq}\id{g}\;\keyword{where}{}\<[E]%
\\
\>[B]{}\hsindent{3}{}\<[3]%
\>[3]{}\id{inj}\mathbin{::}\id{f}\;\id{e}\to \id{g}\;\id{e}{}\<[E]%
\\[\blanklineskip]%
\>[B]{}\keyword{instance}\;\id{f}\mathop{\preceq}\id{f}\;\keyword{where}{}\<[E]%
\\
\>[B]{}\hsindent{3}{}\<[3]%
\>[3]{}\id{inj}\mathrel{=}\id{id}{}\<[E]%
\\[\blanklineskip]%
\>[B]{}\keyword{instance}\;\id{f}\mathop{\preceq}(\id{f}\mathop{\oplus}\id{g})\;\keyword{where}{}\<[E]%
\\
\>[B]{}\hsindent{3}{}\<[3]%
\>[3]{}\id{inj}\mathrel{=}\id{Inl}{}\<[E]%
\\[\blanklineskip]%
\>[B]{}\keyword{instance}\;\id{f}\mathop{\preceq}\id{h}\Rightarrow \id{f}\mathop{\preceq}(\id{g}\mathop{\oplus}\id{h})\;\keyword{where}{}\<[E]%
\\
\>[B]{}\hsindent{3}{}\<[3]%
\>[3]{}\id{inj}\mathrel{=}\id{Inr}\hsdot{\circ }{.\,}\id{inj}{}\<[E]%
\ColumnHook
\end{hscode}\resethooks
The intuition here is simple: these instances describe a recursive search of (right-grouped)
coproduct types, in which the first two instances provide base cases and the third instance provides
the recursive case.  However, there is actually an unresolved overlap among the instances: the
predicate \ensuremath{(\id{f}\mathop{\oplus}\id{g})\mathop{\preceq}(\id{f}\mathop{\oplus}\id{g})} could be resolved by either the first or third instance, and
neither is more specific than the other.  Consequently, GHC will report an error if such a predicate
is encountered.  An implementation of this class using closed type class (written simply by
indenting the \ensuremath{\keyword{instance}} declarations to fit within the \ensuremath{\keyword{class}} body) would be unambiguous, and the
predicate \ensuremath{(\id{f}\mathop{\preceq}\id{g})\mathop{\preceq}(\id{f}\mathop{\oplus}\id{g})} would be resolved using the first instance.

\subsection{Constrained Closed Type Families}

Combining closed type classes and associated types gives us a way to introduce closed type families
while accurately characterizing their domains of definition.

For an example, we turn again to the heterogeneous lists of \citet{KiselyovLS04}.  Our
new goal is to define an operation \ensuremath{\id{hDelete}}, which will remove all values of a given type from a
heterogeneous list.  In doing so, we must simultaneously define a mapping on types describing the
type of the resulting list.  We do this by defining an associated type \ensuremath{\id{HWithout}} such that, if \ensuremath{\id{l}}
is a heterogeneous list type, then \ensuremath{\id{HWithout}\;\id{e}\;\id{l}} is the same list without any occurrences of
element type \ensuremath{\id{e}}.  Thus, we arrive at the following closed type class definition.
\begin{hscode}\SaveRestoreHook
\column{B}{@{}>{\hspre}l<{\hspost}@{}}%
\column{3}{@{}>{\hspre}l<{\hspost}@{}}%
\column{5}{@{}>{\hspre}l<{\hspost}@{}}%
\column{E}{@{}>{\hspre}l<{\hspost}@{}}%
\>[B]{}\keyword{class}\;\id{HDelete}\;\id{e}\;\id{l}\;\keyword{where}{}\<[E]%
\\
\>[B]{}\hsindent{3}{}\<[3]%
\>[3]{}\keyword{type}\;\id{HWithout}\;\id{e}\;\id{l}\mathbin{::}\star{}\<[E]%
\\
\>[B]{}\hsindent{3}{}\<[3]%
\>[3]{}\id{hDelete}\mathbin{::}\id{Proxy}\;\id{e}\to \id{l}\to \id{HWithout}\;\id{e}\;\id{l}{}\<[E]%
\\[\blanklineskip]%
\>[B]{}\hsindent{3}{}\<[3]%
\>[3]{}\keyword{instance}\;\id{HDelete}\;\id{e}\;\id{HNil}\;\keyword{where}{}\<[E]%
\\
\>[3]{}\hsindent{2}{}\<[5]%
\>[5]{}\keyword{type}\;\id{HWithout}\;\id{e}\;\id{HNil}\mathrel{=}\id{HNil}{}\<[E]%
\\
\>[3]{}\hsindent{2}{}\<[5]%
\>[5]{}\id{hDelete}\;\anonymous \;\id{MkHNil}\mathrel{=}\id{MkHNil}{}\<[E]%
\\[\blanklineskip]%
\>[B]{}\hsindent{3}{}\<[3]%
\>[3]{}\keyword{instance}\;\id{HDelete}\;\id{e}\;\id{l}\Rightarrow \id{HDelete}\;\id{e}\;(\id{HCons}\;\id{e}\;\id{l})\;\keyword{where}{}\<[E]%
\\
\>[3]{}\hsindent{2}{}\<[5]%
\>[5]{}\keyword{type}\;\id{HWithout}\;\id{e}\;(\id{HCons}\;\id{e}\;\id{l})\mathrel{=}\id{HWithout}\;\id{e}\;\id{l}{}\<[E]%
\\
\>[3]{}\hsindent{2}{}\<[5]%
\>[5]{}\id{hDelete}\;\id{ep}\;(\id{MkHCons}\;\anonymous \;\id{l})\mathrel{=}\id{hDelete}\;\id{ep}\;\id{l}{}\<[E]%
\\[\blanklineskip]%
\>[B]{}\hsindent{3}{}\<[3]%
\>[3]{}\keyword{instance}\;\id{HDelete}\;\id{e}\;\id{l}\Rightarrow \id{HDelete}\;\id{e}\;(\id{HCons}\;\id{e'}\;\id{l})\;\keyword{where}{}\<[E]%
\\
\>[3]{}\hsindent{2}{}\<[5]%
\>[5]{}\keyword{type}\;\id{HWithout}\;\id{e}\;(\id{HCons}\;\id{e'}\;\id{l})\mathrel{=}\id{HCons}\;\id{e'}\;(\id{HWithout}\;\id{e}\;\id{l}){}\<[E]%
\\
\>[3]{}\hsindent{2}{}\<[5]%
\>[5]{}\id{hDelete}\;\id{ep}\;(\id{MkHCons}\;\id{e'}\;\id{l})\mathrel{=}\id{MkHCons}\;\id{e'}\;(\id{hDelete}\;\id{ep}\;\id{l}){}\<[E]%
\ColumnHook
\end{hscode}\resethooks
The class \ensuremath{\id{HDelete}\;\id{e}\;\id{l}} has the \ensuremath{\id{hDelete}} method and the \ensuremath{\id{HWithout}} associated type synonym; to
disambiguate the type of \ensuremath{\id{hDelete}}, we capture the type \ensuremath{\id{e}} using a \ensuremath{\id{Proxy}} argument.  The \ensuremath{\id{HDelete}}
class has three instances, following the same recursion scheme we used for \ensuremath{\id{HOccurs}}; again, the
final two instances overlap.  Like conventional closed type families, the associated type synonym
equations are checked in the order in which they appear in the type class definition.  For example,
we have that \ensuremath{\id{HWithout}\;\id{Char}\;(\id{HCons}\;\id{Bool}\;(\id{HCons}\;\id{Char}\;\id{HNil}))\,\sim\,\id{HCons}\;\id{Bool}\;\id{HNil}}.  Note that \ensuremath{\id{HWithout}}
is not total: while it is defined for arbitrary \ensuremath{\id{e}}, it is only defined for \ensuremath{\id{l}} that are properly
formed heterogeneous list types.

\subsection{Closed Type Families and Totality}
\label{sec:totality-checking}
\label{sec:vector-append}

Unlike open type families, closed type families can be total.\footnote{Open type families might also
be total, with the right equations. Any such open type family can, however, be written as a closed family.
We thus consider all open type families to be partial.}  For example, we
could implement addition for type-level naturals using constrained closed type classes as follows:
\begin{hscode}\SaveRestoreHook
\column{B}{@{}>{\hspre}l<{\hspost}@{}}%
\column{3}{@{}>{\hspre}l<{\hspost}@{}}%
\column{6}{@{}>{\hspre}l<{\hspost}@{}}%
\column{E}{@{}>{\hspre}l<{\hspost}@{}}%
\>[B]{}\keyword{data}\;\id{Nat}\mathrel{=}\id{Z}\mid \id{S}\;\id{Nat}{}\<[E]%
\\
\>[B]{}\keyword{class}\;\id{PlusC}\;(\id{m}\mathbin{::}\id{Nat})\;(\id{n}\mathbin{::}\id{Nat})\;\keyword{where}{}\<[E]%
\\
\>[B]{}\hsindent{3}{}\<[3]%
\>[3]{}\keyword{type}\;\id{Plus}\;\id{m}\;\id{n}\mathbin{::}\star{}\<[E]%
\\[\blanklineskip]%
\>[B]{}\hsindent{3}{}\<[3]%
\>[3]{}\keyword{instance}\;\id{PlusC}\;\id{Z}\;\id{n}\;\keyword{where}{}\<[E]%
\\
\>[3]{}\hsindent{3}{}\<[6]%
\>[6]{}\keyword{type}\;\id{Plus}\;\id{Z}\;\id{n}\mathrel{=}\id{n}{}\<[E]%
\\[\blanklineskip]%
\>[B]{}\hsindent{3}{}\<[3]%
\>[3]{}\keyword{instance}\;\id{PlusC}\;\id{m}\;\id{n}\Rightarrow \id{PlusC}\;(\id{S}\;\id{m})\;\id{n}\;\keyword{where}{}\<[E]%
\\
\>[3]{}\hsindent{3}{}\<[6]%
\>[6]{}\keyword{type}\;\id{Plus}\;(\id{S}\;\id{m})\;\id{n}\mathrel{=}\id{S}\;(\id{Plus}\;\id{m}\;\id{n}){}\<[E]%
\ColumnHook
\end{hscode}\resethooks
This formulation behaves roughly as we expect: \ensuremath{\id{Plus}\;\id{M}\;\id{N}} evaluates to the sum of the naturals \ensuremath{\id{M}}
and \ensuremath{\id{N}}, while the predicate \ensuremath{\id{PlusC}\;\id{M}\;\id{N}} is satisfied for arbitrary naturals \ensuremath{\id{M}} and \ensuremath{\id{N}}.  However,
in this case, the \ensuremath{\id{PlusC}\;\id{M}\;\id{N}} predicates are unnecessary: \ensuremath{\id{Plus}\;\id{M}\;\id{N}} is defined for arbitrary
naturals \ensuremath{\id{M}} and \ensuremath{\id{N}}.  Furthermore, the requirement to include these predicates could significantly
complicate definitions using polymorphic recursion.  For a simple example, consider the definition
of the \ensuremath{\id{append}} function for length-indexed vectors.  We might hope to write it as follows:
\begin{hscode}\SaveRestoreHook
\column{B}{@{}>{\hspre}l<{\hspost}@{}}%
\column{3}{@{}>{\hspre}l<{\hspost}@{}}%
\column{9}{@{}>{\hspre}c<{\hspost}@{}}%
\column{9E}{@{}l@{}}%
\column{13}{@{}>{\hspre}l<{\hspost}@{}}%
\column{21}{@{}>{\hspre}l<{\hspost}@{}}%
\column{25}{@{}>{\hspre}c<{\hspost}@{}}%
\column{25E}{@{}l@{}}%
\column{28}{@{}>{\hspre}l<{\hspost}@{}}%
\column{E}{@{}>{\hspre}l<{\hspost}@{}}%
\>[B]{}\keyword{data}\;\id{Vec}\;(\id{a}\mathbin{::}\star)\;(\id{n}\mathbin{::}\id{Nat})\;\keyword{where}{}\<[E]%
\\
\>[B]{}\hsindent{3}{}\<[3]%
\>[3]{}\id{Nil}{}\<[9]%
\>[9]{}\mathbin{::}{}\<[9E]%
\>[13]{}\id{Vec}\;\id{a}\;\id{Z}{}\<[E]%
\\
\>[B]{}\hsindent{3}{}\<[3]%
\>[3]{}\id{Cons}{}\<[9]%
\>[9]{}\mathbin{::}{}\<[9E]%
\>[13]{}\id{a}\to \id{Vec}\;\id{a}\;\id{n}\to \id{Vec}\;\id{a}\;(\id{S}\;\id{n}){}\<[E]%
\\[\blanklineskip]%
\>[B]{}\id{append}\mathbin{::}\id{PlusC}\;\id{m}\;\id{n}\Rightarrow \id{Vec}\;\id{a}\;\id{m}\to \id{Vec}\;\id{a}\;\id{n}\to \id{Vec}\;\id{a}\;(\id{Plus}\;\id{m}\;\id{n}){}\<[E]%
\\
\>[B]{}\id{append}\;\id{Nil}\;{}\<[21]%
\>[21]{}\id{ys}{}\<[25]%
\>[25]{}\mathrel{=}{}\<[25E]%
\>[28]{}\id{ys}{}\<[E]%
\\
\>[B]{}\id{append}\;(\id{Cons}\;\id{x}\;\id{xs})\;{}\<[21]%
\>[21]{}\id{ys}{}\<[25]%
\>[25]{}\mathrel{=}{}\<[25E]%
\>[28]{}\id{Cons}\;\id{x}\;(\id{append}\;\id{xs}\;\id{ys}){}\<[E]%
\ColumnHook
\end{hscode}\resethooks
However, the type signature given here is not strong enough: in the second case, where we know that
\ensuremath{\id{m}} is \ensuremath{\id{S}\;\id{m'}} for some \ensuremath{\id{m'}}, we also need to know that \ensuremath{\id{PlusC}\;\id{m'}\;\id{n}} holds.  But this does not
follow from the assumption \ensuremath{\id{PlusC}\;(\id{S}\;\id{m'})\;\id{n}}.  It would seem that we would have to define \ensuremath{\id{append}}
itself via a type class:
\begin{hscode}\SaveRestoreHook
\column{B}{@{}>{\hspre}l<{\hspost}@{}}%
\column{3}{@{}>{\hspre}l<{\hspost}@{}}%
\column{5}{@{}>{\hspre}l<{\hspost}@{}}%
\column{E}{@{}>{\hspre}l<{\hspost}@{}}%
\>[B]{}\keyword{class}\;\id{PlusC}\;\id{m}\;\id{n}\Rightarrow \id{Append}\;\id{m}\;\id{n}\;\keyword{where}{}\<[E]%
\\
\>[B]{}\hsindent{3}{}\<[3]%
\>[3]{}\id{append}\mathbin{::}\id{Vec}\;\id{a}\;\id{m}\to \id{Vec}\;\id{a}\;\id{n}\to \id{Vec}\;\id{a}\;(\id{Plus}\;\id{m}\;\id{n}){}\<[E]%
\\[\blanklineskip]%
\>[B]{}\hsindent{3}{}\<[3]%
\>[3]{}\keyword{instance}\;\id{Append}\;\id{Z}\;\id{n}\;\keyword{where}{}\<[E]%
\\
\>[3]{}\hsindent{2}{}\<[5]%
\>[5]{}\id{append}\;\id{Nil}\;\id{ys}\mathrel{=}\id{ys}{}\<[E]%
\\[\blanklineskip]%
\>[B]{}\hsindent{3}{}\<[3]%
\>[3]{}\keyword{instance}\;\id{Append}\;\id{m}\;\id{n}\Rightarrow \id{Append}\;(\id{S}\;\id{m})\;\id{n}\;\keyword{where}{}\<[E]%
\\
\>[3]{}\hsindent{2}{}\<[5]%
\>[5]{}\id{append}\;(\id{Cons}\;\id{x}\;\id{xs})\;\id{ys}\mathrel{=}\id{Cons}\;\id{x}\;(\id{append}\;\id{xs}\;\id{ys}){}\<[E]%
\ColumnHook
\end{hscode}\resethooks
But this is verbose, and complicates what should be a simple definition.  It also complicates uses
of \ensuremath{\id{append}}, which will now have to include the \ensuremath{\id{Append}} constraint instead of the \ensuremath{\id{PlusC}}
constraint or (even better) just an application of the \ensuremath{\id{Plus}} type family.

In essence, having recognized that most type families are partial, \emph{some} are total, and users should
be able to take advantage of this fact. If we could recognize \ensuremath{\id{Plus}} as total, then we could allow
the following, much simpler definition of \ensuremath{\id{append}}:
\begin{hscode}\SaveRestoreHook
\column{B}{@{}>{\hspre}l<{\hspost}@{}}%
\column{21}{@{}>{\hspre}l<{\hspost}@{}}%
\column{25}{@{}>{\hspre}c<{\hspost}@{}}%
\column{25E}{@{}l@{}}%
\column{28}{@{}>{\hspre}l<{\hspost}@{}}%
\column{E}{@{}>{\hspre}l<{\hspost}@{}}%
\>[B]{}\id{append}\mathbin{::}\id{Vec}\;\id{a}\;\id{m}\to \id{Vec}\;\id{a}\;\id{n}\to \id{Vec}\;\id{a}\;(\id{Plus}\;\id{m}\;\id{n}){}\<[E]%
\\
\>[B]{}\id{append}\;\id{Nil}\;{}\<[21]%
\>[21]{}\id{ys}{}\<[25]%
\>[25]{}\mathrel{=}{}\<[25E]%
\>[28]{}\id{ys}{}\<[E]%
\\
\>[B]{}\id{append}\;(\id{Cons}\;\id{x}\;\id{xs})\;{}\<[21]%
\>[21]{}\id{ys}{}\<[25]%
\>[25]{}\mathrel{=}{}\<[25E]%
\>[28]{}\id{Cons}\;\id{x}\;(\id{append}\;\id{xs}\;\id{ys}){}\<[E]%
\ColumnHook
\end{hscode}\resethooks
This definition needs no constraints, as the type-checker is aware that \ensuremath{\id{Plus}} is total, with no
possibility for a usage outside its domain of definition.

We now have a new, challenging question: how do we know when a type family is total?
Totality checking of functional programs is a hard problem, one we do not propose to solve here.
This problem is well studied both in the context of dependently-typed programming\footnote{E.g., \url{https://coq.inria.fr/cocorico/CoqTerminationDiscussion}} and
partial evaluation~\cite{LeeJB01,SereniJ05}. In practice, an implementation of our ideas would use a
totality checker to discover or check the totality of type families. Users could also have the capability to
(unsafely) assert the totality of functions that 
lie beyond the abilities of the checker.


We can extend our type formation rules~\secref{valid-ctf} to take account of total type families.
Intuitively, we can think of a total type family as a constrained type family for which the
constraint is trivially provable.  To formalize this notion, we extend our top-level environment
$\Sigma$ to include total type families $\top \Rightarrow F : n$ as well as partial type families
$C \Rightarrow F : n$.  Then, we can add a new rule that allows total type families regardless of
the context:
\[
\ottdruleSTXXTFamily{}
\]
While this rule is superficially similar to the rule for type constructors, it will have a different
elaboration into our core calculus, which must explicitly account for the totality of \ensuremath{\id{F}}.

\subsection{Simplifying Apartness}
\label{sec:apartness}

As introduced above~\pref{sec:old-apartness}, closed type family reduction critically relies on a
notion of apartness on types. The existing definition of apartness~\cite[\S3.3]{EisenbergVPJW14} is
subtle, requiring both infinitary unification and a \emph{flattening} operation to account for the
possibility of type family applications in the arguments to another type family. Because type
families cannot appear directly as arguments to other type families, the flattening
operation---whose details thankfully no longer concern us---becomes redundant.  In addition, because
we require the caller of a function to provide the ground type to which a type family reduces at
every call site, we no longer have to worry about infinite types and infinitary unification.  Thus,
we can define apartness very simply: as the inverse of unifiability. Indeed, our formal
development~\pref{sec:formal} no longer contains a first-class notion of apartness, using
unification directly.

\section{Type Safety of Constrained Type Families}
\label{sec:formal}

For over a decade, GHC has compiled its variant of Haskell into System FC~\cite{SulzmannCJD07}, a
variant of System F~\cite{GirardTL89,Reynolds74} that supports explicit \emph{coercions}, or proofs
of equality between types. As type family instances introduce new such equalities (via axioms), type
families are integrated into FC. Accordingly, proving the type safety of System FC
requires careful reasoning about type family reduction. As the safety of Haskell itself rests on the
safety of FC,\footnote{We are unaware of a precise semantics for the surface Haskell language that
  accounts for all the features of modern GHC/Haskell.} we must now show that our extension of
constrained type families retains soundness.

Indeed we go further: by adding constrained type families and a new treatment of axioms, we can now
prove that all type family reduction chains in System FC terminate, thus closing the gap in the
proof presented by \citet{EisenbergVPJW14}, which was unable to cope with the
interaction of non-linear patterns and non-terminating type families.


This section presents an overview of our formalism and a sketch of our proofs. The full definitions
and proofs can be found in \auxiliarymatter.

\subsection{System CFC}


\newcommand{\typeregstatement}{If $\Gamma  \vdash  \ottnt{e}  \ottsym{:}  \tau$, then $ \Gamma   \vdash   \tau  \ \mathsf{type} $.}

\newcommand{\presstatement}{If $\emptyset  \vdash  \ottnt{e}  \ottsym{:}  \tau$ and $\ottnt{e}  \longrightarrow  \ottnt{e'}$, then $\emptyset  \vdash  \ottnt{e'}  \ottsym{:}  \tau$.}

\newcommand{\progstatement}{If $\emptyset  \vdash  \ottnt{e}  \ottsym{:}  \tau$, then either $\ottnt{e}$ is a value $\ottnt{v}$, $\ottnt{e}$ is
a coerced value $\ottnt{v}  \triangleright  \gamma$, or $\ottnt{e}  \longrightarrow  \ottnt{e'}$ for some $\ottnt{e'}$.}

\newcommand{\consstatement}{If $\emptyset  \vdash  \gamma  \ottsym{:}  \tau_{{\mathrm{1}}}  \sim  \tau_{{\mathrm{2}}}$, $ \emptyset   \vdash   \tau_{{\mathrm{1}}}  \ \mathsf{type} $, and $ \emptyset   \vdash   \tau_{{\mathrm{2}}}  \ \mathsf{type} $,
then $\tau_{{\mathrm{1}}} \, \ottsym{=} \, \tau_{{\mathrm{2}}}$.}

\newcommand{\completestatement}{If $\emptyset  \vdash  \gamma  \ottsym{:}  \tau_{{\mathrm{1}}}  \sim  \tau_{{\mathrm{2}}}$, then there exists $\tau_{{\mathrm{3}}}$ such that $\tau_{{\mathrm{1}}}  \rightsquigarrow^*  \tau_{{\mathrm{3}}}  \leftsquigarrow^*  \tau_{{\mathrm{2}}}$.}

\newcommand{\noredstatement}{If $ \Gamma   \vdash   \tau  \ \mathsf{type} $, then there exists no $\tau'$ such that $\tau  \rightsquigarrow  \tau'$.}

\newcommand{\termstatement}{For all types $\tau$, there exists a type $\sigma$ such that $\tau  \rightsquigarrow^*  \sigma$ and
$\sigma$ cannot reduce.}

\newcommand{\localconfstatement}{If $\tau_{{\mathrm{1}}}  \leftsquigarrow  \tau_{{\mathrm{0}}}  \rightsquigarrow  \tau_{{\mathrm{2}}}$, then there exists $\tau_{{\mathrm{3}}}$ such that
$\tau_{{\mathrm{1}}}  \rightsquigarrow^*  \tau_{{\mathrm{3}}}  \leftsquigarrow^*  \tau_{{\mathrm{2}}}$.}

\newcommand{\confstatement}{If $\tau_{{\mathrm{1}}}  \leftsquigarrow^*  \tau_{{\mathrm{0}}}  \rightsquigarrow^*  \tau_{{\mathrm{2}}}$, then there exists $\tau_{{\mathrm{3}}}$ such that $\tau_{{\mathrm{1}}}  \rightsquigarrow^*  \tau_{{\mathrm{3}}}  \leftsquigarrow^*  \tau_{{\mathrm{2}}}$.}

\newcommand{\totaltypefamilies}{
\begin{property}[Total type families]
\label{prop:total}
For all $\ottmv{F} \,  \mathop{ :_{\top} }  \, \ottmv{n}  \in  \Sigma$
 and all $\ottcompu{\tau_{\ottmv{i}}}{\ottmv{i}}{\ottmv{n}}$ such
that $\overline{ \emptyset   \vdash   \tau_{\ottmv{i}}  \ \mathsf{type} }$, there exists $\ottnt{q}$ such that
$\emptyset  \vdash  \ottnt{q}  \ottsym{:}   (  \alpha  \pipe  \ottmv{c}  :  \ottmv{F} \, \overline{\tau}   \sim   \alpha  ) $.
Define $ \ottmv{F} \  \overline{\tau}  \Downarrow  \ottnt{q} $ to witness
the above fact.
\end{property}
}

\begin{figure}
\begin{tabular}{c@{\hspace{2em}}c}
\begin{minipage}{.47\textwidth}
\paragraph{Metavariables.}
\[
\begin{array}{rlrl@{}}
\alpha & \text{type variables}  &
\ottmv{x} & \text{term variables} \\
\ottmv{c} & \text{coercion variables} &
\xi & \text{axioms} \\
\ottmv{F} & \text{type families} &
\ottnt{H} & \text{type constants} \\
\ottmv{K} & \multicolumn{3}{l}{\text{term constants (constructors)}}
\end{array}
\]
\end{minipage}
&
\begin{minipage}{.45\textwidth}
\paragraph{Notations.}
\begin{itemize}
\setlength{\itemindent}{-5ex}
\item Substitutions application: $\tau  \ottsym{[}  \theta  \ottsym{]}$
\item Substitutions composition: $\theta \, \ottsym{=} \, \theta_{{\mathrm{1}}}  \circ  \theta_{{\mathrm{2}}}$
\item $\ottmv{F} \, : \, \ottmv{n}$ stands for either $\ottmv{F} \,  \mathop{:_{\not\top} }  \, \ottmv{n}$ or $\ottmv{F} \,  \mathop{ :_{\top} }  \, \ottmv{n}$
\item Free variables of constructs: $\mathit{fv}\! \, \ottsym{(}   \cdot   \ottsym{)}$
\item $\mathit{tvs}\! \, \ottsym{(}  \overline{\chi}  \ottsym{)}$: bound type variables of $\overline{\chi}$
\item Domains of contexts are denoted $\mathit{dom}\! \, \ottsym{(}  \Gamma  \ottsym{)}$
\end{itemize}
\end{minipage}
\end{tabular}\\[2ex]


\paragraph{Grammar.}
\[
\begin{array}{rcll}
\tau,\sigma, \rho &\bnfeq& \ottnt{H} \, \overline{\tau} \bnfor \tau_{{\mathrm{1}}}  \to  \tau_{{\mathrm{2}}} \bnfor \alpha \bnfor \forall \, \alpha  \ottsym{.}  \tau \bnfor
                            \phi  \Rightarrow  \tau \bnfor \ottmv{F} \, \overline{\tau} & \text{types} \\
\phi &\bnfeq& \tau_{{\mathrm{1}}}  \sim  \tau_{{\mathrm{2}}} & \text{constraints} \\
\gamma, \eta &\bnfeq&  \langle  \tau  \rangle  \bnfor \ottkw{sym} \, \gamma \bnfor \gamma_{{\mathrm{1}}}  \fatsemi  \gamma_{{\mathrm{2}}} \bnfor \ottnt{H} \, \overline{\gamma}
                              \bnfor \gamma_{{\mathrm{1}}}  \to  \gamma_{{\mathrm{2}}} \bnfor \forall \, \alpha  \ottsym{.}  \gamma  & \text{coercions} \\
 & \bnfor & \gamma_{{\mathrm{1}}}  \sim  \gamma_{{\mathrm{2}}}  \Rightarrow  \gamma_{{\mathrm{3}}} \bnfor \ottmv{F} \, \overline{\gamma}
                              \bnfor  \ottkw{nth} _{ \ottnt{i} }\  \gamma  \bnfor \gamma  \at  \tau \bnfor \ottmv{c} \bnfor  \xi _{ \ottnt{i} }\  \overline{\tau} \  \overline{\ottnt{q} }  \\
\ottnt{e} &\bnfeq& \ottmv{x} \bnfor \ottmv{K} \bnfor \lambda  \ottmv{x}  \ottsym{:}  \tau  \ottsym{.}  \ottnt{e} \bnfor \ottnt{e_{{\mathrm{1}}}} \, \ottnt{e_{{\mathrm{2}}}} \bnfor \Lambda  \alpha  \ottsym{.}  \ottnt{e}
                     \bnfor \ottnt{e} \, \tau & \text{expressions} \\
      &\bnfor& \lambda  \ottmv{c}  \ottsym{:}  \phi  \ottsym{.}  \ottnt{e} \bnfor \ottnt{e} \, \gamma \bnfor \ottnt{e}  \triangleright  \gamma
                     \bnfor \ottkw{assume} \, \chi \, \ottkw{in} \, \ottnt{e} \\
\ottnt{v} &\bnfeq& \ottmv{K} \bnfor \lambda  \ottmv{x}  \ottsym{:}  \tau  \ottsym{.}  \ottnt{e} \bnfor \Lambda  \alpha  \ottsym{.}  \ottnt{e} \bnfor \lambda  \ottmv{c}  \ottsym{:}  \phi  \ottsym{.}  \ottnt{e} & \text{values} \\
\chi &\bnfeq&  (  \alpha  \pipe  \ottmv{c}  :  \ottmv{F} \, \overline{\tau}   \sim   \alpha  )  & \text{evaluation assumption} \\
\ottnt{q} &\bnfeq&  ( \tau  \pipe  \gamma )  & \text{evaluation resolution} \\[1ex]
\ottnt{E} &\bnfeq& \forall \, \overline{\alpha} \, \overline{\chi}  \ottsym{.}  \ottmv{F} \, \overline{\tau}  \sim  \tau_{{\mathrm{0}}} & \text{type family equations} \\
\Sigma &\bnfeq&  \emptyset  \bnfor \Sigma  \ottsym{,}  \ottmv{F} \,  \mathop{:_{\not\top} }  \, \ottmv{n} \bnfor \Sigma  \ottsym{,}  \ottmv{F} \,  \mathop{ :_{\top} }  \, \ottmv{n} \bnfor \Sigma  \ottsym{,}  \xi  \ottsym{:}  \overline{\ottnt{E} } & \text{signatures} \\
\delta &\bnfeq& \alpha \bnfor  \ottmv{c} {:} \phi  \bnfor  \ottmv{x} {:} \tau  & \text{bindings} \\
\Gamma &\bnfeq&  \emptyset  \bnfor \Gamma  \ottsym{,}  \delta & \text{typing contexts} \\[1ex]
\theta &\bnfeq&  \emptyset  \bnfor \theta  \ottsym{,}  \tau  \ottsym{/}  \alpha \bnfor \theta  \ottsym{,}  \gamma  \ottsym{/}  \ottmv{c} \bnfor \theta  \ottsym{,}  \ottnt{e}  \ottsym{/}  \ottmv{x} & \text{substitutions} \\
\mathcal{V} &\bnfeq& \ldots & \text{sets of variables} \\
\mathcal{C}  \ottsym{[}   \cdot   \ottsym{]} &\bnfeq& \ldots & \text{one-hole type contexts}
\end{array}
\]

\caption{System CFC Design}
\label{fig:fc-grammar}
\end{figure}

\newcommand{\nl}{\\[1ex]}
\begin{figure}
\begin{judg}{$ \Gamma   \vdash   \tau  \ \mathsf{type} $}{Type validity}
\ottdruleTXXTyCon{} \qquad
\ottdruleTXXArrow{} \qquad
\ottdruleTXXVar{} \nl
\ottdruleTXXForall{} \qquad
\ottdruleTXXQual{}
\end{judg}\nl
\begin{judg}{$ \Gamma   \vdash   \phi  \ \mathsf{prop} $}{Proposition validity}
\ottdrulePXXTypes{} \qquad
\ottdrulePXXFamily{}
\end{judg}\nl
\begin{judg}{$ \vdash   \Gamma  \ \mathsf{ctx} $}{Context validity}
\ottdruleGXXNil{} \qquad
\ottdruleGXXTyVar{} \qquad
\ottdruleGXXCoVar{} \qquad
\ottdruleGXXVar{}
\end{judg}
\caption{Type validity judgments}
\label{fig:type-judg}
\label{fig:prop-judg}
\label{fig:judg-begin}
\end{figure}

\begin{figure}
\begin{judg}{$\Gamma  \vdash  \ottnt{e}  \ottsym{:}  \tau$}{Expression typing}
\ottdruleEXXVar{} \qquad
\ottdruleEXXConst{} \nl
\ottdruleEXXLam{} \qquad
\ottdruleEXXTLam{} \qquad
\ottdruleEXXCLam{} \nl
\ottdruleEXXApp{} \qquad
\ottdruleEXXTApp{} \qquad
\ottdruleEXXCApp{} \nl
\ottdruleEXXCast{} \qquad
\ottdruleEXXAssume{}
\end{judg}\nl
\begin{judg}{$\ottnt{e}  \longrightarrow  \ottnt{e'}$}{Small-step operational semantics}
\ottdruleSXXApp{} \quad
\ottdruleSXXTApp{} \quad
\ottdruleSXXCApp{} \quad
\ottdruleSXXCast{} \nl
\ottdruleSXXBeta{} \qquad
\ottdruleSXXTBeta{} \nl
\ottdruleSXXCBeta{} \qquad
\ottdruleSXXCPush{} \nl
\ottdruleSXXPush{} \quad
\ottdruleSXXTPush{} \nl
\ottdruleSXXTrans{} \qquad
\ottdruleSXXResolve{}
\end{judg}
\caption{Expression judgments}
\label{fig:expr-judgments}
\label{fig:expr-type-judg}
\end{figure}

\begin{figure}
\begin{judg}{$\Gamma  \vdash  \gamma  \ottsym{:}  \phi$}{Coercion validity}
\ottdruleCXXRefl{} \qquad
\ottdruleCXXSym{} \qquad
\ottdruleCXXTrans{} \nl
\ottdruleCXXApp{} \qquad
\ottdruleCXXFun{} \nl
\ottdruleCXXFam{} \qquad
\ottdruleCXXForall{} \nl
\ottdruleCXXQual{} \nl
\ottdruleCXXNth{} \qquad
\ottdruleCXXNthArrow{} \nl
\ottdruleCXXNthQual{} \quad
\ottdruleCXXInst{} \nl
\ottdruleCXXVar{} \qquad
\ottdruleCXXAxiom{}
\end{judg}\nl
\begin{judg}{$\Gamma  \vdash  \overline{\ottnt{q} }  \ottsym{:}  \overline{\chi}$}{Evaluation resolution validity}
\ottdruleAXXNil{} \qquad
\ottdruleAXXCons{}
\end{judg}\nl
\caption{Coercion validity judgments}
\label{fig:co-judg}
\label{fig:qs-fs-judg}
\end{figure}

\begin{figure}
\begin{center}
\begin{judg}{$ \vdash   \Sigma  \ \mathsf{ok} $}{Signature validity}
\begin{array}[b]{@{}c@{}}
\ottdruleDXXNil{} \\[3ex]
\ottdruleDXXPartial{} \\[3ex]
\ottdruleDXXTotal{}
\end{array} \qquad
\ottdruleDXXAxiom{}
\end{judg}\nl
\begin{judg}{$ \Gamma   \vdash   \overline{\chi} \ \mathsf{assumps} $}{Evaluation assumptions validity}
\ottdruleXXXNil{} \qquad
\ottdruleXXXCons{}
\end{judg}
\end{center}
\caption{Signature validity}
\label{fig:decl-judgments}
\end{figure}

\begin{figure}
\begin{center}
\begin{judg}{$ \mathsf{compat} ( \ottnt{E_{{\mathrm{1}}}} ,  \ottnt{E_{{\mathrm{2}}}} ) $}{Equation compatibility}
\ottdruleCoXXCoinc{} \quad
\ottdruleCoXXDistinct{}
\end{judg}\nl
\begin{judg}{$ \mathsf{no\_conflict} ( \overline{\ottnt{E} } ,  \ottnt{i} ,  \overline{\tau} ,  \ottnt{j} ) $}{Check for equation conflicts}
\ottdruleNCXXApart{} \qquad
\ottdruleNCXXCompatible{}
\end{judg}\nl
\end{center}
\caption{Closed type family auxiliary judgments}
\label{fig:ctfs}
\label{fig:judg-end}
\end{figure}

We will study a simplified version of System FC, called CFC (``constrained FC''). The grammar for the
language is presented in \pref{fig:fc-grammar} and is checked by the judgments in Figures \ref{fig:judg-begin}--\ref{fig:judg-end}.
Broadly speaking, CFC is like System F, but with explicit coercions witnessing equality between types and
usable in type conversions (see rule \rul{E\_Cast}, \pref{fig:expr-type-judg}). The features in this system
beyond those in System F are all driven by these coercions. Before describing the novelty of CFC, we take a quick tour of the grounds. Novel components are indicated in the following
discussion; the rest of System CFC follows previous work (e.g., \cite{BreitnerEPW16,EisenbergVPJW14}).

Types in CFC are like those in System F, but with three additions: $\ottnt{H} \, \overline{\tau}$ is a fully applied
type constant $\ottnt{H}$ (allowing partial application would require reasoning about kinds), $\phi  \Rightarrow  \tau$
is a type $\tau$ qualified by an equality assumption $\phi$, and $\ottmv{F} \, \overline{\tau}$ is a fully applied type family
$\ottmv{F}$.
Perhaps unexpectedly,
classes are not included. The novel constrained nature of type families arises from CFC's
 differentiation between pretypes (any production of metavariable $\tau$)
and types (as validated by
$ \Gamma   \vdash   \tau  \ \mathsf{type} $, \pref{fig:type-judg}); proper types may mention type families only
in a proposition $\phi$.
Examine the judgment $ \Gamma   \vdash   \phi  \ \mathsf{prop} $ (\pref{fig:prop-judg}). Its rule \rul{P\_Types} allows the proposition
to relate two proper types, while the rule \rul{P\_Family} allows a saturated type family application to
be related to a type. Thus, in CFC, we would write \ensuremath{\id{insert}\mathbin{::}\forall\!\! \hsforall \;\id{a}\;\id{c}\hsdot{\circ }{.\,}\id{Elem}\;\id{c}\,\sim\,\id{a}\Rightarrow \id{a}\to \id{c}\to \id{c}} instead
of the more typical \ensuremath{\id{insert}\mathbin{::}\forall\!\! \hsforall \;\id{c}\hsdot{\circ }{.\,}\id{Collects}\;\id{c}\Rightarrow \id{Elem}\;\id{c}\to \id{c}\to \id{c}}. In effect, the type family
equality assumption
\ensuremath{\id{Elem}\;\id{c}\,\sim\,\id{a}} takes the place of the class constraint \ensuremath{\id{Collects}\;\id{c}}: both assert that \ensuremath{\id{Elem}\;\id{c}} can evaluate
to a proper (type family-free) type.

The language omits any consideration of kinds, as the
complexity of kinds does not illuminate the invention of constrained type families.

Expressions $\ottnt{e}$ are checked by the judgment $\Gamma  \vdash  \ottnt{e}  \ottsym{:}  \tau$ (\pref{fig:expr-type-judg}). There are two leaf forms, for variables $\ottmv{x}$ and constants (such as data
constructors) $\ottmv{K}$. In addition to System F's two forms of abstraction and application (over
expressions and types), CFC contains abstraction and application over coercions. Accordingly,
a function may
assume an equality proposition $\phi$ relating two types. The feature can be seen in the rules \rul{E\_CLam} and \rul{E\_CApp} (\pref{fig:expr-type-judg}). Though this language omits datatypes, generalized algebraic datatypes (GADTs) can
be encoded using coercion abstractions~\cite[\S3.2]{SulzmannCJD07}. Coercions are
used in casts $\ottnt{e}  \triangleright  \gamma$, which use the coercion to change the type of an expression
(\rul{E\_Cast}, \pref{fig:expr-type-judg}).
Lastly, expressions also
contain a novel form $\ottkw{assume} \, \chi \, \ottkw{in} \, \ottnt{e}$ used in our account of total type
families~\secref{cfc-totality}.

The small-step operational semantics (\pref{fig:expr-judgments}) provides the relation
$\ottnt{e}  \longrightarrow  \ottnt{e'}$.  The definition for $ \longrightarrow $ contains congruence forms to allow evaluation in
applications and casts, $\beta$-reductions over the three application forms, and four push rules
(counting \rul{S\_Trans} as a push rule for casts). The push rules allow us to move casts around
when they get in the way---for example when a cast prevents us from reducing an applied
$\lambda$-expression. Though somewhat intricate, these rules are derived straightforwardly simply by
making choices in order to have the output expression preserve the type of the input expression. The
novel rule \rul{S\_Resolve} is discussed with $ \ottkw{assume} $~\pref{sec:cfc-totality}. Values in CFC
are unsurprisingly constants and abstractions.

Of the main productions in the grammar, we are left with coercions $\gamma$, checked by the judgment
$\Gamma  \vdash  \gamma  \ottsym{:}  \phi$ (\pref{fig:co-judg}). A coercion is a witness of type equality; thus, the rules
for coercion formation determine the equality relation underlying the type system.\footnote{In
a similar system that leaves coercions out but has a conversion rule, the rules for
$\Gamma  \vdash  \gamma  \ottsym{:}  \phi$ would correspond to rules for definitional equality, often rendered with
$\equiv$.} The critical
property of this relation is \emph{consistency}---that we can never prove, for example, that \ensuremath{\id{Int}}
equals \ensuremath{\id{Bool}}. We return to consistency and our proof thereof later in this section~\pref{sec:cfc-consistency}. The equality
relation as witnessed by these coercions has several properties:
\begin{itemize}
\item Our equality relation is indeed an equivalence, as witnessed by coercion forms for reflexivity
($ \langle  \tau  \rangle $), symmetry ($\ottkw{sym} \, \gamma$), and transitivity ($\gamma_{{\mathrm{1}}}  \fatsemi  \gamma_{{\mathrm{2}}}$).
\item Equality is congruent,
as witnessed by a coercion for each recursive type form.
\item  Equality can be decomposed via the $ \ottkw{nth} _{ \ottnt{i} }\  \gamma $ and $\gamma  \at  \tau$ coercions. The former
extracts equalities from applied type constants (\rul{C\_Nth}), function arrows (\rul{C\_NthArrow}),
and qualified
types (\rul{C\_NthQual}). The latter instantiates an equality between polytypes (\rul{C\_Inst}), giving
us an equality between the two polytype bodies.
\item Equality can be assumed, as witnessed by coercion variables $\ottmv{c}$.
\item Crucially, equality witnesses the reduction of type families through the form $ \xi _{ \ottnt{i} }\  \overline{\tau} \  \overline{\ottnt{q} } $
and the rule \rul{C\_Axiom}, as discussed in the next subsection.
\end{itemize}
Unlike in other developments of System FC, this system does \emph{not} support a coercion regularity
lemma; that is, $\Gamma  \vdash  \gamma  \ottsym{:}  \phi$ does \emph{not} imply that $ \Gamma   \vdash   \phi  \ \mathsf{prop} $. In other words,
the two types related by a coercion may mention type families at arbitrary depths. The lemma was
used primarily for convenience in prior proofs; its omission here does not bite.

\subsection{Type Family Axioms and Signatures}
Following prior work on System FC (initially that of \citet{SulzmannCJD07}), we use
\emph{axioms} $\xi$ to witness type family reductions. That is, if there is an equation \ensuremath{\keyword{type}\;\id{F}\;\id{Int}\mathrel{=}\id{Bool}} in scope, then we have an axiom $\xi$ that proves \ensuremath{\id{F}\;\id{Int}\,\sim\,\id{Bool}}. An
expression can then use this axiom to cast an expression of type \ensuremath{\id{Bool}} to one of type \ensuremath{\id{F}\;\id{Int}}.


In System CFC, axioms exist in an ambient signature $\Sigma$ (which, more formally, should appear
in every judgment; we omit this to reduce clutter). Signatures contain both declarations for
type families $\ottmv{F} \, : \, \ottmv{n}$ and axiom declarations $\xi  \ottsym{:}  \overline{\ottnt{E} }$. The former has two forms:
$\ottmv{F} \,  \mathop{:_{\not\top} }  \, \ottmv{n}$ declares a \emph{partial} type family $\ottmv{F}$ that takes $\ottmv{n}$ arguments, and
$\ottmv{F} \,  \mathop{ :_{\top} }  \, \ottmv{n}$ declares a \emph{total} type family. The difference is in the treatment of the
$ \ottkw{assume} $ construct~\pref{sec:cfc-totality}.

An axiom $\xi$ is classified by a list of equations $\overline{\ottnt{E} }$, where each equation has
the form $\forall \, \overline{\alpha} \, \overline{\chi}  \ottsym{.}  \ottmv{F} \, \overline{\tau}  \sim  \tau_{{\mathrm{0}}}$. Using a list of equations, as opposed to only one equation,
is necessary to support closed type families, with their ordered lists of equations. However,
the intricacies of closed type families do not affect our main contribution to this formalism
(i.e., the constraining of type family applications via the distinction between pretypes and
types). We will thus consider only singleton lists of equations $\ottnt{E}$ for now. We return
to the full generality of closed type families below.

In an equation $\ottnt{E}$, the types $\overline{\tau}$ and the
type $\tau_{{\mathrm{0}}}$ are proper types, with no type family applications; the lack of type family
application on the right-hand side ($\tau_{{\mathrm{0}}}$) is new in this work.
As in prior work on type families, equations
can be quantified over type variables $\overline{\alpha}$; this allows the equations to work at many types.
For example, the equation \ensuremath{\id{F}\;(\id{Maybe}\;\id{a})\mathrel{=}\id{a}} is quantified over the variable \ensuremath{\id{a}}.

Also novel in this
work is quantification over \emph{evaluation assumptions} $\overline{\chi}$. The form for $\chi$ is
$ (  \alpha  \pipe  \ottmv{c}  :  \ottmv{F} \, \overline{\tau}   \sim   \alpha  ) $, read ``$\alpha$ such that $\ottmv{c}$ witnesses that $\ottmv{F} \, \overline{\tau}$ reduces
to $\alpha$''. Quantification over evaluation assumptions is necessary to support type families
that reduce to other type families. For example, we might have \ensuremath{\id{F}\;(\id{Maybe}\;\id{a})\mathrel{=}\id{G}\;\id{a}}; such an
equation would compile to \ensuremath{\forall\!\! \hsforall \;\id{a}\;(\id{b}\mid \id{c}\mathbin{:}\id{G}\;\id{a}\,\sim\,\id{b})\hsdot{\circ }{.\,}\id{F}\;(\id{Maybe}\;\id{a})\,\sim\,\id{b}}. Because of evaluation
assumptions, we can continue to support equations such as \ensuremath{\id{F}\;(\id{Maybe}\;\id{a})\mathrel{=}\id{G}\;\id{a}} even while disallowing
type families on the right-hand sides of axioms. The assumptions in a type family equation
bind a coercion variable $\ottmv{c}$, though this variable is not used; the use of $\chi$ here (instead
of a construct that does not bind $\ottmv{c}$) is for simplicity and parallelism with the $\chi$ in
the $ \ottkw{assume} $ construct. Note that evaluation assumptions are more specific than arbitrary
equality assumptions $\phi$, requiring a type family on the left and requiring that the right-hand
side be a fresh type variable. This restrictive form is critical in proving that type family
reduction is confluent~\pref{sec:cfc-consistency}.

Signatures, with their type family equations, are validated by the judgment $ \vdash   \Sigma  \ \mathsf{ok} $ and its
auxiliary judgment $ \Gamma   \vdash   \overline{\chi} \ \mathsf{assumps} $, both in \pref{fig:decl-judgments}.

The use of an axiom $\xi$ to form a coercion has the form $ \xi _{ \ottnt{i} }\  \overline{\tau} \  \overline{\ottnt{q} } $, supplying the index $\ottnt{i}$ of
the equation to use (for now, $\ottnt{i}$ will always be 0),
a list of types $\overline{\tau}$ used to instantiate the type variables $\overline{\alpha}$, and a
list of \emph{evaluation resolutions} $\overline{\ottnt{q} }$ used to instantiate the evaluation assumptions
$\overline{\chi}$. An evaluation resolution $\ottnt{q}$ has the form $ ( \tau  \pipe  \gamma ) $, where the type $\tau$ can
instantiate the type variable $\alpha$ in $ (  \alpha  \pipe  \ottmv{c}  :  \ottmv{F} \, \overline{\tau}   \sim   \alpha  ) $, and the coercion $\gamma$ proves
the equality and instantiates the $\ottmv{c}$. We write $\ottnt{q}  \ottsym{/}  \chi$ to mean a substitution that maps the
type and coercion, respectively.

To understand the daunting rule \rul{C\_Axiom}, let's first simplify it to eliminate the possibility
of multiple equations for the given axiom. Here is the simplified version:

\[
\ottdrule{\ottpremise{\xi  \ottsym{:}  \forall \, \overline{\alpha} \, \overline{\chi}  \ottsym{.}  \ottmv{F} \, \overline{\tau}  \sim  \tau_{{\mathrm{0}}}  \in  \Sigma \quad \hspace{-.6em} \quad  \vdash   \Gamma  \ \mathsf{ctx} }%
\ottpremise{\ottcomp{ \Gamma   \vdash   \sigma_{\ottmv{j}}  \ \mathsf{type} }{\ottmv{j}} \quad \hspace{-.6em} \quad \Gamma  \vdash  \overline{\ottnt{q} }  \ottsym{:}  \overline{\chi}  \ottsym{[}  \overline{\sigma}  \ottsym{/}  \overline{\alpha}  \ottsym{]}}}%
{\Gamma  \vdash   \xi _{ \ottsym{0} }\  \overline{\sigma} \  \overline{\ottnt{q} }   \ottsym{:}  \ottmv{F} \, \overline{\tau}  \ottsym{[}  \overline{\sigma}  \ottsym{/}  \overline{\alpha}  \ottsym{]}  \sim  \tau_{{\mathrm{0}}}  \ottsym{[}  \overline{\sigma}  \ottsym{/}  \overline{\alpha}  \ottsym{,}  \overline{\ottnt{q} }  \ottsym{/}  \overline{\chi}  \ottsym{]}}%
{\rul{C\_Axiom} (simplified)}
\]

The rule looks up the axiom in the signature,
checks to make sure the $\overline{\sigma}$ are proper (type family-free) types and that the $\overline{\ottnt{q} }$ satisfy the
assumptions $\overline{\chi}$ (using the auxiliary judgment $\Gamma  \vdash  \overline{\ottnt{q} }  \ottsym{:}  \overline{\chi}$, \pref{fig:qs-fs-judg}).
The $\Gamma  \vdash  \overline{\ottnt{q} }  \ottsym{:}  \overline{\chi}$ judgment is straightforward, matching up
the $\overline{\ottnt{q} }$ with the corresponding $\overline{\chi}$ and checking that the coercions in $\overline{\ottnt{q} }$ prove
the correct propositions.

Let's now generalize to full closed type families with an ordered list of equations.\footnote{The
inclusion of closed type families in the formalization is to support our claim of a consistency proof in
the presence of closed type families. However, the treatment of these families here is not novel,
and our contributions have a minimal impact on the presentation of closed type families---it is the
\emph{metatheory} that is affected, not the \emph{theory}. The intricacies of closed type families
may therefore be skipped by readers not interested in reproducing our proof.} Here is the
full rule for axioms:

\[
\ottdruleCXXAxiom{}
\]

Compared to the rule above, this rule uses the index $\ottnt{i}$ to look up the right equation; it
also adds an invocation of the $ \mathsf{no\_conflict} $ judgment (\pref{fig:ctfs}).
This check is substantively identical to the existing check for closed type families
but with our simplified notion of apartness (see \pref{sec:apartness}); the
two necessary judgments appear in \pref{fig:ctfs}. The only change from prior work is in the
use of the $ \mathit{subst}\! $ operator in the premise to \rul{Co\_Coinc}. This rule detects when two type
family equations are \emph{compatible}. Recalling Eisenberg et al.~\cite{EisenbergVPJW14},
two equations are compatible if, whenever they are both applicable to the same type, they will yield
the same result. This can happen in two ways: if the two equations' left-hand sides are unifiable, then
the right-hand sides coincide under the unifying substitution (\rul{Co\_Coinc}); or the two equations' left-hand sides
have no overlap (\rul{Co\_Distinct}). In the former case, we must be careful, as the true right-hand sides
of the equations may mention type families; we thus use $ \mathit{subst}\! $ to generate a substitution over the
evaluation assumptions $\overline{\chi}$, expanding out the variables bound in the $\overline{\chi}$ to the type
family applications they equal.

\subsection{Totality and Assumptions}
\label{sec:cfc-totality}

The challenge to totality in CFC is best understood by example. Consider again the \ensuremath{\id{append}}
operation on length-indexed vectors~\pref{sec:vector-append}, repeated here:
\begin{hscode}\SaveRestoreHook
\column{B}{@{}>{\hspre}l<{\hspost}@{}}%
\column{21}{@{}>{\hspre}l<{\hspost}@{}}%
\column{25}{@{}>{\hspre}c<{\hspost}@{}}%
\column{25E}{@{}l@{}}%
\column{28}{@{}>{\hspre}l<{\hspost}@{}}%
\column{E}{@{}>{\hspre}l<{\hspost}@{}}%
\>[B]{}\id{append}\mathbin{::}\id{Vec}\;\id{a}\;\id{m}\to \id{Vec}\;\id{a}\;\id{n}\to \id{Vec}\;\id{a}\;(\id{Plus}\;\id{m}\;\id{n}){}\<[E]%
\\
\>[B]{}\id{append}\;\id{Nil}\;{}\<[21]%
\>[21]{}\id{ys}{}\<[25]%
\>[25]{}\mathrel{=}{}\<[25E]%
\>[28]{}\id{ys}{}\<[E]%
\\
\>[B]{}\id{append}\;(\id{Cons}\;\id{x}\;\id{xs})\;{}\<[21]%
\>[21]{}\id{ys}{}\<[25]%
\>[25]{}\mathrel{=}{}\<[25E]%
\>[28]{}\id{Cons}\;\id{x}\;(\id{append}\;\id{xs}\;\id{ys}){}\<[E]%
\ColumnHook
\end{hscode}\resethooks
In CFC, the type of \ensuremath{\id{append}} would be rewritten to become
\begin{hscode}\SaveRestoreHook
\column{B}{@{}>{\hspre}l<{\hspost}@{}}%
\column{E}{@{}>{\hspre}l<{\hspost}@{}}%
\>[B]{}\id{append}\mathbin{::}\id{Plus}\;\id{m}\;\id{n}\,\sim\,\id{p}\Rightarrow \id{Vec}\;\id{a}\;\id{m}\to \id{Vec}\;\id{a}\;\id{n}\to \id{Vec}\;\id{a}\;\id{p}{}\<[E]%
\ColumnHook
\end{hscode}\resethooks
But now we have a problem.  In the \ensuremath{\id{Cons}} case, we have learned
that \ensuremath{\id{m}\,\sim\,\id{Succ}\;\id{m'}} for some \ensuremath{\id{m'}}; \ensuremath{\id{xs}} has type \ensuremath{\id{Vec}\;\id{a}\;\id{m'}}.
When we make the recursive call to \ensuremath{\id{append}}, we must provide a
\ensuremath{\id{p'}} such that \ensuremath{\id{Plus}\;\id{m'}\;\id{n}\,\sim\,\id{p'}}. However, there is no way to get such a \ensuremath{\id{p'}}
from the information to hand.

The solution to this problem is the $ \ottkw{assume} $ construct. The idea of
$\ottkw{assume} \, \chi \, \ottkw{in} \, \ottnt{e}$ is that we are allowed to assume that arbitrary applications of
a total type family reduce to proper types. Indeed, that's what \emph{total} means!

Let's now examine the typing rule for assumptions:
\[
\ottdruleEXXAssume{}
\]
This rule requires that the type family be total, according to the $\top$ subscript in the
$\ottmv{F} \,  \mathop{ :_{\top} }  \, \ottmv{n}  \in  \Sigma$ premise. It then checks the body $\ottnt{e}$ in a context where we have
a type $\alpha$ and coercion $\ottmv{c}$, as bound by $\chi$.  Finally, $\alpha$ is essentially
existential, so the rule also does a skolem escape check to assure that $\alpha$ does
not leak into the type of $\ottnt{e}$.

Discharging such assumptions is straightforward:
\[
\ottdruleSXXResolve{}
\]
When an $ \ottkw{assume} $ construct is ready to reduce, we are in an empty context---meaning that
all type variables have concrete values. At this point, we simply evaluate the type family
application at the concrete values. We are sure that this evaluation is possible, due to the
totality of the type function. The $ \ottmv{F} \  \overline{\tau}  \Downarrow  \ottnt{q} $ operation does the work for us, as defined in this
property of total type families:

\totaltypefamilies{}
This property must hold for any total type family, as accepted by any totality checker.

\subsection{Metatheory: Consistency of Equality}
\label{sec:cfc-consistency}

System CFC admits the usual preservation and progress theorems.

\begin{theorem}[Preservation]
\label{thm:preservation}
\presstatement
\end{theorem}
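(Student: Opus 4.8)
The plan is to prove preservation by induction on the derivation of the step $e \longrightarrow e'$, with case analysis on the final rule and inversion on the typing derivation $\emptyset \vdash e : \tau$ in each case. Every rule that recurses (the congruence rules \rul{S\_App}, \rul{S\_TApp}, \rul{S\_CApp}, \rul{S\_Cast}) preserves the empty context, so the statement as given is already a sufficiently strong induction hypothesis; those four cases follow immediately by inverting the relevant typing rule (\rul{E\_App}, \rul{E\_TApp}, \rul{E\_CApp}, \rul{E\_Cast}), applying the induction hypothesis to the reduced subterm, and reassembling.

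Before the main induction I would establish the standard auxiliary lemmas. The central one is a \emph{substitution lemma} — in fact a family, one per binder form — stating that substituting a well-typed term, type, coercion, or evaluation resolution for the corresponding variable preserves typing, applying the induced type substitution where appropriate. I would also need the usual \emph{inversion} and canonical-forms facts for casts and applications, together with a \emph{coercion-decomposition} lemma: from $\Gamma \vdash \gamma : (\tau_1 \to \tau_2) \sim (\sigma_1 \to \sigma_2)$ one obtains $\Gamma \vdash \ottkw{nth}_0\,\gamma : \tau_1 \sim \sigma_1$ and $\Gamma \vdash \ottkw{nth}_1\,\gamma : \tau_2 \sim \sigma_2$ via \rul{C\_NthArrow}, with analogues for $\forall$-types (\rul{C\_Inst}) and qualified types (\rul{C\_NthQual}). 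The three $\beta$-cases \rul{S\_Beta}, \rul{S\_TBeta}, \rul{S\_CBeta} then discharge directly by the matching substitution lemma after inverting the application and the abstraction.

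The genuinely technical cases are the push rules \rul{S\_Push}, \rul{S\_TPush}, \rul{S\_CPush}, and \rul{S\_Trans}, where a cast obstructs a redex and must be redistributed. In each I would invert \rul{E\_Cast} to learn that the coerced value has some source type and that $\gamma$ relates it to the (valid) target type, use canonical forms to pin the value's type to the appropriate arrow/$\forall$/qualified shape, and then rebuild the reduct following the explicit coercion recipe in the rule — for \rul{S\_CPush}, for instance, checking that $\eta_0 \fatsemi \gamma \fatsemi \eta_1$ has exactly the assumed proposition $\phi$ so that $\lambda c{:}\phi.\,e_0$ can consume it, and that $\eta_2 = \ottkw{nth}_2\,\eta$ retargets the result. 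The key subtlety, which I expect to be the main obstacle, is that this system deliberately lacks a coercion-regularity lemma, so I cannot recover $\Gamma \vdash \phi \ \mathsf{prop}$ from $\Gamma \vdash \gamma : \phi$; every type-validity premise demanded by \rul{E\_Cast} on the rebuilt casts must instead be extracted from the validity of the \emph{original} target type (recorded in the incoming \rul{E\_Cast} premise) by inverting \rul{T\_Arrow}, \rul{T\_Forall}, or \rul{T\_Qual}.

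Finally, the novel case \rul{S\_Resolve} discharges an $\ottkw{assume}$: inverting \rul{E\_Assume} yields a body typed under $\alpha$ and $c : \ottmv{F} \, \overline{\tau} \sim \alpha$, together with the skolem-escape side condition $\alpha \notin \mathit{fv}(\sigma)$. I would invoke Property~\ref{prop:total} to obtain the resolution $\ottnt{q}$ with $\emptyset \vdash \ottnt{q} : \chi$, then apply the evaluation-resolution substitution lemma to conclude that $e[\ottnt{q}/\chi]$ is well typed; the escape condition guarantees the induced type substitution leaves $\sigma$ unchanged, so the type is preserved exactly.
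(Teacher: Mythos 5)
Your proposal is correct and matches the paper's proof in all essentials: the same substitution lemmas, the same use of \rul{C\_NthArrow}/\rul{C\_Inst}/\rul{C\_NthQual} in the push cases, and the same appeal to Property~\ref{prop:total} plus the resolution-substitution lemma (with the skolem-escape condition) for \rul{S\_Resolve}; the only organizational difference is that the paper inducts on the typing derivation rather than on the step derivation, which is immaterial here. One small refinement: in \rul{S\_Push} the validity of the \emph{source}-side types (needed for the inner rebuilt cast) comes from type regularity of the value's typing judgment rather than from the original cast's target type, so you will want the expression-level type-regularity lemma among your auxiliaries.
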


\begin{theorem}[Progress]
\label{thm:progress}
\progstatement
\end{theorem}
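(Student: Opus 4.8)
The plan is to prove progress by structural induction on the typing derivation $\emptyset \vdash \ottnt{e} : \tau$, with the workhorse being a \emph{canonical forms} lemma that pins down the shape of a value from the head of its type. The introduction forms are immediate: by \rul{E\_Const}, \rul{E\_Lam}, \rul{E\_TLam}, and \rul{E\_CLam} the expression $\ottnt{e}$ is already a value $\ottnt{v}$, satisfying the first disjunct. The variable rule \rul{E\_Var} is vacuous, since the empty context binds no term variables. This leaves the elimination forms \rul{E\_App}, \rul{E\_TApp}, \rul{E\_CApp}, \rul{E\_Cast}, and \rul{E\_Assume}, where all the real work lies.

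For the three application-like forms I would argue uniformly. Take \rul{E\_App}, with $\ottnt{e} = \ottnt{e_{{\mathrm{1}}}}\,\ottnt{e_{{\mathrm{2}}}}$ and $\ottnt{e_{{\mathrm{1}}}} : \tau_{{\mathrm{1}}} \to \tau_{{\mathrm{2}}}$. The induction hypothesis on $\ottnt{e_{{\mathrm{1}}}}$ gives three cases. If $\ottnt{e_{{\mathrm{1}}}} \longrightarrow \ottnt{e'_{{\mathrm{1}}}}$, then \rul{S\_App} steps the whole term. If $\ottnt{e_{{\mathrm{1}}}}$ is a value $\ottnt{v}$, then since a constant $\ottmv{K}$ has a type-constant head (rule \rul{E\_Const}) rather than an arrow, canonical forms forces $\ottnt{v}$ to be a $\lambda$-abstraction, and \rul{S\_Beta} applies. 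If $\ottnt{e_{{\mathrm{1}}}}$ is a coerced value $\ottnt{v} \triangleright \gamma$, then inverting \rul{E\_Cast} gives $\ottnt{v} : \sigma$ and $\gamma : \sigma \sim (\tau_{{\mathrm{1}}} \to \tau_{{\mathrm{2}}})$; type regularity makes $\sigma$ a proper type, so consistency yields $\sigma = \tau_{{\mathrm{1}}} \to \tau_{{\mathrm{2}}}$, whence $\ottnt{v}$ is again a $\lambda$-abstraction and \rul{S\_Push} fires. The forms \rul{E\_TApp} and \rul{E\_CApp} are handled identically, via the analogous congruence-, $\beta$-, and push-rules \rul{S\_TApp}, \rul{S\_TBeta}, \rul{S\_TPush} (resp.\ \rul{S\_CApp}, \rul{S\_CBeta}, \rul{S\_CPush}); in each case consistency rules out a value whose head type disagrees with the eliminator. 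For \rul{E\_Cast}, $\ottnt{e} = \ottnt{e_{{\mathrm{0}}}} \triangleright \gamma$: if $\ottnt{e_{{\mathrm{0}}}}$ steps, use \rul{S\_Cast}; if $\ottnt{e_{{\mathrm{0}}}}$ is a value $\ottnt{v}$, then $\ottnt{e}$ is itself a coerced value, satisfying the second disjunct; and if $\ottnt{e_{{\mathrm{0}}}} = \ottnt{v} \triangleright \gamma'$ is already a coerced value, the nested casts collapse by \rul{S\_Trans}.

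The remaining case is \rul{E\_Assume}, where $\ottnt{e} = \ottkw{assume}\,\chi\,\ottkw{in}\,\ottnt{e_{{\mathrm{0}}}}$ with $\chi = (\alpha \pipe \ottmv{c} : \ottmv{F}\,\overline{\tau} \sim \alpha)$. The rule's premises guarantee that $\ottmv{F}$ is total ($\ottmv{F} \mathop{:_{\top}} \ottmv{n} \in \Sigma$) and that each $\tau_{\ottmv{i}}$ is a proper type in the empty context. Hence the totality property (Property~\ref{prop:total}) supplies an evaluation resolution $\ottnt{q}$ with $\ottmv{F}\,\overline{\tau} \Downarrow \ottnt{q}$, and \rul{S\_Resolve} steps the term to $\ottnt{e_{{\mathrm{0}}}}[\ottnt{q}/\chi]$. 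Thus every well-typed closed term either is a (possibly coerced) value or reduces.

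The main obstacle is not the case analysis itself, which is routine once the tools are in place, but the coerced-value subcases of the elimination forms. There the argument rests entirely on the canonical forms lemma, and canonical forms in turn depends on the strong consistency result of Section~\ref{sec:cfc-consistency}: in the empty context a coercion can only relate syntactically identical proper types, so the head of a coerced value's type is determined. Establishing that consistency---through confluence and termination of type-family reduction---is the genuinely hard part of the metatheory; granting it, together with the lightweight type-regularity lemma, progress follows mechanically.
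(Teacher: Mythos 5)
Your proposal matches the paper's proof essentially step for step: induction on the typing derivation, canonical forms plus consistency and type regularity to handle the coerced-value subcases of the elimination forms, \rul{S\_Trans} for nested casts, and Property~\ref{prop:total} with \rul{S\_Resolve} for the \rul{E\_Assume} case. No gaps; the approach and the key lemmas invoked are the same as the paper's.
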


The proof of preservation is uninteresting. The hardest part is verifying that the push rules are
correct, but the only challenge is attention to detail. The unusual choice to make the context
empty in this proof is to support the \rul{S\_Resolve} rule, whose premise $ \ottmv{F} \  \overline{\tau}  \Downarrow  \ottnt{q} $ is well-defined
only in an empty context, according to \pref{prop:total}.

On the other hand, proving progress requires reasoning about the consistency of our equality
relation. This need arises in the case, among others, for \rul{E\_App}:
\[
\ottdrule{\ottpremise{\Gamma  \vdash  \ottnt{e_{{\mathrm{1}}}}  \ottsym{:}  \tau_{{\mathrm{1}}}  \to  \tau_{{\mathrm{2}}} \quad \hspace{-.6em} \quad \Gamma  \vdash  \ottnt{e_{{\mathrm{2}}}}  \ottsym{:}  \tau_{{\mathrm{1}}}}}{\Gamma  \vdash  \ottnt{e_{{\mathrm{1}}}} \, \ottnt{e_{{\mathrm{2}}}}  \ottsym{:}  \tau_{{\mathrm{2}}}}{\rul{E\_App}}
\]
We use the induction hypothesis to say that $\ottnt{e_{{\mathrm{1}}}}$ is a value $\ottnt{v_{{\mathrm{1}}}}$, a coerced value $\ottnt{v_{{\mathrm{1}}}}  \triangleright  \gamma$,
or steps to $\ottnt{e'_{{\mathrm{1}}}}$. In the case where $\ottnt{e_{{\mathrm{1}}}} \, \ottsym{=} \, \ottnt{v_{{\mathrm{1}}}}  \triangleright  \gamma$, we then wish to use \rul{S\_Push} to show
that the overall expression can step. However, this rule requires that $\ottnt{v_{{\mathrm{1}}}}$ have the form $\lambda  \ottmv{x}  \ottsym{:}  \tau  \ottsym{.}  \ottnt{e_{{\mathrm{0}}}}$.
The only way to show this is that the coercion $\gamma$ relates two functions.

The consistency lemma is what we need:
\begin{lemma}[Consistency]
\label{lem:consistency}
\consstatement
\end{lemma}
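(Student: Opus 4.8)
The plan is to derive consistency from the type-level rewriting relation $\rightsquigarrow$ (rules \rul{RTop} and \rul{Red}) by proving two supporting facts and combining them. The first is a \emph{completeness} property: any closed coercion $\emptyset \vdash \gamma : \tau_1 \sim \tau_2$ forces $\tau_1$ and $\tau_2$ to be joinable, i.e.\ there is some $\tau_3$ with $\tau_1 \rightsquigarrow^* \tau_3$ and $\tau_2 \rightsquigarrow^* \tau_3$. The second is a \emph{normal-form} property: if $\emptyset \vdash \tau \ \mathsf{type}$ then $\tau$ admits no $\rightsquigarrow$-step at all. Granting both, the lemma is immediate. From $\emptyset \vdash \gamma : \tau_1 \sim \tau_2$, completeness yields a common reduct $\tau_3$; since the hypotheses give $\emptyset \vdash \tau_1 \ \mathsf{type}$ and $\emptyset \vdash \tau_2 \ \mathsf{type}$, the normal-form property forces both sequences $\tau_1 \rightsquigarrow^* \tau_3$ and $\tau_2 \rightsquigarrow^* \tau_3$ to be empty, whence $\tau_1 = \tau_3 = \tau_2$.

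The normal-form property is the easy direction, proved by structural induction on the derivation of $\emptyset \vdash \tau \ \mathsf{type}$. By inspection of the rules in \pref{fig:type-judg}, a proper type mentions a type family application only inside a proposition $\phi$ guarding a qualified type $\phi \Rightarrow \tau$, never in a position reachable by a one-hole context $\mathcal{C}[\cdot]$. Since \rul{Red} is the only rule producing a step and it fires only at an exposed application $\mathcal{C}[\ottmv{F} \, \overline{\tau}]$, no proper type can reduce.

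Completeness is the substantive direction, proved by induction on the derivation of $\emptyset \vdash \gamma : \tau_1 \sim \tau_2$. Reflexivity is trivial and symmetry uses that joinability is symmetric. The congruence and decomposition rules (\rul{C\_App}, \rul{C\_Fun}, \rul{C\_Nth}, \rul{C\_Inst}, and their kin) go through by pushing the induction hypotheses through the structural closure of $\rightsquigarrow$, using that a head type constant is preserved under reduction and that $\rightsquigarrow^*$ commutes with substitution. The \rul{C\_Axiom} case is where an honest reduction step appears: the coercion witnesses exactly an \rul{RTop}/\rul{Red} step from $\ottmv{F} \, \overline{\tau}[\overline{\sigma}/\overline{\alpha}]$ to $\tau_{0}[\overline{\sigma}/\overline{\alpha}, \overline{\ottnt{q}}/\overline{\chi}]$, so the two sides are joinable in one step; here the $\mathsf{no\_conflict}$ premises of \rul{C\_Axiom} line up with those of \rul{RTop}, and I would apply the induction hypothesis to the coercions embedded in the evaluation resolutions $\overline{\ottnt{q}}$ to match the reduced assumptions demanded by \rul{RTop}. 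The genuinely hard case is transitivity (\rul{C\_Trans}): from $\tau_1, \tau_2$ joinable and $\tau_2, \tau_3$ joinable I must produce $\tau_1, \tau_3$ joinable, and this step needs \emph{confluence} of $\rightsquigarrow$.

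Confluence I would obtain from local confluence together with strong normalization, via Newman's lemma. Local confluence is a critical-pair analysis; the delicate overlaps arise from the ordered, possibly non-linear equations of a closed type family, where the $\mathsf{compat}$ and $\mathsf{no\_conflict}$ judgments (\pref{fig:ctfs}) guarantee that two competing equations reduce a common redex to coincident results. Strong normalization is the paper's headline contribution and, I expect, the principal obstacle: it fails in prior formulations precisely because a type family may appear on an axiom's right-hand side and interact with a non-linear pattern (the $\mathsf{Loop}$ phenomenon). In CFC this cannot happen, because axiom right-hand sides are type-family-free and \rul{RTop} consumes already-reduced evaluation resolutions $\overline{\ottnt{q}}$ for any nested families. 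I would therefore define a well-founded measure on types---morally the multiset of type-family-application depths---and show it strictly decreases under \rul{RTop}, lifting to \rul{Red} by context monotonicity. Verifying that this measure really decreases in the presence of repeated pattern variables is the subtle heart of the argument and the place where CFC's separation of pretypes from types does the essential work.
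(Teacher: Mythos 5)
Your proposal follows essentially the same route as the paper: consistency is obtained by combining completeness of the rewrite relation with the fact that proper types do not reduce, completeness handles \rul{C\_Trans} via confluence, confluence comes from Newman's Lemma applied to local confluence plus termination, and termination rests on axiom right-hand sides being type-family-free. The only (cosmetic) divergence is that the paper's termination measure is simply the count of type family applications, which drops by exactly one per \rul{Red} step, so the subtlety you attribute to termination under non-linear patterns actually lives in the local-confluence/top-level-determinism argument (via $\mathsf{compat}$ and unification), not in the measure.
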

In an empty context and when two types are type family free, if they are related by a coercion, then they
must be the same. Using the following regularity lemma about expression typing, we can use consistency
in the proof of progress to finish the \rul{E\_App} case, among others.


\subsubsection{The route to consistency}
Broadly speaking, we prove consistency in the same manner as in previous work.\footnote{The best point
of comparison is with \citet{EisenbergVPJW14}, as that proof considers closed type families, as does
ours here.} First, we must restrict the set of available axioms to obey the following syntactic rules:

\begin{assumption}[Good signature]
\label{assn:good}
We assume that our implicit signature $\Sigma$ conforms to the following rules,
adapted from \citet[Definition 18]{EisenbergVPJW14}:
\begin{enumerate}
\item For all $\xi  \ottsym{:}  \overline{\ottnt{E} }  \in  \Sigma$ where
$\ottnt{E_{\ottmv{i}}} \, \ottsym{=} \, \forall \, \overline{\alpha}_{\ottmv{i}} \, \overline{\chi}_{\ottmv{i}}  \ottsym{.}  \ottmv{F_{\ottmv{i}}} \, \overline{\tau}_{\ottmv{i}}  \sim  \tau_{{\mathrm{0}}\,\ottmv{i}}$, there exists $\ottmv{F}$ such that,
for all $i$, $\ottmv{F_{\ottmv{i}}} \, \ottsym{=} \, \ottmv{F}$. That is, every equation listed within one axiom is over the same
type family $\ottmv{F}$.
\item For all $\xi  \ottsym{:}  \overline{\ottnt{E} }  \in  \Sigma$ where $\ottnt{E_{\ottmv{i}}} \, \ottsym{=} \, \forall \, \overline{\alpha}_{\ottmv{i}} \, \overline{\chi}_{\ottmv{i}}  \ottsym{.}  \ottmv{F_{\ottmv{i}}} \, \overline{\tau}_{\ottmv{i}}  \sim  \tau_{{\mathrm{0}}\,\ottmv{i}}$, for
all $\ottmv{i}$, $\mathit{fv}\! \, \ottsym{(}  \overline{\tau}_{\ottmv{i}}  \ottsym{)} \, \ottsym{=} \, \overline{\alpha}_{\ottmv{i}}$.
That is, every
quantified type variable in an equation is mentioned free in a type on the equation's
left-hand side.
\item For all $\xi  \ottsym{:}  \overline{\ottnt{E} }  \in  \Sigma$, if $\mathsf{length}(\overline{\ottnt{E} }) > 1$ and the equations are over
type family $\ottmv{F}$, then no other axiom $\xi'  \ottsym{:}  \overline{\ottnt{E} }'  \in  \Sigma$ is over the same type
family $\ottmv{F}$. That is, all axioms with multiple equations are for \emph{closed}
type families.
\item For all $\xi_{{\mathrm{1}}}  \ottsym{:}  \ottnt{E_{{\mathrm{1}}}}  \in  \Sigma$ and $\xi_{{\mathrm{2}}}  \ottsym{:}  \ottnt{E_{{\mathrm{2}}}}  \in  \Sigma$ (each with only one equation),
if $\ottnt{E_{{\mathrm{1}}}}$ and $\ottnt{E_{{\mathrm{2}}}}$ are over the same type family $\ottmv{F}$, then
$ \mathsf{compat} ( \ottnt{E_{{\mathrm{1}}}} ,  \ottnt{E_{{\mathrm{2}}}} ) $. That is, equations for open type families are all pairwise compatible.
\end{enumerate}
\end{assumption}

The conditions above are identical to the conditions in \citet[Definition 18]{EisenbergVPJW14},
but with one change: we here do not need to restrict the left-hand types of equations
not to mention type families, because of the $\ottcompu{ \Gamma   \vdash   \tau_{\ottmv{i}}  \ \mathsf{type} }{\ottmv{i}}{\ottmv{n}}$ premise
to \rul{D\_Axiom} describing the validity of axioms in the signature. Type family
applications
are not types.

Then, we define a non-deterministic rewrite relation on types $\tau_{{\mathrm{1}}}  \rightsquigarrow  \tau_{{\mathrm{2}}}$
and prove both of the following:

\begin{lemma}[Completeness of the rewrite relation]
\label{lem:complete-red}
\completestatement
\end{lemma}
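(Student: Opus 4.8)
The plan is to induct on the derivation of $\emptyset  \vdash  \gamma  \ottsym{:}  \tau_{{\mathrm{1}}}  \sim  \tau_{{\mathrm{2}}}$, showing in each case that $\tau_{{\mathrm{1}}}$ and $\tau_{{\mathrm{2}}}$ are joinable, i.e.\ $\tau_{{\mathrm{1}}}  \rightsquigarrow^*  \tau_{{\mathrm{3}}}  \leftsquigarrow^*  \tau_{{\mathrm{2}}}$ for some $\tau_{{\mathrm{3}}}$. Because the sub-coercion in \rul{C\_Forall} is checked under a context extended with a type variable, I would first generalize the statement to an arbitrary type-variable context $\Gamma$: if $\Gamma  \vdash  \gamma  \ottsym{:}  \tau_{{\mathrm{1}}}  \sim  \tau_{{\mathrm{2}}}$ then $\tau_{{\mathrm{1}}}$ and $\tau_{{\mathrm{2}}}$ are joinable. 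This generalization is safe because no coercion rule adds a coercion variable to the context when moving to its premises, so \rul{C\_Var} never fires during the induction and the context stays a list of type variables throughout; the stated lemma is the instance $\Gamma \ottsym{=} \emptyset$. Note that reduction fires exactly when a family's arguments are proper (type-family-free) types, so reductions established in context $\Gamma  \ottsym{,}  \alpha$ lift unchanged through the binder $\forall \, \alpha  \ottsym{.} [\cdot]$, which is what makes \rul{C\_Forall} go through after reassembling the component joins.

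The structural cases are routine. \rul{C\_Refl} gives $\tau_{{\mathrm{1}}} \ottsym{=} \tau_{{\mathrm{2}}}$ and joins in zero steps; \rul{C\_Sym} uses that joinability is symmetric; and the congruence rules \rul{C\_App}, \rul{C\_Fun}, \rul{C\_Fam}, \rul{C\_Qual}, and \rul{C\_Forall} follow by joining each sub-component with the induction hypothesis and reassembling the joins inside the corresponding one-hole context $\mathcal{C}  \ottsym{[}   \cdot   \ottsym{]}$. For \rul{C\_Trans} the two uses of the hypothesis yield $\tau_{{\mathrm{1}}}  \rightsquigarrow^*  \rho_{{\mathrm{1}}}  \leftsquigarrow^*  \sigma$ and $\sigma  \rightsquigarrow^*  \rho_{{\mathrm{2}}}  \leftsquigarrow^*  \tau_{{\mathrm{2}}}$; I would then appeal to confluence of $ \rightsquigarrow $ to join $\rho_{{\mathrm{1}}}$ and $\rho_{{\mathrm{2}}}$ through $\sigma$, producing a common reduct of $\tau_{{\mathrm{1}}}$ and $\tau_{{\mathrm{2}}}$. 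This is the only case requiring confluence.

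The decomposition rules \rul{C\_Nth}, \rul{C\_NthArrow}, \rul{C\_NthQual}, and \rul{C\_Inst} need one auxiliary observation: since \rul{RTop} only rewrites at a type-family head, any reduction sequence out of $\ottnt{H} \, \overline{\tau}$ (respectively $\tau_{{\mathrm{1}}}  \to  \tau_{{\mathrm{2}}}$, a qualified type, or $\forall \, \alpha  \ottsym{.}  \sigma$) preserves the rigid head and proceeds componentwise. Thus from the hypothesis that, say, $\ottnt{H} \, \overline{\tau}$ and $\ottnt{H} \, \overline{\sigma}$ share a common reduct, that reduct must itself have head $\ottnt{H}$, and I can read off $\tau_{\ottmv{i}}  \rightsquigarrow^*  \rho_{\ottmv{i}}  \leftsquigarrow^*  \sigma_{\ottmv{i}}$ for each argument. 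For \rul{C\_Inst} I additionally use that $ \rightsquigarrow^* $ is stable under substitution—a redex and its reduct are untouched by substituting a proper type for a variable not occurring in them—so instantiating the joined bodies by $\tau$ finishes the case.

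I expect the axiom case \rul{C\_Axiom} to be the main obstacle, since there the coercion witnesses precisely a type-family reduction and I must exhibit it as one. The target is an \rul{RTop} step $\ottmv{F} \, \overline{\tau}  \ottsym{[}  \overline{\sigma}  \ottsym{/}  \overline{\alpha}  \ottsym{]}  \rightsquigarrow_{\top}  \tau_{{\mathrm{0}}}  \ottsym{[}  \overline{\sigma}  \ottsym{/}  \overline{\alpha}  \ottsym{,}  \overline{\ottnt{q} }  \ottsym{/}  \overline{\chi}  \ottsym{]}$, after which $\tau_{{\mathrm{3}}}$ is just the right-hand side. The instantiation types $\overline{\sigma}$ are proper by the $ \Gamma   \vdash   \sigma  \ \mathsf{type} $ premises, and the $ \mathsf{no\_conflict} $ premises of \rul{C\_Axiom} match those of \rul{RTop} verbatim; the delicate part is discharging \rul{RTop}'s recursive premises on the evaluation assumptions $\overline{\chi}$. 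Each such premise asks that the guarded application reduce to the witness type recorded in the matching $\ottnt{q} \ottsym{=}  ( \tau  \pipe  \gamma ) $, and I would obtain this by applying the induction hypothesis to the coercion $\gamma$ (available from $\Gamma  \vdash  \overline{\ottnt{q} }  \ottsym{:}  \overline{\chi}  \ottsym{[}  \overline{\sigma}  \ottsym{/}  \overline{\alpha}  \ottsym{]}$ via \rul{A\_Cons}) and noting that, since the witness $\tau$ is a proper type, it cannot itself reduce, so the common reduct collapses to $\tau$. Rebuilding these reductions in the order dictated by the substitutions $\theta_{\ottmv{n}}$ reconstructs the \rul{RTop} derivation, and I anticipate that aligning this evaluation-assumption bookkeeping with the exact shape \rul{RTop} demands is where the real work lies.
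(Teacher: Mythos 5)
Your proposal matches the paper's proof essentially step for step: induction on the coercion derivation, with confluence discharging \rul{C\_Trans}, context-substitution of multi-step reductions handling the congruence cases, a rigid-head reduction argument for \rul{C\_Nth}, \rul{C\_NthArrow}, \rul{C\_NthQual}, and \rul{C\_Inst}, and a single top-level \rul{RTop}/\rul{Red} step for \rul{C\_Axiom}. Your explicit generalization to type-variable-only contexts for \rul{C\_Forall} and your careful discharge of \rul{RTop}'s evaluation-assumption premises (via the induction hypothesis on the coercions inside the $\overline{\ottnt{q} }$, collapsing the join because the witness types are proper) spell out details the paper leaves implicit, but they do not change the structure of the argument.
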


\begin{lemma}[Proper types do not reduce]
\label{lem:type-no-red}
\noredstatement
\end{lemma}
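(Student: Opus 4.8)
The plan is to argue by contradiction, converting a hypothetical reduction step into a type-family application sitting at a genuine type position inside a proper type, and then observing that the type-validity judgment never admits a bare type-family application.

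First I would unfold the only reduction rule, \rul{Red}. A step $\tau \rightsquigarrow \tau'$ can arise only as $\mathcal{C}[\ottmv{F}\,\overline{\sigma}] \rightsquigarrow \mathcal{C}[\rho]$ with $\ottmv{F}\,\overline{\sigma} \rightsquigarrow_{\top} \rho$ (via \rul{RTop}). Thus any reduction of $\tau$ exhibits $\tau$ as $\mathcal{C}[\ottmv{F}\,\overline{\sigma}]$ for a one-hole type context $\mathcal{C}$; that is, it requires a type-family application occurring at the hole of $\mathcal{C}$. The whole argument therefore reduces to showing that a proper type contains no type-family application at any position reachable by a one-hole type context.

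Next I would establish the key structural fact by induction on the derivation of $\Gamma \vdash \tau\ \mathsf{type}$: every subterm of $\tau$ lying at a position targetable by a one-hole type context is itself a proper type (in $\Gamma$ suitably extended with the variables bound by the quantifiers $\mathcal{C}$ passes under). The cases \rul{T\_TyCon}, \rul{T\_Arrow}, and \rul{T\_Forall} are immediate, since their premises already validate each immediate type subterm (and \rul{T\_Forall} extends the environment by $\alpha$). The decisive case is \rul{T\_Qual}, for $\tau = \phi \Rightarrow \tau_0$: here a one-hole type context descends only into the body $\tau_0$ and never into the proposition $\phi$, so the type-family application that $\phi$ may legitimately contain (the left-hand side of a \rul{P\_Family} proposition) is never exposed to the hole. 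This is exactly the point at which one uses that one-hole type contexts target only genuine type positions.

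Finally, combining the two observations: if $\tau$ did reduce, the above would force $\Gamma' \vdash \ottmv{F}\,\overline{\sigma}\ \mathsf{type}$ for some extended $\Gamma'$. But inspecting the type-validity judgment, there is no rule whose conclusion is a bare application $\ottmv{F}\,\overline{\tau}$ — type families are admitted only inside propositions (through \rul{T\_Qual} feeding \rul{P\_Family}), never as types in their own right. This contradiction shows no such $\tau'$ exists. I expect the main obstacle to be the second step, and specifically the need to pin down the (elided) definition of the one-hole type contexts $\mathcal{C}$ so that they provably never descend into a qualifying proposition; once that invariant is fixed, the remaining reasoning is a routine structural induction.
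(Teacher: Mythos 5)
Your proof is the paper's proof made explicit: the paper dispatches this lemma with the single line ``Direct from the definition of $ \rightsquigarrow $, \rul{Red}'' --- a reduction step exhibits a type family application at the hole of a one-hole type context, while no rule of $ \Gamma   \vdash   \tau  \ \mathsf{type} $ concludes with a bare $\ottmv{F} \, \overline{\tau}$ --- and your structural induction is just the honest unfolding of that remark. You have also correctly isolated the only delicate point, namely whether $\mathcal{C}  \ottsym{[}   \cdot   \ottsym{]}$ may place its hole at the left-hand side of a qualifying proposition (where \rul{P\_Family} does admit a type family application); the paper leaves the definition of $\mathcal{C}$ exactly as elided as you suspect, so your argument is at least as complete as the paper's own.
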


Taken together, these quickly prove the consistency lemma.

\subsubsection{Type reduction relation}

\begin{figure}
\begin{minipage}{.74\textwidth}
\begin{center}
\begin{ottdefnblock}{$\tau_{{\mathrm{1}}}  \rightsquigarrow_{\top}  \tau_{{\mathrm{2}}}$}{Type family application reduction}
\[
\ottdrule{\begin{array}{@{}c@{}c@{}}
\begin{array}{@{}l@{}}
\xi  \ottsym{:}  \overline{\ottnt{E} }  \in  \Sigma \qquad \overline{\tau} \, \ottsym{=} \, \overline{\sigma}  \ottsym{[}  \overline{\rho}  \ottsym{/}  \overline{\alpha}  \ottsym{]} \\
\ottcomp{ \emptyset   \vdash   \rho_{\ottmv{k}}  \ \mathsf{type} }{\ottmv{k}} \\
\ottnt{E_{\ottmv{i}}} \, \ottsym{=} \, \forall \, \overline{\alpha} \, \overline{\chi}  \ottsym{.}  \ottmv{F} \, \overline{\sigma}  \sim  \sigma_{{\mathrm{0}}}  \\
\tau' \, \ottsym{=} \, \sigma_{{\mathrm{0}}}  \ottsym{[}  \overline{\rho}  \ottsym{/}  \overline{\alpha}  \ottsym{,}  \overline{\rho}'  \ottsym{/}  \mathit{tvs}\! \, \ottsym{(}  \overline{\chi}  \ottsym{)}  \ottsym{]} \\
\forall \, \ottmv{j}  \ottsym{<}  \ottmv{i}  \ottsym{,}   \mathsf{no\_conflict} ( \overline{\ottnt{E} } ,  \ottmv{i} ,  \overline{\rho} ,  \ottmv{j} )
\end{array}
&
\begin{array}{@{}r@{\;}l@{}}
\forall \, \ottmv{n}  \ottsym{:} & \chi_{\ottmv{n}} \, \ottsym{=} \,  (  \alpha'  \pipe  \ottmv{c'}  :  \ottmv{F'} \, \overline{\sigma}'   \sim   \alpha'  )  \\
&  \emptyset   \vdash   \rho'_{\ottmv{n}}  \ \mathsf{type}  \\
& \theta_{\ottmv{n}} \, \ottsym{=} \, \overline{\rho}  \ottsym{/}  \overline{\alpha}  \ottsym{,} \, \ottcomplu{\rho'_{\ottmv{m}}  \ottsym{/}  \mathit{tv}\! \, \ottsym{(}  \chi_{\ottmv{m}}  \ottsym{)}}{\ottmv{m}}{{\mathrm{1}}}{..}{{\ottmv{n}-1}} \\
& \ottmv{F'} \, \overline{\sigma}'  \ottsym{[}  \theta_{\ottmv{n}}  \ottsym{]}  \rightsquigarrow_{\top}  \rho'_{\ottmv{n}} \\
\end{array}
\end{array}}{\ottmv{F} \, \overline{\tau}  \rightsquigarrow_{\top}  \tau'}{\rul{RTop}}
\]
\end{ottdefnblock}
\end{center}
\end{minipage}
\vrule
\begin{minipage}{.24\textwidth}
\begin{center}
\ottdefnRed{}
\end{center}
\end{minipage}
\caption{Non-deterministic type reduction}
\label{fig:rewrite-rel}
\end{figure}

The type reduction relation $ \rightsquigarrow $ is captured by the judgments in \pref{fig:rewrite-rel}.
Rule \rul{Red} says that a type $\sigma$ can reduce by reducing a type family application
occurring anywhere within $\sigma$. (The metavariable $\mathcal{C}$ denotes one-hole type contexts.)
The intimidating \rul{RTop} rule matches up with \rul{C\_Axiom}. The complication in the
rule is in dealing with the evaluation assumptions $\overline{\chi}$ in a given type family equation; each
needs to be satisfied with an evaluation resolution of a type paired with a coercion. The premises
under the $\forall \, \ottmv{n}  \ottsym{:}$ roughly simulate the $\Gamma  \vdash  \overline{\ottnt{q} }  \ottsym{:}  \overline{\chi}$ judgment.

Unlike in prior proofs of the consistency of versions of System FC, when $\tau_{{\mathrm{1}}}  \rightsquigarrow  \tau_{{\mathrm{2}}}$, there must
be precisely one fewer type family application in $\tau_{{\mathrm{2}}}$ than in $\tau_{{\mathrm{1}}}$. This fact is borne of
the use of evaluation assumptions $\overline{\chi}$ to model type family applications in the right-hand side
of a type family equation instead of using type families there directly. It leads to this critical lemma:

\begin{lemma}[Termination]
\label{lem:termination}
\termstatement
\end{lemma}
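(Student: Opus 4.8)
The plan is to exhibit a natural-number measure on types that strictly decreases with every $ \rightsquigarrow $ step, so that no infinite reduction sequence can exist. Concretely, I would let $\mu(\tau)$ be the number of type family applications $F \, \overline{\tau}$ occurring in \emph{type position} in $\tau$---that is, applications not buried inside a proposition $\phi$. The crux, already flagged in the remark preceding the lemma, is that \emph{each step removes exactly one such application}: $\mu(\tau_{{\mathrm{2}}}) = \mu(\tau_{{\mathrm{1}}}) - 1$ whenever $\tau_{{\mathrm{1}}}  \rightsquigarrow  \tau_{{\mathrm{2}}}$. Granting this, every reduction sequence out of $\tau$ has length at most $\mu(\tau)$, so a maximal such sequence is finite and its final type $\sigma$ admits no further step; this $\sigma$ witnesses the claim. (In fact this gives strong normalization, which is strictly more than the existential statement requires.)

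The main work is the per-step decrease, which I would prove by inspecting \rul{Red} and \rul{RTop}. By \rul{Red} we have $\tau_{{\mathrm{1}}} = \mathcal{C}[F \, \overline{\tau}]$ and $\tau_{{\mathrm{2}}} = \mathcal{C}[\tau']$ with $F \, \overline{\tau}  \rightsquigarrow_{\top}  \tau'$. First I show the redex contributes exactly one to $\mu$: its arguments are $\overline{\tau} = \overline{\sigma}[\overline{\rho}/\overline{\alpha}]$, where the patterns $\overline{\sigma}$ are proper types (validated by the premise $\overline{\alpha} \vdash \tau_j \ \mathsf{type}$ of \rul{D\_Axiom}) and the $\overline{\rho}$ are proper by the premise $\emptyset \vdash \rho_k \ \mathsf{type}$ of \rul{RTop}; hence the arguments carry no type-position applications and $\mu(F \, \overline{\tau}) = 1$. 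Next I show the contractum contributes zero: $\tau' = \sigma_0[\overline{\rho}/\overline{\alpha},\, \overline{\rho}'/\mathit{tvs}(\overline{\chi})]$, where $\sigma_0$ is a proper type---this is precisely the novel restriction that right-hand sides contain no type family applications, recorded in \rul{D\_Axiom}---and the instantiating types $\overline{\rho},\overline{\rho}'$ are again proper. A routine substitution lemma (substituting proper types into a proper type yields a proper type) then gives that $\tau'$ is proper, and since the type-validity rules of \pref{fig:type-judg} never place a type family application in type position, $\mu(\tau') = 0$. Finally, because the hole of a one-hole context sits in type position, $\mu$ is additive across it, $\mu(\mathcal{C}[t]) = \mu_{\mathcal{C}} + \mu(t)$ for a context-local constant $\mu_{\mathcal{C}}$; thus $\mu(\tau_{{\mathrm{1}}}) = \mu_{\mathcal{C}} + 1$ and $\mu(\tau_{{\mathrm{2}}}) = \mu_{\mathcal{C}}$, as needed.

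The hard part will be the treatment of type families hidden inside propositions. A proper type is \emph{not} type-family-free: by \rul{P\_Family} and \rul{T\_Qual} a type such as $(F \, \overline{\tau}  \sim  \sigma)  \Rightarrow  \rho$ is proper yet mentions $F \, \overline{\tau}$. Consequently the measure must count only \emph{type-position} applications, and I must confirm that the one-hole contexts $\mathcal{C}$ never expose a redex inside a proposition---otherwise the decrease could fail, or a proper type could step. This is exactly the content that makes Lemma~\ref{lem:type-no-red} true (proper types do not reduce), so I would keep the definitions of $\mathcal{C}$ and of $\mu$ aligned with it, which forces $\mathcal{C}$ to descend only through $H\,\overline{\tau}$, arrows, quantifiers, and the conclusion of a qualification, never into a $\phi$. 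A secondary but harmless point is that a single $ \rightsquigarrow_{\top} $ derivation itself contains recursive $ \rightsquigarrow_{\top} $ premises (the clauses under $\forall \, n :$ that resolve the evaluation assumptions $\overline{\chi}$); these bear only on the derivability of one step, not on the measure of the resulting $ \rightsquigarrow $ step, so they play no role in the termination argument.
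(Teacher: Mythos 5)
Your proof is correct and follows essentially the same route as the paper's: both exhibit the count of type family applications as a strictly decreasing measure, using the $ \emptyset   \vdash   \rho_{\ottmv{k}}  \ \mathsf{type} $ premises of \rul{RTop} and the validity premises of \rul{D\_Axiom} to argue that each \rul{Red} step replaces one application with a reduct containing none. Your refinement of the measure to count only \emph{type-position} occurrences---needed because a proper type may still mention a family application inside a proposition $\phi$ of a qualified type, via \rul{T\_Qual} and \rul{P\_Family}---addresses a point the paper's proof glosses over when it asserts that $\sigma_{{\mathrm{0}}}$, $\overline{\rho}$, and $\overline{\rho}'$ ``have no type families,'' so your version is if anything the more careful one.
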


The fact that the reduction relation terminates means that we can use Newman's Lemma to prove confluence
via local confluence,
a necessary precursor to the proof of the completeness of the rewrite relation (\pref{lem:complete-red}):

\begin{lemma}[Local confluence]
\label{lem:local-confluence}
\localconfstatement
\end{lemma}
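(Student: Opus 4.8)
The plan is to prove local confluence by a standard critical-pair analysis, exploiting the pivotal structural fact that the arguments of any top-reducible type family application are type-family-free. Suppose $\tau_{{\mathrm{1}}}  \leftsquigarrow  \tau_{{\mathrm{0}}}  \rightsquigarrow  \tau_{{\mathrm{2}}}$. By inversion on \rul{Red}, each step fixes a one-hole context and a redex: $\tau_{{\mathrm{0}}} = \mathcal{C}_{{\mathrm{1}}}[\ottmv{F_{{\mathrm{1}}}} \, \overline{\tau}_{{\mathrm{1}}}]$ with $\ottmv{F_{{\mathrm{1}}}} \, \overline{\tau}_{{\mathrm{1}}}  \rightsquigarrow_{\top}  \tau'_{{\mathrm{1}}}$ and $\tau_{{\mathrm{1}}} = \mathcal{C}_{{\mathrm{1}}}[\tau'_{{\mathrm{1}}}]$, and symmetrically $\tau_{{\mathrm{0}}} = \mathcal{C}_{{\mathrm{2}}}[\ottmv{F_{{\mathrm{2}}}} \, \overline{\tau}_{{\mathrm{2}}}]$ with $\tau_{{\mathrm{2}}} = \mathcal{C}_{{\mathrm{2}}}[\tau'_{{\mathrm{2}}}]$. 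Viewing the two holes as positions in the tree $\tau_{{\mathrm{0}}}$, they are either disjoint, equal, or one strictly dominates the other, and I would dispatch the three cases in turn.

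First I would establish the auxiliary fact driving everything: if $\ottmv{F} \, \overline{\tau}  \rightsquigarrow_{\top}  \tau'$, then every argument in $\overline{\tau}$ is a proper type, containing no type family application. This is immediate from \rul{RTop}, where $\overline{\tau} \, \ottsym{=} \, \overline{\sigma}  \ottsym{[}  \overline{\rho}  \ottsym{/}  \overline{\alpha}  \ottsym{]}$: the pattern $\overline{\sigma}$ is the left-hand side of an equation validated by \rul{D\_Axiom}, hence type-family-free since $ \Gamma   \vdash   \tau  \ \mathsf{type} $ forbids type families, and each $\rho_{\ottmv{k}}$ satisfies $ \emptyset   \vdash   \rho_{\ottmv{k}}  \ \mathsf{type} $ and is likewise type-family-free. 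Since substituting proper types into a proper type yields a proper type, $\overline{\tau}$ is type-family-free. Consequently the strictly-dominating case (in either direction) is \emph{vacuous}: a redex strictly inside the other's arguments would be a type family application sitting within a type-family-free argument vector, a contradiction. This is precisely the payoff of barring type families from equation arguments and right-hand sides, and it is what lets the present proof succeed where prior work required infinitary unification.

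The disjoint case is routine: the two redexes occupy independent subtrees, so reducing one leaves the other intact. Exposing both holes simultaneously, I would reduce the surviving redex on each side to reach a common $\tau_{{\mathrm{3}}}$ in a single step from each of $\tau_{{\mathrm{1}}}$ and $\tau_{{\mathrm{2}}}$. The remaining case, $\mathcal{C}_{{\mathrm{1}}}$ and $\mathcal{C}_{{\mathrm{2}}}$ agreeing so that $\ottmv{F_{{\mathrm{1}}}} \, \overline{\tau}_{{\mathrm{1}}} \, \ottsym{=} \, \ottmv{F_{{\mathrm{2}}}} \, \overline{\tau}_{{\mathrm{2}}} \, \ottsym{=} \, \ottmv{F} \, \overline{\tau}$ is reduced by two instances of \rul{RTop}, is the delicate one; here I would show the two reducts coincide, so that $\tau_{{\mathrm{3}}} = \tau'_{{\mathrm{1}}} = \tau'_{{\mathrm{2}}}$ joins with zero further steps. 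This I would argue in three layers. The matching substitution $\overline{\rho}$ on $\overline{\alpha}$ is uniquely determined by the concrete $\overline{\tau}$, since good-signature condition~(2) gives $\mathit{fv}(\overline{\sigma}) \, \ottsym{=} \, \overline{\alpha}$ and first-order matching against fixed arguments is unique. The witnesses $\overline{\rho}'$ for $\mathit{tvs}(\overline{\chi})$ are fixed by the $\forall \, \ottmv{n}  \ottsym{:}$ premises, each demanding $\ottmv{F'} \, \overline{\sigma}'  \ottsym{[}  \theta_{\ottmv{n}}  \ottsym{]}  \rightsquigarrow_{\top}  \rho'_{\ottmv{n}}$; as these nested applications again enjoy type-family-free arguments, a well-founded induction using the Termination lemma (\pref{lem:termination}) pins them down. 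Finally, if the two steps select different equation indices $i_{{\mathrm{1}}} < i_{{\mathrm{2}}}$, then $\overline{\tau}$ is a common instance of both left-hand sides, so they are not apart; the premise $\mathsf{no\_conflict}(\overline{E}, i_{{\mathrm{2}}}, \overline{\rho}, i_{{\mathrm{1}}})$ must then hold via \rul{NC\_Compatible}, i.e. $\mathsf{compat}(E_{i_{{\mathrm{1}}}}, E_{i_{{\mathrm{2}}}})$, and \rul{Co\_Coinc} forces the right-hand sides to agree on the unified instance; for two single-equation axioms this compatibility is supplied directly by condition~(4).

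I expect the main obstacle to be this same-position case, and within it the interaction of the compatibility argument with the evaluation assumptions. Rule \rul{Co\_Coinc} now compares right-hand sides only after applying $\mathit{subst}$ to expand the bound variables of $\overline{\chi}$ into the type family applications they name, so I must verify that coincidence of right-hand sides survives this expansion and the subsequent instantiation by the $\overline{\rho}'$. The second subtle point is the determinism of the nested $\rightsquigarrow_{\top}$ reductions producing the $\rho'_{\ottmv{n}}$; I would isolate this as a small determinism lemma, proved by the same type-family-free-argument observation together with termination, rather than folding it into the main induction.
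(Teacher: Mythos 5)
Your proposal is correct and is in substance the paper's own proof: the paper's structural induction on $\tau_{{\mathrm{0}}}$ realizes exactly your position analysis---nested redexes are impossible because the arguments of a top-reducible application are type-family-free (via \rul{D\_Axiom} and the $ \emptyset   \vdash   \rho_{\ottmv{k}}  \ \mathsf{type} $ premises), disjoint redexes commute, and coincident redexes yield equal reducts by determinism of $ \rightsquigarrow_{\top} $ (the paper's \pref{lem:top-level-red}, proved by induction on derivation size using $\mathsf{compat}$ and $ \mathsf{no\_conflict} $ just as you sketch). The one delicate point you flag---that the nested $ \rightsquigarrow_{\top} $ premises uniquely determine the instantiation of the evaluation assumptions---is acknowledged as the informal step in the paper's own argument, so your instinct to factor it out as a separate determinism lemma is sound.
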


\begin{lemma}[Confluence]
\label{lem:confluence}
\confstatement
\end{lemma}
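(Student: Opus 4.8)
The plan is to obtain global confluence as a direct consequence of Newman's Lemma, which states that a relation that is simultaneously strongly normalizing and locally confluent is confluent. Local confluence is already in hand from Lemma~\ref{lem:local-confluence}, so the only additional ingredient required is strong normalization of $\rightsquigarrow$. It is important that I use strong normalization here, and not merely the weak-normalization statement of Lemma~\ref{lem:termination}: Newman's Lemma genuinely requires that \emph{every} reduction sequence terminate, and weak normalization together with local confluence does not imply confluence in general. Fortunately, strong normalization is immediate from the structural fact underlying Lemma~\ref{lem:termination}: each single step $\tau \rightsquigarrow \tau'$ removes exactly one type family application, so the number of type family applications is a natural-number measure that strictly decreases along every step. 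Hence no infinite $\rightsquigarrow$-sequence exists, and $\rightsquigarrow$ is well-founded when read as a descending order.

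With strong normalization and local confluence established, I would either cite Newman's Lemma directly or spell out its standard well-founded-induction proof, which I sketch here. The claim to prove is: for every $\tau_0$, whenever $\tau_0 \rightsquigarrow^* \tau_1$ and $\tau_0 \rightsquigarrow^* \tau_2$, the types $\tau_1$ and $\tau_2$ are joinable, i.e.\ there is $\tau_3$ with $\tau_1 \rightsquigarrow^* \tau_3 \leftsquigarrow^* \tau_2$. I would proceed by well-founded induction on $\tau_0$ under the order induced by $\rightsquigarrow$. If either of the two reductions from $\tau_0$ is empty, one of $\tau_1, \tau_2$ equals $\tau_0$ and joinability is trivial. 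Otherwise write $\tau_0 \rightsquigarrow \sigma_1 \rightsquigarrow^* \tau_1$ and $\tau_0 \rightsquigarrow \sigma_2 \rightsquigarrow^* \tau_2$, and apply local confluence (Lemma~\ref{lem:local-confluence}) to the one-step peak $\sigma_1 \leftsquigarrow \tau_0 \rightsquigarrow \sigma_2$ to obtain a common $\mu$ with $\sigma_1 \rightsquigarrow^* \mu \leftsquigarrow^* \sigma_2$.

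The diagram is then closed with two applications of the induction hypothesis. Since $\tau_0 \rightsquigarrow \sigma_1$, the type $\sigma_1$ is strictly smaller than $\tau_0$, so the hypothesis applies to the peak $\tau_1 \leftsquigarrow^* \sigma_1 \rightsquigarrow^* \mu$ and yields $\upsilon$ with $\tau_1 \rightsquigarrow^* \upsilon \leftsquigarrow^* \mu$. Composing gives $\sigma_2 \rightsquigarrow^* \mu \rightsquigarrow^* \upsilon$, so $\sigma_2$ reaches both $\tau_2$ and $\upsilon$; as $\sigma_2$ is also strictly smaller than $\tau_0$, the hypothesis applied to $\tau_2 \leftsquigarrow^* \sigma_2 \rightsquigarrow^* \upsilon$ produces $\tau_3$ with $\tau_2 \rightsquigarrow^* \tau_3 \leftsquigarrow^* \upsilon$. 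Then $\tau_1 \rightsquigarrow^* \upsilon \rightsquigarrow^* \tau_3$ and $\tau_2 \rightsquigarrow^* \tau_3$, which is exactly the required joinability.

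The hard part will not be the diagram chase, which is entirely routine once the two preceding lemmas are available, but rather the point flagged in the first paragraph: ensuring that the termination result is deployed in its strong form. The per-step decrease of the type-family-application count---a distinctive feature of this formalization, where evaluation assumptions keep right-hand sides type-family-free---is precisely what upgrades the weak normalization recorded in Lemma~\ref{lem:termination} to the strong normalization that Newman's Lemma demands, and I would be careful to state and use it in that form.
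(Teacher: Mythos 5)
Your proof takes exactly the same route as the paper's, which simply cites Lemma~\ref{lem:local-confluence}, Lemma~\ref{lem:termination}, and Newman's Lemma. Your observation that Newman's Lemma requires \emph{strong} normalization---which the statement of Lemma~\ref{lem:termination} only records in weak form, but whose proof (the strictly decreasing count of type family applications per step) in fact delivers in strong form---is a worthwhile point of care that the paper's one-line proof glosses over.
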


\citet{EisenbergVPJW14} also prove confluence via local confluence, but that proof
must assume termination. The formulation here allows us to prove termination instead of assume it. The local confluence proof
in the current work is also a simplification over the previous proof, as the location of occurrences
of type family applications is restricted.

\paragraph{Conclusion}
By using evaluation assumptions in our treatment of type families, we can easily prove the termination of
type reduction and simplify the proof of confluence. The intricate definition of apartness from
\citet{EisenbergVPJW14} is gone, as well. In short, our approach leads to a substantial simplification
to the metatheory of type families.

\section{Practicalities}
\label{sec:compat}

We believe that constrained type families provide significant benefits compared to the previous approach
to type families, with its underlying, implicit assumption of totality.
As we are changing the type system of a language,
not all current Haskell code is immediately supported in our design.
For example, existing code may make use of non-associated open type
families, or use incomplete type families as if they were total.  In this section, we describe an
approach for inferring constrained type families, and the corresponding constraints, from current
declarations and uses of indexed type families.  This is intended to allow a transition from current
practice to the explicit use of constrained type families.

\subsection{Inferring Type Family Constraints}

We first consider uses of type families in types.  Here, our approach is to read the well-formedness
restrictions for constrained type families~\secref{valid-ctf} as inference rules rather than as a
checking relation.  Because the typing rules are syntax directed, given type environments $\Sigma$
and $\Gamma$ (known in advance), and a type $\tau$, we can follow the rules to generate a
$P$ such that $ P  \mid  \Gamma   \vdash   \tau  \ \mathsf{type} $, if such a $P$ exists.  While there is not
necessarily a unique $P$ such that the derivation exists, it is easy to pick a minimal
one such that it does.  (In essence, we view the well-formedness rules as an attribute
grammar, in which $\Sigma,\Gamma$ and $\tau$ are given, and $P$ is synthesized.)  Then, we
interpret each qualified type $\sigma$ in context $\Gamma$ in the program as instead denoting the type
$P \Rightarrow \sigma$ where $P$ is the minimal set of additional constraints such that
$ P  \mid  \Gamma   \vdash   \sigma  \ \mathsf{type} $.  Note that some programs may still fail to type check under this approach,
if they explicitly make use of undefined type family applications.  However, we view this as an
acceptable trade-off, as those programs arguably already contained (admittedly unreported) type
errors.

\subsection{Making Associations}

We must also interpret top-level type family syntax in terms of constrained type families.  Type
family declarations themselves can be straightforwardly interpreted as declarations of constrained
type families; for example, the declaration
\begin{hscode}\SaveRestoreHook
\column{B}{@{}>{\hspre}l<{\hspost}@{}}%
\column{E}{@{}>{\hspre}l<{\hspost}@{}}%
\>[B]{}\keyword{type}\;\keyword{family}\;\id{F}\;\id{t}\;\id{u}\mathbin{::}\star{}\<[E]%
\ColumnHook
\end{hscode}\resethooks
would be interpreted as
\begin{hscode}\SaveRestoreHook
\column{B}{@{}>{\hspre}l<{\hspost}@{}}%
\column{3}{@{}>{\hspre}l<{\hspost}@{}}%
\column{E}{@{}>{\hspre}l<{\hspost}@{}}%
\>[B]{}\keyword{class}\;\id{CF}\;\id{t}\;\id{u}\;\keyword{where}{}\<[E]%
\\
\>[B]{}\hsindent{3}{}\<[3]%
\>[3]{}\keyword{type}\;\id{F}\;\id{t}\;\id{u}\mathbin{::}\star{}\<[E]%
\ColumnHook
\end{hscode}\resethooks
where any other kind restrictions in the original declaration of \ensuremath{\id{F}} can be transferred
straightforwardly to the declaration of \ensuremath{\id{CF}}. Connecting \ensuremath{\id{F}} to the compiler-generated \ensuremath{\id{CF}}
would be a new special form \ensuremath{(\keyword{class}\;\id{F})}, entirely equivalent to \ensuremath{\id{CF}}.

Instance declarations are more interesting.  For
example, consider the instance declaration
\begin{hscode}\SaveRestoreHook
\column{B}{@{}>{\hspre}l<{\hspost}@{}}%
\column{E}{@{}>{\hspre}l<{\hspost}@{}}%
\>[B]{}\keyword{type}\;\keyword{instance}\;\id{F}\;\id{Int}\;(\id{Maybe}\;\id{t})\mathrel{=}\id{G}\;\id{Int}\;\id{t}{}\<[E]%
\ColumnHook
\end{hscode}\resethooks
where we assume that \ensuremath{\id{G}} is a binary type family.  We could not simply
interpret this as the instance declaration
\begin{hscode}\SaveRestoreHook
\column{B}{@{}>{\hspre}l<{\hspost}@{}}%
\column{3}{@{}>{\hspre}l<{\hspost}@{}}%
\column{E}{@{}>{\hspre}l<{\hspost}@{}}%
\>[B]{}\keyword{instance}\;\id{CF}\;\id{Int}\;(\id{Maybe}\;\id{t})\;\keyword{where}{}\<[E]%
\\
\>[B]{}\hsindent{3}{}\<[3]%
\>[3]{}\keyword{type}\;\id{F}\;\id{Int}\;(\id{Maybe}\;\id{t})\mathrel{=}\id{G}\;\id{Int}\;\id{t}{}\<[E]%
\ColumnHook
\end{hscode}\resethooks
as the use of type family \ensuremath{\id{G}} lacks a suitable guarding constraint.  Again, however, we can rely on
interpreting the well-formedness rules for types to infer the necessary constraints.  In this case,
we would interpret the type instance as denoting the instance declaration
\begin{hscode}\SaveRestoreHook
\column{B}{@{}>{\hspre}l<{\hspost}@{}}%
\column{3}{@{}>{\hspre}l<{\hspost}@{}}%
\column{E}{@{}>{\hspre}l<{\hspost}@{}}%
\>[B]{}\keyword{instance}\;\id{P}\Rightarrow \id{CF}\;\id{Int}\;(\id{Maybe}\;\id{t})\;\keyword{where}{}\<[E]%
\\
\>[B]{}\hsindent{3}{}\<[3]%
\>[3]{}\keyword{type}\;\id{F}\;\id{Int}\;(\id{Maybe}\;\id{t})\mathrel{=}\id{G}\;\id{Int}\;\id{t}{}\<[E]%
\ColumnHook
\end{hscode}\resethooks
where \ensuremath{\id{P}} is the minimal set of constraints such that $P \mid \ensuremath{\id{t}} \vdash \ensuremath{\id{G}\;\id{Int}\;\id{t}} \
\mathsf{type}$ holds.  Again, so long as the original type instance declaration did not rely on
undefined type family applications, the resulting instance declaration will be well-formed.

Finally, we turn to closed type families.  Given a closed type family declaration, we
initially check its totality~\secref{totality-checking}.  If it is not total, we can then interpret
it as a constrained closed type family, following the same approach as for open type families.  For
example, consider the following closed type family declaration:
\begin{hscode}\SaveRestoreHook
\column{B}{@{}>{\hspre}l<{\hspost}@{}}%
\column{3}{@{}>{\hspre}l<{\hspost}@{}}%
\column{18}{@{}>{\hspre}c<{\hspost}@{}}%
\column{18E}{@{}l@{}}%
\column{21}{@{}>{\hspre}l<{\hspost}@{}}%
\column{E}{@{}>{\hspre}l<{\hspost}@{}}%
\>[B]{}\keyword{type}\;\keyword{family}\;\id{F}\;\id{t}\mathbin{::}\star\;\keyword{where}{}\<[E]%
\\
\>[B]{}\hsindent{3}{}\<[3]%
\>[3]{}\id{F}\;(\id{Maybe}\;\id{Int}){}\<[18]%
\>[18]{}\mathrel{=}{}\<[18E]%
\>[21]{}\id{Bool}{}\<[E]%
\\
\>[B]{}\hsindent{3}{}\<[3]%
\>[3]{}\id{F}\;(\id{Maybe}\;\id{t}){}\<[18]%
\>[18]{}\mathrel{=}{}\<[18E]%
\>[21]{}\id{G}\;\id{t}{}\<[E]%
\ColumnHook
\end{hscode}\resethooks
This declaration is clearly not total.  We would interpret this as a closed type family declaration:
\begin{hscode}\SaveRestoreHook
\column{B}{@{}>{\hspre}l<{\hspost}@{}}%
\column{3}{@{}>{\hspre}l<{\hspost}@{}}%
\column{5}{@{}>{\hspre}l<{\hspost}@{}}%
\column{E}{@{}>{\hspre}l<{\hspost}@{}}%
\>[B]{}\keyword{class}\;\id{CF}\;\id{t}\;\keyword{where}{}\<[E]%
\\
\>[B]{}\hsindent{3}{}\<[3]%
\>[3]{}\keyword{type}\;\id{F}\;\id{t}\mathbin{::}\star{}\<[E]%
\\[\blanklineskip]%
\>[B]{}\hsindent{3}{}\<[3]%
\>[3]{}\keyword{instance}\;\id{CF}\;(\id{Maybe}\;\id{Int})\;\keyword{where}{}\<[E]%
\\
\>[3]{}\hsindent{2}{}\<[5]%
\>[5]{}\keyword{type}\;\id{F}\;(\id{Maybe}\;\id{Int})\mathrel{=}\id{Bool}{}\<[E]%
\\[\blanklineskip]%
\>[B]{}\hsindent{3}{}\<[3]%
\>[3]{}\keyword{instance}\;\id{P}\Rightarrow \id{CF}\;(\id{Maybe}\;\id{t})\;\keyword{where}{}\<[E]%
\\
\>[3]{}\hsindent{2}{}\<[5]%
\>[5]{}\keyword{type}\;\id{F}\;(\id{Maybe}\;\id{t})\mathrel{=}\id{G}\;\id{t}{}\<[E]%
\ColumnHook
\end{hscode}\resethooks
where \ensuremath{\id{P}} is the minimal set of constraints such that $P \mid t \vdash \ensuremath{\id{G}\;\id{t}} \ \mathsf{type}$ holds.

The decision of whether or not to treat a top-level closed type family as constrained is based
on the output from the totality checker. We expect users will want to override the compiler's
decision in this matter, as any totality checker will be incomplete. We propose the new syntax
\ensuremath{\keyword{type}\;\keyword{family}\;\keyword{total}\;\id{F}\;\id{a}\;\keyword{where}\mathbin{...}} to denote that \ensuremath{\id{F}} is intended to be total. Such a declaration
would still be checked, but would never be packaged into an enclosing class. (A non-total definition
would be reported as an error.) The user could additionally add a pragma \{-\#~TOTAL~\ensuremath{\id{F}}~\#-\} to (unsafely)
assert that \ensuremath{\id{F}} is total, circumventing the totality checker.




\subsection{Runtime Efficiency}

Constrained type families may also seem to have a non-trivial efficiency impact.  For a simple example,
suppose we have a type family \ensuremath{\id{F}}, and consider an existentially-packaged type family application:
\begin{hscode}\SaveRestoreHook
\column{B}{@{}>{\hspre}l<{\hspost}@{}}%
\column{3}{@{}>{\hspre}l<{\hspost}@{}}%
\column{E}{@{}>{\hspre}l<{\hspost}@{}}%
\>[B]{}\keyword{data}\;\id{FPack}\;\id{a}\;\keyword{where}{}\<[E]%
\\
\>[B]{}\hsindent{3}{}\<[3]%
\>[3]{}\id{FPack}\mathbin{::}\id{F}\;\id{a}\to \id{FPack}\;\id{a}{}\<[E]%
\ColumnHook
\end{hscode}\resethooks
We might expect an \ensuremath{\id{FPack}\;\id{a}} value to contain exactly a value of type \ensuremath{\id{F}\;\id{a}}.  With constrained type
families, however, the declaration above would be incorrect; we would need to add a predicate for its
constraining class, say \ensuremath{\id{C}}:
\begin{hscode}\SaveRestoreHook
\column{B}{@{}>{\hspre}l<{\hspost}@{}}%
\column{3}{@{}>{\hspre}l<{\hspost}@{}}%
\column{E}{@{}>{\hspre}l<{\hspost}@{}}%
\>[B]{}\keyword{data}\;\id{FPack1}\;\id{a}\;\keyword{where}{}\<[E]%
\\
\>[B]{}\hsindent{3}{}\<[3]%
\>[3]{}\id{FPack1}\mathbin{::}\id{C}\;\id{a}\Rightarrow \id{F}\;\id{a}\to \id{FPack}\;\id{a}{}\<[E]%
\ColumnHook
\end{hscode}\resethooks
Now, a value of type \ensuremath{\id{FPack1}\;\id{a}} does not just contain an \ensuremath{\id{F}\;\id{a}} value, but must also carry a \ensuremath{\id{C}\;\id{a}}
dictionary, and uses of \ensuremath{\id{FPack1}} will be responsible for constructing, packing, and unpacking these
dictionaries.  Over sufficiently many uses of \ensuremath{\id{FPack1}}, this additional cost could be noticeable.

This efficiency impact can be mitigated, however. This issue can crop up only when we have a value
of type \ensuremath{\id{F}\;\id{a}} (or other type family application) without an instance of the associated class \ensuremath{\id{C}\;\id{a}}.
But in order for the value of type \ensuremath{\id{F}\;\id{a}} to be useful, parametricity tells us that \ensuremath{\id{C}\;\id{a}}, or some
other class with a similar structure to the equations for \ensuremath{\id{F}\;\id{a}} must be in scope. Barring this,
it must be that \ensuremath{\id{F}\;\id{a}} is used as a phantom type. In this case, we would want a ``phantom dictionary''
for \ensuremath{\id{C}\;\id{a}}, closely paralleling existing work on proof irrelevance in the dependently-typed programming
community~(e.g., \citet{MishraLingerS08,BarrasB08,TejiscakB15,Eisenberg16}): the \ensuremath{\id{C}\;\id{a}} dictionary essentially
represents a proof that will never be
examined.  While we do not propose here a new solution to this problem, we believe that existing
work will be applicable in our case as well.

\section{Related Work}
\label{sec:related}

The literature on type-level computation and the type system of Haskell is extensive; here, we
summarize those parts most relevant to our work.

\paragraph{Type classes and functional dependencies}

Partial type-level computation in Haskell was arguably first introduced with Jones's notion of
functional dependencies~\cite{Jones00}, which extended type classes with a notion of determined
parameters. Indeed our treatment of requiring a class constraint to use type-level computation
is inspired by functional dependencies. Functional dependencies build on Jones's theory of improvement for qualified
types~\cite{Jones95}, which allows the satisfiability of predicates to influence typing.  While
Jones's work does not focus on the computational interpretation of functional dependencies, many early
examples highlighted it, such as those of \citet{Hallgren01} or
\citet{KiselyovLS04}.  \citet{MorrisJ10} later introduced instance chains---closely related
to our closed type classes---which
combined functional dependencies with explicit notions of negation and alternatives in class
instances.

\paragraph{Associated types and type families}

\citet{ChakravartyKJ05} introduced associated type synonyms to provide a more intuitive syntax for
type-level computation in Haskell, while also addressing infelicities in the implementations of
functional dependencies.  Their type system requires that associated types appear only in contexts
where their class predicates can be satisfied, matching our approach. However, this
requirement was never implemented; GHC's translation to System~FC~\cite{SulzmannCJD07} showed
that the constraint was never used at runtime and was thus deemed superfluous.
The class constraints---that is, instance dictionaries~\cite{HallHPJW96}---are not needed at runtime,
in contrast to ordinary class method invocation. Our work does not refute this conclusion, but instead
observes that the design of type families and their metatheory are greatly simplified when we require
the class constraint.


Recent work has focused on extending the expressiveness of type families themselves.
\citet{EisenbergVPJW14} introduced closed type families, which allow overlapping equations in
type family definitions, and \citet{StolarekJE15} introduced injective type families,
recovering additional equalities from applications of injective type families.  These features,
particularly closed type families, have seen significant practical application.



\paragraph{Type classes and modules}

An alternative approach to supporting type classes directly is to encode them using
modules~\citep{Dreyer07} or objects~\citep{OLiveiraMO10}.  These approaches replace class predicates
with module (or object) arguments, and use mechanisms for canonical values or implicit arguments to
simulate instance resolution.  Associated types arise naturally in these approaches, as type members
of modules, and, as in our approach, can only appear when a suitable module is in scope.  However,
these approaches require significantly different underlying formalisms, and so it is not apparent
how well they would accommodate other extensions, like closed and total type families, or closed
classes.

\paragraph{Partial functions in logic}

An interesting---and unexpected---parallel to our work arises in Scott's examination of identity and
existence in intuitionistic logic~\cite{Scott79}.  Scott considers the cases in which (first-order)
terms in a logic may not be defined for arbitrary instantiations of their variables.  For example,
the term $1/a$ is not defined if $a$ is instantiated to $0$.  Scott addresses this problem by
introducing an additional predicate $E(\cdot)$ to track the existence of first-order terms, which
plays a similar role to our requirement that uses of constrained type families mention their
defining class predicates.

\section{Conclusions}
\label{sec:conclusion}

We have presented a new approach to type-level computation, relevant to any partial language,
in which we permit partiality in types
by using qualified types to capture their domains of definition.
We have applied our approach to
indexed type families in Haskell, showing that it aligns naturally with the intuitive semantics of
type families and that it resolves many of the complexities in recent developments of type families.
We have formalized our approach, and given the first complete proof of consistency for Haskell with
closed type families.

Since their introduction, the theory and practice of functional
dependencies and type families have diverged, although some uses of functional dependencies continue
to seem more expressive than similar uses of type families.  Our current work reunites type families
with type classes. We believe it should provide an impetus to re-examine the role of functional
dependencies.  In particular, the use of equality constraints in our core language to prove that type
families applications are well-defined is evocative of the role that class predicates would play in
a core calculus based on functional dependencies.

\begin{acks}
Thanks to the anonymous referees for their helpful feedback.
Morris was funded by \grantsponsor{}{EPSRC}{https://epsrc.ac.uk} grant number \grantnum{}{EP/K034413/1}.
\end{acks}


\ifextended
\pagebreak
\appendix

\section{Proofs}

\subsection{Evaluation assumptions}
\label{sec:eval-assump-defns}

\begin{definition}[Evaluation assumption substitution]
Define the substitution $\ottnt{q}  \ottsym{/}  \chi$ to mean $\sigma  \ottsym{/}  \alpha  \ottsym{,}  \gamma  \ottsym{/}  \ottmv{c}$, where
$\ottnt{q} \, \ottsym{=} \,  ( \sigma  \pipe  \gamma ) $ and $\chi \, \ottsym{=} \,  (  \alpha  \pipe  \ottmv{c}  :  \ottmv{F} \, \overline{\tau}   \sim   \alpha  ) $.
\end{definition}

\begin{definition}[Evaluation assumption scoping]
Define $\mathit{fv}\! \, \ottsym{(}  \overline{\chi}  \ottsym{)}$ inductively as follows:
\begin{align*}
\mathit{fv}\! \, \ottsym{(}  \emptyset  \ottsym{)} \, &= \, \emptyset \\
\mathit{fv}\! \, \ottsym{(}   (  \alpha  \pipe  \ottmv{c}  :  \ottmv{F} \, \overline{\tau}   \sim   \alpha  )   \ottsym{,}  \overline{\chi}  \ottsym{)} \, &= \, \mathit{fv}\! \, \ottsym{(}  \overline{\tau}  \ottsym{)}  \cup  \ottsym{(}   \mathit{fv}\! \, \ottsym{(}  \overline{\chi}  \ottsym{)}  \backslash  \ottsym{\{}  \alpha  \ottsym{\}}   \ottsym{)}
\end{align*}
\end{definition}

\begin{definition}[Evaluation assumption substitution]
Define $\mathit{subst}\! \, \ottsym{(}  \overline{\chi}  \ottsym{)}$ inductively as follows:
\begin{align*}
\mathit{subst}\! \, \ottsym{(}  \emptyset  \ottsym{)} \, &= \, \emptyset \\
\mathit{subst}\! \, \ottsym{(}  \overline{\chi}  \ottsym{,}   (  \alpha  \pipe  \ottmv{c}  :  \ottmv{F} \, \overline{\tau}   \sim   \alpha  )   \ottsym{)} \, &= \, \mathit{subst}\! \, \ottsym{(}  \overline{\chi}  \ottsym{)}  \circ  \ottsym{(}  \ottmv{F} \, \overline{\tau}  \ottsym{)}  \ottsym{/}  \alpha
\end{align*}
\end{definition}

\subsection{Assumptions about environment}

\begin{assumption}[Declarations]
\label{assn:declarations}
We assume that if $\ottnt{decl}  \in  \Sigma$, then $ \vdash   \ottnt{decl}  \ \mathsf{ok} $.
\end{assumption}

Furthermore, we assume \pref{assn:good}.

\subsection{Unification}

\begin{property}[$ \mathsf{unify} $ correct]
\label{prop:unify-correct}
(\citet[Property 11]{EisenbergVPJW14})
If there exists a substitution $\theta$ such that $\overline{\sigma}  \ottsym{[}  \theta  \ottsym{]} \, \ottsym{=} \, \overline{\tau}  \ottsym{[}  \theta  \ottsym{]}$,
then $ \mathsf{unify} ( \overline{\sigma} ;\, \overline{\tau} ) $ succeeds. If $ \mathsf{unify} ( \overline{\sigma} ;\, \overline{\tau} )  \, \ottsym{=} \, \mathsf{Just} \, \theta$, then
$\theta$ is a most general unifier of $\overline{\sigma}$ and $\overline{\tau}$.
\end{property}

\begin{lemma}[Apartness is stable under type substitution]
\label{lem:apart-subst}
(\citet[Property 12]{EisenbergVPJW14}) If $ \mathsf{unify} ( \overline{\tau}_{{\mathrm{1}}} ;\, \overline{\tau}_{{\mathrm{2}}} )  \, \ottsym{=} \, \mathsf{Nothing}$, then for
all substitutions $\theta$, $ \mathsf{unify} ( \overline{\tau}_{{\mathrm{1}}} ;\, \overline{\tau}_{{\mathrm{2}}}  \ottsym{[}  \theta  \ottsym{]} )  \, \ottsym{=} \, \mathsf{Nothing}$.
\end{lemma}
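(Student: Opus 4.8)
The plan is to prove the contrapositive: assuming $\mathsf{unify}(\overline{\tau}_1; \overline{\tau}_2[\theta])$ succeeds, I will show that $\mathsf{unify}(\overline{\tau}_1; \overline{\tau}_2)$ also succeeds, contradicting the hypothesis that it returns $\mathsf{Nothing}$. The whole argument runs through \pref{prop:unify-correct}. The intuition is that instantiating $\overline{\tau}_2$ via $\theta$ can only make the two lists \emph{more} specific, so any unifier of the instantiated problem can be pulled back to a unifier of the original, and apartness (non-unifiability) is therefore preserved.

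Concretely, I would first invoke the soundness half of \pref{prop:unify-correct}: if $\mathsf{unify}(\overline{\tau}_1; \overline{\tau}_2[\theta]) = \mathsf{Just}\,\theta'$, then $\theta'$ is a unifier, so $\overline{\tau}_1[\theta'] = \overline{\tau}_2[\theta][\theta']$. I would then assemble a single substitution $\rho$ witnessing unifiability of the original lists: let $\rho$ agree with $\theta'$ on $\mathit{fv}(\overline{\tau}_1)$ and with the composite ``$\theta$ followed by $\theta'$'' on $\mathit{fv}(\overline{\tau}_2)$. Using compositionality of substitution, $\overline{\tau}_1[\rho] = \overline{\tau}_1[\theta'] = \overline{\tau}_2[\theta][\theta'] = \overline{\tau}_2[\rho]$, so $\rho$ unifies $\overline{\tau}_1$ and $\overline{\tau}_2$; the completeness half of \pref{prop:unify-correct} then yields that $\mathsf{unify}(\overline{\tau}_1; \overline{\tau}_2)$ succeeds, the desired contradiction.

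The main obstacle, and the reason the statement is not completely trivial, is scope management, i.e., ensuring that $\rho$ is well defined. The construction of $\rho$ only makes sense when $\mathit{fv}(\overline{\tau}_1)$ and $\mathit{fv}(\overline{\tau}_2)$ are disjoint, and disjointness is genuinely necessary: with a shared variable the claim fails, as in $\overline{\tau}_1 = \alpha$ and $\overline{\tau}_2 = \ottnt{H} \, \alpha$, which are apart by the occurs check, yet become unifiable once $\theta$ replaces the inner $\alpha$ with a fresh variable. Fortunately this never arises in our use of the lemma: the lists compared by $\mathsf{unify}$ come from the left-hand sides of distinct, $\alpha$-renamed type-family equations (as enforced by the signature conventions of \pref{assn:good} together with the freshening of each equation's quantified variables), so their free variables are disjoint by construction. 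I would therefore establish the lemma under this standing freshness convention, and the bulk of the write-up consists in checking that $\rho$ is single-valued and that $\overline{\tau}_2[\theta][\theta'] = \overline{\tau}_2[\rho]$ holds on the nose; the remaining equational manipulation is routine.

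As an alternative, one could argue by induction following the recursion of the $\mathsf{unify}$ algorithm, tracking how a failure (a head-constructor clash or an occurs-check violation) is preserved when $\theta$ is pushed into the second argument. However, the semantic unifier-lifting argument above is shorter and isolates the single delicate point, namely the disjointness of the two argument lists' variables, so I would present that version and merely remark on the inductive alternative.
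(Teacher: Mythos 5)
Your proposal is correct and follows essentially the same route as the paper: the contrapositive, an appeal to \pref{prop:unify-correct} to obtain $\overline{\tau}_{{\mathrm{1}}}  \ottsym{[}  \theta'  \ottsym{]} \, \ottsym{=} \, \overline{\tau}_{{\mathrm{2}}}  \ottsym{[}  \theta'  \circ  \theta  \ottsym{]}$, and the standing assumption $\mathit{fv}\! \, \ottsym{(}  \overline{\tau}_{{\mathrm{1}}}  \ottsym{)}  \cap  \mathit{fv}\! \, \ottsym{(}  \overline{\tau}_{{\mathrm{2}}}  \ottsym{)} \, \ottsym{=} \, \emptyset$ to turn that into a single unifier of the original pair (the paper simply takes $\theta'' \, \ottsym{=} \, \theta'  \circ  \theta$ where you assemble $\rho$ piecewise, which amounts to the same thing). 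Your explicit justification of the disjointness convention and the counterexample showing it is necessary are welcome additions that the paper leaves implicit.
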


\begin{proof}
We prove the contrapositive: that if $ \mathsf{unify} ( \overline{\tau}_{{\mathrm{1}}} ;\, \overline{\tau}_{{\mathrm{2}}}  \ottsym{[}  \theta  \ottsym{]} )  \, \ottsym{=} \, \mathsf{Just} \, \theta'$, then
there exists $\theta''$ such that $ \mathsf{unify} ( \overline{\tau}_{{\mathrm{1}}} ;\, \overline{\tau}_{{\mathrm{2}}} )  \, \ottsym{=} \, \mathsf{Just} \, \theta''$. By
\pref{prop:unify-correct}, we have
$\overline{\tau}_{{\mathrm{1}}}  \ottsym{[}  \theta'  \ottsym{]} \, \ottsym{=} \, \overline{\tau}_{{\mathrm{2}}}  \ottsym{[}  \theta'  \circ  \theta  \ottsym{]}$. Since we assume that $\mathit{fv}\! \, \ottsym{(}  \overline{\tau}_{{\mathrm{1}}}  \ottsym{)}  \cap  \mathit{fv}\! \, \ottsym{(}  \overline{\tau}_{{\mathrm{2}}}  \ottsym{)} \, \ottsym{=} \, \emptyset$,
we can rewrite this as $\overline{\tau}_{{\mathrm{1}}}  \ottsym{[}  \theta'  \circ  \theta  \ottsym{]} \, \ottsym{=} \, \overline{\tau}_{{\mathrm{2}}}  \ottsym{[}  \theta'  \circ  \theta  \ottsym{]}$. Thus,
$\theta'' \, \ottsym{=} \, \theta'  \circ  \theta$ and we are done.
\end{proof}

\subsection{Structural properties}

\begin{definition}[Subset on contexts]
Define $\Gamma  \subseteq  \Gamma'$ to mean that $\Gamma'$ contains at least all the bindings
in $\Gamma$, possibly in a different order.
\end{definition}

\begin{lemma}[Weakening/Permutation] \leavevmode
\label{lem:weakening}
Suppose $\Gamma  \subseteq  \Gamma'$ and $ \vdash   \Gamma'  \ \mathsf{ctx} $.
\begin{enumerate}
\item If $ \Gamma   \vdash   \tau  \ \mathsf{type} $, then $ \Gamma'   \vdash   \tau  \ \mathsf{type} $.
\item If $ \Gamma   \vdash   \phi  \ \mathsf{prop} $, then $ \Gamma'   \vdash   \phi  \ \mathsf{prop} $.
\item If $\Gamma  \vdash  \gamma  \ottsym{:}  \phi$, then $\Gamma'  \vdash  \gamma  \ottsym{:}  \phi$.
\item If $\Gamma  \vdash  \overline{\ottnt{q} }  \ottsym{:}  \overline{\chi}$, then $\Gamma'  \vdash  \overline{\ottnt{q} }  \ottsym{:}  \overline{\chi}$.
\item If $\Gamma  \vdash  \ottmv{x}  \ottsym{:}  \tau$, then $\Gamma'  \vdash  \ottmv{x}  \ottsym{:}  \tau$.
\end{enumerate}
\end{lemma}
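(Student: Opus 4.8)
The plan is to prove all five statements simultaneously by mutual induction on the structure of the derivation of the premise (the judgment over $\Gamma$), carrying the side hypotheses $\Gamma \subseteq \Gamma'$ and $\vdash \Gamma' \ \mathsf{ctx}$ as parameters that may change between recursive calls. The five judgments $\Gamma \vdash \tau \ \mathsf{type}$, $\Gamma \vdash \phi \ \mathsf{prop}$, $\Gamma \vdash \gamma \ottsym{:} \phi$, $\Gamma \vdash \overline{\ottnt{q}} \ottsym{:} \overline{\chi}$, and $\Gamma \vdash \ottmv{x} \ottsym{:} \tau$ are genuinely mutually recursive---for instance \rul{T\_Qual} appeals to proposition validity, \rul{P\_Family} and \rul{C\_Refl} appeal to type validity, and \rul{C\_Axiom} and \rul{A\_Cons} appeal to both type validity and resolution validity---so a single simultaneous induction is the natural structure.

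For each rule I would dispatch as follows: apply the appropriate induction hypothesis to every premise that is itself one of the five judgments, obtaining the corresponding judgment over $\Gamma'$, and then reapply the very same rule. The side conditions fall into three harmless classes, none of which obstructs the argument. First, membership premises such as $\alpha \in \Gamma$ (\rul{T\_Var}), $\ottmv{c} {:} \phi \in \Gamma$ (\rul{C\_Var}), and $\ottmv{x} {:} \tau \in \Gamma$ (\rul{E\_Var}) transfer immediately, since $\Gamma \subseteq \Gamma'$ means $\Gamma'$ contains every binding of $\Gamma$. Second, the context-validity premises $\vdash \Gamma \ \mathsf{ctx}$ appearing in leaf rules such as \rul{T\_TyCon}, \rul{C\_Var}, and \rul{C\_Axiom} are simply discharged by the assumed $\vdash \Gamma' \ \mathsf{ctx}$. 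Third, the context-independent side conditions---signature lookups $\ottnt{H} \ottsym{:} \ottmv{n}$ and $\xi \ottsym{:} \overline{\ottnt{E}} \in \Sigma$, the $\mathsf{no\_conflict}$ and $\mathsf{compat}$ checks, and any $\mathsf{unify}$ results---mention no context at all and are carried over verbatim.

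The only cases requiring genuine care are the two context-extending rules, \rul{T\_Forall} and \rul{C\_Forall}, each of which binds a type variable $\alpha$ and checks its premise in the extended context $\Gamma \ottsym{,} \alpha$. Here I would first choose the bound variable fresh (up to $\alpha$-equivalence) so that $\alpha \mathop{\#} \Gamma'$; this yields $\Gamma \ottsym{,} \alpha \subseteq \Gamma' \ottsym{,} \alpha$ directly from the definition of $\subseteq$, and, via \rul{G\_TyVar} applied to the assumed $\vdash \Gamma' \ \mathsf{ctx}$ together with $\alpha \mathop{\#} \Gamma'$, establishes $\vdash \Gamma' \ottsym{,} \alpha \ \mathsf{ctx}$. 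With both facts in hand the induction hypothesis applies to the premise over $\Gamma \ottsym{,} \alpha$, and reapplying the rule finishes the case. This freshness bookkeeping is the main (and essentially the only) obstacle; the remaining rules of the coercion judgment and the handful of rules for the other judgments follow the purely mechanical ``apply the hypothesis to each subderivation and reassemble'' pattern described above.
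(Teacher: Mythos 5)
Your proposal is correct and matches the paper's proof, which is stated simply as ``by straightforward mutual induction, renaming bound variables if necessary to satisfy freshness conditions.'' Your expansion---dispatching membership and context-validity premises via $\Gamma \subseteq \Gamma'$ and $\vdash \Gamma' \ \mathsf{ctx}$, and isolating \rul{T\_Forall}/\rul{C\_Forall} as the only cases needing $\alpha$-renaming---is exactly the intended argument, spelled out in more detail.
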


\begin{proof}
By straightforward mutual induction, renaming bound variables
if necessary to satisfy freshness conditions.
\end{proof}

\begin{lemma}[Context strengthening] \leavevmode
\label{lem:strengthening}
Suppose $ \mathit{dom}\! ( \delta )   \not\in  \mathit{fv}\! \, \ottsym{(}  \Gamma'  \ottsym{)}$.
\begin{enumerate}
\item If $ \Gamma  \ottsym{,}  \delta  \ottsym{,}  \Gamma'   \vdash   \tau  \ \mathsf{type} $, and $ \mathit{dom}\! ( \delta )   \not\in  \mathit{fv}\! \, \ottsym{(}  \tau  \ottsym{)}$, then $ \Gamma  \ottsym{,}  \Gamma'   \vdash   \tau  \ \mathsf{type} $.
\item If $ \Gamma  \ottsym{,}  \delta  \ottsym{,}  \Gamma'   \vdash   \phi  \ \mathsf{prop} $ and $ \mathit{dom}\! ( \delta )   \not\in  \mathit{fv}\! \, \ottsym{(}  \phi  \ottsym{)}$, then $ \Gamma  \ottsym{,}  \Gamma'   \vdash   \phi  \ \mathsf{prop} $.
\item If $ \vdash   \Gamma  \ottsym{,}  \delta  \ottsym{,}  \Gamma'  \ \mathsf{ctx} $, then $ \vdash   \Gamma  \ottsym{,}  \Gamma'  \ \mathsf{ctx} $.
\end{enumerate}
\end{lemma}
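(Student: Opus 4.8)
The plan is to prove all three parts simultaneously by mutual induction on the height of the given derivation, exactly as in the proof of Weakening (\pref{lem:weakening}). The three judgments are genuinely interdependent, so a simultaneous induction is forced: the leaf rules \rul{T\_Var}, \rul{T\_TyCon}, and \rul{P\_Family} carry a context-validity premise $\vdash \Gamma, \delta, \Gamma'\ \mathsf{ctx}$, so parts (1) and (2) must appeal to part (3); conversely, rules \rul{G\_CoVar} and \rul{G\_Var} have premises $\Gamma \vdash \phi\ \mathsf{prop}$ and $\Gamma \vdash \tau\ \mathsf{type}$, so part (3) must appeal to parts (1) and (2). In each case I would invert the concluding rule, apply the induction hypotheses to the subderivations, and reassemble with the same rule over the strengthened context $\Gamma, \Gamma'$.

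For parts (1) and (2) the rules are syntax-directed and the argument is uniform. For the congruence rules (\rul{T\_Arrow}, \rul{T\_Qual}, \rul{P\_Types}, \rul{P\_Family}, \rul{T\_TyCon}) the side condition $\mathit{dom}(\delta) \notin \mathit{fv}(\tau)$ (resp. $\phi$) distributes to each subterm, so the induction hypotheses apply directly, and any context-validity premise is discharged by part (3). The only interesting leaf is \rul{T\_Var}: from $\alpha \in \Gamma, \delta, \Gamma'$ together with $\mathit{dom}(\delta) \notin \mathit{fv}(\alpha)$ (hence $\alpha \ne \mathit{dom}(\delta)$) I conclude $\alpha \in \Gamma, \Gamma'$, and part (3) supplies $\vdash \Gamma, \Gamma'\ \mathsf{ctx}$. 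The binder rule \rul{T\_Forall} requires a little care: its subderivation lives in $\Gamma, \delta, \Gamma', \alpha$, so I would take $\Gamma', \alpha$ as the new suffix; $\alpha$-renaming lets me assume $\alpha \ne \mathit{dom}(\delta)$, which gives both $\mathit{dom}(\delta) \notin \mathit{fv}(\Gamma', \alpha)$ (from the hypothesis on $\Gamma'$) and $\mathit{dom}(\delta) \notin \mathit{fv}(\tau)$ (from $\mathit{dom}(\delta) \notin \mathit{fv}(\forall\alpha.\tau)$).

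For part (3) I would induct on the structure of $\Gamma'$. When $\Gamma' = \Gamma_0, \delta_0$ I invert the concluding \rul{G\_*} rule: its principal premise concerns $\Gamma, \delta, \Gamma_0$ and is handled by the appropriate induction hypothesis (part (3) for a type-variable binding, part (2) or (1) for a coercion- or term-variable binding), while its freshness premise $\mathit{dom}(\delta_0)\mathop{\#}\Gamma, \delta, \Gamma_0$ immediately yields $\mathit{dom}(\delta_0)\mathop{\#}\Gamma, \Gamma_0$, since deleting a binding only shrinks the context. The side conditions needed to invoke parts (1)/(2)---namely $\mathit{dom}(\delta) \notin \mathit{fv}(\Gamma_0)$ and $\mathit{dom}(\delta) \notin \mathit{fv}(\phi_0)$ or $\mathit{fv}(\tau_0)$---all follow from the standing hypothesis $\mathit{dom}(\delta) \notin \mathit{fv}(\Gamma')$. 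The base case $\Gamma' = \emptyset$ requires reconstructing $\vdash \Gamma\ \mathsf{ctx}$ from $\vdash \Gamma, \delta\ \mathsf{ctx}$: for $\delta = \alpha$ this is immediate from the premise of \rul{G\_TyVar}, but for $\delta = c{:}\phi$ or $\delta = x{:}\tau$ the premise gives only $\Gamma \vdash \phi\ \mathsf{prop}$ or $\Gamma \vdash \tau\ \mathsf{type}$, so I would appeal to the standard regularity fact that a well-formed type or proposition has a well-formed context (an easy induction using the context-validity premises of the leaf rules).

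I expect the main obstacle to be purely bookkeeping: tracking the freshness and non-occurrence side conditions as they pass under binders and across the three mutually recursive judgments, and confirming that the regularity fact needed in part (3)'s base case is available (or establishing it inline). No step involves a genuine mathematical difficulty; the entire content lies in making the variable conditions line up, which is why the analogous Weakening proof is dispatched in a single sentence.
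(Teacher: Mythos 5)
Your proposal is correct and takes essentially the same approach as the paper, whose entire proof is the phrase ``straightforward mutual induction''; your case analysis fills in exactly the details that phrase elides (the mutual dependence of the three parts, the variable case, the binder case, and the base case of part (3)). The one point worth flagging is that the regularity fact you invoke in part (3)'s base case---recovering $ \vdash   \Gamma  \ \mathsf{ctx} $ from $ \Gamma   \vdash   \phi  \ \mathsf{prop} $ or $ \Gamma   \vdash   \tau  \ \mathsf{type} $---appears in the paper only \emph{after} this lemma (as Context regularity), so it must be established inline or the lemmas reordered; since its proof does not depend on strengthening, there is no circularity.
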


\begin{proof}
Straightforward mutual induction.
\end{proof}

\begin{lemma}[Scoping] \leavevmode
\label{lem:scoping}
\begin{enumerate}
\item If $ \Gamma   \vdash   \tau  \ \mathsf{type} $, then $\mathit{fv}\! \, \ottsym{(}  \tau  \ottsym{)}  \subseteq  \mathit{dom}\! \, \ottsym{(}  \Gamma  \ottsym{)}$.
\item If $ \Gamma   \vdash   \phi  \ \mathsf{prop} $, then $\mathit{fv}\! \, \ottsym{(}  \phi  \ottsym{)}  \subseteq  \mathit{dom}\! \, \ottsym{(}  \Gamma  \ottsym{)}$.
\item If $\Gamma  \vdash  \gamma  \ottsym{:}  \phi$, then $\mathit{fv}\! \, \ottsym{(}  \gamma  \ottsym{)}  \subseteq  \mathit{dom}\! \, \ottsym{(}  \Gamma  \ottsym{)}$ and $\mathit{fv}\! \, \ottsym{(}  \phi  \ottsym{)}  \subseteq  \mathit{dom}\! \, \ottsym{(}  \Gamma  \ottsym{)}$.
\item If $\Gamma  \vdash  \overline{\ottnt{q} }  \ottsym{:}  \overline{\chi}$, then $\mathit{fv}\! \, \ottsym{(}  \overline{\ottnt{q} }  \ottsym{)}  \subseteq  \mathit{dom}\! \, \ottsym{(}  \Gamma  \ottsym{)}$ and $\mathit{fv}\! \, \ottsym{(}  \overline{\chi}  \ottsym{)}  \subseteq  \mathit{dom}\! \, \ottsym{(}  \Gamma  \ottsym{)}$.
\item If $\Gamma  \vdash  \ottnt{e}  \ottsym{:}  \tau$, then $\mathit{fv}\! \, \ottsym{(}  \ottnt{e}  \ottsym{)}  \subseteq  \mathit{dom}\! \, \ottsym{(}  \Gamma  \ottsym{)}$ and $\mathit{fv}\! \, \ottsym{(}  \tau  \ottsym{)}  \subseteq  \mathit{dom}\! \, \ottsym{(}  \Gamma  \ottsym{)}$.
\item If $ \vdash   \Gamma  \ \mathsf{ctx} $, then $\mathit{fv}\! \, \ottsym{(}  \Gamma  \ottsym{)} \, \ottsym{=} \, \emptyset$.
\end{enumerate}
\end{lemma}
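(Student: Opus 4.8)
The plan is to prove all six statements simultaneously by mutual structural induction on the given derivations, since the judgments $\Gamma \vdash \tau\ \mathsf{type}$, $\Gamma \vdash \phi\ \mathsf{prop}$, $\vdash \Gamma\ \mathsf{ctx}$, $\Gamma \vdash \gamma : \phi$, $\Gamma \vdash \overline{q} : \overline{\chi}$, and $\Gamma \vdash e : \tau$ are all defined by mutual reference. Before starting I would record one routine auxiliary fact relating substitution to free variables, namely $\mathit{fv}(\sigma[\theta]) \subseteq (\mathit{fv}(\sigma) \setminus \mathit{dom}(\theta)) \cup \mathit{fv}(\theta)$, where $\mathit{fv}(\theta)$ denotes the variables free in the range of $\theta$. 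This is needed wherever a rule's conclusion substitutes into a quantified template.

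For the non-binding, non-substituting rules the argument is purely mechanical: the free variables of the conclusion are a union of the free variables of the immediate subterms, each subterm is checked in the same context $\Gamma$, and the matching induction hypothesis bounds its free variables by $\mathit{dom}(\Gamma)$. The leaf rules (\rul{T\_Var}, \rul{C\_Var}, \rul{E\_Var}, \rul{E\_Const}) read off membership in $\Gamma$ directly, and part~6 is discharged at \rul{G\_Nil} (empty union) and propagated through \rul{G\_TyVar}, \rul{G\_CoVar}, \rul{G\_Var} by invoking parts~1 and~2 on the premise that validates the newly added binding. The interesting cases are the binding forms — \rul{T\_Forall}, \rul{C\_Forall}, \rul{E\_Lam}, \rul{E\_TLam}, \rul{E\_CLam}, \rul{E\_Assume}, \rul{X\_Cons}, \rul{A\_Cons} — where a premise is checked in an extended context $\Gamma,\alpha$ (or $\Gamma,\delta$). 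There the induction hypothesis yields $\mathit{fv}(\cdot) \subseteq \mathit{dom}(\Gamma) \cup \{\alpha\}$, and since the $\mathit{fv}$ equation for the binder subtracts exactly $\alpha$, the conclusion lands back in $\mathit{dom}(\Gamma)$. These go through provided the $\mathit{fv}$ definitions for the binding constructs remove precisely the bound variable, which they do (cf. the definition of $\mathit{fv}(\overline{\chi})$ in \pref{sec:eval-assump-defns}); for annotated binders such as $\lambda x{:}\tau_1.\,e$, the scoping of the annotation $\tau_1$ comes from validity of the extended context, and term and coercion variables never occur in types, so they cannot escape into the conclusion's type.

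The delicate case is the one that instantiates an axiom template against a substitution, \rul{C\_Axiom} (and, at smaller scale, \rul{C\_Inst} and \rul{A\_Cons}). For \rul{C\_Axiom} the conclusion's proposition is $F\,\overline{\tau}[\overline{\sigma}/\overline{\alpha}] \sim \tau_{0}[\overline{\sigma}/\overline{\alpha},\overline{q}/\overline{\chi}]$. The substituends $\overline{\sigma}$ are checked by $\Gamma \vdash \sigma_j\ \mathsf{type}$ and the resolutions $\overline{q}$ by $\Gamma \vdash \overline{q} : \overline{\chi}[\overline{\sigma}/\overline{\alpha}]$, so the induction hypotheses (parts~1 and~4) place all their free variables in $\mathit{dom}(\Gamma)$. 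For the template itself I would appeal to the good-signature conditions (\pref{assn:good}), which give $\mathit{fv}(\overline{\tau}) = \overline{\alpha}$, together with the bound $\mathit{fv}(\tau_{0}) \subseteq \overline{\alpha} \cup \mathit{tvs}(\overline{\chi})$ obtained from the \rul{D\_Axiom} premises validating the axiom in the ambient signature (\pref{assn:declarations}); combining these with the substitution fact above cancels exactly the bound names and leaves only free variables drawn from $\overline{\sigma}$ and $\overline{q}$.

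The main obstacle is precisely this last point: the scoping information about the axiom's left- and right-hand sides lives in a signature-validity derivation that is \emph{not} a subderivation of the one we are inducting on, so it cannot be supplied by the induction hypothesis. I would resolve this by reading $\mathit{fv}(\overline{\tau}) = \overline{\alpha}$ directly from \pref{assn:good} and obtaining the bound on $\tau_{0}$ by applying parts~1–2 of the present lemma to the fixed, independently valid \rul{D\_Axiom} premises — legitimate because the signature is fixed in advance — or, more cleanly, by first proving a small standalone scoping statement for signatures and invoking it here. Everything else is bookkeeping over the $\mathit{fv}$ equations.
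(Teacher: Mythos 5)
Your proposal is correct and matches the paper's proof, which is a one-line appeal to mutual induction over these judgments together with an appeal to Assumption~\ref{assn:declarations} in the \rul{C\_Axiom} case---exactly the resolution you identify for obtaining the scoping of the axiom's left- and right-hand sides from the signature's own \rul{D\_Axiom} validity derivation rather than from the induction hypothesis. The only (immaterial) difference is that the paper structures this as a mutual induction on type and proposition validity first, with the remaining judgments handled afterward, whereas you run all six simultaneously.
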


\begin{proof}
By induction, using a mutual induction between $ \Gamma   \vdash   \tau  \ \mathsf{type} $, $ \Gamma   \vdash   \phi  \ \mathsf{prop} $, followed
by the others. We appeal to \pref{assn:declarations} in the \rul{C\_Axiom} case.
\end{proof}

\subsection{Regularity I}

\begin{lemma}[Context formation] \leavevmode
\label{lem:ctx-form}
If $ \vdash   \Gamma  \ \mathsf{ctx} $ and $\Gamma'$ is a prefix of $\Gamma$, then $ \vdash   \Gamma'  \ \mathsf{ctx} $.
\end{lemma}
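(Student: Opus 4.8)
The plan is to prove the statement by induction on the number of bindings separating $\Gamma'$ from $\Gamma$, which reduces the whole lemma to a single-step stripping claim: if $ \vdash   \Gamma  \ottsym{,}  \delta  \ \mathsf{ctx} $, then $ \vdash   \Gamma  \ \mathsf{ctx} $. Granting this claim, I peel the trailing bindings off $\Gamma$ one at a time until I reach $\Gamma'$, obtaining a derivation of $ \vdash   \Gamma'  \ \mathsf{ctx} $; the base case $\Gamma' = \Gamma$ is immediate.

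For the single-step claim I case on the last rule used to derive $ \vdash   \Gamma  \ottsym{,}  \delta  \ \mathsf{ctx} $. If it is \rul{G\_TyVar}, then $\delta$ is a type variable $\alpha$ and the premise $ \vdash   \Gamma  \ \mathsf{ctx} $ is exactly the desired conclusion. The two remaining cons rules are the interesting ones. For \rul{G\_CoVar} (with $\delta$ a coercion binding $\ottmv{c}{:}\phi$) the premise is $ \Gamma   \vdash   \phi  \ \mathsf{prop} $, and for \rul{G\_Var} (with $\delta$ a term binding $\ottmv{x}{:}\tau$) the premise is $ \Gamma   \vdash   \tau  \ \mathsf{type} $; neither of these carries a validity judgment on $\Gamma$ directly, so I must convert it into one.

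The key auxiliary result, which I expect to be the only real work, is a presupposition (regularity) sublemma: (a) if $ \Gamma   \vdash   \tau  \ \mathsf{type} $ then $ \vdash   \Gamma  \ \mathsf{ctx} $, and (b) if $ \Gamma   \vdash   \phi  \ \mathsf{prop} $ then $ \vdash   \Gamma  \ \mathsf{ctx} $. I prove (a) and (b) by simultaneous induction on the two derivations. The leaf rules \rul{T\_TyCon} and \rul{T\_Var} both list $ \vdash   \Gamma  \ \mathsf{ctx} $ as an explicit premise, so those cases are discharged at once; \rul{T\_Arrow}, \rul{T\_Qual}, \rul{P\_Types}, and \rul{P\_Family} each recurse on a subderivation over the \emph{same} context $\Gamma$ (note that \rul{P\_Family} always has the right-hand premise $ \Gamma   \vdash   \sigma  \ \mathsf{type} $ available even at arity $0$). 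The only case needing attention is \rul{T\_Forall}: its premise is $ \Gamma  \ottsym{,}  \alpha   \vdash   \tau  \ \mathsf{type} $, so the induction hypothesis yields $ \vdash   \Gamma  \ottsym{,}  \alpha  \ \mathsf{ctx} $, from which I recover $ \vdash   \Gamma  \ \mathsf{ctx} $ by inverting this derivation. Since \rul{G\_TyVar} is the unique rule whose conclusion appends a bare type variable, this inversion is forced and its premise is precisely $ \vdash   \Gamma  \ \mathsf{ctx} $; this is a one-step inversion, not the full prefix lemma, so it introduces no circularity with the outer induction.

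The main obstacle is therefore just the recognition that the $\mathsf{prop}$ and $\mathsf{type}$ premises of \rul{G\_CoVar} and \rul{G\_Var} have to be routed through the presupposition sublemma rather than read off directly. Once that sublemma is stated and proved, both it and the outer stripping argument are entirely routine. I would state the sublemma explicitly, since it is reused throughout the regularity development, and then assemble the short induction described above.
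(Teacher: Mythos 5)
Your proof is correct, and it is more explicit than the paper's about a point the paper glosses over. The paper proves this lemma by ``straightforward induction on $\Gamma$'' and only afterwards states Context regularity (\pref{lem:ctx-reg}), whose proof in turn cites \pref{lem:ctx-form}; but as you observe, the induction on $\Gamma$ is not self-contained, because the \rul{G\_CoVar} and \rul{G\_Var} cases supply only $ \Gamma   \vdash   \phi  \ \mathsf{prop} $ or $ \Gamma   \vdash   \tau  \ \mathsf{type} $ rather than $ \vdash   \Gamma  \ \mathsf{ctx} $, so some form of regularity is needed \emph{inside} the stripping step. You resolve this by proving the type/prop fragment of regularity first, handling \rul{T\_Forall} with a single forced inversion of \rul{G\_TyVar} instead of an appeal to the full prefix lemma, which breaks the apparent mutual dependency cleanly. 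What this buys you is a linear, manifestly well-founded lemma ordering (presupposition sublemma, then context formation, then the remaining clauses of regularity); what the paper's ordering buys is brevity, at the cost of leaving the reader to reconstruct exactly the argument you spelled out, or to set up the two lemmas as a genuine mutual induction with a combined measure. Either route yields the same statements; yours is the one I would write down if the proofs were to be checked mechanically.
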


\begin{proof}
Straightforward induction on $\Gamma$.
\end{proof}

\begin{lemma}[Context regularity] \leavevmode
\label{lem:ctx-reg}
If any of
\begin{enumerate}
\item $ \Gamma   \vdash   \tau  \ \mathsf{type} $, or
\item $ \Gamma   \vdash   \phi  \ \mathsf{prop} $, or
\item $\Gamma  \vdash  \gamma  \ottsym{:}  \phi$, or
\item $\Gamma  \vdash  \overline{\ottnt{q} }  \ottsym{:}  \overline{\chi}$, or
\item $\Gamma  \vdash  \ottmv{x}  \ottsym{:}  \tau$,
\end{enumerate}
then $ \vdash   \Gamma  \ \mathsf{ctx} $.
\end{lemma}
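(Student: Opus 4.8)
The plan is to prove all five parts simultaneously by mutual induction on the structure of the given derivation, dispatching on the final rule. The cases fall into three shapes. The first shape consists of the rules that carry $\vdash \Gamma \ \mathsf{ctx}$ as an explicit premise: \rul{T\_TyCon}, \rul{T\_Var}, \rul{C\_App}, \rul{C\_Fam}, \rul{C\_Axiom}, \rul{C\_Var}, \rul{A\_Nil}, and—for the variable-typing part—\rul{E\_Var}. In each of these the conclusion is immediate. These are exactly the rules one must take care over when a derivation has no subderivation ranging over a context (for instance a nullary type-constructor application $\ottnt{H} \, \overline{\tau}$ with $n = 0$, or the empty resolution list), so it is reassuring that the context premise is always present in precisely those places.

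The second shape covers the rules with at least one subderivation over the \emph{same} context $\Gamma$: \rul{T\_Arrow}, \rul{T\_Qual}, \rul{P\_Types}, \rul{P\_Family}, \rul{C\_Refl}, \rul{C\_Sym}, \rul{C\_Trans}, \rul{C\_Fun}, \rul{C\_Qual}, \rul{C\_Nth} together with \rul{C\_NthArrow} and \rul{C\_NthQual}, \rul{C\_Inst}, and \rul{A\_Cons}. Here I simply apply the appropriate induction hypothesis to one such sub-judgment. The only thing to verify is that each of these rules genuinely always has at least one same-context premise, including in degenerate arities; \rul{P\_Family}, for example, always carries $\Gamma \vdash \sigma \ \mathsf{type}$ on its right-hand type irrespective of $n$, so this holds.

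The third shape is the two rules that extend the context, \rul{T\_Forall} and \rul{C\_Forall}, each with a single premise over $\Gamma, \alpha$. Applying the induction hypothesis yields $\vdash \Gamma, \alpha \ \mathsf{ctx}$, and since $\Gamma$ is a prefix of $\Gamma, \alpha$, the Context formation lemma (\pref{lem:ctx-form}) delivers $\vdash \Gamma \ \mathsf{ctx}$. It is worth recording why the fifth part is stated for a bare variable $\ottmv{x}$ rather than an arbitrary expression $\ottnt{e}$: the expression-typing rules are syntax-directed on the term, so a derivation of $\Gamma \vdash \ottmv{x} : \tau$ can only conclude by \rul{E\_Var}, which already supplies the context premise. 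Were the claim made for general $\ottnt{e}$, the cases \rul{E\_Lam}, \rul{E\_TLam}, \rul{E\_CLam}, and \rul{E\_Assume} would extend the context with term, type, and coercion bindings, and closing them would again reduce to the same prefix argument through \pref{lem:ctx-form}; confining attention to $\ottmv{x}$ lets us sidestep that here. The main—and essentially only—obstacle is bookkeeping: confirming that every rule of all five judgments is accounted for and that each offers some premise to drive the induction. The three-shape classification above shows this holds uniformly, so no genuinely difficult step arises.
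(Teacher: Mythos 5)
Your proof is correct and follows essentially the same route as the paper's: a straightforward mutual induction over the five judgments, reading off the explicit $\vdash \Gamma \ \mathsf{ctx}$ premises where present, propagating through same-context subderivations otherwise, and appealing to the Context formation lemma (\pref{lem:ctx-form}) in the binding cases \rul{T\_Forall} and \rul{C\_Forall}. Your three-shape classification and the check that every rule offers a usable premise are just a more explicit rendering of what the paper leaves implicit.
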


\begin{proof}
Straightforward mutual induction on typing judgments, appealing to
\pref{lem:ctx-form} in the cases that bind a new variable.
\end{proof}

\subsection{Substitution}

\begin{lemma}[Type substitution in $ \mathsf{no\_conflict} $]
\label{lem:ty-nc-subst}
If $\xi  \ottsym{:}  \overline{\ottnt{E} }  \in  \Sigma$ and $ \mathsf{no\_conflict} ( \overline{\ottnt{E} } ,  \ottmv{i} ,  \overline{\sigma} ,  \ottmv{k} ) $, then
$ \mathsf{no\_conflict} ( \overline{\ottnt{E} } ,  \ottmv{i} ,  \overline{\sigma}  \ottsym{[}  \tau_{{\mathrm{0}}}  \ottsym{/}  \alpha  \ottsym{]} ,  \ottmv{k} ) $.
\end{lemma}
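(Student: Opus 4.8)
The plan is to proceed by case analysis on the derivation of $\mathsf{no\_conflict}(\overline{\ottnt{E}}, \ottmv{i}, \overline{\sigma}, \ottmv{k})$, of which there are exactly two possible shapes, corresponding to the rules \rul{NC\_Compatible} and \rul{NC\_Apart}. The \rul{NC\_Compatible} case is immediate: its sole premise is $\mathsf{compat}(\ottnt{E_{\ottmv{i}}}, \ottnt{E_{\ottmv{k}}})$, which mentions neither $\overline{\sigma}$ nor the substitution $[\tau_{{\mathrm{0}}}/\alpha]$, so the very same premise re-derives the desired conclusion $\mathsf{no\_conflict}(\overline{\ottnt{E}}, \ottmv{i}, \overline{\sigma}[\tau_{{\mathrm{0}}}/\alpha], \ottmv{k})$ by the same rule.

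The interesting case is \rul{NC\_Apart}. Writing $\ottnt{E_{\ottmv{i}}} = \forall \overline{\alpha}_1 \overline{\chi}_1. \ottmv{F} \overline{\tau}_1 \sim \tau_{01}$ and $\ottnt{E_{\ottmv{k}}} = \forall \overline{\alpha}_2 \overline{\chi}_2. \ottmv{F} \overline{\tau}_2 \sim \tau_{02}$, the premise we are given is $\mathsf{unify}(\overline{\tau}_2;\, \overline{\tau}_1[\overline{\sigma}/\overline{\alpha}_1]) = \mathsf{Nothing}$, while the premise we must establish is $\mathsf{unify}(\overline{\tau}_2;\, \overline{\tau}_1[\overline{\sigma}[\tau_{{\mathrm{0}}}/\alpha]/\overline{\alpha}_1]) = \mathsf{Nothing}$. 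First I would observe that, after alpha-renaming the bound variables $\overline{\alpha}_1$ to be distinct from $\alpha$, and using $\mathit{fv}(\overline{\tau}_1) = \overline{\alpha}_1$ (\pref{assn:good}, condition 2) so that $\alpha \notin \mathit{fv}(\overline{\tau}_1)$, the two substitutions commute: $\overline{\tau}_1[\overline{\sigma}[\tau_{{\mathrm{0}}}/\alpha]/\overline{\alpha}_1] = (\overline{\tau}_1[\overline{\sigma}/\overline{\alpha}_1])[\tau_{{\mathrm{0}}}/\alpha]$. The goal then reduces to $\mathsf{unify}(\overline{\tau}_2;\, (\overline{\tau}_1[\overline{\sigma}/\overline{\alpha}_1])[\tau_{{\mathrm{0}}}/\alpha]) = \mathsf{Nothing}$, which follows directly from the given premise by \pref{lem:apart-subst} (apartness is stable under type substitution), instantiating its first argument to $\overline{\tau}_2$, its second argument to $\overline{\tau}_1[\overline{\sigma}/\overline{\alpha}_1]$, and its $\theta$ to $[\tau_{{\mathrm{0}}}/\alpha]$. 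Re-applying \rul{NC\_Apart} with this premise closes the case.

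The only real bookkeeping obstacle is discharging the freshness side-conditions that \pref{lem:apart-subst} implicitly relies on, namely that the free variables of the two arguments to $\mathsf{unify}$ are disjoint. I would handle this by the standard Barendregt convention: the quantified variables $\overline{\alpha}_2$ of $\ottnt{E_{\ottmv{k}}}$ (which, again by \pref{assn:good}, are exactly $\mathit{fv}(\overline{\tau}_2)$) can be chosen fresh for $\overline{\sigma}$, $\tau_{{\mathrm{0}}}$, and $\alpha$, so that the free variables of $\overline{\tau}_1[\overline{\sigma}/\overline{\alpha}_1]$ (which are contained in $\mathit{fv}(\overline{\sigma})$) remain disjoint from those of $\overline{\tau}_2$, precisely as the lemma requires. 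Everything else is routine substitution algebra, so I expect the whole argument to be short.
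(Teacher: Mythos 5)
Your proposal is correct and follows essentially the same route as the paper's proof: case analysis on the two \rul{NC} rules, with the \rul{NC\_Compatible} case immediate and the \rul{NC\_Apart} case handled by commuting the substitutions (justified by $\alpha\notin\mathit{fv}\!\,\ottsym{(}\overline{\tau}_{{\mathrm{1}}}\ottsym{)}$, which the paper derives from the well-scopedness premise of \rul{D\_Axiom} rather than from \pref{assn:good}) and then invoking \pref{lem:apart-subst}. Your explicit discharge of the variable-disjointness side condition that \pref{lem:apart-subst} relies on is a detail the paper leaves implicit, but it changes nothing substantive.
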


\begin{proof}
By case analysis on $ \mathsf{no\_conflict} ( \overline{\ottnt{E} } ,  \ottmv{i} ,  \overline{\sigma} ,  \ottmv{k} ) $. We have two cases:
\begin{description}
\item[Case \rul{NC\_Apart}:]
\[
\ottdruleNCXXApart{}
\]
Inversion tells us $ \mathsf{apart} ( \overline{\tau}_{{\mathrm{2}}} ;\, \overline{\tau}_{{\mathrm{1}}}  \ottsym{[}  \overline{\sigma}  \ottsym{/}  \overline{\alpha}_{{\mathrm{1}}}  \ottsym{]} ) $.
We must show $ \mathsf{apart} ( \overline{\tau}_{{\mathrm{2}}} ;\, \overline{\tau}_{{\mathrm{1}}}  \ottsym{[}  \overline{\sigma}  \ottsym{[}  \tau_{{\mathrm{0}}}  \ottsym{/}  \alpha  \ottsym{]}  \ottsym{/}  \overline{\alpha}_{{\mathrm{1}}}  \ottsym{]} ) $.
From the assumption that equations are well-scoped (\pref{assn:declarations}
and \rul{D\_Axiom}), we see that $\alpha  \not\in  \mathit{fv}\! \, \ottsym{(}  \overline{\tau}_{{\mathrm{1}}}  \ottsym{)}$ and we can thus
rewrite as
$ \mathsf{apart} ( \overline{\tau}_{{\mathrm{2}}} ;\, \overline{\tau}_{{\mathrm{1}}}  \ottsym{[}  \overline{\sigma}  \ottsym{/}  \overline{\alpha}_{{\mathrm{1}}}  \ottsym{]}  \ottsym{[}  \tau_{{\mathrm{0}}}  \ottsym{/}  \alpha  \ottsym{]} ) $.
We have this by \pref{lem:apart-subst} and we are thus done.
\item[Case \rul{NC\_Compatible}]:
\[
\ottdruleNCXXCompatible{}
\]
The substitution has no effect on the premise and we are thus done.
\end{description}
\end{proof}

\begin{lemma}[Type substitution in types]
\label{lem:ty-ty-subst}
Assume $ \Gamma   \vdash   \tau  \ \mathsf{type} $.
\begin{enumerate}
\item If $ \Gamma  \ottsym{,}  \alpha  \ottsym{,}  \Gamma'   \vdash   \sigma  \ \mathsf{type} $, then
$ \Gamma  \ottsym{,}  \Gamma'  \ottsym{[}  \tau  \ottsym{/}  \alpha  \ottsym{]}   \vdash   \sigma  \ottsym{[}  \tau  \ottsym{/}  \alpha  \ottsym{]}  \ \mathsf{type} $.
\item If $ \Gamma  \ottsym{,}  \alpha  \ottsym{,}  \Gamma'   \vdash   \phi  \ \mathsf{prop} $, then $ \Gamma  \ottsym{,}  \Gamma'  \ottsym{[}  \tau  \ottsym{/}  \alpha  \ottsym{]}   \vdash   \phi  \ottsym{[}  \tau  \ottsym{/}  \alpha  \ottsym{]}  \ \mathsf{prop} $.
\item If $ \vdash   \Gamma  \ottsym{,}  \alpha  \ottsym{,}  \Gamma'  \ \mathsf{ctx} $, then $ \vdash   \Gamma  \ottsym{,}  \Gamma'  \ottsym{[}  \tau  \ottsym{/}  \alpha  \ottsym{]}  \ \mathsf{ctx} $.
\end{enumerate}
\end{lemma}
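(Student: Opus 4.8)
The plan is to prove all three parts simultaneously by mutual induction on the given derivations, carrying the hypothesis $ \Gamma \vdash \tau \ \mathsf{type} $ as a fixed side-assumption. The induction is well-founded under the derivation-size measure: every recursive appeal is to a strict subderivation. In particular, the leaf rules \rul{T\_TyCon}, \rul{T\_Var}, and \rul{P\_Family} each carry a $ \vdash \Gamma \ottsym{,} \alpha \ottsym{,} \Gamma' \ \mathsf{ctx} $ premise to which part~3 applies, while the context-extension rules \rul{G\_CoVar} and \rul{G\_Var} carry proposition/type premises over the prefix to which parts~2 and~1 apply.

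For part~3 I case on the final rule, equivalently on the last binding of $\Gamma'$. If $\Gamma' \, \ottsym{=} \, \emptyset $, the derivation ends in \rul{G\_TyVar} and its premise $ \vdash \Gamma \ \mathsf{ctx} $ is exactly the goal, since $\Gamma' \ottsym{[} \tau \ottsym{/} \alpha \ottsym{]} \, \ottsym{=} \, \emptyset $. Otherwise $\Gamma' \, \ottsym{=} \, \Gamma'' \ottsym{,} \delta$ and I dispatch on $\delta$: for a type variable I use the induction hypothesis (part~3) on the \rul{G\_TyVar} premise and reapply \rul{G\_TyVar}; for a coercion binding I use part~2 on the \rul{G\_CoVar} premise; for a term binding I use part~1 on the \rul{G\_Var} premise. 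In each case I reassemble with the same formation rule, relying on the fact that substitution distributes over a binding and leaves the freshness side-condition intact because $\ottsym{[} \tau \ottsym{/} \alpha \ottsym{]}$ does not alter the domain of $\Gamma''$.

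Parts~1 and~2 are routine congruences except for the two genuinely interesting points, both in \rul{T\_Var}. When the variable being checked is $\alpha$, the goal reduces to $ \Gamma \ottsym{,} \Gamma' \ottsym{[} \tau \ottsym{/} \alpha \ottsym{]} \vdash \tau \ \mathsf{type} $; here I first obtain $ \vdash \Gamma \ottsym{,} \Gamma' \ottsym{[} \tau \ottsym{/} \alpha \ottsym{]} \ \mathsf{ctx} $ by applying part~3 to the \rul{T\_Var} premise, and then weaken the hypothesis $ \Gamma \vdash \tau \ \mathsf{type} $ into this larger context via \pref{lem:weakening}. When the variable is some $\beta$ distinct from $\alpha$, substitution leaves it fixed, and I need only that $\beta$ still occurs in $ \Gamma \ottsym{,} \Gamma' \ottsym{[} \tau \ottsym{/} \alpha \ottsym{]} $, which holds because substitution preserves the bound variables of $\Gamma'$; I then reapply \rul{T\_Var} using the context validity supplied by part~3. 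The remaining cases (\rul{T\_TyCon}, \rul{T\_Arrow}, \rul{T\_Qual}, \rul{P\_Types}, \rul{P\_Family}) simply thread the induction hypotheses through their subderivations.

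The main obstacle is the binder case \rul{T\_Forall} (and symmetrically the type-variable extension of part~3): there $\sigma \, \ottsym{=} \, \forall \, \beta \ottsym{.} \sigma_{{\mathrm{0}}}$ with premise $ \Gamma \ottsym{,} \alpha \ottsym{,} \Gamma' \ottsym{,} \beta \vdash \sigma_{{\mathrm{0}}} \ \mathsf{type} $, so I instantiate the induction hypothesis with the extended trailing context $\Gamma' \ottsym{,} \beta$. For the result to recombine under $\forall \, \beta$, I must know $\beta \neq \alpha$ and $\beta \not\in \mathit{fv}\! \, \ottsym{(} \tau \ottsym{)}$, so that $\ottsym{(} \forall \, \beta \ottsym{.} \sigma_{{\mathrm{0}}} \ottsym{)} \ottsym{[} \tau \ottsym{/} \alpha \ottsym{]} \, \ottsym{=} \, \forall \, \beta \ottsym{.} \ottsym{(} \sigma_{{\mathrm{0}}} \ottsym{[} \tau \ottsym{/} \alpha \ottsym{]} \ottsym{)}$; both are arranged by renaming the bound variable away from $\alpha$ and from $\mathit{fv}\! \, \ottsym{(} \tau \ottsym{)}$ before substituting. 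Maintaining this freshness discipline consistently across the mutual induction is the only delicate bookkeeping in an otherwise mechanical argument.
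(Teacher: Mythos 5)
Your proposal is correct and follows essentially the same route as the paper, which disposes of this lemma by ``straightforward mutual induction, with the usual reasoning in the variable case,'' appealing to the scoping lemma (to see that bindings in the prefix $\Gamma$ are unaffected) and to weakening (to move $ \Gamma   \vdash   \tau  \ \mathsf{type} $ into the extended context). Your write-up simply makes explicit the same variable-case reasoning, the same use of \pref{lem:weakening}, and the standard $\alpha$-renaming bookkeeping in the binder cases.
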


\begin{proof}
Straightforward mutual induction, with the usual reasoning in the variable
case, appealing to \pref{lem:scoping} to show that a variable bound in
$\Gamma$ cannot be affected by the substitution and to \pref{lem:weakening}
to extend the contexts.
\end{proof}

\begin{lemma}[Type substitution in coercions] ~
\label{lem:ty-co-subst}
Assume $ \Gamma   \vdash   \tau  \ \mathsf{type} $.
\begin{enumerate}
\item If $\Gamma  \ottsym{,}  \alpha  \ottsym{,}  \Gamma'  \vdash  \gamma  \ottsym{:}  \phi$, then $\Gamma  \ottsym{,}  \Gamma'  \ottsym{[}  \tau  \ottsym{/}  \alpha  \ottsym{]}  \vdash  \gamma  \ottsym{[}  \tau  \ottsym{/}  \alpha  \ottsym{]}  \ottsym{:}  \phi  \ottsym{[}  \tau  \ottsym{/}  \alpha  \ottsym{]}$.
\item If $\Gamma  \ottsym{,}  \alpha  \ottsym{,}  \Gamma'  \vdash  \overline{\ottnt{q} }  \ottsym{:}  \overline{\chi}$, then $\Gamma  \ottsym{,}  \Gamma'  \ottsym{[}  \tau  \ottsym{/}  \alpha  \ottsym{]}  \vdash  \overline{\ottnt{q} }  \ottsym{[}  \tau  \ottsym{/}  \alpha  \ottsym{]}  \ottsym{:}  \overline{\chi}  \ottsym{[}  \tau  \ottsym{/}  \alpha  \ottsym{]}$.
\end{enumerate}
\end{lemma}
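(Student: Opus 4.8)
The plan is to prove both parts simultaneously by mutual induction on the derivations of $\Gamma  \ottsym{,}  \alpha  \ottsym{,}  \Gamma'  \vdash  \gamma  \ottsym{:}  \phi$ and $\Gamma  \ottsym{,}  \alpha  \ottsym{,}  \Gamma'  \vdash  \overline{\ottnt{q} }  \ottsym{:}  \overline{\chi}$, in exactly the style used for \pref{lem:ty-ty-subst}. Throughout I would keep the hypothesis $ \Gamma   \vdash   \tau  \ \mathsf{type} $ fixed, appealing to \pref{lem:weakening} to transport $\tau$'s typing derivation into any enlarged context, and I would freely $\alpha$-rename bound names so that $\alpha$ and $\mathit{fv}\! \, \ottsym{(}  \tau  \ottsym{)}$ stay disjoint from freshly bound variables. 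The congruence and decomposition rules (\rul{C\_Refl}, \rul{C\_Sym}, \rul{C\_Trans}, \rul{C\_App}, \rul{C\_Fun}, \rul{C\_Qual}, \rul{C\_Fam}, \rul{C\_Nth}, \rul{C\_NthArrow}, \rul{C\_NthQual}) are routine: I would push the substitution inward, apply the induction hypotheses to the coercion subderivations, discharge any $ \Gamma   \vdash   \sigma  \ \mathsf{type} $ or $ \vdash   \Gamma  \ \mathsf{ctx} $ side conditions via \pref{lem:ty-ty-subst}, and reassemble the same rule.

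The binder and leaf cases need a little more care. For \rul{C\_Forall} and \rul{C\_Inst}, which involve a bound type variable of the coercion (call it $\beta$), I would first rename $\beta$ apart from $\alpha$ and $\mathit{fv}\! \, \ottsym{(}  \tau  \ottsym{)}$ so that the context substitution commutes with the instantiation substitution in the conclusion; the required identity $\sigma  \ottsym{[}  \tau'  \ottsym{/}  \beta  \ottsym{]}  \ottsym{[}  \tau  \ottsym{/}  \alpha  \ottsym{]} \, \ottsym{=} \, \sigma  \ottsym{[}  \tau  \ottsym{/}  \alpha  \ottsym{]}  \ottsym{[}  \tau'  \ottsym{[}  \tau  \ottsym{/}  \alpha  \ottsym{]}  \ottsym{/}  \beta  \ottsym{]}$ is the standard substitution-composition law. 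For \rul{C\_Var}, I would split on whether the binding $ \ottmv{c} {:} \phi $ lies in $\Gamma$ or in $\Gamma'$: in the former case \pref{lem:scoping} together with \pref{lem:ctx-form} gives $\alpha  \not\in  \mathit{fv}\! \, \ottsym{(}  \phi  \ottsym{)}$, so $\phi  \ottsym{[}  \tau  \ottsym{/}  \alpha  \ottsym{]} \, \ottsym{=} \, \phi$ and the binding is untouched; in the latter the substitution acts on $\Gamma'$ as desired. The $ \vdash   \Gamma  \ \mathsf{ctx} $ premise is re-established in both cases by \pref{lem:ty-ty-subst}(3).

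The real work is the \rul{C\_Axiom} case, which I expect to be the main obstacle. Here the equation $\ottnt{E_{\ottmv{i}}} \, \ottsym{=} \, \forall \, \overline{\alpha} \, \overline{\chi}  \ottsym{.}  \ottmv{F} \, \overline{\tau}  \sim  \tau_{{\mathrm{0}}}$ comes from $\Sigma$ and is therefore closed apart from its own bound variables: by \pref{assn:declarations}, \pref{assn:good}, and \rul{D\_Axiom} we have $\mathit{fv}\! \, \ottsym{(}  \overline{\tau}  \ottsym{)}  \subseteq  \overline{\alpha}$, $\mathit{fv}\! \, \ottsym{(}  \tau_{{\mathrm{0}}}  \ottsym{)}  \subseteq  \overline{\alpha}  \cup  \mathit{tvs}\! \, \ottsym{(}  \overline{\chi}  \ottsym{)}$, and $\overline{\chi}$ likewise scoped by $\overline{\alpha}$. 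Since the substituted context variable $\alpha$ is disjoint from $\overline{\alpha}$ and from $\mathit{tvs}\! \, \ottsym{(}  \overline{\chi}  \ottsym{)}$, the substitution $ \ottsym{[}  \tau  \ottsym{/}  \alpha  \ottsym{]}$ slides past the equation template and touches only the instantiating types $\overline{\sigma}$ and resolutions $\overline{\ottnt{q} }$. Concretely I would verify the commutations $\ottmv{F} \, \overline{\tau}  \ottsym{[}  \overline{\sigma}  \ottsym{/}  \overline{\alpha}  \ottsym{]}  \ottsym{[}  \tau  \ottsym{/}  \alpha  \ottsym{]} \, \ottsym{=} \, \ottmv{F} \, \overline{\tau}  \ottsym{[}  \overline{\sigma}  \ottsym{[}  \tau  \ottsym{/}  \alpha  \ottsym{]}  \ottsym{/}  \overline{\alpha}  \ottsym{]}$ and $\tau_{{\mathrm{0}}}  \ottsym{[}  \overline{\sigma}  \ottsym{/}  \overline{\alpha}  \ottsym{,}  \overline{\ottnt{q} }  \ottsym{/}  \overline{\chi}  \ottsym{]}  \ottsym{[}  \tau  \ottsym{/}  \alpha  \ottsym{]} \, \ottsym{=} \, \tau_{{\mathrm{0}}}  \ottsym{[}  \overline{\sigma}  \ottsym{[}  \tau  \ottsym{/}  \alpha  \ottsym{]}  \ottsym{/}  \overline{\alpha}  \ottsym{,}  \overline{\ottnt{q} }  \ottsym{[}  \tau  \ottsym{/}  \alpha  \ottsym{]}  \ottsym{/}  \overline{\chi}  \ottsym{]}$, so that the substituted judgment has precisely the shape produced by reapplying \rul{C\_Axiom} with instantiation $\overline{\sigma}  \ottsym{[}  \tau  \ottsym{/}  \alpha  \ottsym{]}$ and $\overline{\ottnt{q} }  \ottsym{[}  \tau  \ottsym{/}  \alpha  \ottsym{]}$. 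The three premises are then dispatched independently: each $ \Gamma  \ottsym{,}  \Gamma'  \ottsym{[}  \tau  \ottsym{/}  \alpha  \ottsym{]}   \vdash   \sigma_{\ottmv{j}}  \ottsym{[}  \tau  \ottsym{/}  \alpha  \ottsym{]}  \ \mathsf{type} $ by \pref{lem:ty-ty-subst}(1); the resolution premise by the second induction hypothesis (again using that $ \ottsym{[}  \tau  \ottsym{/}  \alpha  \ottsym{]}$ commutes with $ \ottsym{[}  \overline{\sigma}  \ottsym{/}  \overline{\alpha}  \ottsym{]}$ on $\overline{\chi}$); and each $ \mathsf{no\_conflict} ( \overline{\ottnt{E} } ,  \ottmv{i} ,  \overline{\sigma} ,  \ottmv{n} ) $ by \pref{lem:ty-nc-subst}, which is exactly the auxiliary lemma engineered for this step. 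Context validity again comes from \pref{lem:ty-ty-subst}(3).

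Finally, for part 2 the resolution judgment has only \rul{A\_Nil} and \rul{A\_Cons}. \rul{A\_Nil} follows immediately from \pref{lem:ty-ty-subst}(3). For \rul{A\_Cons} I would rename the evaluation assumption's bound variable (say $\alpha'$) away from $\alpha$ and $\mathit{fv}\! \, \ottsym{(}  \tau  \ottsym{)}$, apply \pref{lem:ty-ty-subst}(1) to the $ \Gamma   \vdash   \sigma  \ \mathsf{type} $ premise, use the first induction hypothesis on the coercion premise and the second on the tail, and rewrite $\ottsym{(}  \overline{\chi}  \ottsym{[}  \sigma  \ottsym{/}  \alpha'  \ottsym{]}  \ottsym{)}  \ottsym{[}  \tau  \ottsym{/}  \alpha  \ottsym{]}$ into $\ottsym{(}  \overline{\chi}  \ottsym{[}  \tau  \ottsym{/}  \alpha  \ottsym{]}  \ottsym{)}  \ottsym{[}  \sigma  \ottsym{[}  \tau  \ottsym{/}  \alpha  \ottsym{]}  \ottsym{/}  \alpha'  \ottsym{]}$ by substitution composition. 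Apart from the \rul{C\_Axiom} bookkeeping of the three interacting substitutions, every case is a direct transcription of the corresponding case of \pref{lem:ty-ty-subst}, so I do not anticipate any further difficulty.
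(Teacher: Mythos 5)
Your proposal is correct and follows essentially the same route as the paper, which proves this lemma by mutual induction appealing to \pref{lem:ty-ty-subst}, \pref{lem:ty-nc-subst}, and \pref{lem:scoping}; you have simply spelled out the case analysis (in particular the \rul{C\_Axiom} bookkeeping and the \rul{C\_Var} split) that the paper leaves implicit. No gaps.
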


\begin{proof}
By mutual induction, appealing to \pref{lem:ty-ty-subst},
\pref{lem:ty-nc-subst},
and
\pref{lem:scoping}.
\end{proof}

\begin{lemma}[Type substitution]
\label{lem:ty-subst}
If $\Gamma  \ottsym{,}  \alpha  \ottsym{,}  \Gamma'  \vdash  \ottnt{e}  \ottsym{:}  \sigma$ and $ \Gamma   \vdash   \tau  \ \mathsf{type} $, then
$\Gamma  \ottsym{,}  \Gamma'  \ottsym{[}  \tau  \ottsym{/}  \alpha  \ottsym{]}  \vdash  \ottnt{e}  \ottsym{[}  \tau  \ottsym{/}  \alpha  \ottsym{]}  \ottsym{:}  \sigma  \ottsym{[}  \tau  \ottsym{/}  \alpha  \ottsym{]}$.
\end{lemma}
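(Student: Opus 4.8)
The plan is to proceed by induction on the derivation of $\Gamma  \ottsym{,}  \alpha  \ottsym{,}  \Gamma'  \vdash  \ottnt{e}  \ottsym{:}  \sigma$, case-analyzing on the final typing rule. In every case I apply the induction hypothesis to the subderivations over expressions and discharge the type-, proposition-, and coercion-valued side conditions using the substitution lemmas already proved at those judgments: \pref{lem:ty-ty-subst} for $\ \mathsf{type} $ and $\ \mathsf{prop} $ premises (and, in part~3, for context-validity premises), and \pref{lem:ty-co-subst} for premises of the form $\Gamma  \vdash  \gamma  \ottsym{:}  \phi$ and $\Gamma  \vdash  \overline{\ottnt{q} }  \ottsym{:}  \overline{\chi}$. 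Throughout I use that $[ \tau / \alpha ]$ acts homomorphically on expression and type syntax, so that the substituted expression and its type reassemble under the same rule, and I invoke \pref{lem:weakening} wherever the order of bindings must be adjusted to match the shape demanded by the cited lemmas.

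I would first dispatch the leaf rules. For \rul{E\_Var}, I split on whether the binding $ \ottmv{x} {:} \sigma $ lies in $\Gamma$ or in $\Gamma'$. If it lies in $\Gamma$, then context regularity (\pref{lem:ctx-reg}) together with \pref{lem:ctx-form} gives a well-formed prefix in which $\sigma$ is formed without $\alpha$ in scope, so \pref{lem:scoping} yields $\alpha  \not\in  \mathit{fv}\! \, \ottsym{(}  \sigma  \ottsym{)}$; hence $\sigma  \ottsym{[}  \tau  \ottsym{/}  \alpha  \ottsym{]} \, \ottsym{=} \, \sigma$ and the binding survives untouched in $\Gamma  \ottsym{,}  \Gamma'  \ottsym{[}  \tau  \ottsym{/}  \alpha  \ottsym{]}$. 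If it lies in $\Gamma'$, the substituted binding $ \ottmv{x} {:} \sigma  \ottsym{[}  \tau  \ottsym{/}  \alpha  \ottsym{]} $ appears in $\Gamma'  \ottsym{[}  \tau  \ottsym{/}  \alpha  \ottsym{]}$. Either way $\ottnt{e}  \ottsym{[}  \tau  \ottsym{/}  \alpha  \ottsym{]} \, \ottsym{=} \, \ottmv{x}$ and the context premise follows from \pref{lem:ty-ty-subst}(3). Rule \rul{E\_Const} is easier still, as its result type $\ottnt{H}$ is closed and inert under the substitution.

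For the abstraction rules \rul{E\_Lam}, \rul{E\_TLam}, \rul{E\_CLam} and the application rules \rul{E\_App}, \rul{E\_TApp}, \rul{E\_CApp}, as well as \rul{E\_Cast}, I apply the induction hypothesis with $\Gamma'$ extended by the freshly bound binder (renamed, where necessary, to be fresh for $\tau$ and distinct from $\alpha$), then rebuild with the same rule. The coercion-carrying rules additionally route their coercion premises through \pref{lem:ty-co-subst} and their residual $\ \mathsf{type} $ premises through \pref{lem:ty-ty-subst}. The one place needing genuine bookkeeping is \rul{E\_TApp}, whose conclusion type $\tau'  \ottsym{[}  \sigma'  \ottsym{/}  \beta  \ottsym{]}$ must be rewritten as $\ottsym{(}  \tau'  \ottsym{[}  \tau  \ottsym{/}  \alpha  \ottsym{]}  \ottsym{)}  \ottsym{[}  \sigma'  \ottsym{[}  \tau  \ottsym{/}  \alpha  \ottsym{]}  \ottsym{/}  \beta  \ottsym{]}$; this is the standard substitution-commutation identity, valid precisely because $\beta$ is chosen fresh for $\tau$ and $\alpha$.

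The case I expect to require the most care is \rul{E\_Assume}. Here I extend $\Gamma'$ by $\beta  \ottsym{,}   \ottmv{c} {:} \ottmv{F} \, \overline{\tau'}  \sim  \beta $ (with $\beta$ fresh for $\tau$ and $\alpha$), apply the induction hypothesis, and push the substitution into the type arguments $\overline{\tau'}$ using \pref{lem:ty-ty-subst}; the totality premise $\ottmv{F} \,  \mathop{ :_{\top} }  \, \ottmv{n}  \in  \Sigma$ is unaffected, as it concerns only the ambient signature. The delicate point is re-establishing the skolem-escape side condition: from $\beta  \not\in  \mathit{fv}\! \, \ottsym{(}  \sigma'  \ottsym{)}$ and the freshness of $\beta$ for $\tau$, I must conclude $\beta  \not\in  \mathit{fv}\! \, \ottsym{(}  \sigma'  \ottsym{[}  \tau  \ottsym{/}  \alpha  \ottsym{]}  \ottsym{)}$, which holds because substituting $\tau$ — which does not mention $\beta$ — for $\alpha$ cannot introduce $\beta$. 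With that in hand the rule reassembles, completing the induction.
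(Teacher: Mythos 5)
Your proposal is correct and follows essentially the same route as the paper, whose proof of this lemma is simply ``by induction, appealing to \pref{lem:ty-ty-subst}, \pref{lem:ty-co-subst}, and \pref{lem:scoping}.'' Your case analysis (the scoping argument in \rul{E\_Var}, the substitution-commutation identity in \rul{E\_TApp}, and the preservation of the skolem-escape check in \rul{E\_Assume}) is a faithful and accurate expansion of the details the paper leaves implicit.
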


\begin{proof}
By induction, appealing to \pref{lem:ty-ty-subst}, \pref{lem:ty-co-subst},
and \pref{lem:scoping}.
\end{proof}

\begin{lemma}[Coercion substitution in coercions]
\label{lem:co-co-subst}
Assume $\Gamma  \vdash  \gamma'  \ottsym{:}  \phi'$.
\begin{enumerate}
\item If $\Gamma  \ottsym{,}   \ottmv{c} {:} \phi'   \ottsym{,}  \Gamma'  \vdash  \gamma  \ottsym{:}  \phi$, then $\Gamma  \ottsym{,}  \Gamma'  \vdash  \gamma  \ottsym{[}  \gamma'  \ottsym{/}  \ottmv{c}  \ottsym{]}  \ottsym{:}  \phi$.
\item If $\Gamma  \ottsym{,}   \ottmv{c} {:} \phi'   \ottsym{,}  \Gamma'  \vdash  \overline{\ottnt{q} }  \ottsym{:}  \overline{\chi}$, then $\Gamma  \ottsym{,}  \Gamma'  \vdash  \overline{\ottnt{q} }  \ottsym{[}  \gamma'  \ottsym{/}  \ottmv{c}  \ottsym{]}  \ottsym{:}  \overline{\chi}$.
\end{enumerate}
\end{lemma}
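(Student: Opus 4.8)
The plan is to proceed by simultaneous mutual induction on the two derivations $\Gamma, \ottmv{c}{:}\phi', \Gamma' \vdash \gamma : \phi$ and $\Gamma, \ottmv{c}{:}\phi', \Gamma' \vdash \overline{\ottnt{q}} : \overline\chi$, generalizing over the suffix $\Gamma'$ so that the induction hypotheses remain applicable under the extended contexts introduced by the binder rules. The organizing observation is that the substitution $[\gamma'/\ottmv{c}]$ replaces only the coercion variable $\ottmv{c}$, and coercion variables never occur inside types or propositions; hence $\tau[\gamma'/\ottmv{c}] = \tau$ and $\phi[\gamma'/\ottmv{c}] = \phi$ for every type and proposition, and the type component of each evaluation resolution $\ottnt{q}$ is left untouched. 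This explains why the conclusion keeps $\phi$ and $\overline\chi$ unchanged, and it tells us exactly which premises the induction hypotheses must discharge (the coercion and resolution subderivations) and which can be handled by context manipulation alone (the $\ \mathsf{type}$ and $\ \mathsf{prop}$ premises).

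Most coercion rules are congruences (\rul{C\_Sym}, \rul{C\_Trans}, \rul{C\_App}, \rul{C\_Fun}, \rul{C\_Forall}, \rul{C\_Qual}, \rul{C\_Fam}, \rul{C\_Nth}, \rul{C\_NthArrow}, \rul{C\_NthQual}, \rul{C\_Inst}), so I would apply the appropriate induction hypothesis to each coercion subderivation and reassemble the rule; for side premises of the form $\Gamma, \ottmv{c}{:}\phi', \Gamma' \vdash \tau \ \mathsf{type}$ I would invoke Context strengthening (Lemma~\ref{lem:strengthening}) to drop the $\ottmv{c}{:}\phi'$ binding, which is legitimate because $\ottmv{c}$, being a coercion variable, is not free in any type (and Scoping, Lemma~\ref{lem:scoping}, confirms the freshness side condition). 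Rule \rul{C\_Refl} is handled entirely by strengthening its single $\ \mathsf{type}$ premise. The two cases that genuinely exercise the mutual structure are \rul{C\_Axiom}, whose $\Gamma \vdash \overline{\ottnt{q}} : \overline\chi[\cdots]$ premise is discharged by the second induction hypothesis while its $\sigma_j \ \mathsf{type}$ premises and its $\mathsf{no\_conflict}$ premise (which mention only types and equations) are unaffected, and \rul{A\_Cons}, which combines both induction hypotheses. For binder rules such as \rul{C\_Forall} I would alpha-rename the bound type variable to keep it fresh for $\Gamma'$ and $\gamma'$; Scoping guarantees that $\gamma'$ mentions no such fresh variable, so no capture can occur.

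The crux is the \rul{C\_Var} case. Here the derivation concludes $\Gamma, \ottmv{c}{:}\phi', \Gamma' \vdash d : \phi$ from $d{:}\phi \in \Gamma, \ottmv{c}{:}\phi', \Gamma'$ for some coercion variable $d$, and I split on whether $d$ is $\ottmv{c}$. If $d = \ottmv{c}$, then $\phi = \phi'$ and $d[\gamma'/\ottmv{c}] = \gamma'$, so the goal becomes $\Gamma, \Gamma' \vdash \gamma' : \phi'$, which follows from the standing hypothesis $\Gamma \vdash \gamma' : \phi'$ by Weakening (Lemma~\ref{lem:weakening}). If $d \neq \ottmv{c}$, then $d{:}\phi$ lies in $\Gamma$ or $\Gamma'$, the substitution leaves $d$ unchanged, and I re-derive the judgment by \rul{C\_Var}. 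Both subcases need the well-formed context $\vdash \Gamma, \Gamma' \ \mathsf{ctx}$, which I would obtain by first recovering $\vdash \Gamma, \ottmv{c}{:}\phi', \Gamma' \ \mathsf{ctx}$ from the original derivation via Context regularity (Lemma~\ref{lem:ctx-reg}) and then striking the coercion binding via Context strengthening (Lemma~\ref{lem:strengthening}). This bookkeeping — threading context well-formedness through the variable case and verifying the freshness conditions that strengthening requires — is the only real obstacle; once the ``substitution does not touch types'' observation is in place, the remaining cases are mechanical.
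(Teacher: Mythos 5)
Your proposal is correct and matches the paper's proof, which is stated simply as ``by mutual induction, with the usual reasoning in the variable case, appealing to weakening and strengthening''; you have merely spelled out the details that the paper leaves implicit (the key observation that coercion variables cannot occur in types or propositions, the case split in \rul{C\_Var}, and the recovery of context well-formedness via regularity and strengthening). No gaps.
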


\begin{proof}
By mutual induction, with the usual reasoning in the variable case, appealing to
\pref{lem:weakening} and \pref{lem:strengthening}.
\end{proof}

\begin{lemma}[Coercion substitution]
\label{lem:co-subst}
If $\Gamma  \ottsym{,}   \ottmv{c} {:} \phi   \ottsym{,}  \Gamma'  \vdash  \ottnt{e}  \ottsym{:}  \tau$ and $\Gamma  \vdash  \gamma  \ottsym{:}  \phi$, then $\Gamma  \ottsym{,}  \Gamma'  \vdash  \ottnt{e}  \ottsym{[}  \gamma  \ottsym{/}  \ottmv{c}  \ottsym{]}  \ottsym{:}  \tau$.
\end{lemma}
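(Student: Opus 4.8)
The plan is to proceed by induction on the typing derivation $\Gamma, c{:}\phi, \Gamma' \vdash e : \tau$, dispatching each expression-typing rule in turn. The guiding observation is that $e[\gamma/c]$ substitutes a coercion for a coercion variable, and coercion variables never occur in types; hence the substitution leaves every type annotation, every proposition, and every result type syntactically unchanged. Consequently the only rules that do real work are those whose expression carries a coercion subterm---namely \rul{E\_CApp} and \rul{E\_Cast}---while everywhere else the substitution either disappears or merely pushes under a binder.

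First I would handle the leaf cases. For \rul{E\_Var} and \rul{E\_Const} we have $e[\gamma/c] = e$, so it suffices to re-derive the same judgment in the smaller context: the term/type binding survives in $\Gamma,\Gamma'$, and I obtain $\vdash \Gamma,\Gamma' \ \mathsf{ctx}$ from the premise by \pref{lem:strengthening}, whose side condition $\mathit{dom}(c{:}\phi) \not\in \mathit{fv}(\Gamma')$ holds because coercion variables do not occur in types. Next come the congruence and binder cases \rul{E\_Lam}, \rul{E\_App}, \rul{E\_TLam}, \rul{E\_TApp}, and \rul{E\_CLam}: each follows by applying the induction hypothesis to its premises, extending $\Gamma'$ by the bound variable wherever a binder is introduced. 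For the binding rules I would $\alpha$-rename the bound variable to be fresh for $\gamma$, justified by \pref{lem:scoping} (which gives $\mathit{fv}(\gamma) \subseteq \mathit{dom}(\Gamma)$), and invoke \pref{lem:weakening} where a reordering is needed. In \rul{E\_TApp} the result type is $\tau[\sigma/\alpha]$; here coercion substitution and type substitution act on disjoint syntactic categories and therefore commute, so no obstruction arises.

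The genuinely non-trivial cases are \rul{E\_CApp} and \rul{E\_Cast}, where a coercion subterm $\gamma_0$ is itself substituted. For \rul{E\_CApp}, from $\Gamma, c{:}\phi, \Gamma' \vdash e_0 : \phi_0 \Rightarrow \tau$ and $\Gamma, c{:}\phi, \Gamma' \vdash \gamma_0 : \phi_0$ I would apply the induction hypothesis to $e_0$ and apply \pref{lem:co-co-subst} (instantiated with the given $\gamma$ for the variable $c$) to $\gamma_0$, obtaining $\Gamma,\Gamma' \vdash \gamma_0[\gamma/c] : \phi_0$, and then recombine with \rul{E\_CApp}; the case \rul{E\_Cast} is analogous, additionally carrying along the unchanged well-formedness of $\tau_2$. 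Finally, \rul{E\_Assume} is a binder case whose evaluation assumption $\chi = (\alpha \pipe c' : F\,\overline{\tau} \sim \alpha)$ contains no free coercion variable, so the substitution leaves $\chi$ alone; after applying the induction hypothesis under the extended context $\Gamma',\alpha, c'{:}F\,\overline{\tau} \sim \alpha$, the remaining premises (totality of $F$, well-formedness of the $\overline{\tau}$, and the skolem-escape condition $\alpha \not\in \mathit{fv}(\sigma)$) survive unchanged.

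I expect no serious obstacle: the only substantive step is the appeal to \pref{lem:co-co-subst} in the coercion-carrying cases, and the remainder is bookkeeping about binders and context validity handled uniformly by \pref{lem:weakening}, \pref{lem:strengthening}, and \pref{lem:scoping}. The one point demanding care is maintaining the freshness invariants when pushing under the $\Lambda$, $\lambda c'$, and $\ottkw{assume}$ binders, so that the substituted coercion $\gamma$ neither captures nor is captured by a bound variable.
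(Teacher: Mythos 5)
Your proof is correct and follows essentially the same route as the paper, which proves this lemma ``by induction, appealing to \pref{lem:co-co-subst}, \pref{lem:scoping}, and \pref{lem:strengthening}.'' The only cases with content are exactly the ones you isolate (the coercion-carrying rules, discharged via \pref{lem:co-co-subst}), and your observation that coercion variables do not occur in types is what makes the result type $\tau$ survive unchanged.
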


\begin{proof}
By induction, appealing to \pref{lem:co-co-subst}, \pref{lem:scoping},
and \pref{lem:strengthening}.
\end{proof}

\begin{lemma}[Resolution substitution]
\label{lem:res-subst}
If $\Gamma  \ottsym{,}  \alpha  \ottsym{,}   \ottmv{c} {:} \ottmv{F} \, \overline{\tau}  \sim  \alpha   \vdash  \ottnt{e}  \ottsym{:}  \sigma$ and $\Gamma  \vdash  \ottnt{q}  \ottsym{:}  \chi$, then
$\Gamma  \vdash  \ottnt{e}  \ottsym{[}  \ottnt{q}  \ottsym{/}  \chi  \ottsym{]}  \ottsym{:}  \sigma  \ottsym{[}  \ottnt{q}  \ottsym{/}  \chi  \ottsym{]}$.
\end{lemma}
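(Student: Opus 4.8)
The plan is to reduce this lemma directly to the two substitution lemmas already in hand, \pref{lem:ty-subst} (type substitution) and \pref{lem:co-subst} (coercion substitution), by observing that the single resolution substitution $\ottnt{q}  \ottsym{/}  \chi$ factors into a type substitution followed by a coercion substitution. Writing $\ottnt{q} \, \ottsym{=} \,  ( \sigma'  \pipe  \gamma ) $ and $\chi \, \ottsym{=} \,  (  \alpha  \pipe  \ottmv{c}  :  \ottmv{F} \, \overline{\tau}   \sim   \alpha  ) $, the definition of evaluation assumption substitution gives $\ottnt{q}  \ottsym{/}  \chi \, \ottsym{=} \, \sigma'  \ottsym{/}  \alpha  \ottsym{,}  \gamma  \ottsym{/}  \ottmv{c}$. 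So this statement is not really inductive on its own; the real work has been delegated to the prior lemmas, and what remains is to line up their hypotheses.

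First I would invert the hypothesis $\Gamma  \vdash  \ottnt{q}  \ottsym{:}  \chi$. As a singleton instance of the $\Gamma  \vdash  \overline{\ottnt{q} }  \ottsym{:}  \overline{\chi}$ judgment, the only applicable rule is \rul{A\_Cons} (with an empty tail discharged by \rul{A\_Nil}), whose premises supply $ \Gamma   \vdash   \sigma'  \ \mathsf{type} $ and $\Gamma  \vdash  \gamma  \ottsym{:}  \ottmv{F} \, \overline{\tau}  \sim  \sigma'$. I would also record two scoping facts: by the shape of \rul{X\_Cons} the argument types $\overline{\tau}$ are checked in $\Gamma$ alone, so by \pref{lem:scoping} we have $\alpha  \not\in  \mathit{fv}\! \, \ottsym{(}  \overline{\tau}  \ottsym{)}$; and since $\mathit{fv}\! \, \ottsym{(}  \gamma  \ottsym{)}  \subseteq  \mathit{dom}\! \, \ottsym{(}  \Gamma  \ottsym{)}$ by the same lemma, $\alpha  \not\in  \mathit{fv}\! \, \ottsym{(}  \gamma  \ottsym{)}$.

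Next I would apply \pref{lem:ty-subst} to $\Gamma  \ottsym{,}  \alpha  \ottsym{,}   \ottmv{c} {:} \ottmv{F} \, \overline{\tau}  \sim  \alpha   \vdash  \ottnt{e}  \ottsym{:}  \sigma$, substituting $\sigma'$ for $\alpha$, with the trailing context taken to be $ \ottmv{c} {:} \ottmv{F} \, \overline{\tau}  \sim  \alpha $. Because $\alpha  \not\in  \mathit{fv}\! \, \ottsym{(}  \overline{\tau}  \ottsym{)}$, the substituted binding is $ \ottmv{c} {:} \ottmv{F} \, \overline{\tau}  \sim  \sigma' $, so I obtain $\Gamma  \ottsym{,}   \ottmv{c} {:} \ottmv{F} \, \overline{\tau}  \sim  \sigma'   \vdash  \ottnt{e}  \ottsym{[}  \sigma'  \ottsym{/}  \alpha  \ottsym{]}  \ottsym{:}  \sigma  \ottsym{[}  \sigma'  \ottsym{/}  \alpha  \ottsym{]}$. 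I would then apply \pref{lem:co-subst} with the coercion $\gamma  \ottsym{:}  \ottmv{F} \, \overline{\tau}  \sim  \sigma'$ to discharge $\ottmv{c}$, giving $\Gamma  \vdash  \ottnt{e}  \ottsym{[}  \sigma'  \ottsym{/}  \alpha  \ottsym{]}  \ottsym{[}  \gamma  \ottsym{/}  \ottmv{c}  \ottsym{]}  \ottsym{:}  \sigma  \ottsym{[}  \sigma'  \ottsym{/}  \alpha  \ottsym{]}$.

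The final step is the bookkeeping that identifies this with the stated conclusion, which is also the only mildly delicate point and the one I would be most careful about. Since $\sigma'$ contains no coercion variables and, by the scoping facts above, $\gamma$ does not mention $\alpha$, the sequential substitution $ \ottsym{[}  \sigma'  \ottsym{/}  \alpha  \ottsym{]}  \ottsym{[}  \gamma  \ottsym{/}  \ottmv{c}  \ottsym{]}$ induces no capture and coincides with the simultaneous substitution $ \ottsym{[}  \ottnt{q}  \ottsym{/}  \chi  \ottsym{]}$; and because types never mention coercion variables, $\sigma  \ottsym{[}  \sigma'  \ottsym{/}  \alpha  \ottsym{]} \, \ottsym{=} \, \sigma  \ottsym{[}  \ottnt{q}  \ottsym{/}  \chi  \ottsym{]}$. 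Rewriting by these two equalities yields exactly $\Gamma  \vdash  \ottnt{e}  \ottsym{[}  \ottnt{q}  \ottsym{/}  \chi  \ottsym{]}  \ottsym{:}  \sigma  \ottsym{[}  \ottnt{q}  \ottsym{/}  \chi  \ottsym{]}$. All of the side conditions reduce to \pref{lem:scoping} and the well-formedness of $\chi$ forced by \rul{X\_Cons}, so I anticipate no genuine obstacle beyond this factoring.
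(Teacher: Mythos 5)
Your proposal is correct and matches the paper's proof, which is stated simply as a corollary of Lemma~\ref{lem:ty-subst} and Lemma~\ref{lem:co-subst}; your factoring of $\ottnt{q}  \ottsym{/}  \chi$ into a type substitution followed by a coercion substitution, with the scoping side conditions discharged via Lemma~\ref{lem:scoping}, is exactly the intended argument spelled out in detail. (One small slip: the inversion premises you use come from \rul{A\_Cons}, not \rul{X\_Cons}, but this does not affect the argument.)
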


\begin{proof}
Corollary of \pref{lem:ty-subst} and \pref{lem:co-subst}.
\end{proof}

\begin{lemma}[Substitution]
\label{lem:subst}
If $\Gamma  \ottsym{,}   \ottmv{x} {:} \tau'   \ottsym{,}  \Gamma'  \vdash  \ottnt{e}  \ottsym{:}  \tau$ and $\Gamma  \vdash  \ottnt{e'}  \ottsym{:}  \tau'$, then
$\Gamma  \ottsym{,}  \Gamma'  \vdash  \ottnt{e}  \ottsym{[}  \ottnt{e'}  \ottsym{/}  \ottmv{x}  \ottsym{]}  \ottsym{:}  \tau$.
\end{lemma}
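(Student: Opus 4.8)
The plan is to prove this by structural induction on the typing derivation $\Gamma, x{:}\tau', \Gamma' \vdash e : \tau$, in the same style as the preceding substitution lemmas (\pref{lem:co-subst}, \pref{lem:co-co-subst}). The observation that keeps the argument routine is that $[e'/x]$ is a term-for-term substitution: since term variables never occur inside types, propositions, or coercions (inspect the grammars for $\tau$, $\phi$, and $\gamma$, none of which mention a term variable), the substitution acts only on expression subterms and leaves every embedded type annotation and coercion literally unchanged. Consequently none of the type- or coercion-level substitution lemmas are needed; the only machinery I expect to use is weakening (\pref{lem:weakening}), context strengthening (\pref{lem:strengthening}), scoping (\pref{lem:scoping}), and context regularity (\pref{lem:ctx-reg}).

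The one genuinely load-bearing case is \rul{E\_Var}. If $e = y$ with $y \neq x$, the binding $x{:}\tau'$ is irrelevant to the lookup, so I recover the typing in the shortened context directly, using \pref{lem:strengthening} to pass from $\vdash \Gamma, x{:}\tau', \Gamma' \ \mathsf{ctx}$ to $\vdash \Gamma, \Gamma' \ \mathsf{ctx}$ (valid because $x$ is a term variable, hence $x \notin \mathit{fv}(\Gamma')$, by \pref{lem:scoping}). If $e = x$, then $e[e'/x] = e'$ and $\tau = \tau'$, so the goal $\Gamma, \Gamma' \vdash e' : \tau'$ follows by weakening the hypothesis $\Gamma \vdash e' : \tau'$ with \pref{lem:weakening}, again after establishing $\vdash \Gamma, \Gamma' \ \mathsf{ctx}$ via \pref{lem:ctx-reg} and \pref{lem:strengthening}. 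The \rul{E\_Const} case is handled identically, since substitution has no effect on a constant.

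The remaining cases are mechanical. The congruence rules \rul{E\_App}, \rul{E\_TApp}, \rul{E\_CApp}, and \rul{E\_Cast} follow by applying the induction hypotheses to the expression subderivations and reassembling; wherever such a rule carries an auxiliary well-formedness premise on a type (e.g.\ $\Gamma, x{:}\tau', \Gamma' \vdash \tau_2 \ \mathsf{type}$ in \rul{E\_Cast}, or on $\sigma$ in \rul{E\_TApp}), that premise is transported to $\Gamma, \Gamma'$ by \pref{lem:strengthening}, which applies precisely because the removed binding is for a term variable not occurring in the type. The binding rules \rul{E\_Lam}, \rul{E\_TLam}, \rul{E\_CLam}, and \rul{E\_Assume} are treated by pushing the substitution under the binder: I assume (by the Barendregt convention) that the bound variable is fresh for both $x$ and $\mathit{fv}(e')$, append the new binding to the suffix $\Gamma'$, and apply the induction hypothesis to the subderivation over $\Gamma, x{:}\tau', (\Gamma', \delta)$. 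In \rul{E\_Assume} the skolem-escape condition $\alpha \notin \mathit{fv}(\sigma)$ is preserved untouched, as $\sigma$ is unaffected by $[e'/x]$.

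The main obstacle is not any single case but the bookkeeping around context validity: at the leaves I must repeatedly produce $\vdash \Gamma, \Gamma' \ \mathsf{ctx}$ from $\vdash \Gamma, x{:}\tau', \Gamma' \ \mathsf{ctx}$, which hinges on the fact---justified once, via \pref{lem:scoping}---that a term variable is never free in a type, proposition, or later context binding, so that \pref{lem:strengthening} is applicable. Once that observation is in place, every case reduces to a short appeal to weakening, strengthening, and the induction hypothesis, and the lemma follows.
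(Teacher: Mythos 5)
Your proposal is correct and matches the paper's proof, which reads in full: ``By induction, with the usual reasoning in the variable case, appealing to weakening and strengthening.'' You have simply elaborated the same argument---structural induction with the interesting work in \rul{E\_Var}, discharged by \pref{lem:weakening} and \pref{lem:strengthening}---and your observation that $[\ottnt{e'}/\ottmv{x}]$ cannot touch types or coercions is exactly why the remaining cases are routine.
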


\begin{proof}
By induction, with the usual reasoning in the variable case, appealing to
\pref{lem:weakening} and \pref{lem:strengthening}.
\end{proof}

\subsection{Regularity II}

\begin{lemma}[Context types]
\label{lem:ctx-type}
If $ \ottmv{x} {:} \tau   \in  \Gamma$ and $ \vdash   \Gamma  \ \mathsf{ctx} $, then
$ \Gamma   \vdash   \tau  \ \mathsf{type} $.
\end{lemma}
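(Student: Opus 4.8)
The plan is to proceed by induction on the derivation of $\vdash \Gamma \ \mathsf{ctx}$, equivalently on the structure of $\Gamma$, casing on the final binding. The two workhorses will be weakening (\pref{lem:weakening}), used to lift a type-validity judgment from a prefix context up to the full context, and context formation (\pref{lem:ctx-form}), used to extract validity of a prefix from validity of $\Gamma$. The assumption $\vdash \Gamma \ \mathsf{ctx}$ is exactly what discharges the side condition of weakening at every step.

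In the base case \rul{G\_Nil} we have $\Gamma = \emptyset$, and since no binding $\ottmv{x} {:} \tau$ occurs in $\emptyset$, the case holds vacuously. In each inductive case $\Gamma = \Gamma_0 \ottsym{,} \delta$, and I would distinguish whether the binding of interest is $\delta$ itself or lies in $\Gamma_0$. When $\delta$ is a type-variable binding $\alpha$ (\rul{G\_TyVar}) or a coercion-variable binding $\ottmv{c} {:} \phi$ (\rul{G\_CoVar}), the binding $\ottmv{x} {:} \tau$ must occur in $\Gamma_0$. From $\vdash \Gamma \ \mathsf{ctx}$, \pref{lem:ctx-form} gives $\vdash \Gamma_0 \ \mathsf{ctx}$, so the induction hypothesis yields $\Gamma_0 \vdash \tau \ \mathsf{type}$; since $\Gamma_0 \subseteq \Gamma$ and $\vdash \Gamma \ \mathsf{ctx}$ by assumption, \pref{lem:weakening} upgrades this to $\Gamma \vdash \tau \ \mathsf{type}$, as required.

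The remaining case is \rul{G\_Var}, where $\delta$ is a term-variable binding $\ottmv{y} {:} \tau'$. If $\ottmv{x} {:} \tau$ occurs earlier in $\Gamma_0$, the argument is identical to the one above, using \pref{lem:ctx-form} and the induction hypothesis followed by weakening. Otherwise the binding of interest is exactly $\delta$, so $\ottmv{y} = \ottmv{x}$ and $\tau' = \tau$; but then the premise of \rul{G\_Var} is precisely $\Gamma_0 \vdash \tau \ \mathsf{type}$, and a single application of \pref{lem:weakening} (again discharging its side condition with the hypothesis $\vdash \Gamma \ \mathsf{ctx}$) gives $\Gamma \vdash \tau \ \mathsf{type}$.

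This lemma is entirely routine, so I do not anticipate a genuine obstacle; the only points demanding attention are the split in the \rul{G\_Var} case between the ``last binding'' and ``earlier binding'' subcases, and the bookkeeping to confirm that weakening is applicable. If one preferred to avoid \pref{lem:ctx-form}, context regularity (\pref{lem:ctx-reg}) could instead supply $\vdash \Gamma_0 \ \mathsf{ctx}$ from the type- or proposition-validity premise of each rule, but routing every case through context formation keeps the argument uniform.
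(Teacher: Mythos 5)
Your proposal is correct and follows essentially the same route as the paper, which proves this by induction on the structure of $\Gamma$ and applies \pref{lem:weakening} at the end to lift the judgment into the full context; your write-up just makes explicit the case split and the appeal to \pref{lem:ctx-form} that the paper leaves implicit.
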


\begin{proof}
By induction on the structure of $\Gamma$, using
\pref{lem:weakening} at the end to fix the context in the
conclusion.
\end{proof}

\begin{lemma}[Context props]
\label{lem:ctx-prop}
If $ \ottmv{c} {:} \phi   \in  \Gamma$ and $ \vdash   \Gamma  \ \mathsf{ctx} $, then
$ \Gamma   \vdash   \phi  \ \mathsf{prop} $.
\end{lemma}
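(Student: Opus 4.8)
The plan is to follow the proof of \pref{lem:ctx-type} (Context types) essentially verbatim, only with coercion bindings in place of term bindings, proceeding by induction on the structure of $\Gamma$ while inverting the derivation of $\vdash \Gamma \ \mathsf{ctx}$. In the base case $\Gamma = \emptyset$ the hypothesis $\ottmv{c}{:}\phi \in \emptyset$ is vacuously false, so there is nothing to prove. In the inductive step I would write $\Gamma = \Gamma_0, \delta$ and case-split on the rule that introduced the outermost binding $\delta$, namely one of \rul{G\_TyVar}, \rul{G\_CoVar}, or \rul{G\_Var}.

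For every binding other than the one introducing $\ottmv{c}$ itself---those added by \rul{G\_TyVar}, by \rul{G\_Var}, or by \rul{G\_CoVar} binding a coercion variable distinct from $\ottmv{c}$---the membership $\ottmv{c}{:}\phi \in \Gamma$ forces the binding to lie in the prefix $\Gamma_0$. In each such case I would first recover $\vdash \Gamma_0 \ \mathsf{ctx}$ from \pref{lem:ctx-form} (a prefix of a well-formed context is well-formed), then apply the induction hypothesis to obtain $\Gamma_0 \vdash \phi \ \mathsf{prop}$, and finally invoke \pref{lem:weakening} (part~2, for propositions), using $\Gamma_0 \subseteq \Gamma$ together with the hypothesis $\vdash \Gamma \ \mathsf{ctx}$, to conclude $\Gamma \vdash \phi \ \mathsf{prop}$. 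The only genuinely distinct case is when $\delta$ is exactly the binding of $\ottmv{c}$, so that the derivation ends in \rul{G\_CoVar} with $\delta = \ottmv{c}{:}\phi$; here the induction hypothesis is not needed at all, since the premise of \rul{G\_CoVar} already supplies $\Gamma_0 \vdash \phi \ \mathsf{prop}$, and a single application of \pref{lem:weakening} re-extends the context to give $\Gamma \vdash \phi \ \mathsf{prop}$.

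I expect no real obstacle: this is a routine structural induction and is the precise coercion-variable analogue of \pref{lem:ctx-type}. The only points that demand any care are the uniform appeal to \pref{lem:ctx-form} to extract well-formedness of the prefix $\Gamma_0$ before invoking the induction hypothesis (needed because the premises of \rul{G\_CoVar} and \rul{G\_Var} are \textsf{prop} and \textsf{type} judgments rather than \textsf{ctx} judgments), and the systematic use of \pref{lem:weakening} to restore the full context $\Gamma$ in the conclusion, since the induction hypothesis is stated over the shorter prefix.
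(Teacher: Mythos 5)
Your proposal is correct and matches the paper's proof, which simply says ``similar to the previous proof'' --- i.e., the same structural induction on $\Gamma$ as for \pref{lem:ctx-type}, using \pref{lem:weakening} at the end to restore the full context. Your explicit appeal to \pref{lem:ctx-form} to recover well-formedness of the prefix before invoking the induction hypothesis is exactly the right bookkeeping that the paper leaves implicit.
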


\begin{proof}
Similar to previous proof.
\end{proof}

\begin{assumption}[Constant types]
\label{assn:const-type}
If $\ottmv{K}  \ottsym{:}  \ottnt{H} \, \overline{\tau}$, then $ \emptyset   \vdash   \ottnt{H} \, \overline{\tau}  \ \mathsf{type} $.
\end{assumption}

\begin{lemma}[Type regularity]
\label{lem:ty-reg}
\typeregstatement
\end{lemma}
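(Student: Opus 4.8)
The plan is to proceed by structural induction on the derivation of $\Gamma \vdash e : \tau$, with a case analysis on the final typing rule. The leaf rules are immediate: for \rul{E\_Var}, the goal $\Gamma \vdash \tau \ \mathsf{type}$ is exactly \pref{lem:ctx-type} applied to the premises $x{:}\tau \in \Gamma$ and $\vdash \Gamma \ \mathsf{ctx}$; for \rul{E\_Const}, the declared type of $K$ is well-formed in the empty context by \pref{assn:const-type}, and weakening (\pref{lem:weakening}) together with the context-validity premise $\vdash \Gamma \ \mathsf{ctx}$ transports this to $\Gamma$. The elimination rules that bind no variables are handled by applying the induction hypothesis and then inverting the (syntax-directed) type-validity judgment: for \rul{E\_App} the hypothesis gives $\Gamma \vdash \tau_1 \to \tau_2 \ \mathsf{type}$ and inversion on \rul{T\_Arrow} yields $\Gamma \vdash \tau_2 \ \mathsf{type}$, while for \rul{E\_CApp} the hypothesis gives $\Gamma \vdash \phi \Rightarrow \tau \ \mathsf{type}$ and inversion on \rul{T\_Qual} yields $\Gamma \vdash \tau \ \mathsf{type}$. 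The \rul{E\_Cast} case is trivial, since $\Gamma \vdash \tau_2 \ \mathsf{type}$ is literally a premise of the rule.

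The binder-introducing rules require slightly more care, because the induction hypothesis delivers well-formedness in an \emph{extended} context that must be strengthened back. Take \rul{E\_Lam}, with premise $\Gamma, x{:}\tau_1 \vdash e : \tau_2$: the hypothesis gives $\Gamma, x{:}\tau_1 \vdash \tau_2 \ \mathsf{type}$, and since proper types never mention term variables we have $x \notin \mathit{fv}(\tau_2)$, so \pref{lem:strengthening} removes the binding to give $\Gamma \vdash \tau_2 \ \mathsf{type}$. To obtain well-formedness of the domain $\tau_1$, I would feed that hypothesis result into \pref{lem:ctx-reg} to re-derive $\vdash \Gamma, x{:}\tau_1 \ \mathsf{ctx}$, invert \rul{G\_Var} to get $\Gamma \vdash \tau_1 \ \mathsf{type}$, and assemble $\Gamma \vdash \tau_1 \to \tau_2 \ \mathsf{type}$ by \rul{T\_Arrow}. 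The cases \rul{E\_CLam} and \rul{E\_Assume} follow the identical pattern: apply the hypothesis, re-derive context validity via \pref{lem:ctx-reg}, invert the appropriate context-formation rule (\rul{G\_CoVar} for the proposition premise $\Gamma \vdash \phi \ \mathsf{prop}$), and strengthen away the coercion variable (and, for \rul{E\_Assume}, also the existential $\alpha$, using the side condition $\alpha \notin \mathit{fv}(\sigma)$ supplied by the rule) before finishing with \rul{T\_Qual} or by returning $\sigma$ directly. The simplest binder case is \rul{E\_TLam}, where the hypothesis gives $\Gamma, \alpha \vdash \tau \ \mathsf{type}$ and \rul{T\_Forall} concludes immediately.

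The one case with content beyond bookkeeping is \rul{E\_TApp}, whose result type is a substitution instance $\tau[\sigma/\alpha]$: the hypothesis gives $\Gamma \vdash \forall \alpha. \tau \ \mathsf{type}$, inversion on \rul{T\_Forall} yields $\Gamma, \alpha \vdash \tau \ \mathsf{type}$, and the type-substitution lemma \pref{lem:ty-ty-subst} (with the rule's second premise $\Gamma \vdash \sigma \ \mathsf{type}$ and empty trailing context) delivers $\Gamma \vdash \tau[\sigma/\alpha] \ \mathsf{type}$. I expect no genuine obstacle here: the argument is a routine structural induction, and the only recurring subtlety is that, in the binder cases, there is no standalone context-regularity statement for full expression typing, so I must re-establish $\vdash \Gamma \ \mathsf{ctx}$ from the induction hypothesis via \pref{lem:ctx-reg} and then discharge the freshness side conditions of \pref{lem:strengthening}. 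Those side conditions hold automatically, since proper types contain neither term nor coercion variables, leaving $\alpha \notin \mathit{fv}(\sigma)$ in \rul{E\_Assume} as the sole obligation, which the typing rule already provides as a premise.
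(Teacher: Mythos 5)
Your proof is correct and follows essentially the same route as the paper's: induction on the typing derivation, using \pref{lem:ctx-type} for \rul{E\_Var}, \pref{assn:const-type} for \rul{E\_Const}, \pref{lem:ty-ty-subst} for \rul{E\_TApp}, the proposition validity of $\phi$ (via \pref{lem:ctx-prop}/\rul{G\_CoVar}) for \rul{E\_CLam}, \pref{lem:strengthening} for \rul{E\_Assume}, and the explicit $\Gamma \vdash \tau_2 \ \mathsf{type}$ premise for \rul{E\_Cast}. Your write-up merely spells out the bookkeeping (recovering $\vdash \Gamma \ \mathsf{ctx}$ via \pref{lem:ctx-reg} and discharging the strengthening side conditions) that the paper's one-line proof leaves implicit.
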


\begin{proof}
By induction on the derivation of $\Gamma  \vdash  \ottnt{e}  \ottsym{:}  \tau$,
appealing to \pref{lem:ctx-type}, \pref{assn:const-type},
\pref{lem:ty-ty-subst} (in the \rul{E\_TApp} case),
\pref{lem:ctx-prop} (in the \rul{E\_CLam} case),
and \pref{lem:strengthening} (in the \rul{E\_Assume} case).
Note that we need the $ \Gamma   \vdash   \tau_{{\mathrm{2}}}  \ \mathsf{type} $ premise in
the \rul{E\_Cast} case.
\end{proof}

\subsection{Preservation}

\begin{rpttheorem}{thm:preservation}{Preservation}
\presstatement
\end{rpttheorem}

\begin{remark}
Note that this theorem requires an \emph{empty} context, in contrast to many
statements of type preservation. This choice is necessary in order to support
the \rul{S\_Resolve} rule and its use of $\Downarrow$, available only in an empty
context. (See \ifaec \S6.4\else \pref{sec:cfc-totality}\fi of the paper.) If we wanted a statement of type preservation
that worked in non-empty contexts, we would need to make $\Downarrow$ partial
and then assert in the premise to \rul{S\_Resolve} that it succeeds.
\end{remark}

\begin{proof}
By induction on the derivation of $\emptyset  \vdash  \ottnt{e}  \ottsym{:}  \tau$.

\begin{description}
\item[Case \rul{E\_Var}:] Impossible.
\item[Case \rul{E\_Const}:] Impossible.
\item[Case \rul{E\_Lam}:] Impossible.
\item[Case \rul{E\_App}:] We have several cases, depending on how $\ottnt{e}$ has stepped:
\begin{description}
\item[Case \rul{S\_App}:] By induction.
\item[Case \rul{S\_Beta}:] By \pref{lem:subst}.
\item[Case \rul{S\_Push}:] We adopt the metavariable names from the rule:
\[
\ottdruleSXXPush{}
\]
Inversion tells us the following (for some $\tau'$, $\sigma$, and $\sigma'$):
\begin{itemize}
\item $\emptyset  \vdash  \gamma  \ottsym{:}  \ottsym{(}  \tau  \to  \tau'  \ottsym{)}  \sim  \ottsym{(}  \sigma  \to  \sigma'  \ottsym{)}$
\item $\emptyset  \vdash  \ottnt{v}  \ottsym{:}  \tau  \to  \tau'$
\item $\emptyset  \vdash  \ottnt{e}  \ottsym{:}  \sigma$
\item $\emptyset  \vdash  \ottsym{(}  \ottnt{v}  \triangleright  \gamma  \ottsym{)} \, \ottnt{e}  \ottsym{:}  \sigma'$
\item $ \emptyset   \vdash   \sigma  \ \mathsf{type} $
\item $ \emptyset   \vdash   \sigma'  \ \mathsf{type} $
\end{itemize}
\pref{lem:ty-reg} and inversion then gives us:
\begin{itemize}
\item $ \emptyset   \vdash   \tau  \ \mathsf{type} $
\item $ \emptyset   \vdash   \tau'  \ \mathsf{type} $
\end{itemize}
We now show that $\gamma_{{\mathrm{1}}}$ and $\gamma_{{\mathrm{2}}}$ are well typed:
\begin{description}
\item[$\emptyset  \vdash  \gamma_{{\mathrm{1}}}  \ottsym{:}  \sigma  \sim  \tau$:] By \rul{C\_NthArrow} and \rul{C\_Sym}.
\item[$\emptyset  \vdash  \gamma_{{\mathrm{2}}}  \ottsym{:}  \tau'  \sim  \sigma'$:] By \rul{C\_NthArrow}.
\end{description}
Thus:
\begin{itemize}
\item $\emptyset  \vdash  \ottnt{e}  \triangleright  \gamma_{{\mathrm{1}}}  \ottsym{:}  \tau$ (by \rul{E\_Cast})
\item $\emptyset  \vdash  \ottnt{v} \, \ottsym{(}  \ottnt{e}  \triangleright  \gamma_{{\mathrm{1}}}  \ottsym{)}  \ottsym{:}  \tau'$ (by \rul{E\_App})
\item $\emptyset  \vdash  \ottnt{v} \, \ottsym{(}  \ottnt{e}  \triangleright  \gamma_{{\mathrm{1}}}  \ottsym{)}  \triangleright  \gamma_{{\mathrm{2}}}  \ottsym{:}  \sigma'$ (by \rul{E\_Cast})
\end{itemize}
The final derivation is what we seek, and thus we are done.
\end{description}
\item[Case \rul{E\_TLam}:] Impossible.
\item[Case \rul{E\_TApp}:] We have several cases, depending on how $\ottnt{e}$ has stepped:
\begin{description}
\item[Case \rul{S\_TApp}:] By induction.
\item[Case \rul{S\_TBeta}:] By \pref{lem:ty-subst}.
\item[Case \rul{S\_TPush}:] We adopt the metavariable names from the rule:
\[
\ottdruleSXXTPush{}
\]
Inversion tells us the following (for some $\sigma$ and $\sigma'$):
\begin{itemize}
\item $\emptyset  \vdash  \gamma  \ottsym{:}  \ottsym{(}  \forall \, \alpha  \ottsym{.}  \sigma  \ottsym{)}  \sim  \ottsym{(}  \forall \, \alpha  \ottsym{.}  \sigma'  \ottsym{)}$
\item $\emptyset  \vdash  \ottnt{v}  \ottsym{:}  \forall \, \alpha  \ottsym{.}  \sigma$
\item $\emptyset  \vdash  \ottsym{(}  \ottnt{v}  \triangleright  \gamma  \ottsym{)} \, \tau  \ottsym{:}  \sigma'  \ottsym{[}  \tau  \ottsym{/}  \alpha  \ottsym{]}$
\item $ \emptyset   \vdash   \forall \, \alpha  \ottsym{.}  \sigma'  \ \mathsf{type} $
\item $ \alpha   \vdash   \sigma'  \ \mathsf{type} $
\item $ \emptyset   \vdash   \tau  \ \mathsf{type} $
\end{itemize}
We see that $\emptyset  \vdash  \gamma'  \ottsym{:}  \sigma  \ottsym{[}  \tau  \ottsym{/}  \alpha  \ottsym{]}  \sim  \sigma'  \ottsym{[}  \tau  \ottsym{/}  \alpha  \ottsym{]}$ by \rul{C\_Inst}.
Thus:
\begin{itemize}
\item $\emptyset  \vdash  \ottnt{v} \, \tau  \ottsym{:}  \sigma  \ottsym{[}  \tau  \ottsym{/}  \alpha  \ottsym{]}$ (by \rul{E\_TApp})
\item $ \emptyset   \vdash   \sigma'  \ottsym{[}  \tau  \ottsym{/}  \alpha  \ottsym{]}  \ \mathsf{type} $ (by \pref{lem:ty-ty-subst})
\item $\emptyset  \vdash  \ottnt{v} \, \tau  \triangleright  \gamma'  \ottsym{:}  \sigma'  \ottsym{[}  \tau  \ottsym{/}  \alpha  \ottsym{]}$
\end{itemize}
The final derivation is what we seek, and thus we are done.
\end{description}
\item[Case \rul{E\_CLam}:] Impossible.
\item[Case \rul{E\_CApp}:] We have several cases, depending on how $\ottnt{e}$ has stepped:
\begin{description}
\item[Case \rul{S\_CApp}:] By induction.
\item[Case \rul{S\_CBeta}:] By \pref{lem:co-subst}.
\item[Case \rul{S\_CPush}:] We adopt the metavariable names from the rule:
\[
\ottdruleSXXCPush{}
\]
Let $\phi \, \ottsym{=} \, \tau_{{\mathrm{0}}}  \sim  \tau_{{\mathrm{1}}}$. Inversion tells us the following (for some $\tau_{{\mathrm{2}}}$,
$\sigma_{{\mathrm{0}}}$, $\sigma_{{\mathrm{1}}}$, and $\sigma_{{\mathrm{2}}}$):
\begin{itemize}
\item $\emptyset  \vdash  \ottnt{v}  \ottsym{:}  \tau_{{\mathrm{0}}}  \sim  \tau_{{\mathrm{1}}}  \Rightarrow  \tau_{{\mathrm{2}}}$
\item $\emptyset  \vdash  \eta  \ottsym{:}  \ottsym{(}  \tau_{{\mathrm{0}}}  \sim  \tau_{{\mathrm{1}}}  \Rightarrow  \tau_{{\mathrm{2}}}  \ottsym{)}  \sim  \ottsym{(}  \sigma_{{\mathrm{0}}}  \sim  \sigma_{{\mathrm{1}}}  \Rightarrow  \sigma_{{\mathrm{2}}}  \ottsym{)}$
\item $\emptyset  \vdash  \gamma  \ottsym{:}  \sigma_{{\mathrm{0}}}  \sim  \sigma_{{\mathrm{1}}}$
\item $\emptyset  \vdash  \ottsym{(}  \ottnt{v}  \triangleright  \eta  \ottsym{)} \, \gamma  \ottsym{:}  \sigma_{{\mathrm{2}}}$
\item $ \emptyset   \vdash   \sigma_{{\mathrm{2}}}  \ \mathsf{type} $
\end{itemize}
We can now deduce:
\begin{itemize}
\item $\emptyset  \vdash  \eta_{{\mathrm{0}}}  \ottsym{:}  \tau_{{\mathrm{0}}}  \sim  \sigma_{{\mathrm{0}}}$ (by \rul{C\_NthQual})
\item $\emptyset  \vdash  \eta_{{\mathrm{1}}}  \ottsym{:}  \sigma_{{\mathrm{1}}}  \sim  \tau_{{\mathrm{1}}}$ (by \rul{C\_NthQual} and \rul{C\_Sym})
\item $\emptyset  \vdash  \eta_{{\mathrm{2}}}  \ottsym{:}  \tau_{{\mathrm{2}}}  \sim  \sigma_{{\mathrm{2}}}$ (by \rul{C\_NthQual})
\item $\emptyset  \vdash  \eta_{{\mathrm{0}}}  \fatsemi  \gamma  \fatsemi  \eta_{{\mathrm{1}}}  \ottsym{:}  \tau_{{\mathrm{0}}}  \sim  \tau_{{\mathrm{1}}}$ (by \rul{C\_Trans})
\item $\emptyset  \vdash  \ottnt{v} \, \ottsym{(}  \eta_{{\mathrm{0}}}  \fatsemi  \gamma  \fatsemi  \eta_{{\mathrm{1}}}  \ottsym{)}  \ottsym{:}  \tau_{{\mathrm{2}}}$ (by \rul{E\_CApp})
\item $\emptyset  \vdash  \ottnt{v} \, \ottsym{(}  \eta_{{\mathrm{0}}}  \fatsemi  \gamma  \fatsemi  \eta_{{\mathrm{1}}}  \ottsym{)}  \triangleright  \eta_{{\mathrm{2}}}  \ottsym{:}  \sigma_{{\mathrm{2}}}$ (by \rul{E\_Cast})
\end{itemize}
The final derivation is what we seek, and thus we are done.
\end{description}
\item[Case \rul{E\_Cast}:] We have two possibilities:
\begin{description}
\item[Case \rul{S\_Cast}:] By induction.
\item[Case \rul{S\_Trans}:]
We adopt the metavariable names from the rule:
\[
\ottdruleSXXTrans{}
\]
Inversion tells us the following (for some $\tau_{{\mathrm{1}}}$, $\tau_{{\mathrm{2}}}$, and $\tau_{{\mathrm{3}}}$):
\begin{itemize}
\item $\emptyset  \vdash  \ottnt{v}  \ottsym{:}  \tau_{{\mathrm{1}}}$
\item $\emptyset  \vdash  \gamma_{{\mathrm{1}}}  \ottsym{:}  \tau_{{\mathrm{1}}}  \sim  \tau_{{\mathrm{2}}}$
\item $\emptyset  \vdash  \ottnt{v}  \triangleright  \gamma_{{\mathrm{1}}}  \ottsym{:}  \tau_{{\mathrm{2}}}$
\item $\emptyset  \vdash  \gamma_{{\mathrm{2}}}  \ottsym{:}  \tau_{{\mathrm{2}}}  \sim  \tau_{{\mathrm{3}}}$
\item $\emptyset  \vdash  \ottsym{(}  \ottnt{v}  \triangleright  \gamma_{{\mathrm{1}}}  \ottsym{)}  \triangleright  \gamma_{{\mathrm{2}}}  \ottsym{:}  \tau_{{\mathrm{3}}}$
\item $ \emptyset   \vdash   \tau_{{\mathrm{3}}}  \ \mathsf{type} $
\end{itemize}
We can thus deduce:
\begin{itemize}
\item $\emptyset  \vdash  \gamma_{{\mathrm{1}}}  \fatsemi  \gamma_{{\mathrm{2}}}  \ottsym{:}  \tau_{{\mathrm{1}}}  \sim  \tau_{{\mathrm{3}}}$ (by \rul{C\_Trans})
\item $\emptyset  \vdash  \ottnt{v}  \triangleright  \ottsym{(}  \gamma_{{\mathrm{1}}}  \fatsemi  \gamma_{{\mathrm{2}}}  \ottsym{)}  \ottsym{:}  \tau_{{\mathrm{3}}}$ (by \rul{E\_Cast})
\end{itemize}
The final derivation is what we seek, and thus we are done.
\end{description}
\item[Case \rul{E\_Assume}:] We must step by \rul{S\_Resolve}.
\[
\ottdruleEXXAssume{}
\]
\[
\ottdruleSXXResolve{}
\]
We can assume the premises of \rul{E\_Assume}. We thus invoke
\pref{prop:total} to see that $\emptyset  \vdash  \ottnt{q}  \ottsym{:}  \chi$.
We are done by \pref{lem:res-subst}.
\end{description}
\end{proof}

\subsection{Consistency}

\ifaec
\ottdefnRedT{}
\ottdefnRed{}
\fi

\begin{lemma}[Top-level reduction]
\label{lem:top-level-red}
If $\ottmv{F} \, \overline{\tau}  \rightsquigarrow_{\top}  \sigma_{{\mathrm{1}}}$ and $\ottmv{F} \, \overline{\tau}  \rightsquigarrow_{\top}  \sigma_{{\mathrm{2}}}$, then $\sigma_{{\mathrm{1}}} \, \ottsym{=} \, \sigma_{{\mathrm{2}}}$.
\end{lemma}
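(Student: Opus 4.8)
The plan is to prove this by strong induction on the derivation of $F\,\overline{\tau} \rightsquigarrow_\top \sigma_1$. Since \rul{RTop} is the only rule concluding a $\rightsquigarrow_\top$ judgment, both given derivations end with \rul{RTop}; say derivation~1 selects axiom $\xi_1$ and equation index $i_1$ with $E_{i_1} = \forall\,\overline{\alpha}_1\,\overline{\chi}_1.\,F\,\overline{\sigma}_1 \sim \sigma_{0,1}$, matching substitution $\overline{\rho}_1$ (so $\overline{\tau} = \overline{\sigma}_1[\overline{\rho}_1/\overline{\alpha}_1]$) and evaluation-assumption results $\overline{\rho}_1'$, and symmetrically for derivation~2. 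First I would observe that, by the good-signature condition that every quantified variable occurs on an equation's left-hand side (\ref{assn:good}, item~2), a matching substitution against a fixed left-hand side is unique; hence $\overline{\rho}_1$ and $\overline{\rho}_2$ are forced once the equations are fixed. The argument then splits on whether the two derivations use the same equation.

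If $\xi_1 = \xi_2$ and $i_1 = i_2$, the left-hand side and the match coincide, so the substitutions $\theta_n$ feeding the recursive premises $F'\,\overline{\sigma}'[\theta_n] \rightsquigarrow_\top \rho'_n$ are determined left-to-right by the previously fixed $\overline{\rho}'_m$ for $m < n$. An inner induction on $n$, appealing to the outer induction hypothesis on each strictly smaller recursive subderivation (which makes that recursive reduction a function of its application), shows $\overline{\rho}_1' = \overline{\rho}_2'$, whence $\sigma_1 = \sigma_2$.

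The interesting case is when the equations differ. Here I would first note that since $\overline{\sigma}_1[\overline{\rho}_1/\overline{\alpha}_1] = \overline{\tau} = \overline{\sigma}_2[\overline{\rho}_2/\overline{\alpha}_2]$, the two left-hand sides share the common instance $\overline{\tau}$, so by \ref{prop:unify-correct} they unify with a most general unifier $\theta$. The good-signature conditions then force compatibility: if $\xi_1 = \xi_2$ (a closed family with several ordered equations), the $\mathsf{no\_conflict}$ side condition of the later equation, instantiated at the other match, cannot be witnessed by \rul{NC\_Apart}, since that would assert apartness of two patterns we just unified, contradicting \ref{prop:unify-correct} together with \ref{lem:apart-subst}; it must therefore be \rul{NC\_Compatible}. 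If instead $\xi_1 \neq \xi_2$, then items~3 and~4 of \ref{assn:good} tell us both are single-equation axioms over the same open family and are pairwise compatible. Because the patterns unify, compatibility can only be witnessed by \rul{Co\_Coinc}, giving $\sigma_{0,1}[\theta \circ \mathit{subst}(\overline{\chi}_1)] = \sigma_{0,2}[\theta \circ \mathit{subst}(\overline{\chi}_2)]$.

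Finally I would push this coincidence down to the actual results. The common instance $\overline{\tau}$ factors through the most general unifier $\theta$, and applying that factoring to the $\mathit{subst}$-expanded right-hand sides yields the two ``unflattened'' right-hand sides, which the previous equation shows to be syntactically equal. The result $\tau'$ produced by \rul{RTop} is obtained from this unflattened right-hand side by replacing each embedded type-family application $F'\,\overline{\sigma}'[\theta_n]$ by its resolution $\rho'_n$; since these embedded applications are reduced by strictly smaller subderivations, the induction hypothesis makes each such reduction a function of the application, so both derivations compute the same $\overline{\rho}'$ and hence the same $\tau'$. I expect the main obstacle to be exactly this last step: reconciling the static, textual $\mathit{subst}$-based notion of right-hand-side coincidence used by \rul{Co\_Coinc} with the operational evaluation resolutions $\overline{\rho}'$ computed by the recursive $\rightsquigarrow_\top$ premises, and arranging the induction so the cross-equation case still appeals only to strictly smaller reductions.
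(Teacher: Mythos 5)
Your proposal is correct and follows essentially the same route as the paper's proof: it splits on whether the two reductions use the same equation, uses \pref{prop:unify-correct} together with the good-signature and $\mathsf{no\_conflict}$ conditions to rule out apartness and force compatibility via \rul{Co\_Coinc} in the cross-equation case, and then transfers the coincidence of the $\mathit{subst}$-expanded right-hand sides to the actual reducts by appealing to the induction hypothesis on the recursive $ \rightsquigarrow_{\top} $ premises. The paper organizes its cases by open versus closed family and inducts on the sum of the sizes of \emph{both} derivations rather than on the first alone, but these are cosmetic differences, and the final step you flag as the main obstacle---reconciling the textual right-hand-side coincidence with the operationally computed resolutions---is exactly the step the paper's own proof also handles only informally.
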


\begin{proof}
We proceed by induction on the sum of the sizes of the derivations $\ottmv{F} \, \overline{\tau}  \rightsquigarrow_{\top}  \sigma_{{\mathrm{1}}}$
and $\ottmv{F} \, \overline{\tau}  \rightsquigarrow_{\top}  \sigma_{{\mathrm{2}}}$. We may thus assume the induction hypothesis for any
top-level reduction that appears in the premises of either $\ottmv{F} \, \overline{\tau}  \rightsquigarrow_{\top}  \sigma_{{\mathrm{1}}}$
or $\ottmv{F} \, \overline{\tau}  \rightsquigarrow_{\top}  \sigma_{{\mathrm{2}}}$.

From \pref{assn:good}, we see that every type family $\ottmv{F}$ is either open
with potentially multiple, single-equations axioms or closed with at most one,
potentially many-equationed axiom. We handle these cases separately:

\begin{description}
\item[Open family:]
Let $\xi_{{\mathrm{1}}}  \ottsym{:}  \ottnt{E_{{\mathrm{1}}}}  \in  \Sigma$ and $\xi_{{\mathrm{2}}}  \ottsym{:}  \ottnt{E_{{\mathrm{2}}}}  \in  \Sigma$ be the two axioms from
the two reductions, respectively, and let $\ottnt{E_{{\mathrm{1}}}} \, \ottsym{=} \, \forall \, \overline{\alpha}_{{\mathrm{1}}} \, \overline{\chi}_{{\mathrm{1}}}  \ottsym{.}  \ottmv{F} \, \overline{\sigma}_{{\mathrm{1}}}  \sim  \sigma_{{\mathrm{01}}}$
and $\ottnt{E_{{\mathrm{2}}}} \, \ottsym{=} \, \forall \, \overline{\alpha}_{{\mathrm{2}}} \, \overline{\chi}_{{\mathrm{2}}}  \ottsym{.}  \ottmv{F} \, \overline{\sigma}_{{\mathrm{2}}}  \sim  \sigma_{{\mathrm{02}}}$. Furthermore, we know
$\overline{\sigma}_{{\mathrm{1}}}  \ottsym{[}  \overline{\rho}_{{\mathrm{1}}}  \ottsym{/}  \overline{\alpha}_{{\mathrm{1}}}  \ottsym{]} = \overline{\tau} = \overline{\sigma}_{{\mathrm{2}}}  \ottsym{[}  \overline{\rho}_{{\mathrm{2}}}  \ottsym{/}  \overline{\alpha}_{{\mathrm{2}}}  \ottsym{]}$ (for the $\overline{\rho}_{{\mathrm{1}}}$ and $\overline{\rho}_{{\mathrm{2}}}$
learned by inversion). We know (by \pref{assn:good}), that $ \mathsf{compat} ( \ottnt{E_{{\mathrm{1}}}} ,  \ottnt{E_{{\mathrm{2}}}} ) $.
We now have two cases, depending on how $ \mathsf{compat} ( \ottnt{E_{{\mathrm{1}}}} ,  \ottnt{E_{{\mathrm{2}}}} ) $
has been established:
\begin{description}
\item[Case \rul{Co\_Coinc}:]
\[
\ottdruleCoXXCoinc{}
\]
Consider the substitution $\theta \, \ottsym{=} \, \overline{\rho}_{{\mathrm{1}}}  \ottsym{/}  \overline{\alpha}_{{\mathrm{1}}}  \ottsym{,}  \overline{\rho}_{{\mathrm{2}}}  \ottsym{/}  \overline{\alpha}_{{\mathrm{2}}}$. This is a unifier of
$\overline{\sigma}_{{\mathrm{1}}}$ and $\overline{\sigma}_{{\mathrm{2}}}$. Thus, by \pref{prop:unify-correct}, $ \mathsf{unify} ( \overline{\sigma}_{{\mathrm{1}}} ;\, \overline{\sigma}_{{\mathrm{2}}} )  \, \ottsym{=} \, \mathsf{Just} \, \theta_{{\mathrm{0}}}$
where $\theta \, \ottsym{=} \, \theta'  \circ  \theta_{{\mathrm{0}}}$. We see above that
$\sigma_{{\mathrm{01}}}  \ottsym{[}  \theta_{{\mathrm{0}}}  \circ  \mathit{subst}\! \, \ottsym{(}  \overline{\chi}_{{\mathrm{1}}}  \ottsym{)}  \ottsym{]} \, \ottsym{=} \, \sigma_{{\mathrm{02}}}  \ottsym{[}  \theta_{{\mathrm{0}}}  \circ  \mathit{subst}\! \, \ottsym{(}  \overline{\chi}_{{\mathrm{2}}}  \ottsym{)}  \ottsym{]}$
and thus $\sigma_{{\mathrm{01}}}  \ottsym{[}  \theta  \circ  \mathit{subst}\! \, \ottsym{(}  \overline{\chi}_{{\mathrm{1}}}  \ottsym{)}  \ottsym{]} \, \ottsym{=} \, \sigma_{{\mathrm{02}}}  \ottsym{[}  \theta  \circ  \mathit{subst}\! \, \ottsym{(}  \overline{\chi}_{{\mathrm{2}}}  \ottsym{)}  \ottsym{]}$.
Because the free variables in $\sigma_{{\mathrm{01}}}$ and $\sigma_{{\mathrm{02}}}$ are distinct, we can simplify to
$\sigma_{{\mathrm{01}}}  \ottsym{[}  \overline{\rho}_{{\mathrm{1}}}  \ottsym{/}  \overline{\alpha}_{{\mathrm{1}}}  \circ  \mathit{subst}\! \, \ottsym{(}  \overline{\chi}_{{\mathrm{1}}}  \ottsym{)}  \ottsym{]} \, \ottsym{=} \, \sigma_{{\mathrm{02}}}  \ottsym{[}  \overline{\rho}_{{\mathrm{2}}}  \ottsym{/}  \overline{\alpha}_{{\mathrm{2}}}  \circ  \mathit{subst}\! \, \ottsym{(}  \overline{\chi}_{{\mathrm{2}}}  \ottsym{)}  \ottsym{]}$. Call that type
$\rho_{{\mathrm{0}}}$.

\RAE{This next bit is a skosh hand-wavy. But I actually think it's \emph{more}
convincing than the more formal approach, which requires a delicate induction in
the presence of the |mapAccumL| that is in the premises to \rul{RTop}.}
Let's consider the shapes of $\rho_{{\mathrm{0}}}$ and $\sigma_{{\mathrm{1}}}$, one of the top-level reducts of
$\ottmv{F} \, \overline{\tau}$. The former is $\sigma_{{\mathrm{01}}}  \ottsym{[}  \overline{\rho}_{{\mathrm{1}}}  \ottsym{/}  \overline{\alpha}_{{\mathrm{1}}}  \circ  \mathit{subst}\! \, \ottsym{(}  \overline{\chi}_{{\mathrm{1}}}  \ottsym{)}  \ottsym{]}$ and the latter is
$\sigma_{{\mathrm{01}}}  \ottsym{[}  \overline{\rho}_{{\mathrm{1}}}  \ottsym{/}  \overline{\alpha}_{{\mathrm{1}}}  \ottsym{,}  \overline{\rho}'_{{\mathrm{1}}}  \ottsym{/}  \mathit{tvs}\! \, \ottsym{(}  \overline{\chi}_{{\mathrm{1}}}  \ottsym{)}  \ottsym{]}$. Thus, the only difference between $\rho_{{\mathrm{0}}}$ and
$\sigma_{{\mathrm{1}}}$ is the choice for the instantiation of the $\mathit{tvs}\! \, \ottsym{(}  \overline{\chi}_{{\mathrm{1}}}  \ottsym{)}$---$\rho_{{\mathrm{0}}}$ replaces
these with type family applications, and $\sigma_{{\mathrm{1}}}$ replaces them with proper (type-family-free)
types. However, note that in the premises to \rul{RTop}, the choice of these types (the
$\overline{\rho}'$ in the rule) is determined by the type family applications, using a $ \rightsquigarrow_{\top} $
reduction. These reductions are in a premise to a rule we are performing induction on,
and therefore we may assume that the type family application uniquely determines the reduct.
Thus there exists precisely one $\sigma_{{\mathrm{1}}}$ that corresponds to the $\rho_{{\mathrm{0}}}$, and
accordingly $\sigma_{{\mathrm{2}}}$
must be that same $\sigma_{{\mathrm{1}}}$. We are done with this case.

\item[Case \rul{Co\_Distinct}:]
In this case, there is no unifier between $\overline{\sigma}_{{\mathrm{1}}}$ and $\overline{\sigma}_{{\mathrm{2}}}$, a contradiction
with \pref{prop:unify-correct}.
\end{description}

\item[Closed family:]
In this case, we have the possibility that $\ottmv{F} \, \overline{\tau}$ is reducible by more than one
equation of the single applicable axiom $\xi$ with equations $\overline{\ottnt{E_{\ottmv{n}}} \, \ottsym{=} \, \forall \, \overline{\alpha}_{\ottmv{n}} \, \overline{\chi}_{\ottmv{n}}  \ottsym{.}  \ottmv{F} \, \overline{\sigma}_{\ottmv{n}}  \sim  \sigma_{{\mathrm{0}}\,\ottmv{n}}}$.
Number the two equations $\ottmv{i}$
and $\ottmv{j}$. If $\ottmv{i} \, \ottsym{=} \, \ottmv{j}$, we are done.\footnote{We still need to be sure that the
evaluation assumptions are satisfied by the same types when running the reduction twice,
but we can get this fact by a similar argument as given in the open type family case.}
We thus assume, without loss of generality, that $\ottmv{j} < \ottmv{i}$. We see as a premise
to \rul{RTop} that $ \mathsf{no\_conflict} ( \overline{\ottnt{E} } ,  \ottmv{i} ,  \overline{\rho} ,  \ottmv{j} ) $, where $\overline{\tau} \, \ottsym{=} \, \overline{\sigma}_{\ottmv{i}}  \ottsym{[}  \overline{\rho}  \ottsym{/}  \overline{\alpha}_{\ottmv{i}}  \ottsym{]}$. Thus,
either the two equations are apart (\rul{NC\_Apart}) or they are compatible. We'll
handle these cases separately:
\begin{description}
\item[Case \rul{NC\_Apart}:] In this case, we know that $ \mathsf{apart} ( \overline{\sigma}_{\ottmv{j}} ;\, \overline{\sigma}_{\ottmv{i}}  \ottsym{[}  \overline{\rho}  \ottsym{/}  \overline{\alpha}_{\ottmv{i}}  \ottsym{]} ) $---that
is, $ \mathsf{apart} ( \overline{\sigma}_{\ottmv{j}} ;\, \overline{\tau} ) $. By \pref{prop:unify-correct}, we can conclude that \rul{RTop}
cannot apply at equation $\ottmv{j}$, a contradiction.
\item[Case \rul{NC\_Compatible}:] Here, equations $\ottnt{E_{\ottmv{i}}}$ and $\ottnt{E_{\ottmv{j}}}$ are compatible;
follow the logic used in the open-type-family case.
\end{description}
\end{description}
\end{proof}

This is very closely based on the similar proof by \citet{EisenbergVPJW14}.

\begin{definition}[Type contexts] ~
\begin{enumerate}
\item Let $\mathcal{C}  \ottsym{[}   \cdot   \ottsym{]}$ be a type with exactly one hole.
\item Let $\mathcal{C}\!\!\!\mathcal{C}  \ottsym{[}   \cdot   \ottsym{]}$ be a list of types with exactly one hole (in the whole list).
\item Let $ \mathcal{C}  \llbracket   \cdot   \rrbracket $ be a type with any number of holes.
\item Let $ \mathcal{C}\!\!\!\mathcal{C}  \llbracket   \cdot   \rrbracket $ be a list of types with any number of holes.
\end{enumerate}
\end{definition}

\begin{lemma}[One step/many holes context substitution]
\label{lem:subst_1step_manyholes}
If $\tau  \rightsquigarrow  \tau'$, then $ \mathcal{C}  \llbracket  \tau  \rrbracket   \rightsquigarrow^*   \mathcal{C}  \llbracket  \tau'  \rrbracket $.
\end{lemma}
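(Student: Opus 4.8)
The plan is to induct on the number $n$ of holes appearing in the many-holes context $\mathcal{C}\llbracket \cdot \rrbracket$. The engine of the argument is a single-hole congruence fact for $ \rightsquigarrow $: if $\tau  \rightsquigarrow  \tau'$ and $\mathcal{D}  \ottsym{[}  \cdot  \ottsym{]}$ is any one-hole type context, then $\mathcal{D}  \ottsym{[}  \tau  \ottsym{]}  \rightsquigarrow  \mathcal{D}  \ottsym{[}  \tau'  \ottsym{]}$. This follows by inverting \rul{Red} on the hypothesis $\tau  \rightsquigarrow  \tau'$ (the sole rule for $ \rightsquigarrow $): there must be a one-hole context $\mathcal{C}'  \ottsym{[}  \cdot  \ottsym{]}$ and a type-family application $\ottmv{F} \, \overline{\rho}$ with $\ottmv{F} \, \overline{\rho}  \rightsquigarrow_{\top}  \rho'$, $\tau \, \ottsym{=} \, \mathcal{C}'  \ottsym{[}  \ottmv{F} \, \overline{\rho}  \ottsym{]}$, and $\tau' \, \ottsym{=} \, \mathcal{C}'  \ottsym{[}  \rho'  \ottsym{]}$. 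Since the composition $\mathcal{D}  \circ  \mathcal{C}'$ of one-hole contexts is again a one-hole context, and $\mathcal{D}  \ottsym{[}  \tau  \ottsym{]} \, \ottsym{=} \, \ottsym{(}  \mathcal{D}  \circ  \mathcal{C}'  \ottsym{)}  \ottsym{[}  \ottmv{F} \, \overline{\rho}  \ottsym{]}$, a second application of \rul{Red} gives $\mathcal{D}  \ottsym{[}  \tau  \ottsym{]}  \rightsquigarrow  \ottsym{(}  \mathcal{D}  \circ  \mathcal{C}'  \ottsym{)}  \ottsym{[}  \rho'  \ottsym{]} \, \ottsym{=} \, \mathcal{D}  \ottsym{[}  \tau'  \ottsym{]}$.

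For the base case $n = 0$ the context mentions no hole, so $ \mathcal{C}  \llbracket  \tau  \rrbracket $ and $ \mathcal{C}  \llbracket  \tau'  \rrbracket $ are literally the same type and the conclusion holds by reflexivity of $ \rightsquigarrow^* $ (zero steps). For the inductive step with $n + 1$ holes, I single out one hole and regard the remaining $n$ holes as those of a smaller many-holes context. Concretely, filling every hole but the distinguished one with $\tau$ turns $\mathcal{C}\llbracket \cdot \rrbracket$ into a one-hole context $\mathcal{D}  \ottsym{[}  \cdot  \ottsym{]}$ with $\mathcal{D}  \ottsym{[}  \tau  \ottsym{]} \, \ottsym{=} \,  \mathcal{C}  \llbracket  \tau  \rrbracket $; the congruence fact then yields a single step $ \mathcal{C}  \llbracket  \tau  \rrbracket  \, \ottsym{=} \, \mathcal{D}  \ottsym{[}  \tau  \ottsym{]}  \rightsquigarrow  \mathcal{D}  \ottsym{[}  \tau'  \ottsym{]}$, a type in which the distinguished hole now holds $\tau'$ while the other $n$ holes still hold $\tau$. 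Viewing the distinguished occurrence of $\tau'$ as frozen into the surrounding context leaves a many-holes context with only $n$ holes, so the induction hypothesis gives $\mathcal{D}  \ottsym{[}  \tau'  \ottsym{]}  \rightsquigarrow^*   \mathcal{C}  \llbracket  \tau'  \rrbracket $. Concatenating the one step with this sequence gives $ \mathcal{C}  \llbracket  \tau  \rrbracket   \rightsquigarrow^*   \mathcal{C}  \llbracket  \tau'  \rrbracket $, as required.

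I expect the only real work to be the bookkeeping about type contexts---specifically that one-hole contexts compose to one-hole contexts and that partially filling a many-holes context produces contexts of the expected arity---rather than anything about the reduction relation itself, whose entire content is confined to the single invocation of \rul{Red} inside the congruence fact. The one subtlety to state carefully is that the rewrite in the congruence step takes place \emph{strictly inside} the plugged copy of $\tau$, so freezing one already-reduced occurrence genuinely lowers the hole count; this is what keeps the induction well-founded and explains why the conclusion must be $ \rightsquigarrow^* $ (one step per hole, and none at all when $n = 0$) rather than a single $ \rightsquigarrow $.
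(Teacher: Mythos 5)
Your proof is correct and matches the paper's intent: the paper dispatches this lemma as a ``straightforward induction on the structure of $ \mathcal{C}  \llbracket   \cdot   \rrbracket $,'' and your argument---compose one-hole contexts so that \rul{Red} lifts a single step through any surrounding context, then iterate once per hole---is exactly the content that induction would unfold. The only difference is the induction measure (number of holes rather than context structure), which is an equivalent bookkeeping choice and changes nothing of substance.
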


\begin{proof}
Straightforward induction on the structure of $ \mathcal{C}  \llbracket   \cdot   \rrbracket $.
\end{proof}

\begin{lemma}[Multistep/many holes context substitution]
\label{lem:subst_manysteps_manyholes}
If $\tau  \rightsquigarrow^*  \tau'$, then $ \mathcal{C}  \llbracket  \tau  \rrbracket   \rightsquigarrow^*   \mathcal{C}  \llbracket  \tau'  \rrbracket $.
\end{lemma}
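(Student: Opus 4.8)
The plan is to prove this as an immediate consequence of \pref{lem:subst_1step_manyholes} by induction on the number of steps in the reduction sequence witnessing $\tau  \rightsquigarrow^*  \tau'$. Since $ \rightsquigarrow^* $ is the reflexive--transitive closure of the single-step relation $ \rightsquigarrow $, any derivation of $\tau  \rightsquigarrow^*  \tau'$ decomposes into a finite chain $\tau \, \ottsym{=} \, \tau_{{\mathrm{0}}}  \rightsquigarrow  \tau_{{\mathrm{1}}}  \rightsquigarrow  \cdots  \rightsquigarrow  \tau_{\ottmv{k}} \, \ottsym{=} \, \tau'$, and it is natural to induct on $k$.

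For the base case ($k \, \ottsym{=} \, \ottsym{0}$), we have $\tau \, \ottsym{=} \, \tau'$, so $ \mathcal{C}  \llbracket  \tau  \rrbracket  \, \ottsym{=} \,  \mathcal{C}  \llbracket  \tau'  \rrbracket $ syntactically, and reflexivity of $ \rightsquigarrow^* $ closes the case. For the inductive step, I would split off the first step: assume $\tau  \rightsquigarrow  \tau_{{\mathrm{1}}}$ and $\tau_{{\mathrm{1}}}  \rightsquigarrow^*  \tau'$ with the latter using $k-1$ steps. Applying \pref{lem:subst_1step_manyholes} to the single step $\tau  \rightsquigarrow  \tau_{{\mathrm{1}}}$ yields $ \mathcal{C}  \llbracket  \tau  \rrbracket   \rightsquigarrow^*   \mathcal{C}  \llbracket  \tau_{{\mathrm{1}}}  \rrbracket $, while the induction hypothesis applied to $\tau_{{\mathrm{1}}}  \rightsquigarrow^*  \tau'$ gives $ \mathcal{C}  \llbracket  \tau_{{\mathrm{1}}}  \rrbracket   \rightsquigarrow^*   \mathcal{C}  \llbracket  \tau'  \rrbracket $. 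Concatenating these two multistep reductions via transitivity of $ \rightsquigarrow^* $ delivers $ \mathcal{C}  \llbracket  \tau  \rrbracket   \rightsquigarrow^*   \mathcal{C}  \llbracket  \tau'  \rrbracket $, as required. The many-holes context $ \mathcal{C}  \llbracket   \cdot   \rrbracket $ is fixed throughout, so no reasoning about its structure is needed here; that structural work was already discharged in the single-step lemma.

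There is essentially no obstacle: the whole content of the lemma lives in \pref{lem:subst_1step_manyholes}, which handles the one tricky point, namely that filling a single hole's reduction into a context with \emph{many} holes may require several steps (one is not enough, since the same reduced subterm may appear in multiple holes). Once that lemma is in hand, the present statement is purely a transitivity/closure argument. The only mild care worth noting is to keep the induction on the \emph{length} of the reduction rather than on the structure of $\tau$, since the context is held constant and only the reduction sequence is being traversed.
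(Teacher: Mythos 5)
Your proof is correct and matches the paper's own argument exactly: the paper likewise proves this by straightforward induction on the length of the reduction sequence, appealing to the one-step/many-holes lemma at each step. Your additional remark about why the single-step lemma must already produce a multistep reduction (the reduced subterm may occupy several holes) is accurate and well observed.
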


\begin{proof}
Straightforward induction on the length of the reduction $\tau  \rightsquigarrow^*  \tau'$,
appealing to \pref{lem:subst_1step_manyholes}.
\end{proof}

\begin{rptlemma}{lem:local-confluence}{Local confluence}
\localconfstatement
\end{rptlemma}

\begin{proof}
Proceed by induction on the structure of $\tau_{{\mathrm{0}}}$.

\begin{description}
\item[Case $\tau_{{\mathrm{0}}} \, \ottsym{=} \, \sigma_{{\mathrm{1}}}  \to  \sigma_{{\mathrm{2}}}$:] Inversion on \rul{Red} tells
us $\mathcal{C}_{{\mathrm{1}}}  \ottsym{[}  \ottmv{F_{{\mathrm{1}}}} \, \overline{\rho}_{{\mathrm{1}}}  \ottsym{]} = \sigma_{{\mathrm{1}}}  \to  \sigma_{{\mathrm{2}}} = \mathcal{C}_{{\mathrm{2}}}  \ottsym{[}  \ottmv{F_{{\mathrm{2}}}} \, \overline{\rho}_{{\mathrm{2}}}  \ottsym{]}$, where
$\tau_{{\mathrm{1}}} \, \ottsym{=} \, \mathcal{C}_{{\mathrm{1}}}  \ottsym{[}  \rho'_{{\mathrm{1}}}  \ottsym{]}$ and $\tau_{{\mathrm{2}}} \, \ottsym{=} \, \mathcal{C}_{{\mathrm{2}}}  \ottsym{[}  \rho'_{{\mathrm{2}}}  \ottsym{]}$. Proceed by case analysis
on $\mathcal{C}_{{\mathrm{1}}}  \ottsym{[}   \cdot   \ottsym{]}$ and $\mathcal{C}_{{\mathrm{2}}}  \ottsym{[}   \cdot   \ottsym{]}$:
\begin{description}
\item[Case $\mathcal{C}_{{\mathrm{1}}}  \ottsym{[}   \cdot   \ottsym{]} \, \ottsym{=} \, \mathcal{C}'_{{\mathrm{1}}}  \ottsym{[}   \cdot   \ottsym{]}  \to  \sigma_{{\mathrm{2}}}$, $\mathcal{C}_{{\mathrm{2}}}  \ottsym{[}   \cdot   \ottsym{]} \, \ottsym{=} \, \mathcal{C}'_{{\mathrm{2}}}  \ottsym{[}   \cdot   \ottsym{]}  \to  \sigma_{{\mathrm{2}}}$:]
(We know the right-hand types to $( \to )$ must match from
$\mathcal{C}_{{\mathrm{1}}}  \ottsym{[}  \ottmv{F_{{\mathrm{1}}}} \, \overline{\rho}_{{\mathrm{1}}}  \ottsym{]} \, \ottsym{=} \, \mathcal{C}_{{\mathrm{2}}}  \ottsym{[}  \ottmv{F_{{\mathrm{2}}}} \, \overline{\rho}_{{\mathrm{2}}}  \ottsym{]}$.)
Let $\sigma_{{\mathrm{3}}} \, \ottsym{=} \, \mathcal{C}'_{{\mathrm{1}}}  \ottsym{[}  \rho'_{{\mathrm{1}}}  \ottsym{]}$ and $\sigma_{{\mathrm{4}}} \, \ottsym{=} \, \mathcal{C}'_{{\mathrm{2}}}  \ottsym{[}  \rho'_{{\mathrm{2}}}  \ottsym{]}$. Noting that
$\mathcal{C}$ appears only in the conclusion of \rul{Red}, and not in any
premise, we can see that $\sigma_{{\mathrm{1}}}  \rightsquigarrow  \sigma_{{\mathrm{3}}}$ and $\sigma_{{\mathrm{1}}}  \rightsquigarrow  \sigma_{{\mathrm{4}}}$.
The induction hypothesis thus gives us $\sigma_{{\mathrm{5}}}$ such that
$\sigma_{{\mathrm{3}}}  \rightsquigarrow^*  \sigma_{{\mathrm{5}}}  \leftsquigarrow^*  \sigma_{{\mathrm{4}}}$. We then say that $\tau_{{\mathrm{3}}}$, our common reduct,
is $\sigma_{{\mathrm{5}}}  \to  \sigma_{{\mathrm{2}}}$, appealing to \pref{lem:subst_manysteps_manyholes}.
\item[Case $\mathcal{C}_{{\mathrm{1}}}  \ottsym{[}   \cdot   \ottsym{]} \, \ottsym{=} \, \mathcal{C}'_{{\mathrm{1}}}  \ottsym{[}   \cdot   \ottsym{]}  \to  \sigma_{{\mathrm{2}}}$, $\mathcal{C}_{{\mathrm{2}}}  \ottsym{[}   \cdot   \ottsym{]} \, \ottsym{=} \, \sigma_{{\mathrm{1}}}  \to  \mathcal{C}'_{{\mathrm{2}}}  \ottsym{[}   \cdot   \ottsym{]}$:]
Let $\tau_{{\mathrm{3}}} \, \ottsym{=} \, \mathcal{C}'_{{\mathrm{1}}}  \ottsym{[}  \rho'_{{\mathrm{1}}}  \ottsym{]}  \to  \mathcal{C}'_{{\mathrm{2}}}  \ottsym{[}  \rho'_{{\mathrm{2}}}  \ottsym{]}$. Because we have $\sigma_{{\mathrm{1}}} \, \ottsym{=} \, \mathcal{C}'_{{\mathrm{1}}}  \ottsym{[}  \ottmv{F} \, \overline{\rho}  \ottsym{]}$
and $\sigma_{{\mathrm{2}}} \, \ottsym{=} \, \mathcal{C}'_{{\mathrm{2}}}  \ottsym{[}  \ottmv{F} \, \overline{\rho}  \ottsym{]}$, we can see that $\ottsym{(}  \mathcal{C}'_{{\mathrm{1}}}  \ottsym{[}  \rho'_{{\mathrm{1}}}  \ottsym{]}  \to  \sigma_{{\mathrm{2}}}  \ottsym{)}  \rightsquigarrow  \tau_{{\mathrm{3}}}  \leftsquigarrow  \ottsym{(}  \sigma_{{\mathrm{1}}}  \to  \mathcal{C}'_{{\mathrm{2}}}  \ottsym{[}  \rho'_{{\mathrm{2}}}  \ottsym{]}  \ottsym{)}$
as desired.
\item[Other cases:] Similar to the two previous cases.
\end{description}
\item[Case $\tau_{{\mathrm{0}}} \, \ottsym{=} \, \ottnt{H} \, \overline{\sigma}$:] Similar to previous case.
\item[Case $\tau_{{\mathrm{0}}} \, \ottsym{=} \, \alpha$:] Impossible.
\item[Case $\tau_{{\mathrm{0}}} \, \ottsym{=} \, \forall \, \alpha  \ottsym{.}  \sigma$:] Similar to first sub-case of the
$\tau_{{\mathrm{0}}} \, \ottsym{=} \, \sigma_{{\mathrm{1}}}  \to  \sigma_{{\mathrm{2}}}$ case.
\item[Case $\tau_{{\mathrm{0}}} \, \ottsym{=} \, \ottsym{(}  \sigma_{{\mathrm{1}}}  \sim  \sigma_{{\mathrm{2}}}  \ottsym{)}  \Rightarrow  \sigma_{{\mathrm{3}}}$:] Similar to $\tau_{{\mathrm{0}}} \, \ottsym{=} \, \sigma_{{\mathrm{1}}}  \to  \sigma_{{\mathrm{2}}}$ case.
\item[Case $\tau_{{\mathrm{0}}} \, \ottsym{=} \, \ottmv{F} \, \overline{\sigma}$:]
We have $\mathcal{C}_{{\mathrm{1}}}  \ottsym{[}  \ottmv{F_{{\mathrm{1}}}} \, \overline{\rho}_{{\mathrm{1}}}  \ottsym{]} = \ottmv{F} \, \overline{\sigma} = \mathcal{C}_{{\mathrm{2}}}  \ottsym{[}  \ottmv{F_{{\mathrm{2}}}} \, \overline{\rho}_{{\mathrm{2}}}  \ottsym{]}$, where $\tau_{{\mathrm{1}}} \, \ottsym{=} \, \mathcal{C}_{{\mathrm{1}}}  \ottsym{[}  \rho'_{{\mathrm{1}}}  \ottsym{]}$ and $\tau_{{\mathrm{2}}} \, \ottsym{=} \, \mathcal{C}_{{\mathrm{2}}}  \ottsym{[}  \rho'_{{\mathrm{2}}}  \ottsym{]}$.
We have several cases,
depending on the structure of $\mathcal{C}_{{\mathrm{1}}}  \ottsym{[}   \cdot   \ottsym{]}$ and $\mathcal{C}_{{\mathrm{2}}}  \ottsym{[}   \cdot   \ottsym{]}$:
\begin{description}
\item[Case $\mathcal{C}_{{\mathrm{1}}}  \ottsym{[}   \cdot   \ottsym{]} \, \neq \,  \cdot $, $\mathcal{C}_{{\mathrm{2}}}  \ottsym{[}   \cdot   \ottsym{]} \, \neq \,  \cdot $:] The top-level function
$\ottmv{F}$ is not involved in the reductions. Proceed similarly to the $\tau_{{\mathrm{0}}} \, \ottsym{=} \, \sigma_{{\mathrm{1}}}  \to  \sigma_{{\mathrm{2}}}$ case.
\item[Case $\mathcal{C}_{{\mathrm{1}}}  \ottsym{[}   \cdot   \ottsym{]} \, \ottsym{=} \, \ottmv{F} \, \mathcal{C}\!\!\!\mathcal{C}'_{{\mathrm{1}}}  \ottsym{[}   \cdot   \ottsym{]}$, $\mathcal{C}_{{\mathrm{2}}}  \ottsym{[}   \cdot   \ottsym{]} \, \ottsym{=} \,  \cdot $:]
This case cannot happen. Examine the \rul{Red} rule:
\[
\ottdruleRed{}
\]
The $\overline{\tau}$---that is, the arguments to the function $\ottmv{F}$---equal $\overline{\sigma}  \ottsym{[}  \overline{\rho}  \ottsym{/}  \overline{\alpha}  \ottsym{]}$. But
$\overline{\sigma}$ are type-family free (by \pref{assn:declarations} and \rul{D\_Axiom}) and the $\overline{\rho}$
are type-family free (by $\ottcomp{ \emptyset   \vdash   \rho_{\ottmv{k}}  \ \mathsf{type} }{\ottmv{k}}$). Thus, $\overline{\tau}$ must also
be type-family free, meaning that there is no way $\tau_{{\mathrm{0}}}$ could have stepped to $\tau_{{\mathrm{1}}}$.
\item[Case $\mathcal{C}_{{\mathrm{1}}}  \ottsym{[}   \cdot   \ottsym{]} \, \ottsym{=} \,  \cdot $, $\mathcal{C}_{{\mathrm{2}}}  \ottsym{[}   \cdot   \ottsym{]} \, \ottsym{=} \, \ottmv{F} \, \mathcal{C}\!\!\!\mathcal{C}'_{{\mathrm{2}}}  \ottsym{[}   \cdot   \ottsym{]}$:] Similar to previous case.
\item[Case $\mathcal{C}_{{\mathrm{1}}}  \ottsym{[}   \cdot   \ottsym{]} \, \ottsym{=} \,  \cdot $, $\mathcal{C}_{{\mathrm{2}}}  \ottsym{[}   \cdot   \ottsym{]} \, \ottsym{=} \,  \cdot $:]
In this case, we have two top-level reductions. \pref{lem:top-level-red} gives us
something stronger than what we seek: that $\tau_{{\mathrm{1}}} \, \ottsym{=} \, \tau_{{\mathrm{2}}}$ in this case. We are done.
\end{description}
\end{description}
\end{proof}

\begin{rptlemma}{lem:termination}{Termination}
\termstatement
\end{rptlemma}

\begin{proof}
Examine \rul{RTop}:
\[
\ottdruleRTop{}
\]
We see that the reduct type $\tau'$ equals $\sigma_{{\mathrm{0}}}  \ottsym{[}  \overline{\rho}  \ottsym{/}  \overline{\alpha}  \ottsym{,}  \overline{\rho}'  \ottsym{/}  \mathit{tvs}\! \, \ottsym{(}  \overline{\chi}  \ottsym{)}  \ottsym{]}$.
By \pref{assn:declarations}, $\sigma_{{\mathrm{0}}}$ has no type families.
By $\ottcomp{ \emptyset   \vdash   \rho_{\ottmv{k}}  \ \mathsf{type} }{\ottmv{k}}$, we see that $\overline{\rho}$ has no type
families. By $\ottcomp{ \emptyset   \vdash   \rho'_{\ottmv{n}}  \ \mathsf{type} }{\ottmv{n}}$, we see that $\overline{\rho}'$ has
no type families. Thus $\tau'$ can mention no type families.

Yet, in \rul{Red},
\[
\ottdruleRed{}
\]
we replace a type family application $\ottmv{F} \, \overline{\tau}$ with $\tau'$. We have thus
replaced a type family application with a type that contains no type families.
Accordingly, in every use of \rul{Red}, the reduct must have exactly one
fewer type family application than the redex. Given that types are an inductively
defined structure, we cannot have an infinite number of type family applications
in a type.

We have thus identified a decreasing measure: the number of type family applications
in a type. Accordingly, all reduction chains terminate.
\end{proof}

\begin{rptlemma}{lem:confluence}{Confluence}
\confstatement
\end{rptlemma}

\begin{proof}
By \pref{lem:local-confluence}, \pref{lem:termination}, and Newman's Lemma.
\end{proof}

\begin{lemma}[Rigid reduction]
\label{lem:rigid-red}
If $\mathcal{C}  \ottsym{[}  \tau_{{\mathrm{1}}}  \ottsym{]}  \rightsquigarrow^*  \tau_{{\mathrm{0}}}  \leftsquigarrow^*  \mathcal{C}  \ottsym{[}  \tau_{{\mathrm{2}}}  \ottsym{]}$, such that the path to the hole in $\mathcal{C}  \ottsym{[}   \cdot   \ottsym{]}$ does not
go through any type family arguments, then there exists $\tau_{{\mathrm{3}}}$ such that $\tau_{{\mathrm{1}}}  \rightsquigarrow^*  \tau_{{\mathrm{3}}}  \leftsquigarrow^*  \tau_{{\mathrm{2}}}$.
\end{lemma}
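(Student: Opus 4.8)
The plan is to prove this by structural induction on the one-hole context $\mathcal{C}  \ottsym{[}   \cdot   \ottsym{]}$. The rigidity hypothesis is exactly what controls the shape of the spine from the root of $\mathcal{C}  \ottsym{[}   \cdot   \ottsym{]}$ down to its hole: since no step on that path descends into an argument of a type family application, and the only proper subterms of a type family application $\ottmv{F} \, \overline{\tau}$ are its arguments, every constructor along the spine must be one of the \emph{rigid} head forms --- an arrow $\to$, a quantifier $\forall \, \alpha  \ottsym{.}  \cdot$, a qualification $\phi  \Rightarrow  \cdot$ (with the hole in $\phi$ or in the body), or a type-constant application $\ottnt{H} \, \overline{\sigma}$ (with the hole in one of the $\overline{\sigma}$). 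Crucially, the case $\mathcal{C}  \ottsym{[}   \cdot   \ottsym{]} \, \ottsym{=} \, \ottmv{F} \, \mathcal{C}\!\!\!\mathcal{C}  \ottsym{[}   \cdot   \ottsym{]}$ never arises. The base case $\mathcal{C}  \ottsym{[}   \cdot   \ottsym{]} \, \ottsym{=} \,  \cdot $ is immediate: there $\mathcal{C}  \ottsym{[}  \tau_{{\mathrm{1}}}  \ottsym{]} \, \ottsym{=} \, \tau_{{\mathrm{1}}}$ and $\mathcal{C}  \ottsym{[}  \tau_{{\mathrm{2}}}  \ottsym{]} \, \ottsym{=} \, \tau_{{\mathrm{2}}}$, so the hypothesis \emph{is} the conclusion, with $\tau_{{\mathrm{3}}} \, \ottsym{=} \, \tau_{{\mathrm{0}}}$.

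The engine of each inductive case is a head-preservation observation: reducing a type whose head is a rigid constructor never rewrites that head away, because the only top-level reduction in the system is \rul{RTop}, which fires only at a type family application (via \rul{Red}). For instance, if $\sigma_{{\mathrm{1}}}  \to  \sigma_{{\mathrm{2}}}  \rightsquigarrow^*  \tau_{{\mathrm{0}}}$, then a short induction on the length of the reduction --- noting that in any first step $\sigma_{{\mathrm{1}}}  \to  \sigma_{{\mathrm{2}}} \, \ottsym{=} \, \mathcal{D}  \ottsym{[}  \ottmv{F} \, \overline{\rho}  \ottsym{]}$ demanded by \rul{Red} the context $\mathcal{D}$ cannot be empty (an arrow is not a family application), so the contracted redex lies strictly inside $\sigma_{{\mathrm{1}}}$ or $\sigma_{{\mathrm{2}}}$ --- shows $\tau_{{\mathrm{0}}} \, \ottsym{=} \, \tau_{{\mathrm{0}}}^{L}  \to  \tau_{{\mathrm{0}}}^{R}$ with $\sigma_{{\mathrm{1}}}  \rightsquigarrow^*  \tau_{{\mathrm{0}}}^{L}$ and $\sigma_{{\mathrm{2}}}  \rightsquigarrow^*  \tau_{{\mathrm{0}}}^{R}$. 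The same argument applies to $\forall$ and $\Rightarrow$, and, using that an $\ottnt{H}$-headed type is never itself a redex, to $\ottnt{H} \, \overline{\sigma}$. Applying this to the arrow case with the hole on the left, $\mathcal{C}  \ottsym{[}   \cdot   \ottsym{]} \, \ottsym{=} \, \mathcal{C}'  \ottsym{[}   \cdot   \ottsym{]}  \to  \sigma$, both $\mathcal{C}'  \ottsym{[}  \tau_{{\mathrm{1}}}  \ottsym{]}  \to  \sigma$ and $\mathcal{C}'  \ottsym{[}  \tau_{{\mathrm{2}}}  \ottsym{]}  \to  \sigma$ reduce to the \emph{same} $\tau_{{\mathrm{0}}} \, \ottsym{=} \, \tau_{{\mathrm{0}}}^{L}  \to  \tau_{{\mathrm{0}}}^{R}$, so the left components give $\mathcal{C}'  \ottsym{[}  \tau_{{\mathrm{1}}}  \ottsym{]}  \rightsquigarrow^*  \tau_{{\mathrm{0}}}^{L}  \leftsquigarrow^*  \mathcal{C}'  \ottsym{[}  \tau_{{\mathrm{2}}}  \ottsym{]}$. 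Since $\mathcal{C}'  \ottsym{[}   \cdot   \ottsym{]}$ is a strictly smaller rigid context, the induction hypothesis yields the desired $\tau_{{\mathrm{3}}}$. The remaining inductive cases (hole on the right of an arrow, under a quantifier, in either side or the body of a qualification, or in the $i$-th argument of $\ottnt{H} \, \overline{\sigma}$) are handled identically, in each case projecting $\tau_{{\mathrm{0}}}$ onto the component containing the image of the hole and recursing on the corresponding sub-context.

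I expect the main obstacle to be stating and discharging the head-preservation and factoring step cleanly, since it is where rigidity does its real work. The point is that a type-family reduction can only erase, duplicate, or move a subterm when that subterm sits inside a family argument $\overline{\rho}$ --- it is substituted through $\sigma_{{\mathrm{0}}}$ in \rul{RTop} --- and the rigidity hypothesis forbids the hole from ever occupying such a position, so the contents of the hole are carried unchanged, up to their own internal reductions, to a single well-defined position of the common reduct $\tau_{{\mathrm{0}}}$. I would phrase the per-constructor preservation fact either as one auxiliary lemma quantified over the rigid head forms or inline it in the arrow case and appeal to symmetry for the rest; the reductions within the non-hole components (the fixed $\sigma$, the other $\overline{\sigma}$) play no role and can simply be discarded, since only the component tracking the hole is fed to the induction hypothesis.
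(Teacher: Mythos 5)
Your proof is correct and follows exactly the route the paper takes---the paper's own proof of this lemma is the single line ``Straightforward induction on the structure of $\mathcal{C}  \ottsym{[}   \cdot   \ottsym{]}$,'' and your elaboration (rigid heads are never redexes since only $\ottmv{F} \, \overline{\tau}$ can fire \rul{RTop}, so $\tau_{{\mathrm{0}}}$ factors componentwise and the hole's component feeds the induction hypothesis) is precisely the content that one-liner leaves implicit.
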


\begin{proof}
Straightforward induction on the structure of $\mathcal{C}  \ottsym{[}   \cdot   \ottsym{]}$.
\end{proof}

\begin{rptlemma}{lem:complete-red}{Completeness of the rewrite relation}
\completestatement
\end{rptlemma}

\begin{proof}
By induction on the derivation of $\emptyset  \vdash  \gamma  \ottsym{:}  \tau_{{\mathrm{1}}}  \sim  \tau_{{\mathrm{2}}}$.

\begin{description}
\item[Case \rul{C\_Refl}:] Trivial.
\item[Case \rul{C\_Sym}:] By induction hypothesis.
\item[Case \rul{C\_Trans}:] By \pref{lem:confluence}.
\item[Case \rul{C\_App}:] By \pref{lem:subst_manysteps_manyholes}.
\item[Case \rul{C\_Fun}:] By \pref{lem:subst_manysteps_manyholes}.
\item[Case \rul{C\_Fam}:] By \pref{lem:subst_manysteps_manyholes}.
\item[Case \rul{C\_Forall}:] By \pref{lem:subst_manysteps_manyholes}.
\item[Case \rul{C\_Qual}:] By \pref{lem:subst_manysteps_manyholes}.
\item[Case \rul{C\_Nth}:] By \pref{lem:rigid-red}.
\item[Case \rul{C\_NthArrow}:] By \pref{lem:rigid-red}.
\item[Case \rul{C\_NthQual}:] By \pref{lem:rigid-red}.
\item[Case \rul{C\_Inst}:] By \pref{lem:rigid-red} and substitution.
\item[Case \rul{C\_Axiom}:]
\[
\ottdruleCXXAxiom{}
\]
The left-hand type steps to the right-hand type by \rul{Red}.
\item[Case \rul{C\_Var}:] Impossible in an empty context.
\end{description}
\end{proof}

\begin{rptlemma}{lem:type-no-red}{Proper types do not reduce}
\noredstatement
\end{rptlemma}

\begin{proof}
Direct from the definition of $ \rightsquigarrow $, \rul{Red}.
\end{proof}

\begin{rptlemma}{lem:consistency}{Consistency}
\consstatement
\end{rptlemma}

\begin{proof}
\pref{lem:complete-red} tells us that there exists $\tau_{{\mathrm{3}}}$ such that $\tau_{{\mathrm{1}}}  \rightsquigarrow^*  \tau_{{\mathrm{3}}}  \leftsquigarrow^*  \tau_{{\mathrm{2}}}$.
But \pref{lem:type-no-red} tells us that $\tau_{{\mathrm{1}}}$ and $\tau_{{\mathrm{2}}}$ do not reduce. We must conclude
that $\tau_{{\mathrm{1}}} \, \ottsym{=} \, \tau_{{\mathrm{3}}} = \tau_{{\mathrm{2}}}$, and we are done.
\end{proof}

\subsection{Progress}

\begin{lemma}[Canonical forms] \leavevmode
\label{lem:canon-form}
\begin{enumerate}
\item If $\emptyset  \vdash  \ottnt{v}  \ottsym{:}  \tau_{{\mathrm{1}}}  \to  \tau_{{\mathrm{2}}}$, then $\ottnt{v} \, \ottsym{=} \, \lambda  \ottmv{x}  \ottsym{:}  \tau_{{\mathrm{1}}}  \ottsym{.}  \ottnt{e}$ for some $\ottnt{e}$.
\item If $\emptyset  \vdash  \ottnt{v}  \ottsym{:}  \forall \, \alpha  \ottsym{.}  \tau$, then $\ottnt{v} \, \ottsym{=} \, \Lambda  \alpha  \ottsym{.}  \ottnt{e}$ for some $\ottnt{e}$.
\item If $\emptyset  \vdash  \ottnt{v}  \ottsym{:}  \phi  \Rightarrow  \tau$, then $\ottnt{v} \, \ottsym{=} \, \lambda  \ottmv{c}  \ottsym{:}  \phi  \ottsym{.}  \ottnt{e}$ for some $\ottnt{e}$.
\end{enumerate}
\end{lemma}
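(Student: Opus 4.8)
The plan is to prove all three parts by a single case analysis on the syntactic form of the value $v$, using inversion on the typing judgment $\emptyset \vdash v : \tau$ in each case. The grammar gives exactly four value forms: a constant $K$, a term abstraction $\lambda x : \tau . e$, a type abstraction $\Lambda \alpha . e$, and a coercion abstraction $\lambda c : \phi . e$. The essential observation that makes this routine is that the expression typing rules relevant to values are syntax-directed: a constant can only be typed by \rul{E\_Const}, a term abstraction only by \rul{E\_Lam}, a type abstraction only by \rul{E\_TLam}, and a coercion abstraction only by \rul{E\_CLam}. Each of these rules assigns the value a type whose outermost form is determined---respectively $H\,\overline{\tau}$, $\tau_1 \to \tau_2$, $\forall \alpha . \tau$, and $\phi \Rightarrow \tau$---and these four head forms are pairwise syntactically distinct.

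First I would record why inversion is legitimate here, i.e.\ why each value has a unique applicable typing rule. Unlike systems with an implicit conversion or subsumption rule, CFC mediates type equality only through the explicit cast form $e \triangleright \gamma$ (rule \rul{E\_Cast}); since a cast $v \triangleright \gamma$ is not itself a value, but rather the separate \emph{coerced value} notion appearing in the progress statement, no value can acquire a type of a different outermost shape than the one produced by its introduction rule. Hence the rule concluding any derivation $\emptyset \vdash v : \tau$ is forced by the form of $v$.

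With this in hand, each part follows by eliminating the impossible value forms. For part 1, where $\tau = \tau_1 \to \tau_2$: the forms $K$, $\Lambda \alpha . e$, and $\lambda c : \phi . e$ would, by inversion of their respective rules, give $v$ a type headed by $H$, by $\forall$, and by $\Rightarrow$ respectively, each contradicting the arrow shape; so $v$ must be $\lambda x : \tau_1' . e$, and inversion of \rul{E\_Lam} yields that $v$ has type $\tau_1' \to \tau_2'$. Matching this against $\tau_1 \to \tau_2$ and using the syntactic injectivity of the arrow constructor gives $\tau_1' = \tau_1$, so $v = \lambda x : \tau_1 . e$ as required. Parts 2 and 3 are identical in structure, retaining the $\forall$-form and the $\Rightarrow$-form respectively and discarding the other three value forms by the same head-mismatch argument.

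I expect no genuine obstacle: the only point that needs care is the justification in the second paragraph that values admit no implicit conversion, which is precisely what licenses the inversions. Once that structural fact about CFC is noted, the three cases reduce to immediate syntactic discriminations among the four distinct type head forms.
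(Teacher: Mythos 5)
Your proof is correct and matches the paper's own argument, which is simply ``by case analysis on the typing derivation'': since the typing rules for values are syntax-directed and conversion happens only via the explicit cast form (which is not a value), the case analysis on the form of $\ottnt{v}$ with inversion is exactly the same routine argument, just spelled out in more detail.
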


\begin{proof}
By case analysis on the typing derivation.
\end{proof}

\begin{rpttheorem}{thm:progress}{Progress}
\progstatement
\end{rpttheorem}

\begin{proof}
By induction on the derivation of $\emptyset  \vdash  \ottnt{e}  \ottsym{:}  \tau$.

\begin{description}
\item[Case \rul{E\_Var}:] Impossible.
\item[Case \rul{E\_Const}:] Trivial.
\item[Case \rul{E\_Lam}:] Trivial.
\item[Case \rul{E\_App}:] We know that $\ottnt{e} \, \ottsym{=} \, \ottnt{e_{{\mathrm{1}}}} \, \ottnt{e_{{\mathrm{2}}}}$. A use of the induction
hypothesis on $\ottnt{e_{{\mathrm{1}}}}$ gives us three possibilities:
\begin{description}
\item[Case $\ottnt{e_{{\mathrm{1}}}} \, \ottsym{=} \, \ottnt{v_{{\mathrm{1}}}}$:] \pref{lem:canon-form} tells us that
$\ottnt{e_{{\mathrm{1}}}} \, \ottsym{=} \, \lambda  \ottmv{x}  \ottsym{:}  \tau_{{\mathrm{0}}}  \ottsym{.}  \ottnt{e_{{\mathrm{0}}}}$ and we are thus done by \rul{E\_Beta}.
\item[Case $\ottnt{e_{{\mathrm{1}}}} \, \ottsym{=} \, \ottnt{v_{{\mathrm{1}}}}  \triangleright  \gamma$:] Inversion tells us that
$\emptyset  \vdash  \gamma  \ottsym{:}  \tau_{{\mathrm{0}}}  \sim  \ottsym{(}  \sigma_{{\mathrm{1}}}  \to  \sigma_{{\mathrm{2}}}  \ottsym{)}$, with
$ \Gamma   \vdash   \ottsym{(}  \sigma_{{\mathrm{1}}}  \to  \sigma_{{\mathrm{2}}}  \ottsym{)}  \ \mathsf{type} $.
\pref{lem:ty-reg} (on $\emptyset  \vdash  \ottnt{v_{{\mathrm{1}}}}  \ottsym{:}  \tau_{{\mathrm{0}}}$)
tells us that $ \Gamma   \vdash   \tau_{{\mathrm{0}}}  \ \mathsf{type} $.
\pref{lem:consistency}
then proves that $\tau_{{\mathrm{0}}} \, \ottsym{=} \, \sigma_{{\mathrm{1}}}  \to  \sigma_{{\mathrm{2}}}$.
We can thus use \pref{lem:canon-form} to get $\ottnt{e_{{\mathrm{1}}}} \, \ottsym{=} \, \lambda  \ottmv{x}  \ottsym{:}  \sigma_{{\mathrm{1}}}  \ottsym{.}  \ottnt{e_{{\mathrm{0}}}}$
and we can step by \rul{S\_Push}.
\item[Case $\ottnt{e_{{\mathrm{1}}}}  \longrightarrow  \ottnt{e'_{{\mathrm{1}}}}$:] We are done by \rul{S\_App}.
\end{description}
\item[Case \rul{E\_TLam}:] Trivial.
\item[Case \rul{E\_TApp}:] Similar to \rul{E\_App} case, using
\rul{S\_TBeta} and \rul{S\_TPush}.
\item[Case \rul{E\_CLam}:] Trivial.
\item[Case \rul{E\_CApp}:] Similar to \rul{E\_App} case, using
\rul{S\_CBeta} and \rul{S\_CPush}.
\item[Case \rul{E\_Cast}:] We know that $\ottnt{e} \, \ottsym{=} \, \ottnt{e_{{\mathrm{0}}}}  \triangleright  \gamma$. A use of the induction
hypothesis on $\ottnt{e_{{\mathrm{0}}}}$ gives us three possibilities:
\begin{description}
\item[Case $\ottnt{e_{{\mathrm{0}}}} \, \ottsym{=} \, \ottnt{v_{{\mathrm{0}}}}$:] Trivial.
\item[Case $\ottnt{e_{{\mathrm{0}}}} \, \ottsym{=} \, \ottnt{v_{{\mathrm{0}}}}  \triangleright  \gamma_{{\mathrm{0}}}$:] We are done by \rul{S\_Trans}.
\item[Case $\ottnt{e_{{\mathrm{0}}}}  \longrightarrow  \ottnt{e'_{{\mathrm{0}}}}$:] We are done by \rul{S\_Cast}.
\end{description}
\item[Case \rul{E\_Assume}:]
We are done by \rul{S\_Resolve}, noting that the assumed derivation must exist by
\pref{prop:total}.
\end{description}
\end{proof}

\fi

\end{document}
